\documentclass[11pt]{article} 

\usepackage{natbib}

\usepackage{fullpage}
\usepackage{color}
\usepackage{booktabs}
\usepackage{amsfonts}
\usepackage{amssymb}
\usepackage{amsthm}
\usepackage{amsmath}
\usepackage{mathptmx}
\usepackage{soul}
\usepackage{graphicx} 
\usepackage{subcaption}

\usepackage{thmtools}
\usepackage{thm-restate}

\usepackage{ifpdf}
\newcommand{\mydriver}{hypertex}
\ifpdf
\renewcommand{\mydriver}{pdftex}
\fi
\usepackage[breaklinks,\mydriver]{hyperref}
\usepackage{svg}

\usepackage{url}
\usepackage{amsmath}
\usepackage{amsthm}
\usepackage{graphicx} 

\usepackage{tikz}
\usetikzlibrary{decorations.pathmorphing, calc}
\usepackage[margin=1in]{geometry}
\usepackage{subcaption}
\usepackage{soul}

\usepackage[ruled,vlined,linesnumbered]{algorithm2e}

\usepackage{hyperref}

\usepackage[capitalise]{cleveref}

\newtheorem{theorem}{Theorem}[section]
\newtheorem{lemma}{Lemma}[section]
\newtheorem{claim}{Claim}[section]
\newtheorem{observation}{Observation}[section]
\newtheorem{definition}{Definition}[section]
\newtheorem{corollary}{Corollary}[section]
\newtheorem{proposition}{Proposition}[section]
\newtheorem{remark}{Remark}[section]

\newcommand{\Nbr}{\texttt{Nbr}}

\newcommand{\eps}{\ensuremath{\varepsilon}}
\newcommand{\reg}{\ensuremath{\mathrm{reg}}}
\newcommand{\Det}{\ensuremath{\mathrm{Det}}}

\newcommand{\ER}{Erd\H{o}s-R\'{e}nyi~}
\newcommand{\Gnp}{G(n,p)}

\newcommand{\anc}{\mathrm{anc}}
\newcommand{\E}{\mathbb{E}}

\newcommand{\poly}{\mathrm{poly}}
\newcommand{\smallest}[1]{\text{Smallest}(#1)}
\newcommand{\largest}[1]{\text{Largest}(#1)}
\newcommand{\abs}[1]{\left| #1 \right|}

\newcommand{\w}{\textbf{w}}
\newcommand{\dist}{\mathrm{dist}}

\allowdisplaybreaks

\title{Local Computation Algorithms for (Minimum) Spanning Trees\\ on Expander Graphs}
\author{
Pan Peng\footnote{
School of Computer Science and Technology, University of Science and Technology of China. Email:
\href{mailto:ppeng@ustc.edu.cn}{ppeng@ustc.edu.cn}
}
\and
Yuyang Wang\footnote{
School of Computer Science and Technology, University of Science and Technology of China. Email:
\href{mailto:wangyvyang@mail.ustc.edu.cn}{wangyvyang@mail.ustc.edu.cn}
}
}

\date{}

\begin{document}

\maketitle
\begin{abstract}
We study \emph{local computation algorithms (LCAs)} for constructing spanning trees. In this setting, the goal is to locally determine, for each edge $ e \in E $, whether it belongs to a spanning tree $ T $ of the input graph $ G $, where $ T $ is defined implicitly by $ G $ and the randomness of the algorithm. It is known that LCAs for spanning trees do not exist in general graphs, even for simple graph families. We identify a natural and well-studied class of graphs -- \emph{expander graphs} -- that do admit \emph{sublinear-time} LCAs for spanning trees. This is perhaps surprising, as previous work on expanders only succeeded in designing LCAs for \emph{sparse spanning subgraphs}, rather than full spanning trees. We design an LCA with probe complexity $
O\left(\sqrt{n}\left(\frac{\log^2 n}{\phi^2} + d\right)\right)$  
for graphs with conductance at least $ \phi $ and maximum degree at most $ d $ (not necessarily constant), which is nearly optimal when $\phi$ and $d$ are constants, since $\Omega(\sqrt{n})$ probes are necessary even for expanders. Next, we show that for the natural class of \emph{\ER graphs} $ G(n, p) $ with $ np = n^{\delta} $ for any constant $ \delta > 0 $ (which are expanders with high probability), the $ \sqrt{n} $ lower bound can be bypassed. Specifically, we give an \emph{average-case} LCA for such graphs with probe complexity $ \tilde{O}(\sqrt{n^{1 - \delta}})$. 

Finally, we extend our techniques to design LCAs for the \emph{minimum spanning tree (MST)} problem on weighted expander graphs. Specifically, given a $d$-regular unweighted graph $\bar{G}$ with sufficiently strong expansion, we consider the weighted graph $G$ obtained by assigning to each edge an independent and uniform random weight from $\{1,\ldots,W\}$, where $W = O(d)$. We show that there exists an LCA that is consistent with an exact MST of $G$, with probe complexity $\tilde{O}(\sqrt{n}d^2)$.

\end{abstract}

\thispagestyle{empty}
\setcounter{page}{0}
\newpage

\section{Introduction}

Constructing a spanning tree is a fundamental problem in computer science and graph theory. In this paper, we study \emph{local computation algorithms (LCAs)} for constructing spanning trees. In this setting, the goal is to determine quickly whether a given edge $e$ belongs to some spanning tree, without computing the entire tree. Instead, the algorithm only access the local neighborhood of $e$ by probes\footnote{Some works refer to \emph{query access} to the graph. In our setting, however, since the algorithm is designed to answer queries, we reserve the term \emph{query} specifically for inputs received by the algorithm.} to graph. The key challenge is consistency:  ensuring that answers to every edge are all consistent with the same underlying spanning tree. Such LCAs are useful in scenarios where we do not need the full solution at once, or where multiple independent processes may wish to query edges in parallel.

The LCA model (also known as the centralized local model) was introduced by Rubinfeld et al. \cite{RubinfeldTVX11} and Alon et al. \cite{alon2012space}. It generalizes earlier frameworks such as graph property testing and sublinear-time algorithms for approximating graph parameters. All these models share a common motivation: inferring global properties of a graph while exploring only a small portion of the input through local probes. While property testing and approximation algorithms typically deal with large input and small output (e.g., estimating a global quantity or deciding a property), LCAs handle the more demanding large input and large output regime, where the goal is to construct a global structure, such as a spanning tree, locally and consistently on demand.

The study of LCAs with sublinear probe complexity for spanning trees was initiated by Levi et al. \cite{reut2014local,levi2020local} nearly a decade ago. They quickly identified a fundamental limitation: \emph{it is impossible to design a sublinear-time LCA for constructing spanning trees, even for very simple graph families!} For example, consider a path and a cycle, both of length $n$. In the path, every edge must be included in the spanning tree, while in the cycle, at least one edge must be omitted to break the cycle. Distinguishing between these two cases requires probing a linear number of edges, violating the sublinear probe constraint of LCAs. 

The above observation motivated a shift in focus toward LCAs for \emph{sparse spanning subgraphs} -- connected subgraphs that span all vertices and contain at most $(1+\varepsilon)n$ edges, for some small $\varepsilon > 0$. This relaxation permits sublinear probe complexity while still yielding meaningful global structures. Several works have since proposed LCAs for sparse spanning subgraphs in different graph families (see, e.g., 
\cite{reut2014local,levi2020local, levi2015quasi, levi2021testing,levi2015constructingnearspanningtrees,levi2024nearly}).
For example, in planar graphs, LCAs with $\poly(1/\varepsilon)$ probe complexity are known \cite{levi2020local};
in bounded-degree expander and well-clusterable graphs, LCAs with $\tilde{O}(\sqrt{n})$ probe complexity exist \cite{levi2024nearly}; 
and in general graphs with maximum degree at most $d$, LCAs with $\tilde{O}(n^{2/3} \cdot \poly(d/\varepsilon))$ probe complexity are available 
\cite{lenzen2018centralized},
with a known lower bound of $\Omega(\sqrt{n})$ \cite{reut2014local}.

\cite{reut2014local} studied LCAs for computing minimum-weight spanning subgraph 
in minor-free graphs, achieving a $(1+\varepsilon)$-approximation to the optimal weight (the weight of a minimum spanning tree), with probe complexity quasi-polynomial in $1/\varepsilon$, $d$, and $W$.

In this work, we identify a natural and well-studied class of graphs - \emph{expander graphs} - that do admit sublinear-time LCAs for constructing spanning trees. This is perhaps surprising, given that prior works have also studied expander graphs (e.g., \cite{levi2020local,levi2024nearly}), yet all of them were only able to design LCAs for sparse spanning subgraphs, rather than full spanning trees. On the other hand, the cycle-versus-path example discussed earlier only rules out sublinear LCAs on graphs with poor expansion (i.e., tree-like structures), but does not apply to well-connected graphs. We achieve nearly optimal probe complexity for our LCA on expander graphs. We further focus on a specific subclass of expander graphs, namely \ER graphs, and design even faster average-case LCAs for constructing spanning trees in these settings, thereby bypassing the worst-case lower bounds that hold for general expanders. Finally, we extend our techniques to obtain an LCA for constructing an \emph{exact} minimum spanning tree (MST) on certain weighted expander graphs, where edge weights are chosen independently and uniformly at random from the set $[W] := \{1, \ldots, W\}$ for some integer $W$.

\subsection{Basic Definitions}
To formally describe our results, we first introduce some basic definitions. We have the following definition of LCA for a graph problem.

\begin{definition}[Local Computation Algorithm]\label{def:LCA}

A \textbf{\emph{Local Computation Algorithm (LCA)}} for a problem $\Pi$ is an oracle algorithm $\mathcal{A}$ that answers problem-specific type of query about a solution $X$ to $\Pi$ on input graph $G$, satisfying following property:
\begin{itemize}
    \item $\mathcal{A}$ use only probe access to $G$, a sequence of random bits $R$, and a local memory to respond any admissible query $q$ to $X$. 
    \item After answering any query $q$, $\mathcal{A}$ erases its local memory entirely, including the query and the response.
    \item Answers to any sequence of queries returned by $\mathcal{A}$ must be consistent with $X$.
\end{itemize}

Let $T_{\mathcal{A}}(G, q)$ denote the expected (over the choice of the random bits $R$) number of probes it takes for the LCA $\mathcal{A}$ to answer query $q$ on input graph $G$, and set $T_{\mathcal{A}}(G) = \max_q T_{\mathcal{A}}(G, q)$. We say the LCA has \emph{(worst-case) probe complexity $T(n)$} if the maximum of $T_{\mathcal{A}}(G)$ over all possible $n$-vertex input graphs $G$ is $T(n)$.

\end{definition}

When specialized for spanning trees, we have the following definition. 

\begin{definition}[LCA for (Minimum) Spanning Tree]
    An algorithm $\mathcal{A}$ is a \textbf{\emph{Local Computation Algorithm for (Minimum) Spanning Tree}} if for input graph $G$, $\mathcal{A}$ is an LCA answering edge-membership query to a subgraph $T$ of $G$, such that $T$ is a (minimum) spanning tree with probability at least $\frac{2}{3}$.
    
    Here ``edge-membership query to $T$'' means that on input $(u,v)\in E(G)$, $\mathcal{A}$ returns whether $(u,v) \in E(T)$.
\end{definition}

Beyond the worst-case LCA, recently, the average-case LCA has also been introduced. These are LCAs whose input graph is from a random graph family.
In

\cite{biswas2025worstcaselocalcomputation}, 
the authors define a new scheme of local computation algorithm called ``average-case LCA'', which assumes that the input graph is drawn from some distribution and asks the LCA to succeed with probability at least $1-\frac{1}{n}$ over a random graph from this distribution.
\begin{definition}[Average-case LCA]\label{def-avg lca}
We say that $\mathcal{A}$ is an \textbf{\emph{Average-Case Local Computation Algorithm}} for a distribution $\mathcal{G}$ over objects of size $n$ for a problem $\Pi$ if, with probability at least $1 - 1/n$ over the random draw $G \sim \mathcal{G}$, the algorithm $\mathcal{A}^G$ (which has probe access to the input $G$) satisfies the requirements of an LCA.

We say that $\mathcal{A}$ has \emph{average-case probe complexity} $T(n)$ if the expected number of probes $T_{\mathcal{A}}(G)$ over $G \sim \mathcal{G}$ is at most $T(n)$. Similarly, $\mathcal{A}$ has \emph{worst-case probe complexity} $T(n)$ if the maximum 
number of probes $T_{\mathcal{A}}(G)$ over $G \sim \mathcal{G}$ is at most $T(n)$.
\end{definition}

\subsection{Our Results}
Now we state our main results. Let $G=(V,E)$ be a graph with $n$ vertices and maximum degree at most $d$ (called a \emph{$d$-bounded} graph), for $d\geq 3$.  Let $S \subseteq V$ be a vertex set.
The \emph{conductance} of $S$ is defined as $\phi_G(S):= \frac{|E(S,V\setminus S)|}{\mu_G(S)}$, where $E(S,V\setminus S)$ is the set of edges between $S$ and $V \setminus S$ and $\mu_G(S) := \sum_{v \in S} deg(v)$ is the total degree of vertices in $S$, respectively.
The conductance of $G$ is defined to be $\phi_G:= \min\limits_{\substack{\emptyset \subsetneq S \subseteq V \\ \mu_G(S)\leq \mu_G(V)/2}} \phi_G(S)$.
We informally say that $G$ is an \emph{expander} when $\phi_G$ is bounded from below by a constant. 
\paragraph{Spanning Trees in Expanders} For graph $G$ with conductance at least $\phi$ and maximum degree at most $d$, we give an LCA that provides local access to a spanning tree of $G$ with low stretch. For $G = (V, E)$, a subgraph $H$ is called a $k$-\emph{spanner} of $G$ if $\dist_H(u,v) \leq k\cdot \dist_G(u,v)$ for any vertices $u,v \in V$, where $\dist_G(u,v)$ (resp. $\dist_H(u,v)$) is the distance between $u$ and $v$ in graph $G$ (resp. $H$). We refer to $k$ as the
\emph{stretch factor}.

\begin{restatable}{theorem}{lcaexpander}\label{thm:d-expander}
    Given adjacency-list probe access
    to a $d$-bounded connected graph $G = (V,E)$ and a lower bound $\phi$ on the conductance of $G$, there is an LCA that answers edge-membership
    query to a subgraph $T=(V,E^\prime)$ of $G$ s.t. $T$ is a spanning tree with probability at least $(1 - \frac{1}{n})$. Additionally, $T$ has depth $O(\frac{\log n}{\phi^2})$, which implies that $T$ is also a $O(\frac{\log n}{\phi^2})$-spanner. The LCA has probe complexity $O(\sqrt{n}(\frac{ \log^2n}{\phi^2}+d))$.
\end{restatable}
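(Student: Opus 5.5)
The plan is to build a BFS-like tree rooted at a canonical root, shared through the random bits $R$, and to answer edge queries by walking parent pointers. The obstacle is that exact BFS distances cannot be computed locally. To get around this, I would use random walks that meet at a common set of landmark vertices, at cost $\sqrt{n}$.

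\textbf{Setup.} Using $R$, sample a set $S$ of roughly $\sqrt{n}\log n$ vertices, chosen by hashing each vertex id with probability $\Theta(\log n/\sqrt{n})$. Also fix a designated root $r$; if the model forbids global sampling, take $r$ to be the minimum-rank sampled vertex reachable by the procedure. By the standard mixing bound for lazy random walks, after $L=O(\log n/\phi^2)$ steps a walk from any vertex is close in distribution to the stationary distribution $\pi(v)=\deg(v)/2m$. A collision or birthday argument then shows the following: if we run $\Theta(\sqrt{n})$ independent walks of length $L$ from $u$, their endpoints hit a set of $\sqrt{n}$ "hub" vertices (those sampled proportional to $\pi$) with high probability.

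\textbf{Defining the tree.} Each vertex $v$ gets a parent $p(v)$ as follows. Using randomness $R_v$ tied to $v$, so that $p(v)$ is deterministic given $R$, we define a canonical short path from $v$ to the hub structure. We fix, once and for all, a BFS tree $T_H$ of radius $O(\log n/\phi^2)$ over the hubs; this can be precomputed consistently because every hub's ancestry within $T_H$ is determined by $R$ alone. The parent $p(v)$ is then the first step of a canonical path toward $T_H$.

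The subtlety is acyclicity. A cycle is ruled out by giving each vertex a potential, $\mathrm{level}(v)$, equal to the length of the canonical path, and requiring that $p(v)$ have strictly smaller level. I would take $\mathrm{level}(v)=\min$ distance to $r$ as estimated through hubs, that is, $\dist(v,H)+\mathrm{depth}_{T_H}(h)$, breaking ties by id. To evaluate this, the query does a BFS from $v$ until it hits hub vertices. The conductance bound implies the ball of radius $O(\log n/\phi^2)$ contains $\sqrt n$ mass, so hubs are hit within $O(\log n/\phi^2)$ layers. Bounding the cost of exploring these layers is the dangerous step: naive ball growth could cost $d^{\text{radius}}$.

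\textbf{The main obstacle.} The hardest step is to make $p(v)$ computable in $O(\sqrt n(\log^2 n/\phi^2+d))$ probes while keeping levels monotone. My first attempt would be to replace BFS with $\sqrt n$ random walks of length $L$. Each walk either reaches a hub, for a total of $\sqrt n\cdot L=\sqrt n\log n/\phi^2$ probes, with an extra $\log n$ factor for concentration, or is cut off. The vertex $v$ then chooses, among its $d$ neighbors (costing $O(d)$ probes per neighbor evaluation, and $\sqrt n\, d$ in total), the one whose walk-derived level is smallest. Levels must be defined deterministically from $R$, so that the level of $w$ is computed identically whether it is queried directly or reached as a neighbor. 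That way $p$ strictly decreases the level along every edge and $T$ is a forest whose roots are hubs joined by $T_H$.

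Connectivity and the depth bound $O(\log n/\phi^2)$ follow because every level is at most $L$ plus the depth of $T_H$. The depth bound in turn gives the spanner claim, since $\dist_T(u,v)\le 2\,\mathrm{depth}$. Failure happens only when some vertex's walks miss every hub. By a union bound this has probability at most $1/n$ after constant-factor amplification.

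Finally, answering $(u,v)$ reduces to checking whether $p(u)=v$ or $p(v)=u$, which costs two parent evaluations.
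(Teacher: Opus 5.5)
\textbf{There is a genuine gap, at exactly the step you flag as the main obstacle.}

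Your parent rule needs a potential with one specific property: every non-hub vertex must have a neighbor of strictly smaller potential. The exact quantity $\min_h(\dist(v,h)+\mathrm{depth}_{T_H}(h))$ has this property, but BFS cannot compute it by stopping at the first hub found, since a farther hub of smaller depth may give a smaller value. Your walk-derived replacement loses the property altogether. Making levels deterministic in $R$ gives consistency, not monotonicity. The walks of $v$ and of its neighbors use independent randomness $R_v, R_w$, so nothing forces any neighbor of $v$ to have a smaller walk-derived level than $v$. Without that, neither acyclicity nor the fact that parent pointers eventually reach the hubs is justified.

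The probe accounting also fails. Evaluating the walk-derived level of each of the $d$ neighbors means running that neighbor's own $\sqrt{n}$ walks. The cost is therefore about $d\sqrt{n}\log^2 n/\phi^2$, a product rather than the claimed sum $\sqrt{n}(\log^2 n/\phi^2+d)$.

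Finally, $T_H$ is never actually constructed. If hubs are hashed vertices, nothing tells you how they are connected. Moreover, the non-hub vertices on any connecting paths would need parents compatible with your level rule.

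\textbf{The missing idea is that the ball-growth worry is unnecessary.} BFS cost is governed by the number of distinct vertices explored, not by the radius.

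First, build the hub set so the tree over it comes for free. Take $S$ to be all vertices on $r=\Theta(\sqrt{n}\log n)$ lazy walks of length $\tau$ from a single seed $s$. Then $(S,H)$ is an explicitly known connected subgraph, and its BFS tree is the core tree at no extra probe cost.

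Second, use only $\dist(\cdot,S)$ as the potential. Attach each $v\notin S$ via its lexicographically-least shortest path to $S$. Find this path by lexicographic BFS capped at $\sqrt{n}$ distinct vertices, which costs $O(d\sqrt{n})$ probes.

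Third, the cap suffices. The first $\sqrt{n}$ BFS vertices of $v$ form a set $Q_v$ that does not depend on the walks. Each walk endpoint lands on every vertex with probability at least $1/(2n)$, and $|Q_v|\,r\ge 100n\log n$. So $Q_v$ contains an endpoint with probability $1-n^{-10}$, and a union bound over all $v$ gives success everywhere.

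Fourth, acyclicity comes from the sub-path property. Every suffix of the lexicographically-least shortest path from $v$ to $S$ is itself the lexicographically-least shortest path of its first vertex to $S$, with the same anchor. Hence these paths merge rather than cross.

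Fifth, queries are cheap. An edge $(u,v)\notin S\times S$ lies in $T$ if and only if it is on the anchor path of $u$ or of $v$. A query therefore needs only two capped BFS runs and never evaluates neighbors' levels.

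Sixth, the depth is at most $\tau$ within the core tree plus $\dist(v,S)\le\mathrm{diam}(G)=O(\log n/\phi^2)$.
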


The above result can be contrasted with the recent work of Levi et al. \cite{levi2024nearly}, who designed a local computation algorithm (LCA) for expander graphs with conductance at least $ \phi $. Their algorithm maintains a subgraph with $ n + O\left( \frac{\sqrt{n} \log^2 n}{\phi^2} \right) $ edges, achieves probe complexity $ O\left( \sqrt{n} \cdot \left( \frac{\log^2 n}{\phi^2} + d^2 \right) \right) $, and guarantees a stretch factor of $ O\left( \frac{\log n}{\phi^2} \right) $. In comparison, our algorithm maintains a spanning tree (with exactly $n - 1$ edges) and achieves lower probe complexity, thereby yielding a strict improvement over their result. Moreover, we note that the probe complexity of our LCA is nearly optimal for graphs with polylogarithmic maximum degree. In particular, a lower bound of $ \Omega(\sqrt{n}) $ probes is known for LCAs that compute sparse spanning subgraphs in bounded-degree expander graphs~\cite{reut2014local}.

\paragraph{Spanning Trees in \ER Graphs}

Now that we have an LCA for spanning trees with nearly optimal probe complexity on worst-case expanders, a natural question is to ask:

\emph{Can we go beyond worst-case expanders and achieve even better probe complexity?}

To address this question, we adopt the framework developed by Biswas et al.~\cite{biswas2024average} for analyzing \emph{local computation algorithms} over large random objects. In particular, we provide an answer in the setting where the input graph $G$ is drawn from the \ER distribution. In the \ER graph with parameters $n,p$ such that $p\in (0,1]$, a graph $G=([n],E)$ is generated from $G(n,p)$, denoted $G  \sim \Gnp$, by including each edge $(u, v)$ independently with probability $p$ for every distinct pair $u, v \in [n]$.

\begin{restatable}{theorem}{lcaERcombined}
\label{thm:lcaERcombined}
For $np = n^{\delta}$ with any constant $\delta > 0$ and a graph $G \sim \Gnp$, given access to $G$ in the \emph{general graph model}, there exists an average-case LCA with probe complexity\footnote{For simplicity, we focus on the average-case probe complexity. With a suitable cap on the number of probes and minor adaptations of the analysis, the same bound can be achieved in the worst case.} $\tilde{O}(\sqrt{n^{1-\delta}})$ that, with probability at least $1 - \frac{1}{n}$, supports edge-membership queries to a spanning tree of $G$.

\end{restatable}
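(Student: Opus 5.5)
The plan is to reduce the Erd\H{o}s--R\'enyi case to the expander construction behind \cref{thm:d-expander}, and to exploit the high degree $np=n^{\delta}$ so that random walks become short and cheap.

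\textbf{Overall strategy.} The proof of \cref{thm:d-expander} is built on a collision-based tree. A designated root region is reached from every vertex by short walks. The parent pointers are defined consistently through shared randomness, such as random ranks or a hash-based choice of the next hop. A query edge $(u,v)$ is then answered by checking whether $v$ is the parent of $u$ or vice versa.

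The $\sqrt{n}$ cost in that construction comes from a birthday-paradox collision between an explored set around the queried vertex and a set of ``landmark'' (center) vertices. For $G\sim\Gnp$ with $np=n^{\delta}$, each vertex sees $n^{\delta}$ fresh neighbors per step. Hence a BFS-like exploration of depth $O(1/\delta)=O(1)$ reaches the whole graph. More usefully, a set of size $k$ has neighborhood of size roughly $k\cdot n^{\delta}$.

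\textbf{Concrete construction.} First, I would pick a random set $C$ of centers, each vertex included independently with probability $q\approx\sqrt{n^{1-\delta}}\cdot \mathrm{polylog}(n)/n$, so that $|C|\approx \sqrt{n^{1+\delta}}$. Second, each non-center vertex $v$ finds its center by a canonical local procedure. Using the general graph model (adjacency and pair probes), $v$ samples $\tilde O(\sqrt{n^{1-\delta}})$ neighbors. Either it finds a center neighbor directly, which happens with probability $\approx |C|p\cdot$ samples$/\deg$, or it walks a canonical path of length $O(1/\delta)$ under shared randomness (e.g., the neighbor of minimum hash value) until it hits a center. The parent of $v$ is the first vertex on this canonical path. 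Because the path is determined by hashes, the parent pointers form a forest in which each tree is rooted at a center. Third, the centers are connected among themselves by a globally consistent tree. I would use a fixed canonical structure: each center picks as its parent the lowest-ranked center within distance $2$. By Chernoff bounds and expansion of $\Gnp$, the resulting center graph is connected and every center has $\Theta(n^{\delta})$ centers in its $2$-neighborhood. The tree on centers is then defined recursively by rank, in the spirit of the paper's expander LCA applied to the (denser) center graph.

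\textbf{Analysis and main obstacle.} I would argue in order:
\begin{enumerate}
\item The forest has no cycles. This follows by construction, since hash values strictly decrease along canonical paths.
\item Every vertex reaches a center within $O(\log n/\log(np))=O(1/\delta)$ hops w.h.p.
\item The expected probe count per query is $\tilde O(\sqrt{n^{1-\delta}})$, averaged over $G\sim\Gnp$. Here the randomness of $G$ is revealed lazily through the deferred decisions principle, so that each newly probed pair is a fresh Bernoulli$(p)$.
\item The whole structure is a spanning tree with probability $1-1/n$, by a union bound over vertices.
\end{enumerate}

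The main obstacle I expect is step 3 combined with consistency. The deferred-decision argument breaks once the LCA re-probes edges that earlier parts of the procedure conditioned on. Also, balancing $|C|$ so that both the vertex-to-center search and the center-to-center linking cost $\tilde O(\sqrt{n^{1-\delta}})$ is delicate. I would handle this by capping exploration, declaring failure when caps are exceeded, and bounding the failure probability by $1/n$ over the graph distribution.
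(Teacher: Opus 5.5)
There is a genuine gap, and it sits at the step you yourself flag as delicate: joining the centers. Your design asks each vertex to collide with $C$ on its own. With $\tilde O(\sqrt{n^{1-\delta}})$ explored vertices this forces $|C|\approx\sqrt{n^{1+\delta}}\gg\sqrt n$; note that your $q$ as written actually gives $nq\approx\sqrt{n^{1-\delta}}$. Such a $C$ is not connected for free. In the expander LCA the core is connected because it \emph{is} the trace of the random walks, and a walk-trace core with $\approx\sqrt{n^{1+\delta}}$ endpoints costs that many probes.

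Your replacement for this does not work, for three reasons.
\begin{itemize}
\item ``Parent $=$ lowest-ranked center within distance $2$'' is not an edge of $G$. Realizing it requires the middle vertex of the $2$-path to point toward that center. But that vertex's pointer is already fixed by your vertex-to-center rule, so neither connectivity nor acyclicity follows.
\item The rule leaves a root at every local rank-minimum unless the $2$-hop center graph is complete. For $\delta<1/2$ the number of centers within distance $2$ is about $n^{2\delta}|C|/n=n^{(5\delta-1)/2}$. This is $\Theta(n^\delta)$ only at $\delta=1/3$ and $o(1)$ for $\delta<1/5$.
\item Running the expander LCA on the center graph instead needs at least $\sqrt{|C|}=n^{(1+\delta)/4}$ center-graph probes. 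This exceeds $n^{(1-\delta)/2}$ for every $\delta>1/3$. That is exactly the range where your direct neighbor sampling is even possible, since for $\delta<1/3$ a vertex has only $n^{\delta}<\sqrt{n^{1-\delta}}$ neighbors.
\end{itemize}

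The vertex-to-center step is also broken. The $O(1/\delta)$ bound is a BFS-distance bound, but a single min-hash path visits only $O(1/\delta)$ vertices. It essentially never meets a set of density $n^{-(1-\delta)/2}$. Moreover, min-hash pointers only guarantee $h(v_{i+2})\le h(v_i)$, so ``hash strictly decreases'' is false. The walk stops at a pair of mutually minimal neighbors, not at a center.

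The missing idea has three parts.
\begin{itemize}
\item Keep the core small, $\tilde O(\sqrt{n^{1-\delta}})$, and connected for free. For $\delta\le 1/3$ it is the walk trace from vertex $n$; for $\delta>1/3$ it is a $q$-sample $S$ of $\Gamma(n)$, i.e.\ a star at $n$.
\item Accept that most vertices fail individually, via budgeted BFS (resp.\ \texttt{Exists} probes to $S$).
\item \emph{Pool} the failed searches along a canonical path that is deliberately long: the increasing path $v\mapsto\smallest{\Gamma^+(v)}$, resp.\ its sampled version $RIP_q$.
\end{itemize}
This path is acyclic because IDs increase. Under the favorable condition it advances by at most $K$ (resp.\ $tK$) per step. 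So from a small-ID vertex it has $\Omega(n^{\delta}/\log n)$ (resp.\ $\Theta(\sqrt{n^{1-\delta}})$) vertices. Their BFS balls overlap little (\Cref{lem-gnp bfs}) and jointly cover $\Omega(\sqrt{n^{1+\delta}})$ vertices (resp.\ give $\tilde\Theta(n^{1-\delta})$ still-random pairs against $S$). That is what produces the collision with the small core.

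The last vertex of a path has large ID and proposes its smallest (sampled) neighbor, which has small ID, so every vertex is routed to small vertices. Since each non-root vertex proposes exactly one edge, connectivity alone gives a spanning tree. The dependency problem you raise in step 3 is handled not by deferred decisions but by conditioning on the baseline condition $B(f,b)$ (resp.\ $B(C,f^t,b^t)$). This fixes exactly the pairs that determine these paths and leaves all other pairs independent.
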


Note that for any constant $\delta > 0$, the probe complexity $\tilde{O}(\sqrt{n^{1-\delta}})$ strictly improves upon the $\Omega(\sqrt{n})$ lower bound for worst-case instances. As $\boldsymbol{\delta \to 0}$, the random graph $G \sim \Gnp$ approaches the behavior of a bounded-degree expander. In this regime, the probe complexity $\tilde{O}(\sqrt{n^{1-\delta}})$ converges to $\tilde{O}(\sqrt{n})$, nearly matching the complexity of our LCA for bounded-degree expanders established in \Cref{thm:d-expander}.

On the other hand, as $\boldsymbol{\delta \to 1}$, the graph $G \sim \Gnp$ becomes increasingly dense and approaches a complete graph. Intuitively, in this case, much more efficient LCAs are possible. For example, suppose $np = \Omega(n / \log n)$. We can then sample a subset $S \subseteq V$ of size $\Theta(\log^2 n)$ such that, with high probability\footnote{Unless stated otherwise, ``with high probability'' (w.h.p.) means that the probability tends to $1$ as $n \to \infty$.}, the induced subgraph $G[S]$ is connected. A spanning tree $T_0$ of $G[S]$ can be computed. Furthermore, for each remaining vertex $u \in V \setminus S$, the choice of $p$ and the size of $S$ ensure that $u$ has at least one neighbor in $S$ with high probability. We can then connect $u$ to the lexicographically smallest such neighbor $v \in S$ by adding the edge $(u, v)$ to $T_0$. This construction yields a spanning tree of $G$ and can be simulated by an LCA with probe complexity $\poly(\log n)$. This phenomenon is consistent with the behavior of our algorithm, as the probe complexity $\tilde{O}(\sqrt{n^{1-\delta}})$ tends toward $\tilde{O}(1)$ when $\delta \to 1$.

\paragraph{Minimum Spanning Tree}
Having established an LCA for spanning trees on expander graphs, we now extend our techniques to the minimum spanning tree (MST) problem.  
We call a graph $G$ an $(n,d,\lambda)$-graph if $G$ is a $d$-regular graph on $n$ vertices whose adjacency matrix has eigenvalues
\[
d=\lambda_1 \ge \lambda_2 \ge \cdots \ge \lambda_n
\]
and satisfies $\lambda=\max\{\lambda_2,|\lambda_n|\}$.
Building on \cref{thm:d-expander} as a subroutine, we design the following LCA for computing an MST on $(n,d,\lambda)$-graphs.

\begin{restatable}{theorem}{lcaMST}\label{thm:lca mst}
Given adjacency-list probe access to a weighted graph $G=(V,E,\w)$ such that the underlying unweighted graph $\bar G=(V,E)$ is an $(n,d,\lambda)$-graph with $d=\omega_n(1)$ and $\lambda=o(d)$, and assuming that the weight function $\w$ assigns to each edge $e\in E$ an independent weight drawn uniformly from $\{1,2,\ldots,W\}$, where $W \le d/2$ and $W = o(\log n)$, then with high probability there exists an LCA for the minimum spanning tree of $G$ with probe complexity $\tilde O(\sqrt{n}\, d^2)$.
\end{restatable}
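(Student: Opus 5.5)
We will build the minimum spanning forest weight class by weight class, in the style of Kruskal, and use \cref{thm:d-expander} only at the last level. For $i \in [W]$ let $G_{\le i}$ be the subgraph of edges of weight at most $i$. Kruskal shows that an MST is obtained by taking, for each $i$, a spanning forest of the components of $G_{\le i}$ after contracting the components of $G_{\le i-1}$. Ties are broken by a fixed rule, for example lexicographic order on edge IDs, which gives a single consistent MST.

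Hence a query edge $e=(u,v)$ of weight $i$ is in the MST iff two things hold. First, $u$ and $v$ lie in different components of $G_{\le i-1}$. Second, $e$ is the edge the level-$i$ spanning forest chooses to join those two components.

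\textbf{Step 1: structure of the low-weight subgraphs.} Each $G_{\le i}$ is a random subgraph of $\bar G$ with edge-retention probability $i/W$, so its expected degree is $di/W$. Because $\lambda=o(d)$, the expander mixing lemma together with standard percolation-on-expanders arguments should give the following picture w.h.p. For $i \ge 2$ the expected degree is $di/W \ge 2d/W \ge 4$, and in fact $\omega(1)$ when $W$ is small relative to $d$. In this regime $G_{\le i}$ has a unique giant component, which is itself a good expander after pruning. Every non-giant component has size $O(\log n)$, or even $O(1)$ in the regime $W=o(\log n)$. A union bound over edges shows that every small component has size polylogarithmic.

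We also need the top level to be fully connected. Since $G_{\le W}=\bar G$ is connected, the non-giant pieces of each level are eventually absorbed. Each vertex with at least $d/2$ neighbours is in the giant of $G_{\le i}$ once the giant covers a $1-o(1)$ fraction of the vertices.

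\textbf{Step 2: local oracles.} For a vertex $u$ and level $i$, we decide whether $u$ is in a small component of $G_{\le i}$ by BFS on edges of weight at most $i$, truncated at size $s=\mathrm{polylog}(n)$. This costs $O(sd)$ probes. Membership of $e$ is then decided in three cases.
\begin{itemize}
\item If $u$ and $v$ are both in small components of $G_{\le i-1}$, the level-$i$ forest restricted to the union of those small components and their weight-$i$ edges is computed explicitly with the tie-break rule.
\item If exactly one endpoint is in the giant, we must decide whether $e$ is the chosen connector from the small component into the giant. This is the lexicographically minimal weight-$i$ edge leaving the small component to the giant, which is again found by exploring the small component at $O(sd)\cdot O(sd)$ cost.
\item If both endpoints are in the giant of $G_{\le i-1}$, then $e$ is not in the MST.
\end{itemize}
At the lowest level $i=1$, the edges inside each component of $G_{\le 1}$ must be handled. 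These components are small, except possibly a giant of $G_{\le 1}$ or of the first level where a giant appears. On that giant, which is an expander subgraph of degree $O(d)$, we run the LCA of \cref{thm:d-expander}, which costs $\tilde O(\sqrt n\, d)$ probes. Each probe of that simulation requires checking that a neighbour also lies in the giant, a truncated BFS costing $\tilde O(d)$. The total is $\tilde O(\sqrt n d^2)$.

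\textbf{Main obstacle.} The hardest step is Step 1's claim for the first level that has a giant: that level's giant must have conductance $\Omega(1/\mathrm{polylog}\, n)$, and it must be locally recognizable. Local recognizability means that membership in the giant is certified by a BFS reaching size $s$, and that small components really have size below $s$. Without this, \cref{thm:d-expander} cannot be invoked on the induced subgraph. We would first try a known result that the giant of a random subgraph of an $(n,d,\lambda)$-graph with $\lambda=o(d)$ above the threshold is an expander after removing $o(n)$ hanging small trees. The $W\le d/2$ and $W=o(\log n)$ assumptions are presumably what make the degree at the first supercritical level large enough and the number of levels small enough for the union bounds. If the giant at that level is not a clean expander, we would instead apply \cref{thm:d-expander} to its $2$-core (or a $k$-core) and treat the removed trees as small components.
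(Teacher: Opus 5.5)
There is a genuine gap in the level-$i$ connecting step, i.e.\ in how you choose the weight-$i$ edges that join components of $G_{\le i-1}$. Your local rules do not produce a spanning forest of the contracted graph.

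Take two small components $C_1,C_2$ of $G_{\le i-1}$ joined by a weight-$i$ edge $f$, and suppose each $C_j$ has a weight-$i$ edge $e_j$ into the giant $L_{\le i-1}$. Your second case accepts $e_1$ and $e_2$, since each is the lexicographically least edge from its component to the giant. Your first case, computed on $C_1\cup C_2$ with the weight-$i$ edges between them, accepts $f$. Together with the trees inside $C_1$, $C_2$ and $L_{\le i-1}$, this closes a cycle.

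Reading the two cases as a reproduction of global Kruskal with lexicographic tie-breaking does not help. Whether Kruskal accepts $e$ depends on whether its endpoints are already connected by lexicographically smaller weight-$i$ edges through arbitrary other components, possibly through the giant. That is a global connectivity question which $C(u)\cup C(v)$ cannot answer. In particular, $C$'s least edge to the giant is rejected whenever $C$ has already reached the giant through a neighbouring small component.

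What is actually required is to pick exactly one weight-$i$ edge per small component $C\subseteq V_i$, so that the choices point consistently toward $L_{\le i-1}$. This must work even when $C$ reaches $L_{\le i-1}$ only through a chain of other small components.

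This is the paper's key idea, and it is why the expander LCA is used at every level rather than only on the level-one giant. The paper runs \textsc{InTree} on each $L_{\le i}$ with a common root $s\in L_{\le 1}$, obtaining an unweighted reference spanning tree $T$ of $L_{\le i}$. Each probe to $L_{\le i}$ costs $O(d\log n)$ via \textsc{Layer-Comp}, and the lazy walk on the giant mixes in $O(\log^2 n)$ steps by the cited percolation results.

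Let $\mathrm{rank}(x)$ be the hop distance along the tree path $P(x,s)$ to its first vertex in $L_{\le i-1}$. Each small component proposes the first edge of $P(x,s)$ from its minimum-rank vertex $x$. This edge has weight exactly $i$. Its other endpoint lies in $L_{\le i-1}$ or in a component of strictly smaller minimum rank, so the proposals form a forest oriented toward $L_{\le i-1}$. The cost is $|C|\cdot\tilde O(\sqrt n\, d)\cdot O(d\log n)=\tilde O(\sqrt n\, d^2)$.

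The obstacle you single out, expansion and local recognizability of the giant, is handled in the paper just by citing those percolation results. A purely local repair of your scheme would explore the whole cluster of small components that $C(u)$ reaches via weight-$i$ edges outside $L_{\le i-1}$, and choose one connector per cluster. That would additionally require proving that such clusters have polylogarithmic size, which you neither state nor prove.
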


We remark that \cref{thm:lca mst} holds for any slowly growing function $d=d(n)$, for example $d=\log n$.
Moreover, since the underlying graph $\bar G$ is $d$-regular and satisfies $\lambda=o(d)$, Cheeger’s inequality implies that $\bar G$ has constant conductance, i.e., $\phi_{\bar G}=\Omega(1)$, and is therefore an expander.

To the best of our knowledge, \cref{thm:lca mst} is the first non-trivial LCA for constructing an \emph{exact} minimum spanning tree on a natural class of weighted graphs, albeit under random edge weights.
The assumption of uniformly distributed weights is made primarily for clarity of presentation and can be relaxed: our result extends to general distributions over $\{1,\ldots,W\}$ that satisfy a suitable \emph{non-criticality} condition, which we formally define and discuss in \cref{sec:beyonduniform}.

We further observe that without the random edge-weight assumption, i.e., when edge weights are assigned adversarially, it is impossible to obtain an LCA with sublinear probe complexity, even when $W=2$ and the underlying graph is an expander (see \cref{sec:lowerboundforMST}).
Finally, note that when $W=1$, the MST problem reduces to the spanning tree problem, for which an $\Omega(\sqrt{n})$ lower bound on the probe complexity of LCAs is known~\cite{reut2014local}.

\subsection{Technical Overview}
Now we provide an overview of the techniques used in the design and analysis of our LCAs.

\subsubsection{Spanning Tree in Expander Graphs}  

Our local algorithm is based on a local implementation of a carefully designed global spanning tree algorithm. The global algorithm leverages both the \emph{spectral properties} of random walks on expanders and the \emph{structural properties} of shortest paths in the graph. A key aspect of our approach is ensuring that the global algorithm can be efficiently simulated locally, which we achieve through a detailed analysis of its behavior. We note that the recent work of \cite{levi2024nearly} on LCAs for sparse spanning subgraphs also relies on a global algorithm guided by random walks. However, in one crucial subroutine, their approach constructs spanning forests using Voronoi cells, whereas our method is fundamentally based on shortest-path structures. 
We first brief our global algorithm on a expander graph $G = (V,E)$.

\paragraph{Global Algorithm Overview}

The global algorithm proceeds in three \texttt{Phases}. It begins by selecting an arbitrary vertex $s$, which then serves as both the \emph{root} of the spanning tree and the \emph{seed} for the random walks.

In \texttt{Phase 1}, starting from vertex $s$, the algorithm initiates $\tilde{\Theta}(\sqrt{n})$ \emph{lazy random walks}, each of length $\tau$, where $\tau$ denotes the \emph{mixing time} of the graph. During these walks, the algorithm faithfully records the full trajectory of each walk, including \emph{every vertex} visited and \emph{every edge} traversed. We denote the the set of vertices by $S$, and the set of edges by $H$. 

Since the vertices in $S$ and edges in $H$ are generated by random walks of length $\tau$, the graph $G_0 = (S, H)$ forms a spanning subgraph of $G$ with diameter at most $2\tau$. Moreover, $G_0$ is fully explicit to the algorithm, allowing it to access any information about $G_0$ without issuing any new probes.

In \texttt{Phase 2}, the algorithm performs a BFS from the root $s$ in the graph $G_0$ to construct a \emph{BFS tree} of $G_0$. Since $G_0$ is fully explicit, this phase does not require any probes. We denote the resulting tree by $T_0 = (S,E^\prime)$, and refer to it as the \emph{core tree}. Since the diameter of $G_0$ is at most $2\tau$, the diameter of the core tree $T_0$ is also bounded by $2\tau$.

In \texttt{Phase 3}, we connect each vertex $u \notin S$ to the core tree $T_0$, where all vertices in $S$ already lie on a spanning tree rooted at $s$. For each such vertex $u$, the algorithm identifies its \emph{anchor} -- the closest vertex to $u$ in $S$. \textbf{A key distinction from prior work \cite{levi2024nearly}} lies in how ties are broken when multiple candidate vertices in $S$ are equally close to $u$: rather than selecting the anchor based on vertex ID, we break ties using the \emph{lexicographical order} of the paths from $u$ to the candidates in $S$.
In other words, the anchor of $u$ is the vertex in $S$ for which the shortest path from $u$ is lexicographically minimal among all such paths. This structural property of the anchor plays a crucial role in improving both the probe complexity and the total number of edges in the resulting subgraph.

\vspace{1cm}
\begin{minipage}[t]{0.48\textwidth}
\centering

\begin{tikzpicture}[scale=0.8, every node/.style={font=\small}, thick]
\draw[thick] (0,0) ellipse (4cm and 3cm);
\coordinate (corecenter) at (-1,1);
\draw[thick] (corecenter) circle (1.5);
\node at ($(corecenter)+(0,-1.9)$) {\textit{core tree $T_0$}};
\node[circle, fill=black, inner sep=1pt, label=left:$s$] (s) at ($(corecenter)+(-0.8,0.6)$) {};

\node[circle, draw, inner sep=1pt] (a) at ($(corecenter)+(-0.3,1.0)$) {};
\node[circle, draw, inner sep=1pt] (b) at ($(corecenter)+(0.0,0.3)$) {};
\node[circle, draw, inner sep=1pt] (c) at ($(corecenter)+(-0.6,0.0)$) {};
\node[circle, draw, inner sep=1pt] (d) at ($(corecenter)+(0.5,0.8)$) {};
\node[circle, draw, inner sep=1pt] (e) at ($(corecenter)+(1,0.2)$) {};
\node[circle, draw, inner sep=1pt] (f) at ($(corecenter)+(-1,-0.5)$) {};
\node[circle, draw, inner sep=1pt] (g) at ($(corecenter)+(0.8,-0.5)$) {};
\node[circle, draw, inner sep=1pt] (h) at ($(corecenter)+(0.2,-0.5)$) {};
\node[circle, draw, inner sep=1pt] (i) at ($(corecenter)+(0.7,1.2)$) {};
\node[circle, draw, inner sep=1pt] (j) at ($(corecenter)+(0,-0.9)$) {};

\draw (s) -- (a);
\draw (a) -- (d);
\draw (s) -- (b);
\draw (s) -- (c);
\draw[dashed] (b) -- (d);
\draw (b) -- (e);
\draw (c) -- (f);
\draw[dashed] (a) -- (b);
\draw (a) -- (i);
\draw[dashed] (b) -- (c);
\draw (b) -- (h);
\draw[dashed] (h) -- (c);
\draw (h) -- (g);
\draw (h) -- (j);
\node at (0, -4.5) {\textbf{Phase $1 \And 2$: Refine random walks to get core tree}};
\end{tikzpicture}
\end{minipage}
\hfill
\begin{minipage}[t]{0.48\textwidth}
\centering

\begin{tikzpicture}[scale=0.8, every node/.style={font=\small}, thick]
\draw[thick] (0,0) ellipse (4cm and 3cm);
\coordinate (corecenter) at (-1,1);
\draw[thick] (corecenter) circle (1.5);
\node at ($(corecenter)+(0,-1.9)$) {\textit{core tree $T_0$}};
\node[circle, fill=black, inner sep=1pt, label=left:$s$] (s) at ($(corecenter)+(-0.8,0.6)$) {};

\node[circle, draw, inner sep=1pt] (a) at ($(corecenter)+(-0.3,1.0)$) {};
\node[circle, draw, inner sep=1pt] (b) at ($(corecenter)+(0.0,0.3)$) {};
\node[circle, draw, inner sep=1pt] (c) at ($(corecenter)+(-0.6,0.0)$) {};
\node[circle, draw, inner sep=1pt] (d) at ($(corecenter)+(0.5,0.8)$) {};
\node[circle, draw, inner sep=1pt] (e) at ($(corecenter)+(1,0.2)$) {};
\node[circle, draw, inner sep=1pt] (f) at ($(corecenter)+(-1,-0.5)$) {};
\node[circle, draw, inner sep=1pt] (g) at ($(corecenter)+(0.8,-0.5)$) {};
\node[circle, draw, inner sep=1pt] (h) at ($(corecenter)+(0.2,-0.5)$) {};
\node[circle, draw, inner sep=1pt] (i) at ($(corecenter)+(0.7,1.2)$) {};
\node[circle, draw, inner sep=1pt] (j) at ($(corecenter)+(0,-0.9)$) {};

\draw (s) -- (a);
\draw (a) -- (d);
\draw (s) -- (b);
\draw (s) -- (c);
\draw[dashed] (b) -- (d);
\draw (b) -- (e);
\draw (c) -- (f);
\draw[dashed] (a) -- (b);
\draw (a) -- (i);
\draw[dashed] (b) -- (c);
\draw (b) -- (h);
\draw[dashed] (h) -- (c);
\draw (h) -- (g);
\draw (h) -- (j);

\node[circle, fill=black, inner sep=1pt, label=below:$u$] (u) at ($(corecenter)+(2.8,-1.3)$) {};
\node[circle, draw, inner sep=1pt] (1) at ($(corecenter)+(2,-0.8)$) {};
\node[circle, draw, inner sep=1pt] (2) at ($(corecenter)+(2,-1.5)$) {};
\node[circle, draw, inner sep=1pt] (3) at ($(corecenter)+(3.2,-1.8)$) {};
\node[circle, draw, inner sep=1pt] (4) at ($(corecenter)+(2.7,-2.5)$) {};
\node[circle, draw, inner sep=1pt] (5) at ($(corecenter)+(1.7,-2.5)$) {};
\node[circle, draw, inner sep=1pt] (6) at ($(corecenter)+(3.7,-2.5)$) {};

\draw[red] (u) -- (1);
\draw (u) -- (2);
\draw (u) -- (3);
\draw[red] (1) -- (g);
\draw (2) -- (h);
\draw (2) -- (4);
\draw (3) -- (4);
\draw (2) -- (5);
\draw (3) -- (6);
\node at (0, -4.5) {\textbf{Phase 3: BFS to find anchor}};
\end{tikzpicture}
\end{minipage}

Specifically, the algorithm iterates over every vertex $u \in V \setminus S$. For each such vertex $u$, the algorithm performs a BFS\footnote{This is implemented as the subroutine \textsc{FindPath} in our algorithm. Moreover, \textsc{FindPath} actually performs \emph{lexicographical BFS} (see \Cref{sec-lexico bfs} for details).} starting from $u$ to discover the path to its anchor. Once the path is found, algorithm updates the edge set $E'$ by including all edges along the path (while keep the set $S$ invariant). After all iterations are complete, $E'$ contains a path from the root $s$ to every vertex $u \in V$, where each path consists of two segments:
\begin{enumerate}
    \item a path from $s$ to the anchor of $u$ along the core tree $T_0$;
    \item a path from the anchor to $u$ discovered during the iteration over $u$.
\end{enumerate}
Therefore, after \texttt{Phase 3}, the subgraph $T = (V, E')$ forms a connected spanning subgraph of $G$. Moreover, the first segment has length at most $2\tau$ as discussed before, and the second segment has length at most the diameter of the graph, which is also bounded by $\tau$ (shown in the proof).

Having established that $T = (V, E')$ is connected, we need to show that $T$ is in fact a tree. This follows from the structural properties of the anchor assignment, which ensure that no cycles are introduced during \texttt{Phase 3}. Additionally, we analyze the probe complexity of each BFS used to discover anchor paths, leveraging both the spectral properties of random walks and the expansion characteristics of the graph $G$.

\paragraph{Local Algorithm Implementation}
To locally determine whether an edge $(u,v)$ belongs to $E'$, we first establish a key property of the shortest path between a vertex $u \in V \setminus S$ and its anchor. Specifically, an edge $(u,v)$ is in $E'$ \textbf{if and only if} it lies on the path from $u$ to its anchor or on the path from $v$ to its anchor. 

This property enables a local algorithm to answer edge-membership queries by simulating the first two phases of the global algorithm—random walk sampling and core tree construction—and then performing BFS from both $u$ and $v$ to identify the corresponding anchor paths. 

For consistency, in \texttt{Phase 1}, the local algorithm always uses the \emph{same random bits} for random walks across all queries, ensuring that the locally reconstructed subgraph is consistent with the global construction.

\vspace{0.5em}
\noindent\textbf{\texttt{Edge-Membership Query: Is $(u,v) \in E'$?}}

\begin{itemize}
    \item \emph{Step 1: Simulate Random Walks.} Run the same random walks as performed in the global algorithm.
    
    \item \emph{Step 2: Construct Core Tree.} Extract the core tree $T_0 = (S, E')$ from the random walk trajectories.
    \begin{itemize}
        \item If $(u,v) \in E'$, return \textsc{Yes}.
    \end{itemize}
    
    \item \emph{Step 3: Identify Anchor Paths.} Otherwise, perform BFS from both $u$ and $v$ to find their respective paths to their anchors.
    \begin{itemize}
        \item If edge $(u,v)$ lies on either of these two paths, return \textsc{Yes}; otherwise, return \textsc{No}.
    \end{itemize}
\end{itemize}

\subsubsection{Spanning Tree in Random Graphs}
To break the $\Omega(\sqrt{n})$ lower bound for graphs drawn from the \ER distribution, we leverage the tools developed previously along with a key property of $G(n,p)$: each pair of vertices forms an edge independently with probability $p = n^{\delta-1}$, where $\delta > 0$ is constant.

A simple \emph{initial observation} is as follows. For a pair of non-adjacent vertices $u$ and $v$ in $G \sim G(n,p)$, independent BFS explorations from $u$ and $v$ are unlikely to intersect until each has visited roughly $\sqrt{n^{1-\delta}}$ vertices. Indeed, for two disjoint sets $V_1,V_2$ of size $o(\sqrt{n^{1-\delta}})$, the number of potential edges between them is $o(n^{1-\delta})$, and each edge exists independently with probability $p$, giving an expected number of edges $o(1)$ between the BFS frontiers.

\paragraph{Beyond the $\sqrt{n}$ Barrier} 
This observation motivates a new approach that achieves probe complexity $R = \tilde{\Theta}(\sqrt{n^{1-\delta}})$. We first perform $R$ independent random walks to construct a small core tree $T_0$ spanning a vertex set $S$. Then, for each vertex, we use at most $R$ probes to attempt to find a path to an anchor in $S$. Vertices that succeed are called \emph{good}.

Vertices that fail to find their anchors within this limited BFS are called \emph{bad}. For each bad vertex $u$, we exploit properties of the \ER distribution to define a recovery set $IP(u)$ of size up to $n^\delta$, corresponding to a long path in the subgraph. A key property (\Cref{lem-gnp bfs}) is that BFS explorations from vertices in $IP(u)$ rarely intersect, ensuring that the combined BFS from $IP(u)$ covers up to $\sqrt{n^{1-\delta}} \cdot n^\delta = \sqrt{n^{1+\delta}}$ distinct vertices. By properties of random walks, this guarantees that $IP(u)$ contains at least one good vertex, connecting every bad vertex to the core tree. Since the algorithm maintains exactly $n-1$ edges, the resulting subgraph is connected and acyclic, i.e., a spanning tree.

However, if the algorithm uses only $\tilde{\Theta}(\sqrt{n^{1-\delta}})$ probes, the guarantee that $IP(u)$ reaches size $n^\delta$ holds only when $\delta \le 1/3$. Accordingly, the analysis is divided into two cases:

\begin{itemize}
    \item For $\delta \le 1/3$, the procedure above achieves probe complexity $\tilde{\Theta}(\sqrt{n^{1-\delta}})$.
    \item For $\delta > 1/3$, each vertex $u$ is connected to a restricted set $RIP(u)$, smaller than $IP(u)$. Using vertex-pair probes in the general graph model (see \Cref{sec-preliminary}) and the fact that each pair forms an edge with probability $p$, we still achieve probe complexity $\tilde{\Theta}(\sqrt{n^{1-\delta}})$.
\end{itemize}

\paragraph{Proof Techniques}  
To analyze the algorithm rigorously, we introduce functions that characterize the structure of the sets $\{IP(u)\}_{u \in V}$ and $\{RIP(u)\}_{u \in V}$ (see \Cref{def-bc1,def-bc2}). This function-based framework for average-case LCA analysis was proposed by Biswas et al.~\cite{biswas2024average}. However, it is highly problem-specific: different algorithms require custom function definitions. Consequently, our definitions in \Cref{def-bc1,def-bc2} differ significantly from those in~\cite{biswas2024average}.

\subsubsection{Minimum Spanning Tree}

Moving from unweighted spanning trees to minimum spanning trees introduces new challenges, as the structure of the output is no longer determined solely by connectivity, but also by the relative ordering of edge weights. We focus on expander graphs with integer edge weights drawn from $\{1,2,\dots,W\}$. As shown in \Cref{sec:lowerboundforMST}, designing an LCA with sublinear probe complexity for the MST problem on expander graphs with arbitrarily assigned edge weights is impossible, even when $W=2$. 

Motivated by this impossibility result, we instead consider an average-case setting in which the weight of each edge is sampled independently and uniformly from $\{1,2,\dots,W\}$. In this setting, we show that it is indeed possible to design a local computation algorithm with sublinear probe complexity for computing an exact minimum spanning tree, as formalized in \cref{thm:lca mst}.

Here we use an observation that goes back to Kruskal’s algorithm and was exploited by Chazelle, Rubinfeld, and Trevisan in their sublinear MST weight estimation work~\cite{chazelle2005approximating}. Consider a connected graph $G$ with edge weights in $\{1,2,\dots,W\}$ for some integer $W \ge 2$. Let $G_{\le i}$ denote the subgraph of $G$ consisting of all edges of weight at most $i$.

The MST of $G$ can be constructed incrementally. First, compute a minimum spanning forest (MSF) $F_{\le 1}$ of $G_{\le 1}$, which consists of an MST for each connected component of $G_{\le 1}$. Next, add a suitable set of weight-2 edges $E_2$ to obtain
\[
F_{\le 2} = F_{\le 1} \cup E_2 ,
\]
an MSF of $G_{\le 2}$. Continuing inductively, given an MSF $F_{\le i}$ of $G_{\le i}$, we add a set of weight-$(i+1)$ edges $E_{i+1}$ to form
\[
F_{\le i+1} = F_{\le i} \cup E_{i+1},
\]
which is an MSF of $G_{\le i+1}$ and contains the MST of each connected component of $G_{\le i+1}$. Since $G$ is connected, at the final step the resulting forest $F_{\le W}$ is a minimum spanning tree of $G$.

The key technical difficulty here 
is to ensure \emph{consistency across different weight layers}, since the LCA must answer membership queries for edges without explicitly constructing forests \(F_{\le i}\). Unlike in the unweighted setting, the decision for an edge of weight \(i\) depends on the connectivity structure induced by all lower-weight edges, and directly simulating the MSF construction on \(G_{\le i}\) may incur superlinear probe complexity.

To overcome this issue, we rely on two ideas. First, we exploit structural properties induced by the random weight distribution. This weakens dependencies between decisions for edges across different weight layers and allows us to determine membership using only local information. Second, within each local weight layer, we use a spanning tree of the underlying unweighted graph as a global reference structure. The tree encodes both connectivity and acyclicity of the graph, which we leverage to help determine whether a queried edge should be retained.

\paragraph{Hierarchy by Random Weights}

Given a $d$-regular expander graph, if each edge is sampled independently with probability $p$, then as long as $p$ avoids a narrow critical interval (e.g., $(\frac{1-\varepsilon}{d}, \frac{1+\varepsilon}{d})$ for sufficiently small constant $\eps$), the resulting subgraph $G_p$ satisfies one of the following: either all connected components are small, of size $O_\varepsilon(\log n)$, or there is a single giant connected component of size $\Omega_\varepsilon(n)$ with fast mixing time $O(\log^2 n)$, while the remaining components are small, of size $O_\varepsilon(\log n)$.

This phenomenon yields a layered view of the graph as edges are revealed in increasing order of weights. Starting from \(G_{\le 1}\), with high probability the graph contains a unique giant component \(L_{\le 1}\), while all remaining components are small. As the threshold increases, the giant component gradually grows by absorbing smaller components, and each \(G_{\le i}\) continues to consist of a single giant component \(L_{\le i}\) together with only small components. We illustrate the hierarchy in following figure, and note that $L_{\leq W} = G$.

\begin{figure}[h]\label{fig:hrk}
\centering

\begin{tikzpicture}[scale=0.8, every node/.style={font=\small}, thick]

\draw[thick] (0,0) circle (1);
\node at (0,0) {$L_{\le 1}$};

\draw (2.0,0.8) circle (0.25);
\draw (1.9,-0.7) circle (0.25);
\draw (1.6,0.1) circle (0.25);
\draw (1.2,-1) circle (0.25);
\draw (1.2,1) circle (0.25);

\draw[thick] (0.8,0) circle (2);
\node at (2.6,2) {$L_{\le 2}$};

\node at (3.8,0) {$\cdots$};

\draw[thick] (2.3,0) ellipse (3.6 and 2.8);
\node at (4.5,2.8) {$L_{\le W}$};

\node at (2.3, -3.5) {\textbf{Hierarchy across giant components}};
\end{tikzpicture}

\end{figure}

We discuss this formally in \cref{sec:hierarchy}. Intuitively, for small components, we can explore the entire component with few queries and then use the full information to compute its MST or extract edges to add toward a larger MST. For the giant component $L_{\leq 1}$, since it only contains weight-$1$ edges, we can apply a random-walk–based approach, as previously described, to determine whether any given edge belongs to its MST.

Now the problem is reduced to, for each $i \in [W-1]$, select appropriate edges to connect small components between $L_{\leq i}$ and $L_{\leq i+1}$, 
i.e., those components of $G_{\le i}$ that are absorbed into $L_{\le i+1}$ when edges of weight $i+1$ are added. 
Note that every such component only contains edges with weight at most $i$. 
To this end, we apply the second idea, namely leveraging an unweighted spanning tree on $L_{\leq i+1}$, which allows us to iteratively construct a minimum spanning forest of $G_{\leq i+1}$.

\paragraph{Unweighted Spanning Tree for Reference}
Let us focus on the small components that lie between $L_{\leq 1}$ and $L_{\leq 2}$ for illustration. 
For each such small component, say $C$, our goal is to identify a weight-$2$ edge incident to $C$, such that the collection of all selected weight-$2$ edges connects these small components to $L_{\leq 1}$ without creating cycles.

To identify such edges, we use an (unweighted) spanning tree of $L_{\leq 2}$ as a reference. For intuition, fix any (unweighted) spanning tree $T$ of $L_{\leq 2}$ and contract each component $C$ into a super-vertex. The resulting graph remains connected. Moreover, every edge between super-vertices must have weight $2$ in $G$; otherwise, the corresponding super-vertices would already be merged.

This implies that the edges of $T$ provide useful structural information. For each component $C$, we select a weight-$2$ edge incident to $C$ that appears in $T$. To guarantee acyclicity, we define a partial ordering, called \emph{rank} (\cref{def-rank}), over vertices within each small component with respect to $T$. Importantly, $T$ is locally accessible using \cref{thm:d-expander}, provided that we can support adjacency-list probes to $L_{\leq 2}$ using the LCA constructed for expander graphs. Combining these ideas, we obtain \cref{alg-probbc} for identifying such edges.

\medskip
\noindent
The final LCA (\cref{alg-in mst}) integrates the above procedures to determine whether a queried edge should be included in the minimum spanning tree.

\subsection{Other Related Work}
\paragraph{LCA for Sparse Spanning Subgraphs (Spanners).}
We review the literature on Local Computation Algorithm (LCA) for two closely related problems: sparse spanning subgraphs and graph spanners. Both aim to consistently provide local access to a sparsified version of a graph while preserving certain structural properties. LCAs for sparse spanning subgraphs primarily focus on ensuring connectivity with as few edges as possible. In contrast, LCAs for spanners aim to preserve approximate pairwise distances, where a $ k $-\emph{spanner} means that it preserves all pairwise distances within a multiplicative factor $ k \geq 1 $. Since most existing works trades off between sparsity and stretch, we discuss both lines of research together.

A series of work ~\cite{reut2014local, levi2015quasi, levi2021testing} consider LCA for sparse spanning subgraph in minor-free graphs. Levi et al.~\cite{levi2015constructingnearspanningtrees} designed an LCA for hyperfinite graphs with probe complexity independent of $ n $, though super-exponential in $ d $ and $ 1/\varepsilon $. Lenzen and Levi~\cite{lenzen2018centralized} gave an LCA for $d$-bounded graphs with probe complexity $ O(n^{2/3} \cdot \poly(d/\varepsilon)) $ and stretch $ O\left( \frac{\log n \cdot (d + \log n)}{\varepsilon} \right) $. Bodwin et al.~\cite{bodwin2023} gave an adjacency oracle for a spanning subgraph with $(1 + \varepsilon)n$ edges. This model allows centralized preprocessing and answers each adjacency probe in $\tilde{O}(1)$ time. With total preprocessing time $\tilde{O}(n/\varepsilon)$, their algorithm implies an $\tilde{O}(n/\varepsilon)$-time LCA: one can construct an adjacency oracle for each query and erasing the memory after that. Levi et al.~\cite{levi2024nearly} also extended their LCA to well-clusterable graphs. Suppose $G$ is a graph that can be partitioned into a constant number of vertex sets each of which inducing an expander and has conductance at most $\phi_{\text{out}}$ in $G$. Their algorithm finds a spanning subgraph with $(1+\eps)n$ edges, achieves probe complexity $ O(\sqrt{n} + \phi_{\text{out}} n) $ and stretch $ \Theta(\log n) $.

For spanner with constant stretch, Parter et al.~\cite{parter2019local} and Arviv et al.~\cite{arviv_et_al:LIPIcs.APPROX/RANDOM.2023.42} proposed LCAs that construct $ (2k - 1) $-spanners with $ O(n^{1+1/k}) $ edges for small $ k \in \{2, 3\} $, and more generally, $ O(k^2) $-spanners with $ O(n^{1+1/k}) $ edges for arbitrary $ k \geq 1 $. In particular, Arviv et al.~\cite{arviv_et_al:LIPIcs.APPROX/RANDOM.2023.42} improved the probe complexity to $ O(n^{1 - 1/k}) $ for the $ (2k - 1) $-spanners, and to $ O(n^{2/3} d^2) $ for the $ O(k^2) $-spanners.

\paragraph{LCA for Other Problems.} 
LCAs have also been extensively studied from various perspectives. In particular, for several classical problems including maximal independent set (MIS)~\cite{ghaffari2016improved, LRY17, ghaffari2019sparsifying, Gha22},  
maximal matching~\cite{hassidim2009local, yoshida2009improved, mansour2013local, LRY17, BRR23},  
and vertex coloring~\cite{even2014deterministic, feige_et_al:LIPIcs.ICALP.2018.50, czumaj2018sublinear, chang2019complexity}.  
Recent works have also established lower bounds for LCAs~\cite{behnezhad2023local, azarmehr2025lower}.  
Despite the above, there are also interesting applications that leverage LCAs. For example,~\cite{lange2022properly, lange2025local, lange2025agnostic, lange2025robust} employ LCAs to promote the development of learning theory, while~\cite{azarmehr2025stochastic} applies LCAs to analyze an algorithm in the context of the Stochastic Matching problem.

\paragraph{Average-case LCA.}

Previously, local algorithms were studied by \cite{brautbar2010local,borgs2012power,frieze2017looking} in the preferential attachment model, a well-known random graph model characterizing real-world networks. The algorithm they studied is called ``local information algorithms'', a restricted-version of LCA. More recently, Biswas et al.~\cite{biswas2024average} initiated and formalized the study of \emph{average-case} LCAs, focusing on the construction of $k$-spanners (and sparse spanning subgraph) when the input graph is drawn from various random graph models, including Preferential Attachment and \ER graphs with certain parameters. The probe complexity of their algorithms depends on the characteristics of the specific model. In particular, for the Preferential Attachment model, they design an LCA that provides local access to an $O(\log n)$-stretch spanning tree using $O(\mu \sqrt{n})$ probes in the worst case and $O(\mu \log^3 n)$ probes in expectation over random queries, where $\mu$ denotes the expected degree. For Erdős–Rényi graphs $G \sim \mathcal{G}(n, p)$, when $np = n^\delta$, a $(2/\delta + 5)$-stretch spanner with $n + o(n)$ edges can be accessed using $O\left(\min(n^\delta, n^{1 - \delta} \log n)\right)$ probes; for $p \geq 7\log n / n$, another algorithm constructs a sparse spanning subgraph with $n + o(n)$ edges using $\tilde{O}(\Delta)$ probes, where $\Delta = np$ is the expected degree.

\paragraph{Phase Transition of Random Subgraphs.}
Percolation theory, initiated by Broadbent and Hammersley \cite{broadbent1957percolation} in 1957, studies probabilistic models of random subgraph: Given a \emph{base graph} $G$, the \emph{percolated subgraph} $G_p$ is obtained by retaining each edge of $G$ independently with probability $p$.
In this line of work, one central topic is to understand the \emph{phase transition behavior} of $G_p$ as the percolation probability $p$ increases.
Typically, there exists a critical threshold separating a \emph{subcritical regime}, in which all connected components have size $O(\log n)$ with high probability, from a \emph{supercritical regime}, in which a unique giant connected component of size $\Omega(n)$ emerges.
For the Erd\H{o}s--R\'enyi model $G(n,p)$, this threshold is $p = 1/n$ \cite{erd6s1960evolution}.
Similar phenomena have been established for other base graphs such as the hypercube \cite{ajtai1982largest,bollobas1992evolution}, general graphs under degree and spectral assumptions \cite{chung2009percolation}, and in particular for $d$-regular expanders, \cite{frieze2004emergence, diskin2024expansion} characterized the emergence and expansion of the giant component.

\paragraph{MSTs on Graphs with Randomly Assigned Edge Weights.} The study of MSTs on graphs with randomly assigned edge weights is a classical topic in theoretical computer science. Many previous works have approached this problem from various perspectives, particularly focusing on estimating the weight of the MST when the edge lengths are drawn independently from identical distributions~\cite{frieze1985value,beveridge1998random,steele2002minimal,frieze2018edge,babson2024models}.

\section{Preliminaries}\label{sec-preliminary}
In this paper, we use $[k]$ to denote set $\{1,2,\cdots,k\}$ for positive integer $k$. For input graph $G=(V,E)$, throughout this paper we assume $V=[n]$. For $u \in V$, we use $\Gamma_G(u)$ to denote the neighborhood of $u$ in $G$, i.e. $\Gamma_G(u):=\{v\in V| (u,v) \in E\}$. We sometimes omit the subscript if it causes no confusion.

We will use the following definitions of paths, path orderings and lexicographically-least shortest path. 
\begin{definition}\label{def-small}
    For a set of vertices $S$, $\smallest{S}$ (resp. $\largest{S}$) denotes the vertex with smallest (resp. largest) ID in set $S$.
\end{definition}

\begin{definition}[path]
For graph $G=(V=[n],E)$, we call a sequence of vertex $P = (u_1,u_2,\cdots,u_k)$ a \textbf{path} starting at $u_1$ and ending at $u_k$, if $u_i \in V$ for $i \in [k]$ and $(u_i,u_{i+1}) \in E$ for $i \in [k-1]$. We use following notations: (1) $V(P)$: vertex set induced by $P$, i.e. $\{u_1,u_2,\cdots,u_k\}$; (2) $E(P)$: edge set induced by $P$, i.e. $\{(u_1,u_2),(u_2,u_3)\cdots,(u_{k-1},u_k)\}$; (3) $\ell(P)$: length of $P$, i.e. $|E(P)|$; (4) $P^{(i)}$: the $i$-th vertex in $P$, i.e. $u_i$.
\end{definition}

\begin{definition}[path ordering]\label{def-comp}
    For two paths $P_1$ and $P_2$ starting from same vertex, we say \textbf{$P_1 \prec P_2$} if

        $\ell(P_1) < \ell(P_2)$; or

        $\ell(P_1) = \ell(P_2)$, and from starting vertex to ending vertex, $P_1$ has smaller lexicographical order than $P_2$.
    
\end{definition}

\begin{definition}[lexicographically-least shortest path]
    Path $(u,w_1,w_2,\cdots,v)$ is the \textbf{lexicographically-least shortest path} from $u$ to $v$ if for any other path $(u,w^\prime_1,w^\prime_2,\cdots,v)$ from $u$ to $v$, it holds that
    \[
    (u,w_1,w_2,\cdots,v) \prec (u,w^\prime_1,w^\prime_2,\cdots,v).
    \]
    We use $P_G(u,v)$ to denote the lexicographically-least shortest path from $u$ to $v$ in graph $G$. For convenience, let $P_G(u,u) = (u)$.
    
\end{definition}
Specifically, note that for any vertex $u\in V$, $V(P_G(u,u))=\{u\}$ and $E(P_G(u,u))=\emptyset$.

\paragraph{\ER graph}
We formally define the \ER graph.

\begin{definition}[\ER graphs]\label{def:ER}
    For a function $p=p(n)$, a graph $G\leftarrow{} \Gnp$ is an \textbf{\ER random graph} if for every pair $(u,v) \in [n] \times[n]$ with $u \neq v$, edge $(u,v)$ is added to $E(G)$ with probability $p$ independently.
\end{definition}

\subsection{Access Models}
We will consider the following access models. 

\textbf{Adjacency List Model.}  
In the adjacency list model, a local computation algorithm (LCA) accesses the graph $G = ([n], E)$ using two types of probes. The probe \texttt{Deg}$(v)$ returns the degree of a vertex $v$. The probe \texttt{Nbr}$(v, i)$ returns the $i^\text{th}$ neighbor of vertex $v$ according to its adjacency list if $i \le \texttt{Deg}(v)$, and returns $\bot$ otherwise. 
{For weighted graph, \texttt{Nbr}$(v, i)$ also return weight of corresponding edge.}

\textbf{General Graph Model.}  
In the general graph model, the LCA has access to the same \texttt{Deg}$(v)$ and \texttt{Nbr}$(v, i)$ probes as in the adjacency list model, and is additionally allowed to perform vertex-pair probes. Specifically, a vertex-pair probe \texttt{Exists}$(u, v)$ returns \texttt{true} if the edge $(u, v) \in E$, and \texttt{false} otherwise. This model provides more flexibility by enabling direct probes about whether a specific edge is present in the graph.

\subsection{Lexicographical Breadth-First Search}\label{sec-lexico bfs}
Throughout this paper, we frequently use \textbf{lexicographical BFS} to find lexicographically-least shortest path from a given vertex $u$, to a given set of vertices $S$, which is performed as follows:

At the $t$-th level of the BFS from $u$, we probe the neighbors of vertices discovered at level $t-1$. Among all these vertices discovered at level $t-1$, we process vertices in the increasing order of their IDs. That is to say, we first probe neighbors of the vertex with the smallest ID, then the second smallest, and so on. 

Through this way, at the moment the BFS from $u$ finds a vertex $v \notin S$ incident to $S$, $v$ must be the lexicographically first vertex incident to $S$ at distance $t$ from $u$. This means, the lexicographically-least shortest path from $u$ to $S$, must pass through $v$ to some vertex in $S$. Hence the lexicographically-least shortest path from $u$ to $S$, is a conjunction of the BFS path from $u$ to $v$ and the edge connecting $v$ to the first neighbor of $v$ incident to $S$.

\subsection{Conductance, Expander graphs and Random walks}

For graph $G=(V,E)$ with degree at most $d$, we modify graph $G$ to get $d$-regular graph $G_{reg}$ by adding half-weighted self-loops to each vertex in $G$ so that each vertex has degree $d$. In this paper, instead of directly performing random walk on $G$, we perform \textit{lazy random walk} on $G_{reg}$. A lazy random walk is a random walk which stays at current vertex w.p. $\frac{1}{2}$ at each step. Note that this is equivalent to a random walk on $G$ which stays at the current vertex $v$ w.p. $\frac{2d - deg(v)}{2d}$ and moving to each neighbor w.p. $\frac{1}{2d}$. Each step of such random walk can be implemented by uniformly sampling an integer $i$ from $[2d]$ and making a probe \Nbr$(v, i)$.

Using the notion of graph conductance, we have following corollary for lazy random walk on $G_{reg}$.

\begin{definition}[conductance]
    Let $G=(V,E)$ be a graph and $S \subseteq V$ be a vertex set. Denote $\mu_G(S) := \sum_{v \in S} deg(v)$ and $\phi_G(S):= \frac{|E(S,V\setminus S)|}{\mu_G(S)}$, where $E(A,B)$ denotes the set of edges with one vertex in $A$ and the other in $B$.

    The \textbf{conductance} of $G$ is defined by $\phi_G:= \min\limits_{\substack{\emptyset \subsetneq S \subseteq V \\ \mu_G(S)\leq \mu_G(V)/2}} \phi_G(S)$.
\end{definition}

We call a graph $G$ an expander if $\phi_G\geq \phi$ for some universal constant $\phi>0$.

\begin{corollary}[\cite{levi2024nearly}]\label{cor-lazy reg mixing time}
    Let $G=(V,E)$ be a connected $d$-bounded degree graph on $n$ vertices and let $v\in V$. If we perform a lazy random walk in $G_{reg}$ staring from $v$, of length at least $\tau(G_{reg}) := \frac{c\log (n)}{\phi^2(G)}$ for some large constant $c$, then the probability this walk ends at $u$ is at least $\left( \frac{1}{2 n} \right)$ for every $u \in V$.
\end{corollary}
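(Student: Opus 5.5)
The plan is the standard spectral mixing argument, applied to the lazy walk on the regularized graph $G_{reg}$, where regularity makes the stationary distribution uniform. Let $M = \frac{1}{2}(I + A_{reg}/d)$ be the transition matrix of the lazy walk on $G_{reg}$. Here $A_{reg}$ is the adjacency matrix of $G_{reg}$, with the half-weighted self-loops placed so that every row sums to $d$. Then $M$ is symmetric and doubly stochastic, so its stationary distribution is $\pi = \frac{1}{n}\mathbf{1}$. Because the walk is lazy, all eigenvalues of $M$ lie in $[0,1]$; write them as $1=\mu_1 \ge \mu_2 \ge \cdots \ge \mu_n \ge 0$.

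\textbf{Step 1: spectral gap.} I would first bound the spectral gap of $M$ using Cheeger's inequality. For the non-lazy walk $A_{reg}/d$, Cheeger gives $1-\lambda_2 \ge \phi(G_{reg})^2/2$. Laziness halves the gap, so $1-\mu_2 \ge \phi(G_{reg})^2/4$. Since all $\mu_i \ge 0$, the second-largest absolute eigenvalue of $M$ is exactly $\mu_2$, and I do not need to worry about negative eigenvalues.

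\textbf{Step 2: pointwise convergence.} Next I would expand $e_v$ in an orthonormal eigenbasis of the symmetric matrix $M$. This gives, for every $u$,
\[
\left| M^t(v,u) - \tfrac{1}{n} \right| \le \mu_2^t \le \exp\!\left(-t\,\phi(G_{reg})^2/4\right).
\]
The usual factor $\sqrt{\pi(u)/\pi(v)}$ equals $1$ here because $\pi$ is uniform. Choosing $t \ge c\log n/\phi^2$ with $c$ large makes the right-hand side at most $n^{-2} \le \frac{1}{2n}$. Hence $M^t(v,u) \ge \frac1n - \frac{1}{2n} = \frac{1}{2n}$ for every $u$, which is the claim.

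\textbf{Step 3: relating $\phi(G_{reg})$ to $\phi(G)$ (the main obstacle).} The step that needs care is this relation. Adding self-loops leaves every cut $E(S,V\setminus S)$ unchanged, but it inflates volumes from $\mu_G(S)$ to $d|S|$. Moreover, the side constraint changes from $\mu_G(S)\le \mu_G(V)/2$ to $|S|\le n/2$. I would handle this in two cases.

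\emph{Case (a): $\mu_G(S) \le \mu_G(V)/2$.} Then $\phi_{G_{reg}}(S) = |E(S,V\setminus S)|/(d|S|) \ge \phi_G\,\mu_G(S)/(d|S|)$.

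\emph{Case (b): $\mu_G(S) > \mu_G(V)/2$.} Then I would apply the conductance bound to the complement $V\setminus S$, which shares the same cut. This gives $|E(S,V\setminus S)| \ge \phi_G\,\mu_G(V\setminus S)$.

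In both cases the loss is the ratio between actual degrees and $d$. For the bound to read with $\phi=\phi(G)$ and only a constant loss, I would need the degrees to be comparable to $d$. Failing that, I would follow the convention of \cite{levi2024nearly} that the conductance entering $\tau(G_{reg})$ is measured in $G_{reg}$; that is, $\phi$ denotes a lower bound on $\phi(G_{reg})$, and that is how the statement should be read. Whatever constant or degree-ratio factor arises is then absorbed into the constant $c$ in $\tau(G_{reg})$. Finally, I would note that walks longer than $\tau$ only help, since $\mu_2^t$ is monotone in $t$.
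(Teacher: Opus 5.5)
The paper gives no proof of this statement; it imports it from \cite{levi2024nearly}. Your Steps 1 and 2 are the standard spectral argument behind it, and they are correct as written. $M$ is symmetric with spectrum in $[0,1]$, Cheeger gives $1-\mu_2\ge \phi(G_{reg})^2/4$, and $|M^t(v,u)-\frac1n|\le \mu_2^t\le n^{-c/4}\le \frac{1}{2n}$ for $c\ge 8$ and every $t\ge\tau$.

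The flaw is the last sentence of Step 3. Your cases (a) and (b) combine to give $\phi(G_{reg})\ge (d_{\min}/d)\,\phi(G)$, where $d_{\min}\ge 1$ is the minimum degree of $G$. This loss is tight: it is an equality when $G$ is $d_{\min}$-regular. The resulting factor $(d/d_{\min})^2$ in $\tau$ cannot be absorbed into an absolute constant $c$ when $d$ is not constant, and the paper explicitly allows that in \Cref{thm:d-expander}.

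In fact, with $\phi(G)$ defined through $\mu_G$ as in this paper, the statement is false in that regime. Let $G$ be a $3$-regular expander with $\phi(G)=\Omega(1)$ and take $d=n^{0.1}$. The lazy walk on $G_{reg}$ leaves its current vertex with probability only $q=3/(2d)$ per step. Since $G$ is $3$-regular, there is a vertex $u$ at distance $D=\Omega(\log n)$ from $v$. A walk of length $t=c\log n/\phi(G)^2=O(\log n)$ reaches $u$ with probability at most $\binom{t}{D}q^D\le (etq/D)^D=n^{-\Omega(\log n)}\ll \frac{1}{2n}$.

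So you must commit to your second reading: $\phi$ lower-bounds the conductance of $G_{reg}$, with volumes $d|S|$ and side condition $|S|\le n/2$. Under that reading, Steps 1--2 already complete the proof. Alternatively, state the conclusion with $\tau=c\,(d/d_{\min})^2\log n/\phi(G)^2$, which your case analysis does justify. Absorbing the degree ratio into $c$ is legitimate only when $d/d_{\min}=O(1)$. Examples are constant $d$, or $G(n,p)$ with $d=2n^\delta$, where all degrees are $\Theta(d)$.
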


\section{LCA for Spanning Tree on Expanders}\label{sec-expander}

In the following, we first present a global algorithm for constructing a spanning tree in expander graphs. Then we show how to locally implement it to obtain the LCA and prove \Cref{thm:d-expander}.

\subsection{A Global Algorithm for Constructing Spanning Trees}
The global algorithm (which runs in polynomial time) works in three phases. In \texttt{Phase 1}, it starts by sampling a vertex $s$ as a  seed and then performs $O(\sqrt{n})$ independent length-$\tau$ lazy random walks from $s$ on graph $G_{\reg}$, where $\tau \geq \tau(G_{\reg})$ is as defined in \Cref{cor-lazy reg mixing time}, which is large enough for the random walk to mix.

We denote by $S$ all the vertices met, and by $H$ the set of edges (discarding self-loops) seen  during random walks. Note that for every vertex $v \in S$, there is a path from $s$ to $v$ with length at most $\tau$, only using edges in $H$. Let $G_0 := (S,H)$ be the graph with vertex set $S$ and edge set $H$.

Then in \texttt{Phase 2}, the algorithm finds the spanning tree, denoted as $T_0=(S, H^\prime)$ of $G_0$. This can be done by performing BFS from $s$ in $G_0$. Note that the tree can be viewed as a tree rooted at $s$ of depth $O(\tau)$. In the following, we will call the tree $T_0=(S, H^\prime)$ a \textbf{core tree}. 

We have the following definition will be used in \texttt{Phase 3}.

\begin{definition}[anchor]\label{def-anc}
    Let $G=(V,E)$ and $S$ be a subset of $V$. A vertex $v \in S$ is said to be the \textbf{anchor} of $u \in V\setminus S$ if for every vertex $w\in S\setminus \{v\}$, $P_G(u,v) \prec P_G(u,w)$. Informally speaking, $P_G(u,v)$ is the the lexicographically-least shortest path among all paths from $u$ to $S$, and $v$ is the vertex in $S$ that is reached by $P_G(u,v)$.

    For $u\in V\setminus S$, we use $\anc_{S}(u)$ to denote the anchor of $u$ in the set $S$. For convenience, let $\anc_{S}(u):= u$ if $u\in S$.
\end{definition}

In \texttt{Phase 3}, we extend $T_0$ to span all remaining vertices $u$ by connecting them through their anchors $\anc_{S}(u)$, while maintaining the tree structure locally. Specifically, we employ a subroutine, \Cref{alg-find path}, which performs a BFS from a vertex $u \in V \setminus S$ in $G$, exploring vertices in lexicographical order until a vertex in $S$ is reached. 

We formally describe above procedure in \Cref{alg-global expander}.

\begin{algorithm}[htb]
\caption{Globally Computing a Spanning Tree}\label{alg-global expander}

\KwIn {$d$-bounded graph $G = (V, E)$ with conductance at least $\phi$.}
\KwOut {$T=(V, E^\prime)$ is a spanning tree of $G$.}

$E^\prime \gets \emptyset$\;

Select a seed $s \in V$ arbitrarily\;

Let $r := \Theta(\sqrt{n}\cdot \log n)$, $\tau :=  \tau(G) = \Theta(\frac{\log n}{\phi^2})$\;

\tcp{\texttt{Phase 1: perform random walks}}

Perform independent length-$\tau$ lazy random walks $\rho_1, \rho_2, \cdots \rho_r$ on $G_{reg}$, where $\rho_i = (v_1^{(i)},\cdots,v_\tau^{(i)})$ and $v_1^{(i)} = s$\;

Let $S := \bigcup_{i\in [r]} V(\rho_i)$, $H := \bigcup_{i\in [r]} E(\rho_i)$ and $G_0 := (S,H)$\;

\tcp{\texttt{Phase 2: refine $G_0$ to get core tree $T_0$}}

Perform BFS from seed $s$ in $G_0$ until all vertices are explored. $H^\prime \subseteq H$ includes every edge through which the BFS visits a vertex for the first time\;

Let $T_0 := (S,H^\prime)$, $E^\prime \gets H^\prime$\;

\tcp{\texttt{Phase 3: extend core tree to other vertices}}

\For{each $u \in V\setminus S$}{
$P \gets $ \textsc{FindPath}$(G,S,\sqrt{n},u)$\;
$E^\prime \gets E^\prime \cup E(P)$;\
}

\Return $T := (V, E^\prime)$.

\end{algorithm}

\begin{algorithm}[htb]
\caption{\textsc{FindPath}$(G,S,\theta,u)$: explore at most $\theta$ vertices to compute path $P_G(u,\anc_S(u))$ for any vertex $u$}\label{alg-find path}

\KwIn{graph $G = (V, E) \sim \Gnp$ with $np=n^\delta$; vertices set $S \subseteq V$; threshold $\theta$; $u\in V$.} 
\KwOut{the lexicographically-least shortest path from $u$ to set $S$, or $\bot$ if it fails.} 

\If {$u \in S$} {\Return $(u)$.}

Launch BFS from $u$ in $G$, until $\theta$ distinct vertices have been visited\;

\If {BFS finds a vertex in set $S$} 
{Let $v$ be the first vertex in BFS order which is incident to $S$\; 
$c \gets \smallest{\Gamma(v) \cap S}$\;
\Return the lexicographically-least shortest path from $u$ to vertex $c$.}

\Return $\bot$. \tcp{fail}

\end{algorithm}

\subsubsection{Correctness of \Cref{alg-global expander}}\label{sec-correctness}

In this section, we prove that the output subgraph $T$ is indeed a spanning tree of the input graph $G$.

\begin{theorem}\label{thm-global tree}

    If every invocation of \textsc{FindPath} does not return $\bot$ in \Cref{alg-global expander}, then output graph $T=(V,E^\prime)$ is a spanning tree of $G$.
\end{theorem}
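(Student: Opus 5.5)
The plan is to prove that $T$ is connected and has exactly $n-1$ edges. Equivalently, I will show that the edge set $E'$ is the union of $H'$ with one ``parent edge'' per vertex $u\in V\setminus S$, and that following parent edges from any $u\notin S$ reaches $S$ without revisiting vertices. Throughout, I use that, assuming no call returns $\bot$, \textsc{FindPath}$(G,S,\sqrt n,u)$ returns exactly $P_G(u,\anc_S(u))$. This follows from the description of lexicographical BFS in \Cref{sec-lexico bfs}: the first vertex $v$ in BFS order adjacent to $S$ lies on the lexicographically-least shortest path to $S$, and that path ends at $c=\smallest{\Gamma(v)\cap S}$.

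The key structural lemma is a \emph{suffix (subpath) property}. Let $u\notin S$ and write $P=P_G(u,\anc_S(u))=(u=u_1,u_2,\dots,u_k)$ with $u_k\in S$. I claim that for every $i$, the suffix $(u_i,\dots,u_k)$ equals $P_G(u_i,\anc_S(u_i))$, and in particular $\anc_S(u_i)=\anc_S(u)$. The proof is an exchange argument.
\begin{itemize}
\item First, $\dist(u_i,S)=k-i$. If some strictly shorter path from $u_i$ to $S$ existed, prepending $(u_1,\dots,u_i)$ would give a shorter path from $u$ to $S$, contradicting minimality of $P$.
\item Next, suppose some path $Q$ from $u_i$ to $S$ of length $k-i$ satisfies $Q\prec(u_i,\dots,u_k)$. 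Then $(u_1,\dots,u_{i-1})\circ Q$ is a path from $u$ to $S$ of length $k-1$ that agrees with $P$ on its first $i-1$ entries and is lexicographically smaller afterwards. This contradicts the choice of $P$.
\item Finally, all $u_j$ with $j<k$ lie outside $S$, since otherwise a shorter path to $S$ would exist.
\end{itemize}
The one technicality is that concatenations might not be simple paths. This does not matter, because a shortest walk to $S$ is automatically simple, so I compare against all shortest walks.

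Using the suffix property, define $p(u)=u_2$, the second vertex of $P_G(u,\anc_S(u))$, for each $u\notin S$. Then $E(P_G(u,\anc_S(u)))=\{(w,p(w)) : w\in V(P)\setminus\{u_k\}\}$. Hence Phase~3 adds exactly the edges $\{(w,p(w)) : w\in V\setminus S\}$, namely one edge per vertex outside $S$. Since $\dist(p(w),S)=\dist(w,S)-1$, these edges are pairwise distinct: the edge $(w,p(w))$ determines $w$ as its endpoint farther from $S$. None of them lies in $H'\subseteq S\times S$, because $w\notin S$. Therefore $|E'|=(|S|-1)+|V\setminus S|=n-1$, using that $T_0$ is a spanning tree of $G_0$. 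This holds because every vertex of $S$ is reached from $s$ by a walk, so $G_0$ is connected and the BFS in Phase~2 yields a tree on $S$.

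Connectivity follows directly. Each $u\notin S$ is joined to $\anc_S(u)\in S$ by a path in $E'$, and $T_0$ connects all of $S$. A connected graph on $n$ vertices with $n-1$ edges is a tree, so $T$ is a spanning tree of $G$. The main obstacle is the suffix property, specifically handling the path ordering $\prec$ correctly: one must make sure that comparing lexicographically from the start vertex is compatible with taking suffixes, and that ties in the final hop to $S$ are broken consistently by the smallest-ID rule in \textsc{FindPath}.
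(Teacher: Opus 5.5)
Your proof is correct. Its key lemma is the paper's (your suffix property is \Cref{lem-subpath}), but you get from there to ``spanning tree'' by a different and cleaner route.

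The paper argues acyclicity operationally. It introduces ``leaf'' vertices, i.e.\ those whose anchor path is not a proper subpath of a neighbor's anchor path, and reduces Phase~3 to adding the leaves' anchor paths. It then shows via \Cref{clm-subpath} that two anchor paths coincide after their first common vertex. Finally it inducts over the leaves in processing order, checking that each newly added path meets the current edge set in exactly one vertex; connectivity is argued separately.

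You instead turn the suffix property into a parent map $p$ on $V\setminus S$ and identify the Phase~3 edges exactly as $\{(w,p(w)) : w\in V\setminus S\}$. You show these edges are pairwise distinct, since $\dist(\cdot,S)$ drops by one along $p$, and disjoint from $H'$. You then finish by counting $n-1$ edges plus connectivity. This gains you several things:
\begin{itemize}
\item it avoids the leaf reduction and the order-dependent induction;
\item it makes precise the paper's informal remark that most Phase~3 edges are added many times;
\item it gives directly the characterization the local algorithm needs (\Cref{clm-local path}: an edge $(u,v)\notin S\times S$ is in $T$ iff $v=p(u)$ or $u=p(v)$), which is also the one-proposal-edge-per-vertex structure used later for $\Gnp$.
\end{itemize}
The paper's argument follows the algorithm's execution more literally, but needs more machinery to reach the same conclusion.

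One caveat applies to both arguments and is not specific to yours. Each takes for granted that \textsc{FindPath} returns $P_G(u,\anc_S(u))$. Your step ``the first vertex in BFS order adjacent to $S$ lies on the lexicographically-least shortest path'' requires the BFS to visit vertices in the $\prec$-order of their lex-least paths from $u$. A FIFO queue with neighbors enqueued by increasing ID achieves this. Scanning each level merely by vertex ID, as \Cref{sec-lexico bfs} literally describes, does not. In that case the returned paths can violate the suffix property, and $E'$ can then contain a cycle through $T_0$.
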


Note that until the beginning of \texttt{Phase 3}, $E^\prime$ contains no cycle. We thus put our attention on \texttt{Phase 3}. In \texttt{Phase 3}, each of $\Theta(n)$ vertices outside the $S$ is processed by a iteration, in order to find a path to the core tree and add it to $E^\prime$. Consider that each of these $\Theta(n)$ iterations in \texttt{Phase 3} adds a path (probably of length $\log n$) to $E^\prime$, it seems that there are far more than $(n-1)$ edges included in $E^\prime$ after \texttt{Phase 3}. However, following claim tells us that most edges added to $E^\prime$ during \texttt{Phase 3}, are added for multiple times. Hence finally $T$ only contains $(n-1)$ edges.

\begin{lemma}\label{lem-subpath}
    For any set $S\subset V$ and $u \in V\setminus S$, let $v = \anc_S(u)$ be the anchor of $u$ w.r.t. $S$ and path $P_G(u,v) = (u, u_1, \cdots, u_k, v)$ be the lexicographically-least shortest path. Then for $i = 1,2,\cdots k$, we have:
    \begin{align*}
        &\anc_S(u_i) = v, \\
        &P_G(u_i,v) = (u_{i}, u_{i+1}, \cdots, u_k, v).
    \end{align*}
\end{lemma}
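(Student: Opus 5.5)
We argue by contradiction, using a standard exchange (cut-and-paste) argument on the ordering $\prec$. Fix $i\in\{1,\dots,k\}$ and set $Q:=(u_i,u_{i+1},\dots,u_k,v)$, the suffix of $P:=P_G(u,v)$. Since $P$ is a shortest path from $u$ to $S$ and its interior vertices $u_1,\dots,u_k$ are not in $S$, every $u_j$ satisfies $u_j\notin S$. Hence $\anc_S(u_i)$ is given by \Cref{def-anc} and is not the trivial case. Let $w:=\anc_S(u_i)$ and $Q':=P_G(u_i,w)$; this is the $\prec$-least path from $u_i$ to any vertex of $S$. It suffices to prove $Q'=Q$, since then $w=v$ and $P_G(u_i,v)=Q$ follow simultaneously.

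\emph{Step 1 (length).} We have $\ell(Q')\le \ell(Q)=k+1-i$. Suppose the inequality is strict. Concatenating the prefix $(u,u_1,\dots,u_i)$ with $Q'$ gives a walk from $u$ to $w\in S$ of length $i+\ell(Q')<k+1=\ell(P)$. Shortcutting this walk to a path, and then to its first vertex in $S$, yields a path from $u$ into $S$ strictly shorter than $P$. This contradicts the fact that $P$ is $\prec$-minimal among all paths from $u$ to $S$. So $\ell(Q')=\ell(Q)$, and therefore $\dist(u_i,S)=k+1-i$.

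\emph{Step 2 (lexicographic order).} Now $Q'\preceq Q$, so if $Q'\neq Q$ then $Q'$ is lexicographically strictly smaller than $Q$. Form $P':=(u,u_1,\dots,u_{i-1})\circ Q'$. This is a walk of length exactly $k+1$ from $u$ to $w\in S$.

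We check that $P'$ is a genuine path. If it repeated a vertex, or hit $S$ before its end, we could shortcut it to a path into $S$ of length less than $k+1=\dist(u,S)$, which is impossible. Next, $P'$ and $P$ agree on the first $i$ vertices, and from position $i$ onward they compare exactly as $Q'$ and $Q$ do. Hence $P'$ is lexicographically smaller than $P$, with equal length.

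This is where the minimality of $P$ must be used carefully. In \Cref{def-anc}, $P_G(u,v)$ is required to be $\prec$-smaller than $P_G(u,w')$ for every $w'\in S\setminus\{v\}$. If $w\neq v$, then $P_G(u,w)\preceq P'\prec P$, which contradicts this requirement. If $w=v$, then $P'$ is a path from $u$ to $v$ with $P'\prec P=P_G(u,v)$, which contradicts the definition of the lexicographically-least shortest path. In both cases $Q'=Q$.

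\emph{Main obstacle.} The delicate point is making the lexicographic comparison rigorous. This requires that $\prec$ between two equal-length paths sharing a common prefix is decided by their suffixes, which follows directly from \Cref{def-comp}. It also requires verifying that the spliced object $P'$ is a valid path, with no repeated vertices and no premature entry into $S$. Both facts reduce to the distance identity $\dist(u_j,S)=k+1-j$ along $P$, which I would establish first as a small auxiliary observation.
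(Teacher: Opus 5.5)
Your proof is correct and uses the same exchange argument as the paper. You splice the prefix $(u,u_1,\dots,u_{i-1})$ onto $P_G(u_i,\anc_S(u_i))$ and contradict the $\prec$-minimality of $P_G(u,v)$, first by length and then lexicographically. Your version is somewhat more careful than the paper's: proving $Q'=Q$ for $Q'=P_G(u_i,\anc_S(u_i))$ gives both conclusions at once, and you explicitly handle the case $w=v$ and check that the spliced object is a valid path, which the paper covers only with ``similarly''. One small point: your opening sentence justifying $u_j\notin S$ is phrased circularly, though the intended reason (a prefix reaching $S$ would be a shorter path into $S$) is immediate.
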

\begin{proof}
    First, we note that $u_i \in V \setminus S$. Otherwise, $(u, u_1, \cdots, u_i)$ becomes a path between $u$ and $S$ strictly shorter than $P_G(u,v)$, which contradicts to \Cref{def-anc}.

    Assume that $\anc_S(u_i) \neq v$, then there must be a vertex $w \in S\setminus \{v\}$ s.t. $P_G(u_i,w) \prec P_G(u_i,v)$. We denote $P_G(u_i,w)$ by $(u_i,v_{i+1},v_{i+2},\cdots,v_\ell,w)$. By \Cref{def-comp}, either $l < k$ or $l=k$ and $(u_i,v_{i+1},v_{i+2},\cdots,v_\ell,w)$ has smaller lexicographical order than $(u_{i}, u_{i+1}, \cdots, u_k, v)$. In both cases, by \Cref{def-comp} we see that $(u, u_1, \cdots, u_i, v_{i+1},v_{i+2},\cdots,v_\ell,w) \prec (u, u_1, \cdots, u_k, v)$. Thus by \Cref{def-anc} $\anc_S(u) \neq v$, which leads to a contradiction.

    Thus $\anc_S(u_i) = v$ for each $i \in [k]$. Similarly we assume that $P_G(u_i,v) \neq (u_{i}, u_{i+1}, \cdots, u_k, v)$, then there comes a contradiction to $P_G(u,v) = (u, u_1, \cdots, u_k, v)$.
\end{proof}

Now we formally prove \Cref{thm-global tree}.

\begin{proof}[Proof of \cref{thm-global tree}]
    With respect to $T_0=(S,H^\prime)$ in \Cref{alg-global expander}, for a vertex $u \in V\setminus S$, we call $u$ a \textit{``leaf''} vertex if for every neighbor $v$ of $u$ and $v \in V\setminus S$, path $P_G(u, \anc_S(u))$ is not a sub-path of $P_G(v, \anc_S(v))$. Denote the set of leaf vertices by $L$.

    In \texttt{Phase 3}, the iterations over all vertices in $V\setminus S$ can be reduced to iterations over vertices in $L$. 
    
    To see this, consider any non-leaf vertex $v_1$. By the definition of leaf vertex, there exists a neighbor $v_2$ of $v_1$ such that $P_1 := P_G(v_1, \text{anc}_S(v_1))$ is a proper sub-path of $P_2 := P_G(v_2, \text{anc}_S(v_2))$. Thus $E(P_1) \subseteq E(P_2)$.

    If $v_2$ is a leaf ($v_2 \in L$), we are done; otherwise, we can recursively apply this reasoning to $v_2$'s neighbor $v_3$, whose concerned path $P_G(v_3, \text{anc}_S(v_3))$ completely contains $P_G(v_2, \text{anc}_S(v_2))$. This process must terminate at some leaf vertex $v_k \in L$, since each time it extends the length of considered path by $1$, and all shortest paths in the graph have finite length. Thus for iterations in \texttt{Phase 3}, it is equivalent to only consider iterations over vertices in $L$.

    For convenience, we label the leaf vertices in $L$ as $\ell_1, \ell_2, \dots$ according to their processing order in \texttt{Phase 3}. For $\ell_i \in L$, define $a_i := \anc_S(\ell_i)$. Following observation is helpful:

    \begin{claim}\label{clm-subpath}
        For $\ell_i,l_j \in L$, if $P_G(\ell_i,a_i)$ and $P_G(\ell_j,a_j)$ intersect on some vertex $u$, for convenience we denote them by $(\ell_i,\cdots ,u, \cdots ,a_i)$ and $(\ell_j,\cdots ,u, \cdots ,a_j)$ respectively. Then the two paths are same after the intersection vertex $u$, i.e. $(u, \cdots, a_i) = (u, \cdots, a_j)$.
    \end{claim}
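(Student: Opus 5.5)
The plan is to reduce \Cref{clm-subpath} directly to \Cref{lem-subpath}, which says that every suffix of a lexicographically-least shortest path to the anchor is itself the lexicographically-least shortest path (to the same anchor) from its starting vertex.

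\textbf{Step 1: the intersection vertex.} Let $u$ be a common vertex of $P_G(\ell_i,a_i)$ and $P_G(\ell_j,a_j)$, and split into cases.

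\emph{Case $u\in S$.} Every internal vertex of an anchor path lies in $V\setminus S$; this is the first observation in the proof of \Cref{lem-subpath}. Also $\ell_i,\ell_j\notin S$. So a vertex of $S$ on either path must be its endpoint, giving $u=a_i$ and $u=a_j$. Both suffixes are then the trivial path $(u)$, and the claim holds.

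\emph{Case $u=\ell_i$.} The suffix of $P_G(\ell_i,a_i)$ from $u$ is the whole path, which is $P_G(u,\anc_S(u))$ by definition of the anchor. The same applies if $u=\ell_j$.

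\textbf{Step 2: the generic case.} If $u$ is an internal vertex $u_k$ of $P_G(\ell_i,a_i)=(\ell_i,u_1,\dots,a_i)$, then \Cref{lem-subpath} gives $\anc_S(u)=a_i$ and shows that the suffix from $u$ equals $P_G(u,a_i)$. In every case, therefore, the suffix of $P_G(\ell_i,a_i)$ starting at $u$ equals $P_G(u,\anc_S(u))$. By the symmetric argument, the suffix of $P_G(\ell_j,a_j)$ starting at $u$ also equals $P_G(u,\anc_S(u))$.

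\textbf{Step 3: uniqueness.} It remains to note that $\anc_S(u)$ and $P_G(u,\anc_S(u))$ are uniquely determined. The order $\prec$ is a strict total order on paths starting at $u$: compare lengths first, then compare vertex sequences lexicographically. So the minimal path from $u$ to $S$ is unique, and so is its endpoint. This forces $a_i=a_j$ and makes the two suffixes equal, i.e.\ $(u,\dots,a_i)=(u,\dots,a_j)$.

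There is no real obstacle here. The only point needing care is the boundary cases in Step 1, where $u$ is an endpoint, since \Cref{lem-subpath} is stated only for internal indices $1\le i\le k$ and these cases must be handled by the direct arguments above.
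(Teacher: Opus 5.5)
Your proposal is correct and follows the same route as the paper. You apply \Cref{lem-subpath} to each of the two paths at the common vertex $u$, so both suffixes equal $P_G(u,\anc_S(u))$, and conclude by uniqueness of the anchor and of its lexicographically-least shortest path. Your explicit treatment of the endpoint cases ($u\in S$, $u=\ell_i$) and of why the $\prec$-minimum is unique makes precise what the paper's two-line proof leaves implicit.
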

    \begin{proof}
        With respect to vertex $u$, we directly apply \Cref{lem-subpath} on $P_G(\ell_i,a_i)$ and $P_G(\ell_j,a_j)$ respectively. Then we have: 
        \begin{itemize}
            \item $\anc_s(u) = a_i, P_G(u,\anc_s(u)) = (u, \cdots, a_i);$ and
            \item $\anc_s(u) = a_j, P_G(u,\anc_s(u)) = (u, \cdots, a_j).$
        \end{itemize}
        Thus $(u, \cdots, a_i) = (u, \cdots, a_j)$.
    \end{proof}

    Now we are ready to prove by induction that $E^\prime$ is acyclic after \texttt{Phase 3}. Until the beginning of \texttt{Phase 3}, $E^\prime$ is acyclic because $T_0$ is a BFS tree. After first iteration on vertex $\ell_1 \in L$, $E^\prime$ is still acyclic because $a_1$ is the only vertex that belongs to $V(E^\prime)$ on path $P_G(\ell_1,a_1)$ and $P_G(\ell_1,a_1)$ is a shortest path between $\ell_1$ and $a_1$. 
    
    Assume that $E^\prime$ is acyclic after iteration on vertex $\ell_{i-1} \in L$, then $E^\prime$ would remain acyclic after iteration on $\ell_i$ because: if $P_G(\ell_i,a_i)$ does not intersect with path $P_G(\ell_j,a_j)$ for every $j < i$, this situation is reduced to first iteration and $E^\prime$ remains acyclic after adding $P_G(\ell_i,a_i)$. Otherwise, let $u$ be the first vertex on $P_G(\ell_i,a_i)$ that also belongs to another path $P_G(\ell_j,a_j)$ with $j < i$. Using above observation, we see that $P_G(\ell_i,a_i)$ and $P_G(\ell_j,a_j)$ are same after the intersection vertex $u$. This is equivalent to adding a sub-path $(\ell_i,\cdots,u)$ of $P_G(\ell_i,a_i)$ to current $E^\prime$ and $u$ is the only vertex on the sub-path that belongs to current $V(E^\prime)$. Thus $E^\prime$ still remains acyclic after including $P_G(\ell_i,a_i)$.

    On the other hand, after \texttt{Phase 3}, $E^\prime$ admits a path from root $s$ to any other vertex, thus $T$ is connected. Putting together, we see that $T$ is acyclic and spans every vertex in $V$, thus $T$ is a spanning tree. This finishes proof of this lemma.
\end{proof}

\subsection{Local Implementation}
We show how to implement the global \Cref{alg-global expander} locally, i.e. for the concerned edge $(u,v)$, we need some local procedure to determine whether $(u,v)$ belongs to the final output $E^\prime$. From a high-level view, it is easy to see that \texttt{Phase 1} and \texttt{Phase 2} can be implemented in $\tilde O(\sqrt{n})$ probes. Note that we always use \emph{same random bits} to perform random walks in \texttt{Phase 1}, thus it remains consistent over any sequence of queries.

However, in order to implement \texttt{Phase 3}, we should check if $(u,v)$ appears in any path added to $E^\prime$ during iterations, but directly performing these checks requires $\Omega(n)$ probes. Thanks to following lemma of locality, we could implement it with much less probes.

\begin{claim}\label{clm-local path}
    With respect to $S$ and output $T$ in \Cref{alg-global expander}, for any edge $(u,v)\notin S \times S$, edge $(u,v) \in E(T)$ if and only if $(u,v) \in P_G(u,\anc_S(u))$ or $(u,v) \in P_G(v,\anc_S(v))$.
\end{claim}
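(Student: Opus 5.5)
The claim has two directions. Both use only the structure of $E(T)=H'\cup\bigcup_{w\in V\setminus S}E(P_G(w,\anc_S(w)))$ from \Cref{alg-global expander}, together with \Cref{lem-subpath}. Throughout we assume every call to \textsc{FindPath} succeeds, as in \Cref{thm-global tree}, so each path added in \texttt{Phase 3} is exactly $P_G(w,\anc_S(w))$.

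For the ``if'' direction, suppose $(u,v)\in P_G(u,\anc_S(u))$ and $u\notin S$. Then the global algorithm iterates over $u$ in \texttt{Phase 3} and adds every edge of this path to $E'$, so $(u,v)\in E(T)$. If instead $u\in S$, then $P_G(u,\anc_S(u))=(u)$ has no edges, so this case is vacuous. The same argument applies with $v$ in place of $u$.

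For the ``only if'' direction, take $(u,v)\in E(T)$ with $(u,v)\notin S\times S$. Every edge of $H'\subseteq H$ joins two vertices of $S$, because random walks starting at $s$ only visit vertices of $S$. Hence $(u,v)\notin H'$, and $(u,v)$ must have been added in \texttt{Phase 3}. So there is a vertex $w\in V\setminus S$ with $(u,v)\in E(P_G(w,\anc_S(w)))$.

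Write $P_G(w,\anc_S(w))=(w,u_1,\dots,u_k,a)$. The edge $(u,v)$ occurs on this path as a consecutive pair, say $(x,y)$ with $x$ the endpoint nearer to $w$ and $\{x,y\}=\{u,v\}$. We know $x\notin S$. If $x=w$ this is immediate. Otherwise $x=u_i$ for some $i$, and the first step of the proof of \Cref{lem-subpath} shows that no internal vertex of an anchor path lies in $S$. Now apply \Cref{lem-subpath} to $x$ (or note the case $x=w$ is trivial). It gives $\anc_S(x)=a$ and $P_G(x,a)$ equal to the suffix of the path starting at $x$. The first edge of that suffix is $(x,y)$. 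Therefore $(u,v)\in P_G(x,\anc_S(x))$, where $x\in\{u,v\}$, which proves the claim.

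The main thing to check is the bookkeeping in the last step. \Cref{lem-subpath} is stated for interior vertices $u_1,\dots,u_k$, so the endpoint $x=w$ needs a separate (trivial) treatment. We also need $x\notin S$ so that $\anc_S(x)$ is genuinely the path anchor rather than $x$ itself. Both points are handled above.
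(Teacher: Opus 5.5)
Your proof is correct and follows the paper's argument. The ``if'' direction is immediate from \texttt{Phase 3}. For ``only if'', you use $H'\subseteq S\times S$ to force the edge into some anchor path $P_G(w,\anc_S(w))$, then apply \Cref{lem-subpath} at the endpoint of $(u,v)$ nearer to $w$. You also make explicit a few points the paper leaves implicit: the case $x=w$, the fact that $x\notin S$, and the standing assumption that no call to \textsc{FindPath} returns $\bot$.
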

\begin{proof}
    If $(u,v) \in P_G(u,\anc_S(u))$ or $(u,v) \in P_G(v,\anc_S(v))$, then obviously we have $(u,v) \in E(T)$. In the following we consider the other direction.

    If $(u,v) \in E(T)$, then there exists some vertex $w \in V\setminus S$ s.t. $(u,v) \in P_G(w,\anc_S(w))$. As $(u,v) \notin S \times S$, edge $(u,v)$ could only be added to $E^\prime$ in \texttt{Phase 3}, hence it must belong to path $P_G(w,\anc_S(w))$ for some vertex $w$. Thus we have either 
    \[P_G(w,\anc_S(w)) = (w,\cdots,u,v,\cdots,\anc_S(w)),\] 
    or  
    \[P_G(w,\anc_S(w)) = (w,\cdots,v,u,\cdots,\anc_S(w)).\]
     For both cases, we apply \Cref{lem-subpath} on $P_G(w,\anc_S(w))$ w.r.t. the vertex of $(u,v)$ that appears earlier in the path. Thus we have either $(u,v) \in P_G(u,\anc_S(u))$ or $(u,v) \in P_G(v,\anc_S(v))$.
\end{proof}

\begin{algorithm}[htb]
\caption{\textsc{inTree}$(G,(u,v))$: locally answer whether $(u,v)$ belongs to spanning tree induced by \Cref{alg-global expander}}\label{alg-local expander}

\KwIn{$d$-bounded graph $G = (V, E)$ with conductance at least $\phi$.}

\tcp{\texttt{On query of edge $(u,v)$:}}

Select a seed $s \in V$ arbitrarily\;
Let $r := \Theta(\sqrt{n}\cdot \log n)$, $\tau := \Theta(\frac{\log n}{\phi^2})  \geq \tau(G_{reg}) $\;
Perform $r$ independent length-$\tau$ lazy random walks staring from $s$ on $G_{reg}$, and refine these walks to get its BFS tree $T_0=(S,H^\prime)$\;
\If {$(u,v) \in H^\prime$} {\Return \textsc{Yes}.}

$P_u \gets$ \textsc{FindPath}$(G,S,\sqrt{n},u)$, $P_v \gets$ \textsc{FindPath}$(G,S,\sqrt{n},v)$\;

\If {$(u,v) \in E(P_u)$ or $(u,v) \in E(P_v)$} {\Return \textsc{Yes}.}
\Return \textsc{No}.

\end{algorithm}

Now we have Local Computation \Cref{alg-local expander} to simulate \Cref{alg-global expander}, which outputs membership of any concerned edge $(u,v)$. Recall that the random tape is predetermined and thus \Cref{alg-local expander} always outputs consistent answer on every query.

\begin{lemma}\label{lem-local sim}
    For the same input graph and random tapes, \Cref{alg-local expander} return \textsc{Yes} on query of edge $(u,v)$ if and only if $(u,v) \in E(T)$ in \Cref{alg-global expander}.
\end{lemma}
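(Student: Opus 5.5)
The plan is to couple the two algorithms on the same random tape and compare their decisions case by case. Since \Cref{alg-local expander} and \Cref{alg-global expander} use the same seed $s$ and the same random bits, the $r$ lazy random walks are identical. Hence $S$, $H$, $G_0$, and the BFS tree $T_0=(S,H^\prime)$ coincide, provided the BFS in $G_0$ is run with the same deterministic tie-breaking (say by vertex ID) in both. The argument then splits according to whether $(u,v)\in S\times S$.

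\emph{Case $(u,v)\in S\times S$.} In the global algorithm, edges added in \texttt{Phase 3} lie on paths $P_G(w,\anc_S(w))$. By the first part of the proof of \Cref{lem-subpath}, every interior vertex of such a path lies outside $S$, and only its endpoint $\anc_S(w)$ lies in $S$. So no \texttt{Phase 3} edge has both endpoints in $S$, and $(u,v)\in E(T)$ if and only if $(u,v)\in H^\prime$. The local algorithm returns \textsc{Yes} in the first test exactly in this situation. Otherwise \textsc{FindPath} returns the trivial paths $(u)$ and $(v)$, which have no edges, so the local answer is \textsc{No}. The two answers agree.

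\emph{Case $(u,v)\notin S\times S$.} Such an edge cannot lie in $H^\prime$, because $H^\prime\subseteq S\times S$. By \Cref{clm-local path}, $(u,v)\in E(T)$ if and only if $(u,v)\in E(P_G(u,\anc_S(u)))$ or $(u,v)\in E(P_G(v,\anc_S(v)))$. It therefore remains to check two things:
\begin{itemize}
\item \textsc{FindPath}$(G,S,\sqrt n,x)$ returns exactly $P_G(x,\anc_S(x))$ for $x\in\{u,v\}$. For $x\in S$ it returns $(x)$, which matches the convention $P_G(x,x)=(x)$.
\item The same call is the one the global algorithm uses for $x\notin S$, so the two algorithms also agree on failure ($\bot$).
\end{itemize}

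The main obstacle is the correctness of \textsc{FindPath}: the path it outputs must be the lexicographically-least shortest path to $S$ in the sense of \Cref{def-comp}. I would argue from the lexicographical BFS of \Cref{sec-lexico bfs} by induction on the BFS level $t$. The invariant is that each vertex discovered at level $t$ receives, as its BFS parent path, the $\prec$-least path from $u$ of length $t$. This holds because level-$(t-1)$ vertices are processed in an order consistent with the order of their minimal paths, and each vertex's neighbors are scanned in increasing ID.

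From the invariant I would conclude as follows.
\begin{itemize}
\item The first vertex $v$ encountered that is adjacent to $S$ lies at the minimum distance and has the least prefix.
\item Appending $c=\smallest{\Gamma(v)\cap S}$ gives the least extension of that prefix.
\item Therefore the returned path is $P_G(u,c)$ and $c=\anc_S(u)$, by \Cref{def-anc}.
\end{itemize}

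Under the hypothesis of \Cref{thm-global tree} that no call returns $\bot$, the local and global decisions coincide in both cases, which proves the lemma.
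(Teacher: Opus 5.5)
Your proposal is correct and follows the paper's proof. The shared random tape yields the same core tree $T_0=(S,H^\prime)$, edges in $S\times S$ are decided by $H^\prime$ alone, and all remaining edges are handled by \Cref{clm-local path}; you are in fact more explicit than the paper in checking that no \texttt{Phase 3} edge lies in $S\times S$, and in restricting to the event that no \textsc{FindPath} call returns $\bot$, which \Cref{clm-local path} tacitly requires.

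The paper simply takes for granted that \textsc{FindPath} outputs $P_G(x,\anc_S(x))$. If you verify it yourself, your invariant needs level-$(t-1)$ vertices to be processed in FIFO discovery order (by induction, the $\prec$-order of their least paths), not in increasing ID order as \Cref{sec-lexico bfs} literally says. Under ID order, a vertex adjacent to both $7$ (reached via $(u,5,7)$) and $10$ (reached via $(u,3,10)$) gets parent $7$, although its $\prec$-least path goes through $10$; so make that ordering claim part of the inductive hypothesis.
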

\begin{proof}
    In \Cref{alg-local expander}, random tapes are only used to select seed and perform random walks in \texttt{Phase 1} and there are no randomness in \texttt{Phase 2} and \texttt{Phase 3}. Using the same random tapes, in both algorithms, the core tree $T_0=(S,H^\prime)$ are the same and thus they agree on edge $(u,v) \in S \times S$. Additionally, by \Cref{clm-local path}, the two algorithms also agree on edge $(u,v) \notin S \times S$. Hence \Cref{alg-local expander} faithfully simulates the spanning-tree membership of every edge computed by \Cref{alg-global expander}.
\end{proof}

\subsection{Probe Complexity of the LCA}
In order to bound probe complexity of \Cref{alg-local expander}, we first bound the probe complexity of \textsc{FindPath} (\Cref{alg-find path}).

\begin{lemma}\label{lem-end vertex}
    Let $G=(V,E)$ be $d$-bounded graph with conductance $\phi_G \geq \phi$. Let $\tau \geq \tau(G_{reg})$ and $r > 0$ be integers. With respect to $r$ independent length-$\tau$ lazy random walks on $G_{reg}$, let $S$ be the set of vertices in the trajectory of these random walks. Then for any given set of vertices $Q$ s.t. $|Q| \cdot r \geq 100n\log n$, we have $|Q \cap S| \geq 1$ with probability at least $1-\frac{1}{n^{10}}$. \footnote{To save random bits used by LCA, it is sufficient to let these random walks be $\log n$-wise independent.}
\end{lemma}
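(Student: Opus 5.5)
It is enough to look only at the endpoint of each walk, which already lies in $S$. Fix one walk $\rho_i$; it starts at $s$ and has length $\tau \ge \tau(G_{reg})$. By \Cref{cor-lazy reg mixing time}, its final vertex equals any particular $u \in V$ with probability at least $\frac{1}{2n}$. The events ``the walk ends at $u$'' for distinct $u$ are disjoint, so
\[
\Pr\bigl[v_\tau^{(i)} \in Q\bigr] \;=\; \sum_{u \in Q} \Pr\bigl[v_\tau^{(i)} = u\bigr] \;\ge\; \frac{|Q|}{2n}.
\]
Since $v_\tau^{(i)} \in V(\rho_i) \subseteq S$, the bad event $Q \cap S = \emptyset$ implies that none of the $r$ endpoints lands in $Q$.

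Because the walks are independent, the probability that no endpoint lies in $Q$ is at most
\[
\Bigl(1 - \frac{|Q|}{2n}\Bigr)^{r} \;\le\; \exp\Bigl(-\frac{|Q|\, r}{2n}\Bigr) \;\le\; \exp(-50 \log n) \;=\; n^{-50} \;\le\; n^{-10},
\]
using the hypothesis $|Q| \cdot r \ge 100 n \log n$. This already gives the claim with a large margin.

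The only step that needs care is making sure \Cref{cor-lazy reg mixing time} applies to each walk separately from the fixed start $s$. This holds because every walk has length exactly $\tau \ge \tau(G_{reg})$ and the bound $\frac{1}{2n}$ holds pointwise for every target vertex.

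For the footnote's $\log n$-wise independent variant, replace the product bound by the following argument. Let $X$ count the endpoints that land in $Q$, so $\E[X] \ge 50\log n$. Apply a tail bound for $k$-wise independent indicators with $k = \Theta(\log n)$, for instance a $k$-th moment bound in the style of Schmidt--Siegel--Srinivasan, to get $\Pr[X = 0] \le n^{-10}$. I expect tuning the constants in this limited-independence version to be the only real obstacle; the fully independent case is immediate.
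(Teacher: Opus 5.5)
Your proof is correct and follows the paper's argument: you restrict to the walk endpoints, use \Cref{cor-lazy reg mixing time} to get $\Pr[v_\tau^{(i)} \in Q] \ge |Q|/(2n)$, and conclude from independence. The only difference is cosmetic: you bound the failure probability directly by $(1-|Q|/(2n))^r \le n^{-50}$, whereas the paper computes $\E[X] \ge 50\log n$ and appeals to a Chernoff--Hoeffding bound, so your version is simply the more explicit of the two.
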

\begin{proof}
    Let $S_C=(v_\tau^{(1)},v_\tau^{(2)},\cdots,v_\tau^{(r)}) \subseteq S$ be the set of end-vertex of each random walk. According to \Cref{cor-lazy reg mixing time}, we have $\Pr[v_\tau^{(i)} = u] \geq \frac{1}{2n}$ for every $i\in[r]$ and $u\in V$. Let random variable $X_i$ indicate the event that the $i$-th random walk ended at $Q$ and $X:=\Sigma_{i\in [r]} X_i$. Note that $X \geq 1 \implies |Q \cap S| \geq 1$. Then we have 
    \begin{align*}
        \E[X_i]=\Pr[v_\tau^{(i)} \in Q] = \Sigma_{u\in Q} \Pr[v_\tau^{(i)} = u] \geq \frac{|Q|}{2n},
    \end{align*}
    and by linearity of expectation
    \begin{align*}
        \E[X]= \E[\Sigma_{i\in [r]} X_i] = \Sigma_{i\in [r]} \E[X_i] \geq \frac{r \cdot |Q|}{2n} \geq 50 \log n.
    \end{align*}
    Then a standard argument using Chernoff-Hoeffding bound finishes this proof.
\end{proof}

\begin{claim}\label{clm-findpath probes}
    Let $G=(V,E)$ be $d$-bounded graph with conductance $\phi_G \geq \phi$. With respect to set $S$ in \Cref{alg-local expander}, then \textsc{FindPath}$(G,S,\sqrt{n},u)$ makes $O(d\sqrt{n})$ probes for any given vertex $u$ and does not return $\bot$ with probability at least $1-\frac{1}{n^2}$.
    
\end{claim}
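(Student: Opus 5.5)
The claim has two parts: a deterministic bound on the number of probes, and a probabilistic guarantee that \textsc{FindPath} does not return $\bot$. I will handle them separately.

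\emph{Probe bound.} This part is deterministic. \textsc{FindPath}$(G,S,\sqrt{n},u)$ runs a BFS from $u$ that stops after $\sqrt{n}$ distinct vertices have been visited. Each visited vertex has its adjacency list read with at most $d$ calls to $\Nbr$, plus one $\texttt{Deg}$ probe, so the exploration costs at most $(d+1)\sqrt{n} = O(d\sqrt{n})$ probes.

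Two further steps need checking, and both cost no extra probes:
\begin{itemize}
\item Membership of a discovered vertex in $S$ is decided from the stored random-walk trajectories.
\item Once a vertex in $S$ is found, the lexicographically-least shortest path to $c$ is read off the lexicographical BFS tree already built (see \Cref{sec-lexico bfs}). The neighbors of the frontier vertex $v$ were already probed when it was expanded.
\end{itemize}
If some extra probes are needed to certify the lexicographic order inside the last level, that level has at most $\sqrt{n}$ vertices, each with at most $d$ neighbors, so the total is still $O(d\sqrt{n})$.

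\emph{Success probability.} Let $Q$ be the set of the first $\sqrt{n}$ vertices visited by the BFS from $u$. The subtlety is that the BFS ordering is deterministic given $G$ and $u$. So $Q$ is a fixed set, chosen independently of the random walks, as long as we define it as the first $\sqrt{n}$ vertices of the full, untruncated BFS. Since $G$ is connected, $|Q| = \sqrt{n}$ whenever $n \ge \sqrt{n}$.

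\textsc{FindPath} returns $\bot$ only if $Q \cap S = \emptyset$. With $r = \Theta(\sqrt{n}\log n)$ chosen with a large enough constant, we have $|Q|\cdot r \ge 100 n\log n$. Then \Cref{lem-end vertex} gives $\Pr[Q\cap S=\emptyset] \le n^{-10} \le n^{-2}$.

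Applying \Cref{lem-end vertex} requires $Q$ to be fixed before the walks are sampled. This holds because $Q$ depends only on $G$ and $u$, not on the random tape. This independence is the one point that needs care; everything else is routine. A union bound over all $u$ then extends the guarantee to every invocation, if that is needed later.
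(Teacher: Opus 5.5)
Your proposal is correct and follows the paper's proof: the probe bound comes from the BFS touching at most $\sqrt{n}$ vertices of degree at most $d$, plus at most $d$ further probes to read all of $\Gamma(v)$ for $\smallest{\Gamma(v)\cap S}$ (needed when the BFS halted before finishing $v$'s adjacency list). The success bound comes from applying \Cref{lem-end vertex} to the first $\sqrt{n}$ BFS vertices $Q$ with $|Q|\cdot r \ge 100\, n\log n$, and your remark that $Q$ depends only on $G$ and $u$, hence is independent of the walks, states explicitly the justification the paper uses implicitly.
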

\begin{proof}
    Let $Q$ be the set of first $\sqrt{n}$ distinct vertices explored under the order of lexicographical BFS from $u$. Note that $r := \Theta(\sqrt{n}\cdot \log n)$ and $|Q| = \sqrt{n}$, thus $|Q|\cdot r \geq 100 \log n$ by choosing proper constant. In this scenario, \Cref{lem-end vertex} tells that, with probability at least $1-\frac{1}{n^{10}}$, the BFS in \textsc{FindPath}$(G,S,\sqrt{n},u)$ already finds a vertex in $v \in S$ at the moment that $Q$ is fully explored. Hence it does not return $\bot$ with probability at least $1-\frac{1}{n^{10}}$.
    
    Let $H:= G[Q]$ denote the subgraph induced by $Q$, then the probe complexity of BFS is $O(|V(H)| + |E(H)|)$. As the input graph is $d$-bounded, thus the subgraph induced by $Q$ contains at most $d \cdot \abs{Q} = d \sqrt{n}$ edges. Consider that computing the output path does not need more probes except for revealing all neighbors of $v$, hence for any given vertex $u \in V \setminus S$, \textsc{FindPath}$(G,S,\sqrt{n},u)$ finishes in $d\sqrt{n}+d=O(d\sqrt{n})$ probes.
\end{proof}

\begin{lemma}\label{lem-expander local probe}
    Let $G=(V,E)$ be $d$-bounded graph with conductance $\phi_G \geq \phi$. Then \Cref{alg-local expander} has probe complexity $O(\sqrt{n}(\frac{ \log^2n}{\phi^2}+d))$.
\end{lemma}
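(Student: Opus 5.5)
The plan is to split the probes made by \Cref{alg-local expander} on a query $(u,v)$ into those made while simulating \texttt{Phase 1} and \texttt{Phase 2}, and those made by the two calls to \textsc{FindPath}, and to bound each part separately.

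\textbf{Phases 1 and 2.} Each lazy random walk on $G_{\reg}$ has length $\tau=\Theta(\frac{\log n}{\phi^2})$. Each step is implemented by sampling $i\in[2d]$ uniformly and issuing one probe \Nbr$(v,i)$. Hence a single walk costs $O(\tau)$ probes, and the $r=\Theta(\sqrt n\log n)$ walks together cost
\[
O(r\tau)=O\!\left(\sqrt n\cdot\frac{\log^2 n}{\phi^2}\right).
\]
This records the full trajectories, so $S$, $H$ and $G_0=(S,H)$ are known explicitly. The BFS in $G_0$ that produces $T_0=(S,H^\prime)$, and the test whether $(u,v)\in H^\prime$, then need no further probes.

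\textbf{Phase 3.} The two calls \textsc{FindPath}$(G,S,\sqrt n,u)$ and \textsc{FindPath}$(G,S,\sqrt n,v)$ each make $O(d\sqrt n)$ probes by \Cref{clm-findpath probes}. This holds deterministically, because the BFS is capped at $\sqrt n$ distinct vertices, each of degree at most $d$, plus $d$ extra probes to scan the neighbors of the last vertex $v$ and pick $\smallest{\Gamma(v)\cap S}$. Checking membership of $S$ uses only the stored trajectories and costs no probes. Checking whether $(u,v)\in E(P_u)\cup E(P_v)$ is also free. Summing the two phases gives $O(\sqrt n(\frac{\log^2 n}{\phi^2}+d))$.

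\textbf{Where care is needed.} There are two points to check.
\begin{itemize}
\item The cost bound must not depend on the success events of \Cref{lem-end vertex}. It does not: both the walk length and the BFS threshold are hard caps, so the bound holds for every random tape. It therefore also bounds the expected probe count in \Cref{def:LCA}.
\item Lexicographical BFS must not probe more than $O(d)$ per explored vertex. Processing each level's vertices in ID order only reorders probes already counted, so it adds no cost.
\end{itemize}
Once these are verified, the lemma follows directly.
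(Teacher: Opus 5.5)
Your proposal is correct and follows the paper's proof: you bound the random walks by $r\tau = O(\sqrt{n}\log^2 n/\phi^2)$, note that building $T_0$ and testing membership in $H^\prime$ cost no probes, and charge $O(d\sqrt{n})$ to each of the two \textsc{FindPath} calls via \Cref{clm-findpath probes}. Your extra remarks, that the caps make the bound hold for every random tape and that lexicographic ordering adds no probes, are valid refinements the paper leaves implicit.
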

\begin{proof}
    For any edge $e=(u,v)\in E$, the LCA would firstly perform lazy random walks using at most $r \cdot \tau= O(\frac{\sqrt{n} \log^2n}{\phi^2})$ probes. Then refining the random walk trajectory to get a BFS tree $T_0$ uses no extra probes. 
    
    According to \Cref{clm-findpath probes}, a single call of \textsc{FindPath}$(G,S,\sqrt{n},w)$ makes $O(d \sqrt{n})$ probes for any given vertex $w$. Thus \textsc{FindPath}$(G,S,\sqrt{n},u)$ and \textsc{FindPath}$(G,S,\sqrt{n},v)$ together make $O(d \sqrt{n})$ probes.
    
    Thus totally, \Cref{alg-local expander} makes $O(\frac{\sqrt{n} \log^2n}{\phi^2})+O(d \sqrt{n}) = O(\sqrt{n}(\frac{ \log^2n}{\phi^2}+d))$ graph probes on any given edge. This finishes the proof.
\end{proof}
\subsection{Proof of \Cref{thm:d-expander}}
We recall \Cref{thm:d-expander} for convenience.
\lcaexpander*
\begin{proof}
    According to \Cref{clm-findpath probes}, invocation of \textsc{FindPath}$(G,S,\sqrt{n},u)$ makes $O(d\sqrt{n})$ probes for any given vertex $u$, and it does not return $\bot$ with probability at least $1-\frac{1}{n^2}$. By a union bound over all vertices, with probability at least $1-\frac{1}{n}$, every invocation of \textsc{FindPath} does not return $\bot$ in \Cref{alg-global expander}. Hence by \Cref{thm-global tree} and \Cref{lem-local sim}, the subgraph $T$ is a spanning tree of $G$. By using consistent random bits, \Cref{alg-local expander} satisfies \Cref{def:LCA} and is thus an LCA for spanning tree.
    
    Now we bound the depth of $T$. Note that \Cref{cor-lazy reg mixing time} implies that the diameter of input graph is of $\tau(G_{reg}) = O(\frac{\log n}{\phi^2})$. For every vertices $u \in S$ in core tree $T_0=(S,H^\prime)$ of \Cref{alg-local expander}, $T$ must admit a path from seed $s$ to $u$ of length $O(\frac{\log n}{\phi^2})$. This is because $T_0$ is a BFS tree on random walk trajectory, and the length of random walk if bounded by $O(\frac{\log n}{\phi^2})$. Path $P_G(u,\anc_S(u))$ is shortest between $u$ and $\anc_S(u)$, hence the length is at most the diameter of $G$. Thus together, the path from $s$ to $u$ is also of length $O(\frac{\log n}{\phi^2})$. Thus the tree $T$ has depth at most $O(\frac{\log n}{\phi^2})$. Plugging with \Cref{lem-expander local probe} of probe complexity, we finish the proof of \Cref{thm:d-expander}.
\end{proof}

\section{Average-Case LCA for Spanning Tree on $\Gnp$}
Now we present our average-case LCA for \ER graphs and prove \Cref{thm:lcaERcombined}. Our approach depends on the value of the parameter $\delta$, and we therefore design two different algorithms for the two regimes. For $\delta \leq \tfrac{1}{3}$, we first give an algorithm in the \textbf{adjacency-list model} that achieves $o(\sqrt{n})$ probe complexity.

\begin{theorem}\label{thm:avg lca 1}
Let $np = n^{\delta}$ for any constant $\delta \in (0,\tfrac{1}{3}]$, and let $G \sim \Gnp$.
Assuming probe access to $G$ in the adjacency-list model, there exists an average-case LCA that, with probability at least $1 - \tfrac{1}{n}$, supports edge-membership queries to a spanning tree of $G$. Moreover, the LCA has probe complexity $\tilde{O}(\sqrt{n^{1-\delta}})$.  
\end{theorem}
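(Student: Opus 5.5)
Our plan follows the outline in the technical overview: reuse the core-tree/anchor framework of \Cref{alg-global expander}, but with only $R=\tilde\Theta(\sqrt{n^{1-\delta}})$ random walks and a BFS budget of $R$ vertices, and add a recovery mechanism for vertices whose truncated BFS fails.

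\textbf{Step 1 (core tree and good vertices).} W.h.p.\ $G\sim\Gnp$ with $np=n^\delta$ is nearly regular (all degrees $(1\pm o(1))n^\delta$) and has constant conductance. Hence \Cref{cor-lazy reg mixing time} gives mixing time $\tau=O(\log n)$. From a seed $s$ we run $r=\tilde\Theta(\sqrt{n^{1-\delta}})$ lazy walks of length $\tau$, which gives a core set $S$ and a BFS core tree $T_0$. These walks cost $\tilde O(\sqrt{n^{1-\delta}})$ probes, because each step is a single \Nbr{} probe.

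For a vertex $u$ we run \textsc{FindPath}$(G,S,\theta,u)$. The budget $\theta$ is measured in probes, not in vertices: we take $\theta=\tilde\Theta(\sqrt{n^{1-\delta}})$ probes, which reveals roughly $\theta$ vertices. We call $u$ \emph{good} if this succeeds. By \Cref{lem-end vertex}, a BFS region $Q$ satisfying $|Q|\cdot r\ge 100n\log n$ hits $S$, so it suffices that the BFS from $u$ reaches $|Q|\approx\sqrt{n^{1+\delta}}$ distinct vertices. With only $\sqrt{n^{1-\delta}}$ probes per vertex, we reach this size by aggregating over many vertices. \Cref{lem-subpath} and \Cref{clm-subpath} still apply verbatim to good vertices, so the anchor paths of good vertices form a tree hanging off $T_0$ (the same argument as in \Cref{thm-global tree}).

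\textbf{Step 2 (recovery sets for bad vertices).} For every $u$ we define a deterministic set $IP(u)$ of size up to $n^\delta$ that is reachable from $u$ along a canonical path. One concrete choice is a path obtained by repeatedly stepping to the smallest-ID unvisited neighbor, stopping at length $n^\delta$ or earlier. We then prove the key lemma (\Cref{lem-gnp bfs}) with a principle-of-deferred-decisions argument. Truncated BFS balls of size $o(\sqrt{n^{1-\delta}})$ around distinct vertices of $IP(u)$ are nearly disjoint, because the expected number of edges between two such sets is $o(1)$. Their union therefore has $\approx n^\delta\sqrt{n^{1-\delta}}=\sqrt{n^{1+\delta}}$ vertices. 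Applying \Cref{lem-end vertex} to this union, we conclude that w.h.p.\ some vertex of $IP(u)$ is good.

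A bad vertex $u$ attaches by a canonical rule: it uses the path inside $IP(u)$ up to the first good vertex, followed by that vertex's anchor path. To answer a membership query for an edge $(u,v)$, we recompute these structures for $u$, for $v$, and for the few vertices whose canonical attachment could use $(u,v)$.

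\textbf{Step 3 (tree property and complexity).} We must show the resulting edge set has exactly $n-1$ edges and is connected, so that it is a spanning tree. The bad vertices are organized by their recovery paths, and we must rule out cycles created by these paths. The probe cost of a query is the cost of exploring $IP(u)$, namely $|IP(u)|\cdot\theta$. To control this, we use the Biswas et al.\ function-based framework, introducing the functions of \Cref{def-bc1} to bound the expected total exploration over the random graph.

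\textbf{Main obstacle.} The hardest step is bounding the expected exploration cost of a bad vertex while guaranteeing that $IP(u)$ actually reaches size about $n^\delta$ within the probe budget. This is exactly where the restriction $\delta\le 1/3$ enters. Building $IP(u)$ costs about $n^\delta$ probes. Bad vertices are rare: by \Cref{lem-end vertex} applied to a single BFS of size $\theta$, their fraction is roughly $e^{-\theta r/n}$-type small. We need the cost of bad vertices to be dominated by $\sqrt{n^{1-\delta}}$, which requires $n^\delta\lesssim\sqrt{n^{1-\delta}}$, i.e.\ $\delta\le 1/3$.

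I would first try to prove the near-disjointness lemma by exposing edges in BFS order and union-bounding collisions. The distributional analysis of which vertices are bad would then follow from a second-moment bound over the function classes in \Cref{def-bc1}.
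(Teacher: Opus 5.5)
There is a genuine gap, and it lies in what your query algorithm does at a bad vertex. A query at a bad $u$ must trace $IP(u)$ and run \textsc{FindPath} from successive path vertices until a good one appears. Tracing costs about $n^{\delta}$ probes per step, because finding a smallest neighbor requires reading a whole adjacency list. The searches cost up to $|IP(u)|\cdot\theta\approx n^{\delta}\sqrt{n^{1-\delta}}=\sqrt{n^{1+\delta}}$ probes, far above the target.

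You plan to amortize this cost because bad vertices are rare, but that is backwards. For a fixed BFS region of $\theta$ vertices, the expected number of walk steps landing in it is about $r\tau\theta/n=\tilde O(n^{-\delta})=o(1)$; your own $e^{-\theta r/n}$ tends to $1$. So a typical vertex is bad, which is exactly why about $n^{\delta}$ BFS regions are needed to find a good one. Moreover, the probe complexity in \Cref{def:LCA} is a maximum over queries, so averaging over query vertices is not available anyway.

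The idea you are missing is that a bad vertex never searches for a good vertex. In \Cref{alg-gnp prop1} each $u\neq n$ outputs a single proposal edge:
\begin{itemize}
\item its BFS-tree edge if $u\in S$;
\item otherwise, the first edge of its anchor path if \textsc{FindPath} succeeds;
\item otherwise, the edge to $\smallest{\Gamma^+(u)}$, or to $\smallest{\Gamma(u)}$ if $\Gamma^+(u)=\emptyset$.
\end{itemize}
A query $(u,v)$ only computes the proposals of $u$ and $v$, at cost $\tilde O(\sqrt{n^{1-\delta}})+O(n^{\delta})$. The condition $\delta\le 1/3$ enters through that $O(n^{\delta})$ term, the full-neighborhood scans inside \textsc{FindPath}, and the requirement $k\ge 1$ in \Cref{lem-gnp bfs}. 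The long increasing path and the union of its BFS regions appear only in the analysis, never in the algorithm.

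The second gap is structural. Your rule ``step to the smallest-ID unvisited neighbor'' depends on the path's history, so two recovery paths that meet need not agree afterwards. There is therefore no analogue of \Cref{lem-subpath} or \Cref{clm-local path}. The union of attachment paths can contain cycles; you note these must be excluded but give no mechanism. There is also no local way to identify ``the few vertices whose canonical attachment could use $(u,v)$''.

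In the paper the forward rule is memoryless, so increasing paths merge once they intersect. The path from a bad vertex is then just the chain of individual proposals (\Cref{obs-bad to good}). Since $H$ has at most $n-1$ edges, it suffices to prove connectivity. That proof uses a step with no counterpart in your plan. A vertex with large ID has a short increasing path, whose endpoint $z>n-K$ proposes its smallest neighbor $b(z)\le K$. Only for small vertices $v\le K$, whose increasing paths have length $\Omega(n^{\delta}/\log n)$, are \Cref{lem-gnp bfs} and \Cref{lem-end vertex} used to show a good vertex is hit (\Cref{clm-small vertex connected1}).

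The baseline and favorable conditions of \Cref{def-bc1} and \Cref{def-fav cond1} serve exactly this correctness argument. They fix the increasing-path structure while leaving the remaining edge indicators independent. They are not a device for bounding exploration cost, as your Step 3 suggests.
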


For $\delta > \tfrac{1}{3}$, a different approach is required. In this regime, even $\tilde{O}(\sqrt{n^{1-\delta}})$ probes are insufficient to fully explore the neighborhood of a single vertex, since the typical degree is $n^{\delta} = \omega(\sqrt{n^{1-\delta}})$.

\begin{theorem}\label{thm:avg lca 2}
Let $np = n^{\delta}$ for any constant $\delta > \tfrac{1}{3}$, and let $G \sim \Gnp$.
Assuming probe access to $G$ in the general-graph model, there exists an average-case LCA that, with probability at least $1 - \tfrac{1}{n}$, supports edge-membership queries to a spanning tree of $G$. Moreover, the LCA has probe complexity $\tilde{O}(\sqrt{n^{1-\delta}})$.  
\end{theorem}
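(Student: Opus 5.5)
I plan to prove \Cref{thm:avg lca 2} with the same global-then-local template as \Cref{thm:d-expander}, changing how a vertex reaches the core. Set $R:=\tilde\Theta(\sqrt{n^{1-\delta}})$. In \texttt{Phase 1} I run $\tilde\Theta(R)$ short random walks from a fixed seed $s$; since $G\sim\Gnp$ with $np=n^\delta$ has diameter $O(1/\delta)$ and mixes in $O(1)$ steps w.h.p., each walk has constant length. Every walk step costs one \texttt{Nbr} probe after one \texttt{Deg} probe, so this phase costs $\tilde O(R)$ probes. The core tree $T_0$ is the BFS tree of the trajectories, and $|S|=\tilde\Theta(R)$, with $S$ distributed roughly uniformly by mixing. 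Because a full neighbourhood (size $\approx n^\delta\gg R$) is too expensive to read, I will not use lexicographic BFS over full adjacency lists. I will use a restricted exploration that relies on the \texttt{Exists} probe and on the fact that $S$ is explicitly known.

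For each vertex $u$, the canonical attachment rule is as follows. First, test \texttt{Exists}$(u,c)$ for all $c\in S$ in increasing ID order. A hit occurs with probability $1-(1-p)^{|S|}\approx |S|n^{\delta-1}=\tilde\Theta(n^{(\delta-1)/2})$, which is small, so most vertices fail. For a failing vertex I use the restricted set $RIP(u)$: a deterministic canonical set of roughly $m:=n^{1-\delta}/|S|\cdot\polylog n\approx \tilde\Theta(R)$ vertices reachable from $u$ by a canonical short path. For example, $RIP(u)$ consists of the first $m$ neighbours of $u$ in adjacency-list order, recursing with smallest-ID tie-breaking. Then $|RIP(u)|\cdot|S|\cdot p=\Omega(\log n)$, so some $w\in RIP(u)$ is adjacent to $S$ w.h.p. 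The total \texttt{Exists} cost would be $m|S|=\tilde\Theta(n^{1-\delta})$, which is too much. Instead, I will check adjacency of $w$ to $S$ using \texttt{Nbr}$(w,i)$ probes for random indices $i$: each probe lands in $S$ with probability $|S|/n$, which is too small as well. The resolution I intend is to assign $u$ via a two-hop \emph{designated} structure, namely a path $u\to w\to c$ with $w$ chosen as the smallest-ID vertex such that \texttt{Exists}$(u,w)$ and \texttt{Exists}$(w,c)$ both hold. I will sample candidate $w$ among the $\tilde O(R)$ vertices adjacent to $S$ that are discovered from the walks' neighbourhoods, and pay $\tilde O(R)$ pair probes against $u$. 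The key accounting is that $|N(S)|\approx |S|n^\delta$ is large, so $\tilde O(n^{1-\delta})$ random vertices of $N(S)$ suffice for $u$ to hit one. I will balance these costs so that everything is $\tilde O(R)$, splitting the work as in the overview between $IP$-style and $RIP$-style recovery sets.

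Acyclicity and consistency follow as in \Cref{lem-subpath} and \Cref{thm-global tree}: every non-core vertex has exactly one canonical parent pointer toward $S$, the rule depends only on $G$ and the shared random bits, and pointers strictly decrease a potential (distance to $S$, then ID). Hence $T$ has $n-1$ edges and is connected. An edge query $(u,v)$ is answered by computing the parent of $u$ and the parent of $v$ and checking whether either equals the other endpoint, which mirrors \Cref{clm-local path}. Correctness requires every vertex to succeed, and I obtain this by a union bound using the concentration estimate from \Cref{lem-end vertex} together with the independence of edges in $\Gnp$.

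The main obstacle is the independence analysis. The parent rule of $u$ probes edges that were already exposed by the random walks and by other vertices' explorations, so the events ``$u$ finds a neighbour in the candidate set'' are not independent of how $S$ was built. I would handle this by deferred decisions: expose the walk edges first, condition on the resulting $S$ and the revealed edges, and note that the pairs between $u$ and the candidate set are still unexposed except for $\tilde O(R)$ of them. A second moment or Chernoff bound then gives failure probability $n^{-3}$ per vertex. Bounding the \emph{average-case} probe complexity, rather than only the high-probability behaviour, requires capping the explorations and charging failures to the $1/n$ slack.
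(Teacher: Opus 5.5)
Your proposal has a genuine gap: it never gives a working attachment rule for vertices that fail the direct test against $S$, and that is the whole difficulty. You try to certify, inside the single query for $u$, a route from $u$ to the core, and your own arithmetic shows this does not fit the budget. Checking a canonical set of $m$ vertices near $u$ against $S$ costs $m|S|=\tilde\Theta(n^{1-\delta})$ pair probes. The two-hop fallback, testing \texttt{Exists}$(u,w)$ against $\tilde O(R)$ candidates $w\in N(S)$, succeeds only with probability about $\tilde O(R)\cdot p=\tilde O(n^{-(1-\delta)/2})$; as you note yourself, $\tilde\Omega(n^{1-\delta})$ candidates would be needed. ``Balancing the costs between IP-style and RIP-style sets'' is not a mechanism. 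So for almost every vertex no parent is actually defined, and the connectivity claim is unsupported.

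The potential argument (distance to $S$, then ID) is also the wrong tool. A parent that provably decreases the distance to $S$ is exactly what is too expensive to compute. Without some monotone rule, cheap canonical pointers (e.g.\ first neighbour in adjacency order) can close cycles among bad vertices that never reach the core.

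The missing idea is that $u$ never needs to find its route to the core. It only outputs one cheaply computable edge. The $\tilde\Theta(n^{1-\delta})$ pair-probe cost is spread over many different queries, and connectivity is proved globally rather than per query. In the paper's construction:
\begin{enumerate}
\item The core is the root $n$ together with a $q$-sample $S$ of $\Gamma(n)$, where $q=\sqrt{n^{1-\delta}}\log^2 n/n^{\delta}$; no random walks are needed.
\item Vertex $u$ proposes $(u,n)$, or its first edge into $S$ if one exists, at cost $|S|$. Otherwise it proposes the edge to $\smallest{\Gamma^+_q(u)}$, or to $\smallest{\Gamma_q(u)}$ if $\Gamma^+_q(u)=\emptyset$.
\item Under the favorable condition $B(C,f^t,b^t)$, each step of a restricted increasing path raises the ID by at most $tK=\Theta(\sqrt{n^{1+\delta}})$. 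So from every $v\le tK$ the path has length $\ell\ge n/(tK)-1=\Theta(\sqrt{n^{1-\delta}})$.
\item At most half of each vertex's pairs to $S$ are determined by the conditioning. Hence about $\ell\cdot|S|\cdot p=\Theta(\log^2 n)$ edges are expected between the path and $S$, and some vertex on the path is good w.h.p. Each path vertex paid only $|S|$ probes in its own query.
\item A vertex $z>n-tK$ at the top of a path proposes to its smallest sampled neighbour. W.h.p.\ that neighbour lies in $b^t(z)$ and has ID $\le tK$, so it is one of the small vertices handled above.
\item The spanning-tree property then comes from ``$n-1$ proposals and every vertex connected to $n$'', not from a decreasing potential.
\end{enumerate}

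Your random-walk core could in principle replace the sampled star, since both yield $\tilde\Theta(\sqrt{n^{1+\delta}})$ good vertices. But without this proposal-along-monotone-paths mechanism and its global analysis, the proof does not go through.
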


\Cref{thm:lcaERcombined} follows directly by combining \Cref{thm:avg lca 1} and \Cref{thm:avg lca 2}.

In both algorithms underlying \Cref{thm:avg lca 1} and \Cref{thm:avg lca 2}, each vertex proposes one incident edge for inclusion in the spanning tree. An edge $(u,v)$ is included if it is proposed by either endpoint. Rather than directly implementing an oracle that answers edge-membership queries, we describe how to implement an oracle that computes the proposal edge of any given vertex; edge membership can then be determined from these proposals. 

We begin by introducing the necessary notation, and then proceed to the proofs of both theorems.

\subsection{Notation}

We fix $\delta$ to be a constant in $(0, 1]$. Since the input graph $G$ is drawn from $\Gnp$ with $np = n^\delta$, it holds \emph{with high probability}\footnote{Throughout this section, ``with high probability'' means with probability at least $1 - \frac{1}{n^c}$ for some sufficiently large constant $c$.} that the maximum degree of $G$ is at most $2n^\delta$, and that $G$ is a good expander with conductance $\phi_G = \Omega(1)$. Rather than conditioning explicitly on these events, we simply define parameters $d = 2n^\delta$ to be upper bound of vertex degree and $\phi = \Omega(1)$ to be lower bound of conductance, which are used in the lazy random walk described in \Cref{sec-preliminary}. If these properties do not hold for $G$, the lazy random walks simply fail. Note that the lazy random walk is only used in the LCA described in \Cref{thm:avg lca 1}.

For every vertex pair $(u, v)$, let $X_{u,v}$ be the indicator variable for the event that $(u, v) \in E(G)$; under $\Gnp$, these variables are mutually independent and equal to $1$ with probability $p$. When we refer to the probability of an event occurring ``over the randomness of the input,'' we mean that the event is determined by a set of these random variables $\{X_{u,v}\}$.

In the proofs of both \Cref{thm:avg lca 1} and \Cref{thm:avg lca 2}, by saying that a vertex is \emph{small}, we mean that its ID is small. The notations $\smallest{\cdot}$ and $\largest{\cdot}$ refer to the smallest and largest IDs in a given set, as defined in \Cref{def-small}.

Rather than directly answering membership queries on edges, the algorithms from \Cref{thm:avg lca 1} and \Cref{thm:avg lca 2} return a \emph{proposal edge} for each queried vertex. Note that, for any vertex $u \in V \setminus \{n\}$, the algorithm determines an incident edge of $u$ to be the proposal edge of $u$. In this way, membership query on any edge $(u,v)$ could be answered by checking if it is proposed by $u$ or $v$.

\subsection{Proof of \Cref{thm:avg lca 1}}

\subsubsection{Algorithm Overview}

Using following definition, we present our algorithm of \Cref{thm:avg lca 1} in \Cref{alg-gnp prop1}. 

\begin{definition}
    For any vertex $u \in V$, let {\em $\Gamma^+(u):= \{v \mid v > u,v\in \Gamma(u)\}$}, i.e. $\Gamma^+(u)$ denotes the set of neighbors of $u$ with ID larger than $u$.
\end{definition}

\begin{algorithm}[htb]
\caption{\textsc{Prop}$(G,u)$: compute the proposal edge of vertex $u$}\label{alg-gnp prop1}

\KwIn{graph $G = (V, E) \sim \Gnp$ with $np=n^\delta$ for $\delta \leq \frac{1}{3}; u\in V\setminus\{n\}.$}

\tcp{On query of vertex $u$:}

\If{$u == n$}{\Return $\bot$}

Let $r := \Theta(\sqrt{n^{1-\delta}} \log n)$, $\tau := \Theta(\log n)$\;

\tcp{\texttt{Phase 1: random walk and core tree}}
Perform $r$ independent length-$\tau$ lazy random walks staring from vertex $n$ on $G_{reg}$, and refine these walks to get its BFS tree $T_0=(S,H^\prime)$\;
\If {$u \in S$} {\Return the edge through which the BFS visits $u$ for the first time.}

\tcp{\texttt{Phase 2: find path to core tree}}
Let $\theta := \Theta(\sqrt{n^{1-\delta}}\log^2 n)$ be an integer;

$P_u \gets$ \textsc{FindPath}$(G, S, \theta, u)$\;

\If{$P_u \neq \bot$}{\Return the first edge on $P_u$}

\tcp{\texttt{Phase 3: maintain increasing path}}
\If {$\Gamma^+(u) \neq \emptyset$}{\Return $(u,\smallest{\Gamma^+(u)})$.}
\Else {\Return $(u,\smallest {\Gamma(u)})$.}
\end{algorithm}

From a high-level perspective, \Cref{alg-gnp prop1} use a similar idea to \Cref{alg-local expander}: on query proposal edge of any vertex $u$, we first use the \emph{same random bits} to perform random walks, and refine these walks to maintain a core tree in subgraph. If $u$ is visited by random walk, then it already belongs to a tree rooted at $n$; otherwise, we perform lexicographical BFS to find path from $u$ to the core tree. However, to reduce the probe complexity, we reduce the number of random walks and limit the budget of BFS. This definitely leads to a problem: a vertex may fail to find a path to core tree.

For vertex $u$, if \textsc{FindPath}$(G,S,\theta,u)$ does not return $\bot$, then we call $u$ a \textbf{good vertex}. For a good vertex $u$, the first $\theta$ distinct vertices visited by BFS from $u$, must contain at least one vertex in $S$. Let $V_{good}$ denote the set of all good vertices, note that $S \subseteq V_{good}$. We have following observation.
\begin{observation}\label{obs-good vertex}
    All proposal edges of vertices in $V_{good}$ by \Cref{alg-gnp prop1} form a spanning tree of $V_{good}$.
\end{observation}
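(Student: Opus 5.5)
Every good vertex other than the root contributes exactly one proposal edge, and I will show these edges form a tree on $V_{good}$ by following each vertex's proposals down to the root $n$ without ever leaving $V_{good}$. Set $s=n$, which is the start of all the walks. The argument mirrors the proof of \Cref{thm-global tree}, with \Cref{lem-subpath} as the main tool.

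First, the vertices of $S$. By construction $n\in S$, and every $u\in S\setminus\{n\}$ proposes the BFS-tree edge of $T_0$ through which $u$ was first reached. So the proposals from $S\setminus\{n\}$ are exactly the $|S|-1$ edges of the core tree $T_0=(S,H')$, which is a spanning tree of $S$ rooted at $n$.

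Next, a good vertex $u\notin S$. It proposes the first edge $(u,u_1)$ of $P_u=P_G(u,\anc_S(u))$, the lexicographically-least shortest path to $S$ that \textsc{FindPath} returns (as described in \Cref{sec-lexico bfs}). I claim $u_1\in V_{good}$ and that $u_1$'s proposal is the next edge of the same path. If $u_1\in S$ this is immediate. Otherwise, \Cref{lem-subpath} gives $\anc_S(u_1)=\anc_S(u)$ and $P_G(u_1,\anc_S(u_1))=(u_1,\dots,\anc_S(u))$, a path of length $\ell(P_u)-1$.

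The step I expect to need the most care is showing that $u_1$'s bounded BFS actually succeeds within its budget of $\theta$ vertices. Here I would use the minimality of $P_u$: $\dist(u_1,S)=\dist(u,S)-1=:k-1$. Every vertex $u_1$ explores before reaching level $k-1$ lies within distance $k-1$ of $u_1$, hence within distance $k$ of $u$. But $u$'s successful BFS explored its whole ball of radius $k-1$, together with part of level $k$, within $\theta$ vertices. The concern is that $u_1$'s ball of radius $k-1$ can contain level-$k$ vertices of $u$, so it is not automatically inside what $u$ explored.

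I see two ways to handle this.
\begin{itemize}
\item \emph{Count distinct vertices differently.} If \textsc{FindPath} counts visited vertices so that the budget is checked against the radius of $u$'s ball at the moment $S$ is first hit, then containment of balls gives the claim directly.
\item \emph{Redefine goodness to be inherited.} Alternatively, define $V_{good}$ as the vertices whose proposal chain reaches $S$. Then the observation reduces to acyclicity plus connectivity, and the budget issue is absorbed into the definition.
\end{itemize}
I would attempt the first route. Since the observation should hold deterministically, I expect the intended reading is that inheritance follows from the successful search from $u$ already having revealed $P_G(u_1,\anc_S(u_1))$.

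Given inheritance, iterating shows that from any good $u$ the proposals trace $P_u$ into $S$ and then follow $T_0$ to $n$, so the proposal graph on $V_{good}$ is connected. It has exactly $|V_{good}|-1$ edges, one per vertex except $n$, provided no edge is proposed twice. A double proposal would mean $u$ proposes $(u,w)$ and $w$ proposes $(w,u)$. This is impossible, since each proposal strictly decreases the potential $(\dist(\cdot,S),\text{depth in }T_0)$ ordered lexicographically. So we have a connected graph on $|V_{good}|$ vertices with $|V_{good}|-1$ edges, which is a spanning tree of $V_{good}$.
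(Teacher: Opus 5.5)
\textbf{Verdict: there is a genuine gap, at the inheritance step.} Your outline matches the paper's. Edges from $S\setminus\{n\}$ give $T_0$, a good $u\notin S$ proposes the first edge of $P_G(u,\anc_S(u))$, and \Cref{lem-subpath} makes the next vertex's proposal the next edge of that path. Your count of $|V_{good}|-1$ distinct edges via the potential $(\dist(\cdot,S),\text{depth in }T_0)$ is a correct way to finish.

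The gap is the step you flagged and then did not prove: that $u_1$ is itself good under \Cref{alg-gnp prop1} as written. Your first route changes the stopping rule of \textsc{FindPath}. Your second changes the definition of $V_{good}$. Neither proves the stated observation. Your closing remark does not help either, because $u_1$'s proposal comes from its own call \textsc{FindPath}$(G,S,\theta,u_1)$ with a fresh budget, so nothing $u$'s search revealed is reused. The paper's proof is no better here: it asserts that all of $P_G(u,\anc_S(u))$ is retained by appeal to \Cref{clm-local path}, which concerns the unbudgeted global algorithm, so it relies on the same fact without justification.

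\textbf{The missing idea: the lexicographic order already gives the containment you wanted.} For $x\notin S$ with $\dist(x,S)=j$, the BFS from $x$ discovers, before reaching $\anc_S(x)$, exactly the set
\[
D(x)=\{w:\dist(x,w)\le j-1\}\cup\{w:\dist(x,w)=j,\ P_G(x,w)\prec P_G(x,\anc_S(x))\}.
\]
So $x$ is good precisely when $|D(x)|$ stays within the budget $\theta$.

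Take $k=\dist(u,S)\ge 2$ and $a=\anc_S(u)$; recall $\anc_S(u_1)=a$ by \Cref{lem-subpath}. Write $u\cdot P$ for the path $P$ with $u$ prepended, and take any $w\in D(u_1)$.
\begin{itemize}
\item If $\dist(u,w)\le k-1$, then $w\in D(u)$ directly.
\item Otherwise $\dist(u,w)=k$ and $\dist(u_1,w)=k-1$, with $P_G(u_1,w)\prec P_G(u_1,a)$. Then $u\cdot P_G(u_1,w)$ is a shortest $u$--$w$ path, and
\[
P_G(u,w)\preceq u\cdot P_G(u_1,w)\prec u\cdot P_G(u_1,a)=P_G(u,a),
\]
where the last equality is \Cref{lem-subpath}. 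Hence $w\in D(u)$.
\end{itemize}

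Thus $D(u_1)\subseteq D(u)$, so $u_1$ is good and proposes the next edge of the path; induction along the path gives inheritance. The level-$(k-1)$ vertices of $u_1$ that worried you are exactly those preceding $a$ in $\prec$-order, and $u$'s search had already counted them.

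This argument needs the BFS to discover each level in $\prec$-order of lex-least shortest paths. That is also what makes \textsc{FindPath} return $P_G(u,\anc_S(u))$ at all. Plain processing by vertex ID, as \Cref{sec-lexico bfs} literally describes, does not suffice.
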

To see this, we note that a good vertex $u$ is either a vertex in $S$ (\texttt{Phase 1}), thus it belongs to the BFS tree $T_0$; or \textsc{FindPath}$(G,S,\theta,u)$ finds the anchor of $u$ in $S$ (\texttt{Phase 2}), thus every edge on $P_G(u,\anc_{S}(u))$ is retained in $H$, as we discussed before in \Cref{clm-local path}. Hence every good vertex is directly connected to the core tree by proposal edges. 

On the other hand, for a vertex $u \notin V_{good}$, our algorithm tries to ensure that $u$ proposes an edge with large potential to connect $u$ and $n$. Presented in \texttt{Phase 3} of \Cref{alg-gnp prop1}, the rules maintain a ``path'' from bad vertex $u$, at each step traversing to the smallest neighbor among its larger neighbors. We use following definition of ``increasing path'' to characterize the path.
\begin{definition}[Increasing path]\label{def-inc path}
    A path $(v_1,v_2,\cdots,v_\ell)$ is called an \textbf{increasing path} if for every $i \in [l-1]$, $v_{i+1} = \smallest{\Gamma^+(v_i)}$. We use   $IP(v)$ to denote the longest increasing path starting from vertex $v$.
\end{definition}
Notice that once two increasing paths intersect at a vertex, they continue in exactly the same manner after the intersection vertex. Hence such  paths  do not cause cycles.

Note that the algorithm may not maintain all edges in $IP(u)$ for a bad vertex $u$, because a good vertex on $IP(u)$ determines its proposal edge in \texttt{Phase 1} and \texttt{Phase 2}. However, if this happens, such vertex $u$ must be connected to the core tree in $H$ by following Observation.
\begin{observation}\label{obs-bad to good}
    If any good vertex appears in $IP(u)$, then the bad vertex $u$ is connected to the core tree in $H$.
\end{observation}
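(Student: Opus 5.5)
The idea is to follow the proposal edges out of $u$ and show that they run along $IP(u)$ until the first good vertex, and from there into the core tree. Write $IP(u)=(u=v_1,v_2,\ldots,v_\ell)$ and let $j$ be the smallest index with $v_j\in V_{good}$; such a $j$ exists by hypothesis. The proof has three steps.

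First, for every $i<j$ the vertex $v_i$ is bad. A bad vertex is not in $S$, so \Cref{alg-gnp prop1} does not return in \texttt{Phase 1}. Its call to \textsc{FindPath} returns $\bot$, so it does not return in \texttt{Phase 2} either. It therefore reaches \texttt{Phase 3}. There, since $v_{i+1}$ exists on the increasing path, $\Gamma^+(v_i)\neq\emptyset$, and $v_i$ proposes $(v_i,\smallest{\Gamma^+(v_i)})=(v_i,v_{i+1})$ by \Cref{def-inc path}. Hence every edge $(v_i,v_{i+1})$ with $i<j$ is a proposal edge and lies in $H$. Consequently $u$ is connected to $v_j$ in $H$ along the prefix $(v_1,\ldots,v_j)$.

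Second, the good vertex $v_j$ is connected to the core tree in $H$. This is what \Cref{obs-good vertex} gives, and I would re-derive it briefly. If $v_j\in S$, then $v_j$ is a vertex of $T_0$. Every vertex of $S$ other than the root $n$ proposes its BFS-parent edge in $T_0$, so all edges of $T_0$ lie in $H$. If instead $v_j\in V_{good}\setminus S$, let $P=P_G(v_j,\anc_S(v_j))=(v_j,w_1,\ldots,w_k,a)$, and note that $v_j$ proposes $(v_j,w_1)$. By \Cref{lem-subpath}, each $w_i$ has anchor $a$ and lexicographically-least shortest path $(w_i,\ldots,a)$ to $S$.

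Third, we check that each $w_i$ on this path really proposes its next edge. This requires $w_i$ to be good, meaning that its BFS, truncated at $\theta$ vertices, actually finds $S$. This is the one delicate point, since $w_i$ is closer to $S$ than $v_j$ but good-ness is defined with a budget. To handle it I would argue as follows. The distance from $w_i$ to $S$ is strictly smaller than the distance $t$ from $v_j$ to $S$. Moreover, the lexicographical BFS from $w_i$ finishes level $t-1-i$ and reaches $S$ before exhausting $\theta$ vertices: the ball of radius $t-1$ around $v_j$ has fewer than $\theta$ vertices and contains the relevant radius-$(t-1-i)$ ball around $w_i$. If this containment argument turns out to be delicate, I would instead adjust the definition of good so that reaching $S$ within $\theta$ probes is checked by level, which gives monotonicity along shortest paths directly.

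Given these steps, each $w_i$ proposes $(w_i,w_{i+1})$, and $w_k$ proposes $(w_k,a)$. So $P\subseteq H$ and $v_j$ is joined to $a\in S$. Concatenating the paths from the first and second steps connects $u$ to the core tree, and hence to the root $n$, in $H$.
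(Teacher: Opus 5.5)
Your first step is the paper's proof. You take the first good vertex $v_j$ on $IP(u)$, observe that every earlier vertex is bad and so proposes $(v_i,v_{i+1})$ in \texttt{Phase 3}, and then invoke \Cref{obs-good vertex} for $v_j$. The paper stops there, so once you cite \Cref{obs-good vertex} your argument is complete and matches the paper's.

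Your optional re-derivation of \Cref{obs-good vertex} raises a real point that the paper passes over: it only appeals to \Cref{clm-local path}, which ignores the budget $\theta$. However, your resolution of that point is incomplete.

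Let $t=\dist(v_j,S)$. The radius-$(t-i-1)$ ball around $w_i$ lies inside the radius-$(t-1)$ ball around $v_j$. This only shows that the search from $w_i$ completes level $t-i-1$ within budget. To reach $S$, that search must then discover part of its level $t-i$. Those vertices lie within distance $t$ of $v_j$, but the truncated search from $v_j$ saw only a prefix of its own level $t$. So nothing in your argument prevents the search from $w_i$ from spending its remaining budget on vertices that $v_j$'s search never reached.

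The step that closes this compares search orders, not balls. Assume the search processes vertices in $\prec$-order of their lexicographically-least shortest paths; \textsc{FindPath} needs this anyway to return anchor paths. Take any $y$ processed from $w_i$ no later than $w_k$. Then $\dist(w_i,y)\le t-1-i$, so $\dist(v_j,y)\le t-1$, and there are two cases.
\begin{itemize}
\item If $\dist(v_j,y)<t-1$, then $v_j$'s search processes $y$ before $w_k$.
\item If $\dist(v_j,y)=t-1$, then the concatenation of $(v_j,w_1,\ldots,w_{i-1})$ with $P_G(w_i,y)$ is a shortest path. Hence $P_G(v_j,y)$ is $\preceq$ that concatenation, which in turn is $\preceq P_G(v_j,w_k)$. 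This uses the fact that sub-paths of the anchor path are themselves lexicographically-least shortest paths, as in \Cref{lem-subpath}.
\end{itemize}
So everything the search from $w_i$ visits before reaching $a$ is also visited by the search from $v_j$ before reaching $a$, and therefore $w_i$ is good.

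Your fallback of redefining ``good'' level by level does not help here. It would change \Cref{alg-gnp prop1}, so it would not prove the observation as stated.
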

\begin{proof}
    Without loss of generality, let $v$ be the first good vertex on $IP(u)$. Then we have
    \begin{enumerate}
        \item every edge on sub-path from $u$ to $v$ on $IP(u)$ is proposed in \texttt{Phase 3}, thus the sub-path is retained in $H$;
        \item as $v$ is a good vertex, it is connected to the core tree in $H$ by \Cref{obs-good vertex}.
    \end{enumerate}
\end{proof}

We have not yet ruled out the possibility  that a bad vertex does not contain a good vertex through its increasing path. To fix this, \texttt{Phase 3} ensures that the proposal edge of a last vertex on increasing path, is toward its smallest neighbor. Through this edge, the potential of being connected to core tree, is delivered to the smallest neighbor. Therefore, it suffices to prove that every ``small'' vertex is connected to a good vertex in $H$.

A small vertex has a 

good probability of being spanned.
This is because, the increasing path of a small vertex is long; hence there are a lot of vertices on the path launching BFS in \texttt{Phase 2}; all these BFS together visit many distinct vertices, thus among them w.h.p. there is an end-vertex of random walk. This implies that w.h.p. a small vertex meets a good vertex through its increasing path.

\subsubsection{Probe Complexity of \Cref{alg-gnp prop1}}
\begin{claim}\label{clm-gnp findpath probe}
    With high probability, for every vertex $u$, invocation of \textsc{FindPath}$(G,S,\theta,u)$ in \Cref{alg-gnp prop1} finishes using $O(\sqrt{n^{1-\delta}}\log^2 n)$ probes.
\end{claim}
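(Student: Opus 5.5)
The plan is to mirror the proof of \Cref{clm-findpath probes}, but replace the generic degree bound $d$ with a high-probability degree bound for $\Gnp$. The key point is that BFS cost is governed by how many adjacency entries we read before $\theta$ distinct vertices are discovered. Here $\theta = \Theta(\sqrt{n^{1-\delta}}\log^2 n)$.

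\textbf{Probes per vertex.} Lexicographical BFS from $u$ processes vertices level by level, reading the full adjacency list of each processed vertex. It stops once $\theta$ distinct vertices have been visited. In the worst case, every processed vertex belongs to the set $Q$ of the first $\theta$ discovered vertices, and the last one may be only partially scanned. So the total number of probes is at most $\sum_{w\in Q}(1+\deg(w))$. Finishing the output path adds at most one more degree probe and one neighbor scan of the vertex $v$ incident to $S$. This gives a bound of $O(\theta\cdot \Delta_{\max})$, where $\Delta_{\max}$ is the maximum degree of $G$.

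\textbf{Degree bound.} By a Chernoff bound and a union bound over all $n$ vertices, with high probability every vertex has degree at most $2n^\delta$. Plugging this in directly gives $O(\theta n^{\delta})$ probes. That is too weak when $\delta>0$: it yields $\sqrt{n^{1+\delta}}$ rather than the claimed $O(\sqrt{n^{1-\delta}}\log^2 n)$.

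\textbf{Main obstacle.} The crux is to reconcile the claim with this degree factor. I would resolve it by reading $\theta$ as a budget on the number of \emph{probes} (adjacency entries read) rather than distinct vertices. Equivalently, I would stop the BFS as soon as the number of probed neighbors reaches $\theta$. The claim then holds deterministically: FindPath makes at most $\theta+O(1)$ \texttt{Nbr}/\texttt{Deg} probes, plus one final neighbor scan of $v$, which costs $O(n^\delta)$ w.h.p. That scan is itself $O(\sqrt{n^{1-\delta}})$ precisely because $\delta\le 1/3$ implies $n^\delta\le \sqrt{n^{1-\delta}}$.

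\textbf{Fallback if the vertex count is literal.} If instead the budget must count distinct vertices, I would argue that w.h.p. the BFS from any $u$ discovers at most $\theta$ distinct vertices while reading only $O(\theta)$ adjacency entries. The reason is that in $\Gnp$, while fewer than $\theta=o(n)$ vertices are visited, almost all revealed neighbors are new. So reading $k$ entries yields $(1-o(1))k$ distinct vertices, by Chernoff bounds on the number of repeated hits given the exposed set, with a union bound over $u$. Stopping at $\theta$ distinct vertices therefore costs $O(\theta)$ entries. The only exception is the last vertex, whose scan adds at most $2n^\delta\le\sqrt{n^{1-\delta}}$ extra probes.

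Finally, a union bound over all $n$ starting vertices gives the high-probability statement for every $u$ simultaneously.
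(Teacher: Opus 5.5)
Your fallback argument is essentially the paper's proof. While fewer than $\theta=o(n)$ vertices have been discovered, each full neighborhood scan costs $\Theta(n^\delta)$ probes and, by a Chernoff bound on the $\mathrm{Bin}(n-|K|,p)$ count of fresh neighbors, adds $\Theta(n^\delta)$ new vertices w.h.p. So reaching $\theta$ distinct vertices takes $O(\theta)$ probes, plus one final $O(n^\delta)\le\sqrt{n^{1-\delta}}$ scan, using $\delta\le 1/3$.

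Drop the ``probe budget'' reinterpretation, however. \textsc{FindPath} explicitly stops after $\theta$ \emph{distinct} vertices, and the later lower bound on $|Q(v)|$ relies on exactly that. Changing the stopping rule would prove a statement about a different algorithm, so the fallback should be your main argument, not a contingency.
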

\begin{proof}
    The probe complexity of \textsc{FindPath} is dominated by performing BFS until $\theta$ distinct vertices being visited. Assume that BFS in \textsc{FindPath}$(G,S,\theta,u)$ has already explored a set of distinct vertices $K$ with $\abs{K} < \theta$. Next algorithm continues BFS by fully exploring neighborhood of a vertex $v \in K$ and add all of $v$'s neighbors to $K$. We upper bound the number of steps of exploration before one expects an increase in $\abs{K}$.

    Let random variable $X_u$ be indicator of the event that $(v,u) \in E(G)$, and $X = \sum_{u \in V\setminus K}$ is the number of vertices incident to $v$ and not in $K$. As $\mu = \E[X] = (n-|K|)\cdot p \geq 0.99 n^\delta$, by Chernoff-Hoeffding bound, we have
    \[
    \Pr[X\leq 0.98 n^\delta] \leq \Pr[X-\mu \leq -0.01 n^\delta] \leq e^{-\Omega(n^\delta)}.
    \]
    This means, after $\Theta(n^\delta)$ (maximum degree) probes in BFS to fully explore neighborhood of a vertex, the size of $K$ grows by $\Theta(n^\delta)$. This always holds when $|K| < \theta$.

    Hence an inductive argument beginning at $K = \{u\}$, shows that the BFS only needs to fully explore neighborhoods of $\frac{\theta}{\Theta(n^\delta)}$ distinct vertices before $K$ reaches size $\theta$. After that, if $K$ contains at least $1$ vertex in $S$, the algorithm will use $\Theta(n^\delta)$ probes to fully explore neighborhood of some vertex. Note that $\Theta(n^\delta) \leq \sqrt{n^{1-\delta}}\log^2 n$ when $\delta \leq \frac{1}{3}$. This implies that the probe complexity of \textsc{FindPath}$(G,S,\theta,u)$ is $O(\theta) = O(\sqrt{n^{1-\delta}}\log^2 n)$.
\end{proof}

\begin{theorem}\label{thm-gnp lca probe}
    The probe complexity of \Cref{alg-gnp prop1} is $O(\sqrt{n^{1-\delta}}\log^2 n)$.
\end{theorem}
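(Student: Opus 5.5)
We bound the probes of \Cref{alg-gnp prop1} phase by phase and add the three contributions. The structure mirrors \Cref{lem-expander local probe}: the random walks cost $r\cdot\tau$ probes, the core tree costs nothing, and \textsc{FindPath} is handled by \Cref{clm-gnp findpath probe}. The one new ingredient is the cost of \texttt{Phase 3}.

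\emph{Phase 1.} Each step of a lazy walk on $G_{reg}$ takes one \Nbr\ probe (sample $i\in[2d]$ and probe $\Nbr(v,i)$). So the $r$ walks of length $\tau$ use
\[
r\cdot\tau=\Theta(\sqrt{n^{1-\delta}}\log n)\cdot\Theta(\log n)=O(\sqrt{n^{1-\delta}}\log^2 n)
\]
probes. Building the BFS tree $T_0$ from the recorded trajectories, and checking whether $u\in S$, uses no further probes. If $u\in S$, the first edge through which the BFS visits $u$ is read off from the trajectories.

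\emph{Phase 2.} By \Cref{clm-gnp findpath probe}, with high probability over $G$ the single call \textsc{FindPath}$(G,S,\theta,u)$ costs $O(\sqrt{n^{1-\delta}}\log^2 n)$ probes. Returning the first edge of $P_u$ needs nothing extra.

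\emph{Phase 3.} This is reached only when $P_u=\bot$. Computing $\smallest{\Gamma^+(u)}$, or $\smallest{\Gamma(u)}$ if $\Gamma^+(u)=\emptyset$, requires reading the whole adjacency list of $u$. That costs one \texttt{Deg} probe plus $\deg(u)$ \Nbr\ probes, where $\deg(u)\le 2n^\delta$ with high probability. Since $\delta\le 1/3$ gives $n^\delta\le \sqrt{n^{1-\delta}}$, this is $O(\sqrt{n^{1-\delta}})$.

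Summing the three phases gives $O(\sqrt{n^{1-\delta}}\log^2 n)$ probes per query, on the high-probability event that the maximum degree is at most $2n^\delta$ and the conclusion of \Cref{clm-gnp findpath probe} holds.

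The main obstacle is turning this into a bound on the \emph{average-case} probe complexity, which is an expectation over $G\sim\Gnp$. On the failure event, of probability $n^{-c}$, a query can still cost at most $O(r\tau + \theta d + d)$ probes. This is because \textsc{FindPath} stops once $\theta$ distinct vertices have been visited, so it fully explores at most $\theta$ vertices at $\le d$ probes each, and Phase 3 reads at most one adjacency list. The truncation is at $\theta$ \emph{visited} vertices, not $\theta$ expanded ones, so this cost is polynomial in $n$ even for an arbitrary graph.

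Choosing the constant $c$ large enough, the contribution of the failure event to the expectation is $o(1)$. Hence the expected number of probes is $O(\sqrt{n^{1-\delta}}\log^2 n)$, as claimed.
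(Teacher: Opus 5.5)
Your proof is correct and follows the paper's argument. It uses the same phase-by-phase accounting: $r\tau$ probes for the walks, \Cref{clm-gnp findpath probe} for \textsc{FindPath}, and $O(n^\delta)=O(\sqrt{n^{1-\delta}})$ (using $\delta\le 1/3$) for \texttt{Phase 3}. Your extra step, which bounds the polynomial cost on the low-probability failure event so that it contributes only $o(1)$ to the expectation over $G\sim\Gnp$, is a valid refinement that the paper leaves implicit.
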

\begin{proof}
In \texttt{Phase 1}, performing $r$ independent length-$\tau$ random walks using probes $r \cdot \tau = O(\sqrt{n^{1-\delta}}\log^2 n)$. Computing $T_0$ does not cost extra probe. In \texttt{Phase 2}, by \Cref{clm-gnp findpath probe}, every invocation of \textsc{FindPath} finishes in $O(\sqrt{n^{1-\delta}}\log^2 n)$ probes. Additionally, \texttt{Phase 3} cost $O(n^\delta)$ probes. Hence together, the probe complexity is $O(\sqrt{n^{1-\delta}}\log^2 n)$.
\end{proof}

\subsubsection{Correctness of \Cref{alg-gnp prop1}}

We first use following definition to characterize the behavior of \Cref{alg-gnp prop1}.

\begin{definition}\label{def-bc1}
    A \textbf{baseline condition} is an event parameterized as $B(f,b)$ where
    \begin{itemize} 
        \item (Forward function) $f: V \to V\cup\{\bot\}$, is a function mapping vertex $v$ to a neighbor $u$, such that $u = \smallest{\Gamma^+(v)}$; or to $\bot$ if no such neighbor exists;
        \item (Backward function) $b: \{v \mid f(v) = \bot\} \to V\cup\{\bot\}$, is a function mapping a vertex $v$ to $\smallest{\Gamma(v)}$ i.e. $v$'s smallest neighbor; or to $\bot$ if no such neighbor exists.
    \end{itemize}

    We say a graph $G$ satisfies $B(f,b)$ if it is compatible with both $f$ and $b$. 
\end{definition}

\begin{lemma}\label{lem-unq cond1}
    For every pair of distinct baseline conditions $B_1 = B(f_1,b_1)$ and $B_2 = B(f_2,b_2)$, there is no graph $G$ that satisfies both.
\end{lemma}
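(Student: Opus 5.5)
The plan is to argue by contradiction. Suppose $G$ satisfies both $B_1=B(f_1,b_1)$ and $B_2=B(f_2,b_2)$. The point to exploit is that a baseline condition is not a free parameter. For a fixed graph $G$, the forward function is forced by $G$: compatibility means $f(v)=\smallest{\Gamma_G^+(v)}$ when $\Gamma_G^+(v)\neq\emptyset$, and $f(v)=\bot$ otherwise. Likewise the backward function is forced on its domain: $b(v)=\smallest{\Gamma_G(v)}$, or $\bot$ if $\Gamma_G(v)=\emptyset$.

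So I would show that the pair $(f,b)$ compatible with a given $G$ is uniquely determined by $G$. The steps are as follows.

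\begin{enumerate}
\item Compatibility of $G$ with $f_1$ and with $f_2$ gives, for every $v\in V$, that $f_1(v)$ and $f_2(v)$ both equal the same quantity determined by $G$. Hence $f_1=f_2$ pointwise.
\item Because the backward function's domain is $\{v\mid f(v)=\bot\}$, step 1 shows that $b_1$ and $b_2$ have the same domain.
\item On that common domain, compatibility forces $b_1(v)=\smallest{\Gamma_G(v)}=b_2(v)$, with the convention that both are $\bot$ when $\Gamma_G(v)=\emptyset$. Hence $b_1=b_2$.
\end{enumerate}

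Together these give $B_1=B_2$, contradicting the assumption that the two baseline conditions are distinct.

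The only delicate point is definitional: making precise that a baseline condition is identified by its parameter pair $(f,b)$, and that two conditions are distinct exactly when their pairs differ somewhere, whether in $f$ or in $b$. The one case needing explicit handling is when $f_1,f_2$ disagree on whether $v$ has a larger neighbor, so that the domains of $b_1,b_2$ differ. That case is already excluded by step 1.

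Equivalently, the events $\{G \text{ satisfies } B(f,b)\}$, taken over all admissible pairs $(f,b)$, partition the probability space of $\Gnp$. This disjointness is what the later averaging argument over baseline conditions will use.
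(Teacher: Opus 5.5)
Your proposal is correct and follows essentially the paper's argument. Distinct baseline conditions must differ at some vertex $v$, while the neighborhood $\Gamma(v)$ in $G$ forces the values of $f(v)$ and $b(v)$, so no graph can be compatible with both. Your contrapositive phrasing (that $G$ uniquely determines $(f,b)$) and your explicit use of $f_1=f_2$ to equate the domains of $b_1$ and $b_2$ only make precise a step the paper leaves implicit.
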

\begin{proof}
    Consider that $B_1$ and $B_2$ are distinct, there exists at least one vertex $v$ s.t. either $f_1(v) \neq f_2(v)$ or $b_1(v) \neq b_2(v)$. Recall that function $f$ maps $v$ to $\smallest{\Gamma^+(v)}$ or $\bot$, and function $b$ maps $v$ to $\smallest{\Gamma(v)}$ if $f(v) = \bot$. Thus if $\Gamma(v)$ is compatible with $f_1$ and $b_1$, then it is not compatible with $f_2(v)$ and $b_2(v)$.
\end{proof}

\begin{lemma}\label{lem-uniqIP}
    Every baseline condition $B(f,b)$ uniquely specifies $IP(v)$ for any vertex $v$.
\end{lemma}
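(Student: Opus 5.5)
The plan is to unfold the definition of the increasing path and observe that, once $f$ is fixed, it is a deterministic iteration of $f$. By \Cref{def-inc path}, a path $(v_1,\dots,v_\ell)$ is increasing exactly when $v_{i+1}=\smallest{\Gamma^+(v_i)}$ for every $i$. On any graph $G$ satisfying $B(f,b)$, compatibility with the forward function gives $f(w)=\smallest{\Gamma^+(w)}$ whenever $\Gamma^+(w)\neq\emptyset$, and $f(w)=\bot$ otherwise. Hence every increasing path starting at $v$ is an initial segment of the sequence $v, f(v), f(f(v)), \dots$, taken while the iterates are defined.

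The key steps are as follows.

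\emph{(i) Termination.} Since $f(w)\in\Gamma^+(w)$, we have $f(w)>w$. The IDs along the iteration therefore strictly increase within $[n]$, so the sequence reaches some vertex $w$ with $f(w)=\bot$ after at most $n-1$ steps.

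\emph{(ii) Identifying the longest path.} The longest increasing path from $v$ is exactly $(v, f(v), \dots, f^{k}(v))$, where $k$ is the first index with $f(f^{k}(v))=\bot$. Every increasing path from $v$ is a prefix of this sequence, and the sequence cannot be extended because $\Gamma^+(f^k(v))=\emptyset$. This expression depends only on $f$, not on $G$.

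\emph{(iii) Uniqueness across graphs.} Any two graphs $G,G'$ satisfying $B(f,b)$ give the same $IP(v)$, because both equal the expression from (ii). The backward function $b$ plays no role here. I would add a remark that $b$ additionally pins down the Phase 3 proposal edge $(w,b(w))$ at the endpoint $w$ of $IP(v)$. That fact will be used later, but it is not needed for this lemma.

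I do not expect a real obstacle. The only point needing care is to state explicitly that the equality $f(w)=\smallest{\Gamma^+(w)}$ is what "compatible" means in \Cref{def-bc1}. With that in place, the increasing-path rule and iteration of $f$ coincide, and step (i) guarantees the "longest" path is well defined.
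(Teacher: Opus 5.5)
Your proposal is correct and follows the paper's own argument. On any graph compatible with $B(f,b)$, the successor of each vertex on $IP(v)$ is $\smallest{\Gamma^+(\cdot)}$, which equals $f(\cdot)$, so $IP(v)$ is the iteration of $f$ from $v$ and depends only on $f$; your explicit termination step (IDs strictly increase) and your identification of the longest path are details the paper leaves implicit.
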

\begin{proof}
    If a graph satisfies $B(f,b)$, then for any vertex $v$, $\smallest{\Gamma^+(v)} = f(w)$ is determined by the function $f$, and it is exactly next vertex incident to $v$ on path $IP(v)$. Iteratively, every vertex on $IP(v)$ is determined by $f$, thus $IP(v)$ is uniquely determined.
\end{proof}

For graph conditioned on a baseline condition $B(f,g)$, we have following observation.
\begin{observation}\label{obs-pair random1}
    For $G \sim \Gnp$, if $G$ satisfies a baseline condition $B(f,g)$, then we have:
    \begin{itemize}
        \item if $f(v) \neq \bot$, then pair $(v,f(v)) \in E(G)$, and pair $(v,u) \notin E(G)$ for vertex $v < u < f(v)$;
        \item if $f(v) = \bot$ and $b(v) \neq \bot$, then pair $(v,b(v))  \in E(G)$, and pair $(v,u) \notin E(G)$ for vertex $u$ with $v < u \leq n$ or $ 1 \leq u < b(v)$;
    \end{itemize}
    and each remaining pair of vertices is connected by an edge independently with probability $p$. Or equivalently speaking, for each remaining pair $(w,z)$, $X_{w,z}$ remains random.
\end{observation}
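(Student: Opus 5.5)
The plan is to show that conditioning on $G$ satisfying $B(f,b)$ is the same as conditioning on a single event of product form over the independent indicators $\{X_{w,z}\}$. Such an event fixes some coordinates and leaves every other coordinate with its original law, so the statement follows.

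\textbf{Step 1: rewrite $B(f,b)$ vertex by vertex.} For each vertex $v$, the condition "$f(v)$ is correct (and $b(v)$ as well, if $f(v)=\bot$)" is equivalent to an explicit set of constraints on the pairs $(v,u)$.

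If $f(v)=w\neq\bot$, then $w=\smallest{\Gamma^+(v)}$. This is equivalent to $X_{v,w}=1$ together with $X_{v,u}=0$ for all $v<u<w$. These are exactly the first bullet.

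If $f(v)=\bot$, then $\Gamma^+(v)=\emptyset$, which is equivalent to $X_{v,u}=0$ for all $u>v$. If in addition $b(v)=w\neq\bot$, then $w=\smallest{\Gamma(v)}$. Since every neighbor lies below $v$, this means $X_{v,w}=1$ and $X_{v,u}=0$ for $u<w$. These are exactly the second bullet.

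If $f(v)=\bot$ and $b(v)=\bot$, then $v$ is isolated, so $X_{v,u}=0$ for all $u$. This case is not listed in the observation. I would note it and absorb it into the same product-form argument: those pairs are fixed to $0$ and are not counted among the "remaining" pairs.

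\textbf{Step 2: take the union of the constraint sets.} Let $F$ be the union over all $v$ of the pairs constrained in Step 1. Each pair in $F$ receives a prescribed value in $\{0,1\}$.

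The one real check is consistency: a pair $\{v,u\}$ may be constrained from both endpoints, and the two prescribed values must agree. If they disagree, the event is empty and the statement holds vacuously. Since the observation is stated for a condition that $G$ actually satisfies, the event is nonempty, so every constrained pair has a single well-defined value.

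\textbf{Step 3: conclude by independence.} $G$ satisfies $B(f,b)$ if and only if $X_e=c_e$ for every $e\in F$. This event depends only on the coordinates in $F$.

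Because the $X_{w,z}$ are mutually independent under $\Gnp$, conditioning on an event measurable with respect to $\{X_e\}_{e\in F}$ leaves the family $\{X_e\}_{e\notin F}$ independent of it. So each remaining pair is still an edge independently with probability $p$.

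I expect no real obstacle here. The only care needed is in Step 1: checking that "smallest larger neighbor", "no larger neighbor", and "smallest neighbor" translate into exactly these coordinate constraints and nothing more. In particular, the second case should impose no constraint on pairs $(v,u)$ with $b(v)<u<v$, which is correct because $b(v)$ only asserts minimality.
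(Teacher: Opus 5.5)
Your proof is correct. The paper states this observation without proof, treating it as immediate from the definition of $B(f,b)$, and your cylinder-event argument (each vertex's compatibility fixes a set $F$ of coordinates, and the remaining coordinates keep their i.i.d.\ Bernoulli$(p)$ law) is exactly the formalization it leaves implicit. You are also right that the case $f(v)=b(v)=\bot$ fixes extra pairs the statement omits; this is harmless for the paper, because a favorable condition forces $b(v)\neq\bot$ whenever $f(v)=\bot$.
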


Slightly abusing, let $f^{-1}(v)$ (resp. $b^{-1}(v)$) denote the set of vertices $u$ such that $f(u) = v$ (resp. $b(u)=v$). The following definition characterizes a desirable condition under which the algorithm behaves well.

\begin{definition}\label{def-fav cond1}
     Let integer $K = c n^{1-\delta} \log n$, where $c$ is a large enough constant. A baseline condition $B(f,b)$ is called a \textbf{favorable condition} if all of the following hold:
     \begin{enumerate}
         \item[(1)] for every $v$ such that $1 \leq v \leq n - K$, it holds that $f(v) \neq \bot$ and $f(v) \leq v + K$;
         \item[(2)] for every $v$ such that $f(v) = \bot$, it holds that $b(v) \neq \bot$ and $1 \leq b(v) \leq K$;
         \item[(3)] for every $v$, $|f^{-1}(v)| = O(\log n)$ and $|b^{-1}(v)| = O(\log n)$.
     \end{enumerate}
\end{definition}

We have following claim to relate favorable condition and graph property. For convenience, we first introduce a notation $V_I$ to denote the set of vertices induced by an interval $I$. Formally, the set $V_I$ consists of all integers modulo $n$ for elements in $I$, defined as follows.
\begin{definition}\label{def-int induced set}
    For any interval $I \subseteq \mathbb{R}$, let $V_I$ denote the set of vertices induced by $I$, defined as:
    \[
    V_I := \{ i \bmod n \mid \text{$i \in I$ and $i \in \mathbb{Z}$}\}.
    \]
\end{definition}

\begin{claim}\label{clm-fav graph1}
    If $G \sim G(n,p)$ satisfies following conditions simultaneously:
        \begin{enumerate}
            \item[(a)] every vertex $v$ is incident to at least one vertex in $V_{(v,v+K]}$;
            \item[(b)] every vertex $v$ is incident to at most $2c\log n$ vertices in $V_{[v-K,v)}$;
        \end{enumerate}
    then the baseline condition $B(f,g)$ that $G$ satisfies is a favorable condition. We call such $G$ a favorable graph.
\end{claim}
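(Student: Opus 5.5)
The proof is a direct unpacking of the definitions. I will verify the three items of \Cref{def-fav cond1} one at a time, using only (a), (b) and the definitions of $f$ and $b$. No probabilistic argument is needed; the only care required is with the wraparound in $V_I$. Throughout I identify residue $0$ with vertex $n$, so that for $v \le n-K$ the set $V_{(v,v+K]}$ is just $\{v+1,\dots,v+K\}$. For $v>n-K$ it is $\{v+1,\dots,n\}\cup\{1,\dots,v+K-n\}$.

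\emph{Item (1).} Take $v\le n-K$. By (a), $v$ has a neighbor in $\{v+1,\dots,v+K\}$. Hence $\Gamma^+(v)\neq\emptyset$, so $f(v)\neq\bot$, and $f(v)=\smallest{\Gamma^+(v)}\le v+K$.

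\emph{Item (2).} Suppose $f(v)=\bot$, i.e.\ $\Gamma^+(v)=\emptyset$. By item (1) this forces $v>n-K$. By (a), $v$ still has a neighbor $w\in V_{(v,v+K]}$. Since $v$ has no larger neighbors, $w$ must lie in the wrapped part $\{1,\dots,v+K-n\}\subseteq[1,K]$. Therefore $\Gamma(v)\neq\emptyset$, so $b(v)\neq\bot$, and $b(v)=\smallest{\Gamma(v)}\le w\le K$.

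\emph{Item (3).} This is the step needing the most care. The idea is to show that both preimage sets are contained in $\Gamma(v)\cap V_{[v-K,v)}$, and then apply (b) to get the bound $2c\log n=O(\log n)$.

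For $f^{-1}(v)$: if $f(u)=v$, then $u$ is a neighbor of $v$ with $u<v$. If $u\le n-K$, item (1) gives $v=f(u)\le u+K$, so $u\in[v-K,v)$. If $u>n-K$, then $v-u<n-(n-K)=K$, so again $u\in[v-K,v)$. No wraparound occurs in either case.

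For $b^{-1}(v)$: if $b(u)=v$, then $f(u)=\bot$. By item (1), $u>n-K$, and by item (2), $v\le K$. Since $v-K\le 0$, the set $V_{[v-K,v)}$ contains the wrapped block $\{n+v-K,\dots,n\}$. This block contains $u$, because $u>n-K\ge n+v-K-1$. As $u\in\Gamma(v)$, we get $u\in\Gamma(v)\cap V_{[v-K,v)}$.

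The main (mild) obstacle is exactly this modular bookkeeping in (2) and in the $b^{-1}$ case. In both places I will state explicitly that $v-K\le 0$ or $v+K>n$, so that the correct wrapped block is the one being used.
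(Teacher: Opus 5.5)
Your overall route matches the paper's. Items (1) and (2) are read off from (a) exactly as the paper does. For item (3) you show that every preimage of $v$ is a neighbor of $v$ lying in $V_{[v-K,v)}$ and then invoke (b). The $f^{-1}(v)$ case is fine. The $b^{-1}(v)$ case, however, contains a false step. You place $u$ in the wrapped block $\{n+v-K,\dots,n\}$ using $u>n-K\ge n+v-K-1$, but $n-K\ge n+v-K-1$ holds only when $v\le 1$; for $v\ge 2$ the inequality goes the other way. In fact the two facts you use, $u>n-K$ and $v\le K$, do not by themselves imply $u\in V_{[v-K,v)}$. For instance, take $n=100$, $K=10$, $u=91$, $v=10$. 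Both facts hold, yet $V_{[0,10)}=\{100,1,\dots,9\}$ (with your convention $0\mapsto n$) does not contain $91$.

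The missing ingredient is a sharper bound that your own argument for item (2) already produces but then discards. The neighbor $w$ of $u$ guaranteed by (a) lies in $\{1,\dots,u+K-n\}$, so $b(u)\le u+K-n$, not merely $b(u)\le K$. With $v=b(u)$ this gives $u\ge n+v-K$, which is exactly membership in the wrapped block. This is how the paper proceeds: it records $b(v)\le v+K-n\le K$ in item (2). For item (3) it then argues symmetrically: $f(u)$ and $b(u)$ always lie in $V_{(u,u+K]}$, and $v\in V_{(u,u+K]}$ is equivalent to $u\in V_{[v-K,v)}$. Hence a vertex $u$ outside $V_{[v-K,v)}$ cannot have $f(u)=v$ or $b(u)=v$. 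With that correction your proof goes through.
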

\begin{proof}
    Condition $(a)$ implies that for $1 \leq v \leq n - K$, $v$ is incident to 
    \[
    V_{(v,v+K]} = \{v+1, v+2, \cdots, v+K\}.
    \]
    Hence $f(v) = \smallest{\Gamma^+(v)} \leq v + K$. This is exactly Item $(1)$ in \cref{def-fav cond1}.
    
    For vertex $v$ s.t. $f(v) = \bot$, according to above argument, we have $v > n - K$. Applying Condition $(a)$, $v$ must be incident to
    \[
    V_{(v,v+K]} = \{v+1, v+2, \cdots, n, 1,\cdots, v+K-n\}.
    \]
    As $f(v) = \bot$, we have $\Gamma^+(v) = \emptyset$. On the one hand, this means $v$ is not incident to any vertex in $\{v+1, v+2, \cdots, n\}$; on the other hand, as $v$ is incident to $V_{(v,v+K]}$, it must be incident to one vertex in $\{1,\cdots, v+K-n\}$. Thus we have $b(v) = \smallest{\Gamma(v)} \leq v+K-n \leq K$, which is exactly Item $(2)$ in \cref{def-fav cond1}.

    Fix a vertex $v$, we see that 
    \[
    \text{$f(u) =v $ or $ b(u) =v$} \implies u \in V_{[v-K,v)}.
    \]
    This is because, if $u \notin V_{[v-K,v)}$, then $v \notin V_{(u,u+K]}$. Applying Condition $(a)$ on vertex $u$, this further means $f(u) \neq v$ and $ b(u) \neq v$, and leads to a contradiction. As Condition $(b)$ tells that $v$ is incident to at most $2c\log n$ vertices in $V_{[v-K,v)}$, the number of vertices $u$ s.t. $f(u) =v $ or $ b(u) =v$ is at most $2c\log n$. This proves that $B(f,g)$ satisfies Item $(3)$ in \cref{def-fav cond1}.
\end{proof}

Now we prove that for graph $G \sim \Gnp$, it satisfies some favorable condition with high probability.
\begin{lemma}\label{lem-fav graph 1}
    For $G \sim G(n,p)$:
    \begin{equation*}
        \sum_{\text{fav. } B(f,b)} \Pr[G \text{ satisfies $B$}] \geq 1- \frac{1}{n^2}
    \end{equation*}
    over randomness of input.
\end{lemma}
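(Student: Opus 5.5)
By \Cref{lem-unq cond1}, the events ``$G$ satisfies $B$'' for distinct baseline conditions are pairwise disjoint. Every graph satisfies exactly one baseline condition, namely the one given by its own $f$ and $b$. So the left-hand side equals $\Pr[\text{the baseline condition of } G \text{ is favorable}]$. By \Cref{clm-fav graph1}, this is at least $\Pr[G \text{ satisfies (a) and (b)}]$. It therefore suffices to show that (a) and (b) each fail with probability at most $\frac{1}{2n^2}$ and then take a union bound.

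For (a), fix a vertex $v$. The set $V_{(v,v+K]}$ contains $K$ vertices other than $v$ (assuming $K<n$, which holds for $\delta>0$ and large $n$). The indicators $X_{v,w}$ for these $w$ are independent Bernoulli$(p)$. Hence
\[
\Pr[v \text{ has no neighbor in } V_{(v,v+K]}] = (1-p)^K \le e^{-pK} = e^{-c\log n} = n^{-c},
\]
using $pK = n^{\delta-1}\cdot c\, n^{1-\delta}\log n = c\log n$. A union bound over the $n$ vertices gives failure probability at most $n^{1-c}$, which is at most $\frac{1}{2n^2}$ for $c\ge 4$.

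For (b), fix $v$. The number of neighbors of $v$ in $V_{[v-K,v)}$ is a sum of $K$ independent Bernoulli$(p)$ variables with mean $\mu = pK = c\log n$. The multiplicative Chernoff bound with deviation factor $2$ gives
\[
\Pr[X \ge 2\mu] \le e^{-\mu/3} = n^{-c/3}.
\]
A union bound over $n$ vertices gives failure probability at most $n^{1-c/3}$, which is at most $\frac{1}{2n^2}$ once $c\ge 10$. Taking $c$ large enough for both conditions is consistent with the phrase ``large enough constant'' in \Cref{def-fav cond1}.

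The only subtle step is the reduction in the first paragraph: converting the sum over favorable conditions into the probability of a single event. This needs disjointness from \Cref{lem-unq cond1} together with the fact that each graph determines its own $(f,b)$. The two tail estimates are routine.
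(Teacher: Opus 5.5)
Your proposal is correct and follows essentially the same route as the paper: reduce to conditions (a) and (b) of \Cref{clm-fav graph1}, then bound each failure probability with a Chernoff-type tail estimate at mean $pK=c\log n$ and a union bound over the $n$ vertices. Your write-up is more explicit than the paper's, with the exact bound $(1-p)^K\le n^{-c}$ for (a), concrete constants, and the sum-to-union step spelled out. One small point: the disjointness from \Cref{lem-unq cond1} is not actually needed for that step, since $\sum_{\text{fav. } B}\Pr[G \text{ satisfies } B] \ge \Pr\bigl[\bigcup_{\text{fav. } B}\{G \text{ satisfies } B\}\bigr]$ holds regardless.
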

\begin{proof}
    We first show that $G$ is a favorable graph with high probability, then directly applying \Cref{clm-fav graph1} finishes the proof. Recall that we use random variable $X_{u,v}  \in \{0,1\}$ to indicate the event that $(u,v) \in E(G)$. 
    
    For every vertex $v$, we use random variables $Y_v$ and $Z_v$ to count neighbors of $v$, where
    \begin{align*}
        Y_v := \sum_{u \in V_{(v,v+K]}}X_{u,v}, \\
        Z_v := \sum_{u \in V_{[v-K,v)}}X_{u,v}.
    \end{align*}
    Note that if for every $v$, we have $Y_v \geq 1$ and $Z_v \leq 2c\log n$, then $G$ is a favorable graph. Thus we only need to prove with high probability, $Y_v \geq 1$ and $Z_v < 2c\log n$.

    As $\mu = \E[Y_v] = K \cdot p = c\log n$, by Chernoff bound, we have 
    \[
    \Pr[\text{$Y_v \leq 0$ or $Y_v \geq 2\mu$}] = \Pr[\abs{Y_v - \mu} \geq \mu] \leq e^{-\frac{\mu}{100}} \leq \frac{1}{n^{100}}.
    \]
    By a union bound over all vertices, we have $1 \leq Y_v < 2c\log n$ holds with high probability for every vertex $v$. Similar argument also works for $Z_v$, and it finishes the proof.
\end{proof}

Following lemma tells that if $G \sim \Gnp$ satisfies a favorable condition, then \Cref{alg-gnp prop1} works with high probability. The proof is deferred in \Cref{sec-proof1}.
\begin{restatable}{lemma}{lemfavtotreeA}\label{lem-fav2tree 1}
    For every favorable condition $B(f,b)$,
    \begin{equation*}
        \Pr_{G \sim G(n,p)}[H \text{ is a spanning tree} \mid G \text{ satisfies } B] \geq 1- \frac{1}{n^2},
    \end{equation*}
    where $H$ is the subgraph consists of all proposal edges determined by \Cref{alg-gnp prop1}.
\end{restatable}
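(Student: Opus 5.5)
Conditioned on a favorable $B(f,b)$, I would first show that $H$ has exactly $n-1$ edges and is acyclic. Once every vertex is shown to be connected to the root $n$ in $H$, $H$ is then a spanning tree.

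Every $u\neq n$ proposes exactly one edge. A vertex in $S$ proposes its BFS parent edge in $T_0$. A good vertex outside $S$ proposes the first edge of $P_G(u,\anc_S(u))$, which by \Cref{lem-subpath} is a step to a good vertex strictly closer to $S$. A bad vertex proposes $(u,f(u))$ or $(u,b(u))$. So $H$ is a functional graph with $n-1$ arcs and $n$ as a sink. Cycles in a functional graph are directed cycles, so it suffices to show that following proposals from any vertex reaches $n$. I would also have to rule out that two vertices propose the same edge in opposite directions. This is the only way to lose edges, and it is impossible for good vertices by the distance argument. For bad vertices, $f$ strictly increases the ID and $b$ goes to a vertex $\le K$.

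Connectivity is the main step. Following proposals from a bad vertex gives a walk along $IP(u)$. Either it hits a good vertex, and we are done by \Cref{obs-bad to good} and \Cref{obs-good vertex}, or it ends at some $v$ with $f(v)=\bot$. By item (2), $v$ then steps to $b(v)\le K$, and from there we follow $IP(b(v))$. Items (1) and (2) give $v>n-K$, and $IP(w)$ for $w\le K$ has length at least roughly $(n-2K)/K=\Omega(n^\delta/\log n)$, because each step advances by at most $K$. So it suffices to prove the following claim: with probability $1-n^{-3}$ over the remaining randomness, for every $w\le K$ the path $IP(w)$ contains a good vertex. Then every walk reaches the core tree, hence $n$. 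Should the walk loop through $b$ twice, the claim still applies.

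To prove the claim, I would condition on $B$, which by \Cref{lem-uniqIP} fixes $IP(w)$. The set of pairs left free by \Cref{obs-pair random1} still has each edge independent with probability $p$, minus the few exposed forbidden pairs near each vertex. Take $m=\Theta(n^\delta/\log n)$ vertices $x_1,\dots,x_m$ on $IP(w)$. Run the truncated lexicographical BFS from each, capped at $\theta=\Theta(\sqrt{n^{1-\delta}}\log^2 n)$ vertices. I would show that these explorations are nearly disjoint. Two explored sets of size $\theta$ have about $\theta^2p=\tilde O(1)\cdot$ (polylog) expected edges between them, and a careful choice of log powers, or a sequential exposure bound, shows that overlaps cover only a constant fraction. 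The union $Q$ then has size $\Omega(m\theta)=\tilde\Omega(\sqrt{n^{1+\delta}})$. Since $\delta\le 1/3$, the BFS per vertex is not dominated by a single degree, as in \Cref{clm-gnp findpath probe}.

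Then $|Q|\cdot r\ge \sqrt{n^{1+\delta}}\cdot\sqrt{n^{1-\delta}}\,\mathrm{polylog}=n\,\mathrm{polylog}\ge 100n\log n$. \Cref{lem-end vertex} then applies, with the random walks independent of the graph given the degree bound, but $Q$ must be fixed before the walks; I would take $Q$ to be the union of the first $\theta$ BFS vertices, which is independent of $S$. This gives that some $x_i$ has an $S$-vertex among its first $\theta$ BFS vertices, so $x_i$ is good. A union bound over $w\le K$ finishes.

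The main obstacle is the near-disjointness of the BFS balls, as in \Cref{lem-gnp bfs}, under conditioning on $B$. I would handle it by exposing the BFSs one at a time, bounding each new ball's hits on previously explored vertices by a Chernoff bound using the free pairs, and absorbing the $O(\log n)$ conditioned pairs per vertex from item (3).
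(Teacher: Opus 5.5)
Your proposal follows the paper's proof essentially step for step. A bad vertex either meets a good vertex on $IP(\cdot)$ or exits through $b(z)\le K$. Every $w\le K$ has an increasing path of length $\Omega(n^\delta/\log n)$, and the $\theta$-truncated BFS explorations from it (which do not depend on $S$) jointly cover $\tilde\Omega(\sqrt{n^{1+\delta}})$ vertices, so \Cref{lem-end vertex} places a good vertex on the path. Connectivity with at most $n-1$ proposed edges then forces a tree; this already excludes opposite-direction duplicate proposals, so your separate functional-graph check is redundant but harmless.

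The one step to tighten is near-disjointness. Neither the $\theta^2p$ count of edges between two explored sets nor a Chernoff bound on direct ``hits'' controls the vertex overlap. A single collision at BFS depth $s$ can pull up to about $n^{(k+1-s)\delta}$ later-explored vertices into an earlier ball along already-revealed edges, where no fresh randomness is available. You should instead use the depth-weighted bound $|N_t(x)\cap N_t(R)|=O(n^{(t-1)\delta}\log n)$ of \Cref{lem-gnp bfs}, exactly as the paper does.
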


Now we are ready to prove main theorem of this section.
\begin{theorem}\label{thm-lca1 is tree}
    For $np = n^\delta$ and $\delta \leq \frac{1}{3}$, let $H$ be the subgraph consists of all proposal edges determined by \Cref{alg-gnp prop1} on $G \sim \Gnp$. Then $H$ is a spanning tree with high probability.
\end{theorem}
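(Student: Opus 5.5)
The plan is to combine \Cref{lem-fav graph 1} with \Cref{lem-fav2tree 1} by the law of total probability over baseline conditions. By \Cref{lem-unq cond1}, distinct baseline conditions describe disjoint events. Every graph $G$ satisfies exactly one baseline condition, namely the one given by its own $f(v)=\smallest{\Gamma^+(v)}$ and $b(v)=\smallest{\Gamma(v)}$. So the events $\{G \text{ satisfies } B\}$ form a partition of the probability space. We can therefore write
\begin{align*}
\Pr[H \text{ is a spanning tree}] &\ge \sum_{\text{fav. } B(f,b)} \Pr[H \text{ is a spanning tree} \mid G \text{ satisfies } B]\cdot \Pr[G \text{ satisfies } B]\\
&\ge \Bigl(1-\frac{1}{n^2}\Bigr)\sum_{\text{fav. } B(f,b)} \Pr[G \text{ satisfies } B] \ge \Bigl(1-\frac{1}{n^2}\Bigr)^2 \ge 1-\frac{2}{n^2}.
\end{align*}
The first inequality simply drops the non-favorable conditions. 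The second applies \Cref{lem-fav2tree 1} to each favorable $B$. The third applies \Cref{lem-fav graph 1}.

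One point needs checking: the event ``$H$ is a spanning tree'' here is taken over both the input $G$ and the algorithm's random bits for the walks. \Cref{lem-fav2tree 1} is stated with both sources of randomness included, and the random walk bits are independent of $G$, so the conditioning is legitimate. A second point concerns the walks themselves, which use the parameters $d=2n^\delta$ and $\phi=\Omega(1)$. If $G$ violates these bounds, the walks may fail. That failure is already absorbed into the $1-1/n^2$ bound of \Cref{lem-fav2tree 1}, or, if needed, we add a union bound over the high-probability events that $\Delta(G)\le 2n^\delta$ and $\phi_G=\Omega(1)$.

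The main obstacle sits inside \Cref{lem-fav2tree 1}, which we are allowed to assume. The present statement is only the averaging step. The remaining care is to make sure that ``with high probability'' is met: the failure probability is $O(1/n^2)$, which is at most $1/n$, as an average-case LCA (\Cref{def-avg lca}) requires.
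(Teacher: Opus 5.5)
Your proposal is correct and is essentially the paper's own proof. You decompose over the disjoint and exhaustive baseline conditions using \Cref{lem-unq cond1}, drop the non-favorable ones, and apply \Cref{lem-fav2tree 1} and then \Cref{lem-fav graph 1} to get $(1-\frac{1}{n^2})^2 \ge 1-\frac{1}{n}$; your remarks about the random-walk bits and the degree/conductance events are handled, as you suggest, inside the paper's proof of \Cref{lem-fav2tree 1}.
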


\begin{proof}
    We have
    \begin{align*}
        \quad &\Pr_{G \sim G(n,p)}[\text{$H$ is a spanning tree}] \\
        &= \sum_{B(f,g)} \Pr[\text{$H$ is a spanning tree} \mid \text{$G$ satisfies $B$}] \cdot \Pr[\text{$G$ satisfies $B$}] \tag{\Cref{lem-unq cond1}}\\
        &\geq \sum_{\text{fav. $B(f,g)$}} \Pr[\text{$H$ is a spanning tree} \mid \text{$G$ satisfies $B$}] \cdot \Pr[\text{$G$ satisfies $B$}] \\
        &\geq (1-\frac{1}{n^2}) \sum_{\text{fav. $B(f,g)$}} \Pr[\text{$G$ satisfies $B$}] \tag{\Cref{lem-fav2tree 1}}\\ 
        &\geq (1-\frac{1}{n^2})(1-\frac{1}{n^2}) \tag{\Cref{lem-fav graph 1}}\\
        &\geq 1-\frac{1}{n}.
    \end{align*}
\end{proof}
By using consistent random bits over all possible queries, \Cref{thm-lca1 is tree} and \Cref{thm-gnp lca probe} together imply that \Cref{alg-gnp prop1} is an LCA for spanning tree satisfying \Cref{def:LCA} with high probability over randomness of input. Therefore, \Cref{alg-gnp prop1} is an average-case LCA and this finishes the proof of \Cref{thm:avg lca 1}.

\subsubsection{Proof of \Cref{lem-fav2tree 1}}\label{sec-proof1}
In this section, we prove \Cref{lem-fav2tree 1} by first showing a property of BFS on increasing path (see \Cref{lem-gnp bfs}), and then use it to prove the main lemma.

For better understanding, note that the neighborhood of a vertex $u$, is fully depending on events $\{X_{u,w}\}$ for all $w \in V\setminus \{u\}$, where $X_{u,w}$ indicates the event $(u,w) \in E$. If we condition on that $G$ satisfies some $B(f,b)$, then a part of the events $\{X_{u,w}\}$ are already determined, while most still remain random (see \Cref{obs-pair random1}). Thus analysis of neighborhood is naturally divided into determined part and undetermined part, and randomness of the undetermined part is key to our proof. Formally, we have following definition for the determined part.

\begin{definition}\label{def-det1}
    Given vertex $u \in V$, let $\Det(u)$ denote the set of vertices $w$ such that event $X_{u,w}$ is determined by $B(f,b)$, i.e.
    \[
     \Det(u):=\{w \mid \text{$X_{u,w}$ is determined by $B(f,g)$}\}.
    \]
    Similarly, let $ \Det(S) := \bigcup_{s \in S}  \Det(s)$ for any set $S$.
\end{definition}

Intuitively, if $v \in  \Det(u)$, then this means the condition $B(f,b)$ forces pair $(u,v)$ either to exist or not to exist in edge set $E(G)$. For example, if $B(f,b)$ satisfies $b(u) = v$, this means in graph $G$ the smallest neighbor of $u$ is $v$. So $(u,v) \in E(G)$ and $(u,w) \notin E(G)$ for $1\leq w < v$ is determined by $B(f,g)$. Then we $v \in  \Det(u)$ and  for $1\leq w < v$,  $w \in  \Det(u)$. Recall that we characterize all such pairs in \Cref{obs-pair random1}, and we use set $ \Det(u)$ here simply to characterize from the view of vertex $u$.\label{para-det}

The set $ \Det(u)$ is an important definition in our analysis. The reason why we define set $\Det(u)$ for vertex $u$, is because when we analyze the behavior of BFS from $u$, it comes a natural question like ``$u$ is connected to which vertex?''. To figure out this, we consider over every vertex $v \in V \setminus \{u\}$, and see if $v$ is connected to $u$. If $G \sim \Gnp$ is not conditioned on any event, it is easy to conclude that every pair forms an edge with mutually independent probability $p$. However, to analyze the behavior of algorithm, we assume that $G\sim\Gnp$ already satisfies a favorable condition $B(f,g)$. Hence intuitively, we must specify for each vertex $v$, whether pair $(u,v)$ is ``influenced'' by $B(f,g)$. This motivates us to define the set $\Det(u)$ to contain all such vertices ``influenced'' by $B(f,g)$.

We first show some properties of $\Det(u)$.

\begin{lemma}\label{lem-det size}
    Conditioned on $G \sim \Gnp$ satisfies a favorable condition $B(f,b)$, then for any vertex $u \in V(G)$, we have
    \begin{enumerate}
        \item $\abs{\Det(u)} = O(n^{1-\delta} \log n)$;
        \item $\abs{\{w \in \Det(u) \mid (u,w) \in E(G)\}} = O(\log n)$, i.e. $(u,w) \in E(G)$ holds for $O(\log n)$ vertices $w \in \Det(u)$.
    \end{enumerate}
\end{lemma}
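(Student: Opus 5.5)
The approach is to enumerate exactly which pairs $(u,w)$ are determined by $B(f,b)$, using \Cref{obs-pair random1}, and then to bound each contribution with the three items of \Cref{def-fav cond1}. A pair $(u,w)$ becomes determined in one of two ways. Either $u$'s own values $f(u)$ or $b(u)$ fix $X_{u,w}$ (\emph{$u$-side}), or some other vertex's values fix it (\emph{$w$-side}), which happens exactly when $w$'s values $f(w)$ or $b(w)$ involve $u$. I will write $\Det(u) = D_1(u) \cup D_2(u)$ according to these two cases and bound each part.

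For the $u$-side set $D_1(u)$ I split on whether $f(u)$ is defined.
\begin{itemize}
\item If $f(u)\neq\bot$, then $D_1(u)=\{w : u<w\le f(u)\}$. This has size $f(u)-u$, which is at most $K$ by item (1) whenever $u\le n-K$, and at most $n-u<K$ otherwise.
\item If $f(u)=\bot$, then $D_1(u)=\{w: w>u\}\cup\{w: w\le b(u)\}$. Here $f(u)=\bot$ forces $u>n-K$ by item (1), so the first part has size less than $K$, and $b(u)\le K$ by item (2).
\end{itemize}
In either case $|D_1(u)|\le 2K=O(n^{1-\delta}\log n)$. Moreover, among the vertices of $D_1(u)$ only one is forced to be adjacent to $u$, namely $f(u)$ or $b(u)$.

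For the $w$-side set $D_2(u)=\{w : u\in \Det(w)\}$, I again split into two cases.
\begin{itemize}
\item Suppose $f(w)\neq\bot$ and $w<u\le f(w)$. Then $f(w)\le w+K$, so $w\in[u-K,u)$. When $w>n-K$ we also have $f(w)\le n$, which gives the same interval.
\item Suppose $f(w)=\bot$. Then $w>n-K$, so there are at most $K$ such $w$ in total. (Every such $w$ has $u$ in its determined set whenever $u>w$ or $u\le b(w)$, so counting all of them at once is the clean bound.)
\end{itemize}
Hence $|D_2(u)|\le 2K$, and item 1 follows.

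For item 2, I count the $w\in\Det(u)$ with $(u,w)\in E(G)$. By \Cref{obs-pair random1}, a determined pair is an edge only if it is the designated edge $(x,f(x))$ or $(x,b(x))$ for one of its endpoints $x$. With $x=u$ this contributes at most one vertex. With $x=w$ we need $f(w)=u$ or $b(w)=u$, i.e.\ $w\in f^{-1}(u)\cup b^{-1}(u)$, and item (3) bounds this set by $O(\log n)$. Every other determined pair is a forced non-edge, so the total is $O(\log n)$.

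The main obstacle is making the case analysis of $\Det$ exhaustive and correct near the wraparound boundary $u>n-K$. I would handle this by treating the $f=\bot$ vertices as a single global set of size at most $K$, rather than bounding them relative to $u$.
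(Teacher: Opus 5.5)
Your proof is correct and is essentially the paper's argument. You use the same split of determined pairs into those fixed by $u$'s own values and those fixed by $w$'s values (the paper's cases (1)--(4)), bounded with items (1)--(3) of \Cref{def-fav cond1}, and the same containment of the forced edges in $\{f(u),b(u)\}\cup f^{-1}(u)\cup b^{-1}(u)$ for item 2; your global bound of $K$ on the vertices with $f(w)=\bot$ is cleaner than the paper's handling of that case. One notational slip: you mean $D_2(u)=\{w : u\in D_1(w)\}$, since $\Det$ is symmetric and $\{w : u\in \Det(w)\}$ is all of $\Det(u)$.
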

\begin{proof}
    Let $K = c n^{1-\delta} \log n$ be the integer defined in \Cref{def-fav cond1}. Note that it satisfies for any $v \in V(G)$ that 
    \begin{itemize}
        \item $f(v) \leq v+K$ if $f(v) \neq \bot$; and
        \item $1 \leq b(v) \leq K$ if $f(v) = \bot$; and
        \item if $f(v) = \bot$, then $v > n-K$ is a large vertex.
    \end{itemize}

    Combining with \Cref{obs-pair random1}, given vertex $u$, a vertex $w \in \Det(u)$ falls into one of following cases:
    \begin{enumerate}
        \item[$(1)$] if $f(u) \neq \bot$, and $u < w \leq u+K$;
        \item[$(2)$] if $f(u) = \bot$, and $u < w \leq n$ or $1 \leq u \leq K$;
        \item[$(3)$] if $f(w) \neq \bot$, and $w < u \leq w+K$;
        \item[$(4)$] if $f(w) = \bot$, and  $w < u \leq n$ or $1 \leq u \leq w$.
    \end{enumerate}
    
    Case $(1)$ and $(3)$ together contributes at most $K + K = 2K$ possible choices of $w$. In Case $(2)$, as $f(u) = \bot$ and thus $u > n-K$, it contributes at most $(n-u) + K \leq 2K$. Similarly, Case $(4)$ contributes at most $(w-u) + K \leq 2K$. Together, we see that the total number of choices of $w$ is at most 
    \[
    6K = O(n^{1-\delta} \log n)
    \]
    Hence we prove $\abs{\Det(u)} = O(n^{1-\delta} \log n)$.

    Note that among all $w \in \Det(u)$, $(u,w) \in E(G)$ holds iff 
    \[
    \text{$f(u) = w$, or $b(u) = w$, or $f(w) = u$, or $b(w) = u$.}
    \]
    Thus $\abs{\{w \in \Det(u) \mid (u,w) \in E(G)\}}$ is bounded by $1+1+f^{-1}(u)+b^{-1}(u) = O(\log n)$, according to \Cref{def-fav cond1}.
    
\end{proof}

Now we are ready to prove \cref{lem-gnp bfs}.
\begin{lemma}\label{lem-gnp bfs}
    Conditioned on $G \sim \Gnp$ satisfies a favorable condition $B(f,b)$, let $k$ be an integer s.t. $n^{k\delta} \leq \sqrt{n^{1-\delta}}$ and $n^{(k+1)\delta} > \sqrt{n^{1-\delta}}$.
    For any vertex $u$, let $N_t(u)$ be the set of vertices visited by a $t$-level BFS starting from $u$, and denote $N_t(R) := \bigcup_{r \in R} N_t(r)$ for a set of vertices $R$.
    
    Then for any integer $1 \leq t \leq k$,
    \begin{itemize}
        \item for any given vertex $u \in V$, we have $$\Pr[|N_t(u)| = \Theta(n^{t\delta})] \geq 1-\frac{1}{n^{100}};$$
        \item for any given vertex $x$, let $R$ be a set of $O(n^\delta/\log n)$ vertices such that $x$ and vertices in $R$ belong to the same increasing path. Then we have $$\Pr[\abs{N_t(x) \cap N_t(R)} = O(n^{(t-1)\delta}\log n)] \geq 1-\frac{1}{n^{100}}.$$

    \end{itemize}
    Above events hold with randomness over input.
\end{lemma}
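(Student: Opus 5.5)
We will prove both items simultaneously by induction on $t$, exposing the BFS level by level and using only the pairs that remain random under $B(f,b)$. By \Cref{obs-pair random1}, conditioned on $G$ satisfying $B(f,b)$, every pair $(w,z)$ with $z\notin\Det(w)$ is still an independent edge with probability $p$. By \Cref{lem-det size}, each vertex has $|\Det(w)|=O(n^{1-\delta}\log n)=o(n)$ (for $\delta>0$ constant) and only $O(\log n)$ determined edges. So for any $w$ and any set $A$ of already-explored vertices with $|A|=o(n)$, the number of \emph{new} neighbors of $w$ outside $A\cup\Det(w)$ is a sum of at least $n-o(n)$ independent Bernoulli$(p)$ variables, plus at most $O(\log n)$ forced ones. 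The mean is $(1\pm o(1))n^\delta$, and Chernoff gives concentration in $[0.9n^\delta,1.1n^\delta+O(\log n)]$ with probability $1-e^{-\Omega(n^\delta)}$. We will use the principle of deferred decisions: when level $t$ is expanded, the only pairs already revealed are those incident to $N_{t-1}$, so the pairs from level-$t$ frontier vertices to unexplored vertices are fresh, except for pairs in $\Det(\cdot)$.

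\textbf{First item.} Suppose $|N_{t-1}(u)|=\Theta(n^{(t-1)\delta})$. The upper bound on $|N_t(u)|$ follows from the union bound on degrees: every vertex has degree $\le 2n^\delta$ with high probability, and this also holds under conditioning by the same Chernoff estimate plus the $O(\log n)$ forced edges. For the lower bound, we expose the frontier vertices one at a time. Each contributes $\ge 0.9n^\delta$ fresh neighbors among vertices not yet seen, since the seen set has size $O(n^{t\delta})\le O(\sqrt{n^{1-\delta}})=o(n)$. Collisions are rare: the expected number of pairs of frontier vertices sharing a new neighbor is about $|N_{t-1}|^2 n^{2\delta}/n = n^{2t\delta-1}\le n^{-\delta}$ times the size, because $n^{2t\delta}\le n^{1-\delta}$. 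So at most a constant fraction of the new vertices are lost, and $|N_t(u)|=\Theta(n^{t\delta})$ with probability $1-n^{-100}$ after union bounding over levels.

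\textbf{Second item.} This is the main obstacle, because vertices on a common increasing path are correlated through $f$: consecutive vertices are forced adjacent, and non-edges to intermediate IDs are forced. We handle it as follows. Consecutive path vertices share the forced edge, and each path vertex has $O(\log n)$ forced neighbors. This deterministic overlap contributes at most $|R|\cdot O(\log n)\cdot$ (the size of the $(t-1)$-level BFS grown from such a forced neighbor). We want to bound this by $O(n^{(t-1)\delta}\log n)$, which uses $|R|=O(n^\delta/\log n)$ and a careful count, charging to one level less of growth. It may instead need to be argued through the fact that shared vertices of $N_t(x)$ must reach $N_t(R)$ via a chain. For the random overlap, we first expose $N_t(R)$, of size $O(|R|n^{t\delta})=O(n^{(t+1)\delta}/\log n)$. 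We then grow $N_t(x)$ level by level. Each new fresh edge from $x$'s side lands in $N_t(R)$ with probability about $|N_t(R)|/n$, and there are $O(n^{t\delta})$ such edges, so the expected overlap is $n^{(2t+1)\delta-1}/\log n$. The condition $n^{(t+1)\delta}\cdot n^{t\delta}\lesssim n^{1-\delta}\cdot n^{\delta}\cdots$ should make this $O(n^{(t-1)\delta})$; if not, we would fall back on a martingale Chernoff bound. Finally, we add the overlap inherited from intersections at earlier levels: a vertex shared at level $t-1$ contributes its $O(n^\delta)$ children. Inductively this gives $O(n^{(t-2)\delta}\log n)\cdot n^\delta$, which preserves the claimed bound $O(n^{(t-1)\delta}\log n)$.
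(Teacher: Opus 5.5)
Your first item is sound. Exposing the frontier one vertex at a time, and counting only neighbours outside the already-seen set and outside $\Det(w)$, is a valid alternative to the paper's counting of ``heavily determined'' vertices. Your random-overlap term in the second item is the paper's fully undetermined case, and it needs no martingale fallback: $2t\delta\le 2k\delta\le 1-\delta$ gives $n^{(2t+1)\delta-1}/\log n\le 1/\log n$.

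The genuine gap is the determined (forced-pair) contribution in the second item. Your own bound, with $|R|=O(n^{\delta}/\log n)$, gives $|R|\cdot O(\log n)\cdot O(n^{(t-1)\delta})=O(n^{t\delta})$. That exceeds the target $O(n^{(t-1)\delta}\log n)$ by a factor $n^{\delta}/\log n$; already at $t=1$ you get $O(n^{\delta})$ instead of $O(\log n)$. This loss matters: the lemma is later applied at $t=k$ to argue that fully exploring the common vertices yields only $O(n^{k\delta}\log n)\ll\theta$ vertices, which is no longer guaranteed when $|C|=O(n^{k\delta})$.

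A sharper count of the $R$-side forced neighbours will not fix this, because those neighbours matter only if the exploration from $x$ actually reaches them. Moreover, you account only for forced edges of path vertices, whereas every vertex in either exploration carries $O(\log n)$ determined pairs ($f$, $b$ and their preimages). The missing idea, which is how the paper argues via its split by $\Det(X)$ and $\Det(Y)$, is to charge determined pairs to the $x$-side:
\begin{itemize}
\item By \Cref{lem-det size}, each vertex of $N_{t-1}(x)$ has $O(\log n)$ determined neighbours. So at most $O(n^{(t-1)\delta}\log n)$ vertices enter $N_t(x)$ through determined pairs, which is exactly the allowed budget.
\item The determined neighbours of $N_{t-1}(R)$ form a set of size $O(n^{t\delta})$. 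The $x$-side can reach this set only through undetermined pairs, in expectation $O(n^{t\delta}\cdot n^{(t-1)\delta}\cdot p)=O(n^{2t\delta-1})=o(1)$.
\end{itemize}
Adding the inherited term $|N_{t-1}(x)\cap N_{t-1}(R)|\cdot O(n^{\delta})$ and your random term then gives the claim.

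Your asymmetric exposure (reveal all of $N_t(R)$, then grow $N_t(x)$) also leaves a bookkeeping hole. Once $N_t(R)$ is revealed, pairs incident to $N_{t-1}(R)$ are no longer fresh. Edges from vertices of $N_{t-1}(x)\cap(N_t(R)\setminus N_{t-1}(R))$ back into $N_{t-1}(R)$ are therefore covered neither by your fresh-edge count nor by your inherited term. The paper avoids this by comparing both explorations at the same depth. It splits level $t$ into $C=N_t(x)\cap N_t(R)$, $X=N_t(x)\setminus N_t(R)$ and $Y=N_t(R)\setminus N_t(x)$, then bounds the common vertices at level $t+1$ as neighbours of $C$ or common neighbours of $X$ and $Y$.
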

\begin{proof}

    We prove by induction.
    
    \paragraph{Base Case: $t=1$.}
    \begin{itemize}
        \item According to \Cref{lem-det size}, for any vertex $w$, there are at most $O(\log n)$ vertices $w$ in $\Det(u)$ s.t. $(u,w) \in E(G)$, which means that there are at most $O(\log n)$ vertices deterministically being a neighbor of $u$. 
    
        For $w \notin \Det(u)$, we have $\Pr[(u,w) \in E(G)] = p = \frac{n^\delta}{n}$. Using Chernoff-Hoeffding bound, it is easy to see that there are $\Theta(n^\delta)$ neighbors of $u$ in the undetermined part with high probability. Summing up these two parts, we have $|N_1(u)| = \Theta(n^\delta)$.

        \item For the intersection of $N_1(x)$ and $N_1(R)$, we first consider the edges between exactly $x$ and $R$. As $x$ and $R$ belong to the same increasing path $P$ determined by function $f$, there are at most $2$ vertices in $R$ deterministically incident to $x$, which are the vertices before and after $x$ on $P$. For other $r \in R$, it is connected to $x$ with independent probability $p$. This is because either $r > f(x)$ or $f(r) < x$, thus pair $(x,r)$ is not determined by function $f$ and similar for function $b$ by assuming that $x$ is not too large or small.
        
        Thus it is easy to see that in expectation there are $O(\frac{n^{2\delta}}{n\log n}) \leq 1$ vertices in $R$ incident to $x$. By Chernoff-Hoeffding bound, there are at most $O(\log n)$ vertices in $R$ incident to $x$.

        For vertices $w \notin \{x\} \cup R$, we partition them into $4$ cases:
        \begin{enumerate}
            \item $w \in \Det(x) \cap \Det(R)$. According to \Cref{lem-det size}, there are at most $O(\log n)$ vertices in $\Det(x)$ incident to $x$. Thus this part contributes at most $O(\log n)$ vertices to $N_t(x) \cap N_t(R)$.
            \item $w \in \Det(x) \setminus \Det(R)$. As above, this part contributes at most $O(\log n)$ vertices to $N_t(x) \cap N_t(R)$.
            \item $w \in \Det(R) \setminus \Det(x)$. According to \Cref{lem-det size}, there are at most $O(\log n)$ vertices deterministically incident to each vertex in $R$. Thus there at most $|R|\cdot \log n = O(n^{1-\delta})$ vertices in $\Det(R) \setminus \Det(x)$ incident to $R$, denoted by $N$. For $w \in N$, we have $w \notin \Det(x)$ and $\Pr[(w,x)\in E(G)] = p$ independently for every $w$. Thus in expectation, there are $|N| \cdot p = O(1)$ vertices in $N$ incident to $x$. By Chernoff-Hoeffding bound, this part contributes at most $c'\log n$ vertices to $N_t(x) \cap N_t(R)$ with probability at least $1-\Omega(\frac{1}{n^{c'}})$ for any constant $c'$.
            \item $w \notin \Det(x) \cup \Det(R)$. For every such vertex $w$, $w$ is incident to every vertex in $\{x\} \cup R$ with independent probability $p$. This is similar to a binomial distribution, thus in expectation there are 
            \[
            n \cdot p \cdot (1-(1-p)^{|R|}) \leq n \cdot p \cdot |R|p = O(1/\log n)
            \]
            vertices incident to $x$ and at least one vertex in $R$ at the same time, using Bernoulli's inequality. Still, by Chernoff-Hoeffding bound, this part contributes at most $O(\log n)$ vertices with high probability.
        \end{enumerate}
        Putting together, with high probability, we have $\abs{N_t(x) \cap N_t(R)} = O(\log n)$ for $t=1$.
    \end{itemize}

    \paragraph{Inductive Step.}
    Assume it holds for any $t \leq k-1$, we show that it also holds for $t + 1$.

    \begin{itemize}
        \item After a $t$-level BFS starting from $u$, every vertex with distance at most $t$ from $u$ has been visited. By inductive hypothesis, we have $|N_t(u)| = \Theta(n^{t\delta})$ and $|N_{t-1}(u)| = \Theta(n^{(t-1)\delta})$. Let $\partial_t(u) := N_t(u) \setminus N_{t-1}(u)$, which is the set of vertices at distance exactly $t$ from $u$. Then we have $|\partial_t(u)| = |N_t(u)|-|N_{t-1}(u)| = \Theta(n^{t\delta})$. To bound the size of $N_{t+1}(u)$, we need to specify neighbors of vertex in $\partial_t(u)$.
        \begin{definition}
            A vertex $w$ is called \textbf{heavily determined} by a vertex set $S$ if
            \[
            |\{ z \in S \mid w \in \Det(z) \}| \geq \frac{|S|}{2}.
            \]
            In other words, $w$ belongs to $\Det(z)$ for at least half of the vertices $z \in S$.
        \end{definition}

        \begin{observation}
            For any vertex set $S$, the number of vertices heavily determined by $S$, is $O(n^{1-\delta} \log n)$.
        \end{observation}
        \begin{proof}
            For $S = \{z_1,z_2,\cdots,z_{|S|}\}$, let $M$ be the multi-set of the union of all $\Det(z_i)$. Thus $|M| = |S| \cdot O(n^{1-\delta} \log n)$ because $|\Det(z_i)| = O(n^{1-\delta} \log n)$. If a vertex is heavily determined by $S$, then it appears in $M$ for at least $|S|/2$ times. Thus there are at most $|M|/(\frac{|S|}{2}) = O(n^{1-\delta} \log n)$ heavily determined vertices.
        \end{proof}

        \begin{observation}
            For every vertex $w$ that is not heavily determined by $S$, it is connected to at least one vertex in $S$ with at least $\frac{p|S|}{4}$ independent probability .
        \end{observation}
        \begin{proof}
            Easy to see that $w$ is connected to $S$ if $(w,z) \in E(G)$ for any $z \in S$ s.t. $w \notin \Det(z)$.
            
            As $w$ is not heavily determined by $S$, then for at least half of the vertices $z \in S$, $w \notin \Det(z)$ and $\Pr[(w,z) \in E(G)] = p$ independently. Thus it happens with probability at least $1-(1-p)^{|S|/2} \geq \frac{p|S|}{4}$.
        \end{proof}

        Now we take $S = \partial_t(u)$. Above observation tells that for $w \notin V \setminus N_t(u)$, if $w$ is not heavily determined by $S$ then it has a large chance to be incident to $S$. We have $\Theta(n)$ such vertices, because $|N_t(u)| = \Theta(n^{t\delta}) = o(n)$ and the total number of heavily determined vertices is small. Thus with high probability, the number of distinct vertices newly visited by $(t+1)$-th level of BFS, is at least $\Theta(n) \cdot \frac{p|S|}{4} = \Theta(n^{(t+1)\delta})$. This means $|N_{t+1}(u)| = \Theta(n^{(t+1)\delta})$, thus the first item also holds for $t+1$.

        \item To compute the size of $N_{t+1}(x) \cap N_{t+1}(R)$, we first divide $N_t(x) \cup N_t(R)$ into $3$ sets: $C := N_t(x) \cap N_t(R)$, $X:= N_t(x) \setminus N_t(R)$ and $Y := N_t(R) \setminus N_t(x)$. Note that the inductive hypothesis tells that $|X| = O(n^{t\delta})$, $|Y| = O(n^{(t+1)\delta})$ and $|C| = O(n^{(t-1)\delta}\log n)$.
        
        The vertex appeared in $N_{t+1}(x) \cap N_{t+1}(R)$, is either incident to set $C$, or incident to both $X$ and $Y$ at the same time.  Thus we have following cases:
        \begin{enumerate}
            \item $w$ incident to $C$. The number of vertices $w \in N_{t+1}(x) \cap N_{t+1}(y)$ s.t. $w$ is incident to $C$, is bounded by total size of neighborhood incident to $C$ and it is $|C| \cdot \Theta(n^\delta) = O(n^{t\delta}\log n)$ by the inductive hypothesis.
            \item $w$ incident to both $X$ and $Y$. Similar as before, we partition all vertices into $4$ cases:
            \begin{enumerate}
                \item $w \in \Det(X) \cap \Det(Y)$. According to \Cref{def-fav cond1}, the set $\Det(X)$ contains at most $O(|X|\cdot \log n) = O(n^{t\delta} \log n)$ vertices incident to $X$, thus this part contributes at most $O(n^{t\delta} \log n)$ vertices.
                \item $w \in \Det(X) \setminus \Det(Y)$. As above, this part contributes at most $O(n^{t\delta} \log n)$ vertices.
                \item $w \in \Det(Y) \setminus \Det(X)$. According to \Cref{def-fav cond1}, the set $\Det(Y)$ contains at most $O(|Y|\cdot \log n) = O(n^{(t+1)\delta} \log n)$ vertices incident to $Y$, denoted by set $N$. For $w \in N$, we have $w \notin \Det(X)$ thus w is incident to any vertex in $\partial_t(x)$ independently. Thus in expectation, there are 
                \begin{align*}
                    |N| \cdot (1-(1-p)^{|\partial_t(x)|}) &\leq |N|\cdot|\partial_t(x)|\cdot p \tag{Bernoulli's inequality} \\
                    &= O(\frac{n^{2(t+1)\delta}\log n}{n}) \\
                    &= O(\frac{n^{2k\delta}\log n}{n}) \tag{by $t \leq k-1$}\\
                    &= O(\frac{\log n}{n^\delta}) \tag{by $n^{k\delta} \leq \sqrt{n^{1-\delta}}$}
                \end{align*}
                vertices in $N$ incident to $\partial_t(x)$. By Chernoff-Hoeffding bound, this part contributes at most $O(\log n)$ vertices.
                \item $w \notin \Det(X) \cup \Det(Y)$. For every such vertex $w$, $w$ is incident to every vertex in $X \cup Y$ with independent probability $p$. Similarly, in expectation there are at most
                \begin{align*}
                    n \cdot (1-(1-p)^{|X|}) \cdot (1-(1-p)^{|Y|}) &\leq n \cdot |X|p \cdot |Y|p \tag{Bernoulli's inequality}\\
                    &= O(n^{2(t+1)\delta} \cdot \frac{n^\delta}{n}) \\
                    &= O(n^{2k\delta} \cdot \frac{n^\delta}{n}) \tag{$t \leq k-1$}\\
                    &= O(1) \tag{by $n^{k\delta} \leq \sqrt{n^{1-\delta}}$}
                \end{align*}
                vertices incident to $X$ and $Y$ at the same time, using Bernoulli's inequality. Still, by Chernoff-Hoeffding bound, this part contributes at most $O(\log n)$ vertices with high probability.
            \end{enumerate}
        Note that all above events fail with arbitrarily small probability polynomial in $\frac{1}{n}$, by choosing proper constant. Putting together, with high probability, we have $\abs{N_{t+1}(x) \cap N_{t+1}(R)} = O(n^{t\delta}\log n)$. Thus the second item also holds for $t+1$.
        \end{enumerate}
    
    \end{itemize}
    The base case and inductive step together finish the proof.

\end{proof}

For convinience, we restate \Cref{lem-fav2tree 1} here and prove it.
\lemfavtotreeA*
\begin{proof}
    We prove by showing that every vertex is connected to vertex $n$ in subgraph $H$.
    
    Conditioned on the event that $G$ satisfies a favorable condition $B(f,b)$, the increasing path $IP(v)$ of any vertex $v$ is characterized by $f$ (see \Cref{lem-uniqIP}). Note that if \textsc{FindPath}$(G,S,\theta,v)$ succeeds in finding a vertex in core tree $T_0$, then $v$ is a good vertex and is connected to $n$ by \Cref{obs-good vertex}. Thus next we only consider bad vertex $w$ such that \textsc{FindPath}$(G,S,\theta,w)$ returns $\bot$. 

    According to \Cref{obs-bad to good}, if there is at least one good vertex appearing in $IP(w)$, then it is connected to $n$.

    Now, it seems that we need to argue that $IP(w)$ always contains at least one good vertex. However, if $w$ has a large ID, then its increasing path $IP(w)$ is very short, thus it does not happen with a large chance to meet a good vertex on $IP(w)$. On the other hand, if $w$ has a small ID, it has a large chance to meet a good vertex on its increasing path and be connected to $n$, stated as follows. The proof is deferred later in this section.
    \begin{claim}\label{clm-small vertex connected1}
        With high probability, for every vertex $v \leq K$, $v$ is connected to $n$ in subgraph $H$ .
    \end{claim}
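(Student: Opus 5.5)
Fix $v \le K$ and condition on a favorable condition $B(f,b)$. By item (1) of \Cref{def-fav cond1}, every vertex $x \le n-K$ has $f(x) \neq \bot$ and $f(x) \le x+K$. So the increasing path $IP(v)$, which is determined by $f$ (\Cref{lem-uniqIP}), keeps going until it passes $n-K$. Its consecutive gaps are at most $K = c n^{1-\delta}\log n$, so it contains at least $(n-2K)/K = \Omega(n^{\delta}/\log n)$ vertices.

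The plan is to show that with probability $1-n^{-3}$ some vertex on $IP(v)$ is good. Then \Cref{obs-bad to good} (or \Cref{obs-good vertex} if $v$ itself is good) connects $v$ to the core tree, and hence to $n$. A union bound over the $K$ choices of $v$ finishes the claim.

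To find a good vertex, pick a set $R_0 \subseteq V(IP(v))$ of $m = \Theta(n^{\delta}/\log n)$ vertices, small enough that \Cref{lem-gnp bfs} applies to every $x \in R_0$ with $R = R_0 \setminus\{x\}$. Take $t$ to be the largest level with $|N_t(x)| = \Theta(n^{t\delta}) \le \theta$; the exploration of \textsc{FindPath} with budget $\theta$ contains $N_t(x)$. \Cref{lem-gnp bfs} gives $|N_t(x)| = \Theta(n^{t\delta})$ and $|N_t(x)\cap N_t(R)| = O(n^{(t-1)\delta}\log n)$, which is a $o(1)$ fraction of $|N_t(x)|$ once $n^\delta \gg \log n$. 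Hence the ``private'' parts $N_t(x)\setminus N_t(R_0\setminus\{x\})$ are pairwise disjoint and each has size $\Theta(n^{t\delta})$. Their union $Q$ therefore satisfies
\[
|Q| = \Omega\!\left(\frac{n^{\delta}}{\log n}\cdot n^{t\delta}\right).
\]
Here $\delta\le 1/3$ is used twice. First, it guarantees that $IP(v)$ is long enough to supply $m$ vertices. Second, it makes $n^{t\delta}$ within a $n^{\delta}$ factor of $\sqrt{n^{1-\delta}}$, so that $|Q|\cdot r \ge |Q|\cdot \sqrt{n^{1-\delta}}\log n \ge 100 n\log n$ after tuning the constants in $r$ and $\theta$. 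If the levels are too coarse to reach this bound, I would use a partial $(t+1)$-th level up to the budget $\theta$ and argue intersections the same way.

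Next, apply \Cref{lem-end vertex} to $Q$. The random walks use the algorithm's own random bits, independent of $G$, and $Q$ is fixed once $G$ is fixed. So with probability $1-n^{-10}$ some walk endpoint lies in $Q$, that is, in $N_t(x)$ for some $x \in R_0$. That $x$'s BFS reaches $S$ within its $\theta$ budget, so $x$ is good.

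The main obstacle is the second bullet of \Cref{lem-gnp bfs}: it only controls the intersection of one vertex's ball with the union of the others', with failure probability $n^{-100}$ per $x$, and it requires $|R| = O(n^{\delta}/\log n)$. I will union-bound over the $m$ choices of $x$ and then over $v \le K$. I also need to check that the levels $t\le k$ covered by the lemma suffice to make $|Q|$ reach $n\log n/r$; this exponent bookkeeping is where the restriction $\delta\le 1/3$ and the $\log^2 n$ slack in $\theta$ have to absorb all losses. A further subtlety is that \Cref{lem-gnp bfs}'s base case assumed $x$ is not too large or too small. To handle this I will choose $R_0$ among the vertices of $IP(v)$ lying in the middle range $(K, n-K]$, which has length $n-2K$ and so still contains enough path vertices.
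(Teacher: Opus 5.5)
There is a genuine gap in the counting step: full BFS levels cannot produce a large enough $Q$. For large $n$, your $t$ is exactly the $k$ of \Cref{lem-gnp bfs}. The ratio $n^{(k+1)\delta}/\sqrt{n^{1-\delta}}$ is a positive constant power of $n$, so it exceeds the $\log^2 n$ slack in $\theta$. Your union of private parts then has size $|Q|=\Theta(n^{(k+1)\delta}/\log n)$, and hence
\[
|Q|\cdot r=\Theta\bigl(n^{k\delta}\cdot n^{\delta}\cdot\sqrt{n^{1-\delta}}\bigr)\le O(n),
\]
since $n^{k\delta}\le\sqrt{n^{1-\delta}}$ by definition of $k$. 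This falls short of the $100n\log n$ needed in \Cref{lem-end vertex} by at least a $\log n$ factor, and typically by a polynomial factor. For example, $\delta=1/4$ gives $k=1$ and $|Q|\cdot r=\Theta(n^{7/8})$. No tuning of constants fixes this. Your count also never uses the budget $\theta$, except to pick which full levels to include, so its $\log^2 n$ slack is wasted. The assertion that being ``within an $n^\delta$ factor'' of $\sqrt{n^{1-\delta}}$ suffices is an arithmetic slip: it only gives $|Q|\cdot r\ge n^{1-\delta}$. A minor point: the length bound $\Omega(n^\delta/\log n)$ for $IP(v)$ does not use $\delta\le 1/3$. That assumption is what gives $k\ge 1$ and lets a single-vertex expansion fit within the budget.

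What is missing is the step you leave as a fallback, and it is the core of the paper's proof. Almost all of the $\theta$ vertices explored by \textsc{FindPath}$(G,S,\theta,x)$ lie in the partial $(k+1)$-th level, and you must show they are largely private to $x$. You cannot ``argue intersections the same way'' here. \Cref{lem-gnp bfs} only covers $t\le k$, and its inductive error terms are no longer small at level $k+1$. The paper argues instead as follows:
\begin{itemize}
\item The shared part $C=N_k(x)\cap N_k(R)$ has size $O(n^{(k-1)\delta}\log n)$.
\item Expanding all of $C$ therefore discovers at most $|C|\cdot\Theta(n^\delta)=O(\sqrt{n^{1-\delta}}\log n)<\theta$ vertices.
\item So each $x$ must expand $\Omega(\theta/n^{\delta})$ vertices of $\partial_k(x)\setminus C$. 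No other $r\in R$ expands these, because $r$ only expands vertices in $N_k(r)$.
\item Over $|R|=\Theta(n^\delta/\log n)$ path vertices, this yields $\Theta(\sqrt{n^{1-\delta}}\log n)$ distinct expanded vertices.
\item The paper then asserts, with only a sketch, that their neighborhoods are nearly disjoint, giving $|Q(v)|=\Omega(\sqrt{n^{1+\delta}})$ up to logarithmic factors.
\end{itemize}
This is where the $\log^2 n$ in $\theta$ matters: $|R|\cdot\theta\cdot r=\Theta(n\log^2 n)$, which leaves room for the log losses. With that bound in hand, the rest of your argument matches the paper: apply \Cref{lem-end vertex} to a $G$-determined set, conclude some vertex of $IP(v)$ is good, then use \Cref{obs-bad to good} and a union bound.
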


We now show that every bad vertex $w$ is connected to $n$ in the subgraph $H$, even if its increasing path $IP(w)$ does not contain any good vertex. For such vertices, we make the following key observation:

\begin{observation}
If $w$ is a bad vertex and $IP(w)$ does not contain any good vertex, then every edge on $IP(w)$ is retained in $H$.
\end{observation}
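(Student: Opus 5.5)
The observation follows by unwinding the rules of \Cref{alg-gnp prop1} along the path $IP(w)=(w=v_1,v_2,\dots,v_\ell)$. Here $v_{i+1}=\smallest{\Gamma^+(v_i)}$ for $i<\ell$, and $\Gamma^+(v_\ell)=\emptyset$ since the path is the longest increasing path from $w$. By hypothesis no $v_i$ is good: each $v_i\notin S$, and \textsc{FindPath}$(G,S,\theta,v_i)$ returns $\bot$. The plan is to check, for each $i<\ell$, which branch of \textsc{Prop}$(G,v_i)$ fires and what edge it returns. We also note that $v_i\neq n$ for $i<\ell$ (as $v_i<v_{i+1}\le n$), so the first branch returning $\bot$ never fires for these vertices.

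Fix $i<\ell$. The algorithm runs the same random walks for every query, so it produces the same $T_0=(S,H')$ each time. Since $v_i\notin S$, the \texttt{Phase 1} return is skipped. Since $v_i$ is bad, $P_{v_i}=\bot$, so the \texttt{Phase 2} return is skipped. We therefore reach \texttt{Phase 3}. Because $i<\ell$, the vertex $v_{i+1}\in\Gamma^+(v_i)$ witnesses $\Gamma^+(v_i)\neq\emptyset$. Hence the returned proposal edge is $(v_i,\smallest{\Gamma^+(v_i)})=(v_i,v_{i+1})$. So every edge $(v_i,v_{i+1})$ of $IP(w)$ is proposed by $v_i$, and by definition of $H$ (an edge is kept if either endpoint proposes it) it lies in $H$.

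As a remark for the subsequent argument, the last vertex $v_\ell$ is also bad. If $v_\ell\neq n$, it proposes $(v_\ell,\smallest{\Gamma(v_\ell)})$ through the \texttt{else} branch; this is the ``backward'' edge $b(v_\ell)$, and it links $IP(w)$ to a small vertex. The only point needing care, and the nearest thing to an obstacle, is confirming that the proposals are computed from a fixed $S$, independent of the query. This holds because the random tape is shared across queries. It guarantees that the good/bad status of each $v_i$ is a well-defined property of $G$ and the random bits, so the case analysis above is consistent.
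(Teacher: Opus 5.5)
Your proof is correct and takes the same approach as the paper. Since no vertex of $IP(w)$ is good, each $v_i$ with $i<\ell$ lies outside $S$, and \textsc{FindPath} returns $\bot$ on it. Its proposal is therefore decided in \texttt{Phase 3} as $(v_i,\smallest{\Gamma^+(v_i)})=(v_i,v_{i+1})$. You spell out the branch-by-branch details (namely $v_i\neq n$ and $S\subseteq V_{good}$) that the paper's one-line argument leaves implicit.
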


This follows from the fact that, for such a $w$, every vertex on $IP(w)$ determines its proposal edge during \texttt{Phase 3}. As a result, each of them proposes an edge to its neighbor $\Gamma^+(w)$, ensuring that all edges on $IP(w)$ are retained.

This observation implies that $w$ is connected in $H$ to the last vertex on $IP(w)$. Since the input graph $G$ satisfies the favorable condition $B(f, b)$, by \Cref{def-fav cond1}, the last vertex $z$ on $IP(w)$ satisfies $z > n - K$. Moreover, as $z$ is the final vertex on the path, i.e., $\Gamma^+(z) = \emptyset$, and $f(z) = \bot$. Therefore, the proposal edge of $z$ must be $\smallest{\Gamma(z)} = b(z)$.

Since $w$ is connected to $z$, and $z$ proposes to $b(z)$, it follows that $w$ is also connected to $b(z)$. By the definition of the favorable condition, $b(z) \leq K$, so $b(z)$ is a small vertex. Applying \Cref{clm-small vertex connected1}, we conclude that $b(z)$ is connected to $n$ in $H$ with high probability. Thus, since $w$ is connected to $b(z)$, it is also connected to $n$ in $H$.

We have now established that every vertex is connected to $n$ in $H$ with high probability, by \Cref{clm-small vertex connected1}. Furthermore, since there are exactly $n - 1$ proposal edges, $H$ contains $n - 1$ edges and is therefore a spanning tree.

Finally, note that even when conditioned on $B(f,g)$, the graph $G$ still has maximum degree $d = 2np$ and conductance $\phi = \Omega(1)$ with high probability. Thus, the lazy random walk in \texttt{Phase 1} succeeds with high probability. By a union bound over this and \Cref{clm-small vertex connected1}, we conclude that $H$ is a spanning tree with probability at least $1 - \frac{1}{n^2}$.

\end{proof}

In the following, we prove \Cref{clm-small vertex connected1} to finish this section.

\begin{proof}[Proof of \Cref{clm-small vertex connected1}]
    Firstly, for small vertex, we have following observation.
    \begin{observation}
        If $v \leq K$, then the length of $IP(v)$ is at least $\Omega(n^\delta /\log n)$.
    \end{observation}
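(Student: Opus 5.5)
The observation follows directly from Item~(1) of \Cref{def-fav cond1}: along an increasing path, each step raises the vertex ID by at most $K$, and the path cannot stop before passing $n-K$. Let $IP(v)=(v_1,v_2,\dots,v_\ell)$ with $v_1=v$ and $v_{i+1}=f(v_i)$. By \Cref{lem-uniqIP} this path is determined by $f$. It is strictly increasing in ID, since $f(w)\in\Gamma^+(w)$ means $f(w)>w$.

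First we locate where the path ends. The last vertex $v_\ell$ satisfies $f(v_\ell)=\bot$. By Item~(1) of the favorable condition, every $w\le n-K$ has $f(w)\neq\bot$, so $v_\ell>n-K$.

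Next we bound the step sizes. For each $i<\ell$ we have $v_i\le n-K$ or $v_i>n-K$; in either case $f(v_i)\ne\bot$. When $v_i\le n-K$, Item~(1) gives $v_{i+1}=f(v_i)\le v_i+K$. When $v_i>n-K$, we still have $v_{i+1}\le n<v_i+K$. So every step increases the ID by at most $K$.

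Telescoping these increments gives
\[
n-K-v < v_\ell - v_1 \le (\ell-1)K .
\]
For $v\le K$ this yields
\[
\ell-1 > \frac{n-2K}{K}.
\]

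Finally we substitute $K=cn^{1-\delta}\log n$. Then $(n-2K)/K=\Omega\!\left(n^{\delta}/\log n\right)$, because $K=o(n)$ for any constant $\delta>0$. Hence $IP(v)$ has length $\Omega(n^\delta/\log n)$.

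The argument is deterministic given the favorable condition, so no probabilistic step is needed. The only place needing care is the step bound near the top end (vertices above $n-K$), and there the trivial bound $v_{i+1}\le n$ suffices.
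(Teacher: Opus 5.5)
Your proof is correct and follows essentially the same argument as the paper. Item~(1) of the favorable condition forces the last vertex of $IP(v)$ to lie above $n-K$, each step raises the ID by at most $K$, and telescoping gives at least $(n-K-v)/K=\Omega(n^\delta/\log n)$ steps. Your explicit handling of steps taken from vertices above $n-K$ (where Item~(1) does not apply, but $v_{i+1}\le n<v_i+K$ holds trivially) fills in a detail that the paper's proof glosses over.
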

    \begin{proof}
        This is because $G$ satisfies favorable condition $B(f,g)$, thus for every $u$ on $IP(v)$, we have $f(u) \leq u + K$. It means that the ID of each vertex on $IP(v)$ is at most K greater than the previous one. Furthermore, we know that the last vertex $z$ on $IP(v)$ satisfies $z \geq n-K$. Together, we see that $IP(v)$ passes through at least 
        \[
        \frac{n-K-v}{K}=\Omega(n^\delta /\log n)
        \]
        vertices.
    \end{proof}

    For every vertex $w$ on $IP(v)$, the invocation of \textsc{FindPath}$(G,S,\theta,w)$ performs a BFS starting at $w$ to find the anchor of $w$. We next prove that the set of distinct vertices visited during all invocations of \textsc{FindPath} on $IP(v)$, denoted by $Q(v)$, is of size at least $\Omega(\sqrt{n^{1+\delta}})$. 

    \begin{proposition}
    $ |Q(v)| = \Omega(\sqrt{n^{1+\delta}}) $.
    \end{proposition}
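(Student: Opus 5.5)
We need $|Q(v)|=\Omega(\sqrt{n^{1+\delta}})$, where $Q(v)$ is the union of the vertices explored by the \textsc{FindPath} calls launched from vertices on $IP(v)$. We only care about bad vertices; for them each BFS genuinely explores $\theta$ vertices before giving up. If some vertex on $IP(v)$ is good we are done anyway by \Cref{obs-bad to good}. The plan is to pick a well-spread subfamily of vertices on $IP(v)$ and show that their BFS balls are large and nearly disjoint. An inclusion–exclusion (Bonferroni) lower bound on the union then gives the claim.

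\textbf{Choosing the subfamily.} By the preceding observation, $IP(v)$ has $\Omega(n^\delta/\log n)$ vertices. Fix a subset $R'\subseteq V(IP(v))$ of size $m=\Theta(n^\delta/\log n)$, with the constant chosen small enough that $R'$ meets the size hypothesis of \Cref{lem-gnp bfs}. Let $k$ be the integer defined there, so $n^{k\delta}\le\sqrt{n^{1-\delta}}<n^{(k+1)\delta}$.

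\textbf{Size of each ball.} For every $x\in R'$, the first item of \Cref{lem-gnp bfs} gives $|N_k(x)|=\Theta(n^{k\delta})$. Since $\theta=\Theta(\sqrt{n^{1-\delta}}\log^2 n)\ge |N_k(x)|$, a bad vertex's BFS fully explores its $k$-level ball, so $Q(v)\supseteq\bigcup_{x\in R'}N_k(x)$. I will also try to get one more (partial) level. If the first item can be pushed to level $k+1$ in the weak form ``at least $\Omega(\sqrt{n^{1-\delta}})$ vertices are reached before the budget runs out'', I may take $B_x$ to be the first $\Theta(\sqrt{n^{1-\delta}})$ explored vertices. Otherwise I settle for $B_x=N_k(x)$ and absorb the slack into $\tilde\Omega$ factors, checking whether the constants really give $\Omega(\sqrt{n^{1-\delta}})$ per ball.

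\textbf{Near-disjointness and the main estimate.} Apply the second item of \Cref{lem-gnp bfs} to each $x\in R'$ with $R=R'\setminus\{x\}$ and $t=k$, and union bound over the $m$ choices of $x$ (each fails with probability $n^{-100}$). This gives
\[
\Big|N_k(x)\cap\bigcup_{y\ne x}N_k(y)\Big|=O(n^{(k-1)\delta}\log n)=o(n^{k\delta}).
\]
So each $x$ contributes $(1-o(1))|N_k(x)|$ private vertices, and
\[
|Q(v)|\ \ge\ \sum_{x\in R'}\big(|N_k(x)|-O(n^{(k-1)\delta}\log n)\big)=\Omega\!\Big(\frac{n^\delta}{\log n}\cdot n^{k\delta}\Big).
\]
To reach the stated $\sqrt{n^{1-\delta}}\cdot n^\delta=\sqrt{n^{1+\delta}}$, I need the per-ball count to be $\sqrt{n^{1-\delta}}$ and not merely $n^{k\delta}$, and I must also remove the $\log n$ loss. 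Both are handled in the same way as the private-vertex argument above. First, I would enlarge $R'$ toward the full $\Omega(n^\delta/\log n)$ length of $IP(v)$, treating the $\log$ factors as polylogarithmic slack that $\theta$ and $r$ already carry. Second, for the partial level $k+1$, I would redo the inductive-step calculation from \Cref{lem-gnp bfs}, with the set sizes capped at $\sqrt{n^{1-\delta}}$. This keeps the pairwise-overlap expectation at $O(|X||Y|p)=O(n^{1-\delta}\cdot n^{\delta-1})=O(1)$, so the Chernoff bounds still apply.

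\textbf{Expected obstacle.} The main difficulty is this last step. \Cref{lem-gnp bfs} only controls full levels up to $k$, while the budget $\theta$ lies strictly between $n^{k\delta}$ and $n^{(k+1)\delta}$. I therefore have to show that truncated level-$(k+1)$ explorations still overlap in only a lower-order number of vertices, under the conditioning on $B(f,b)$. I would attempt this by repeating the four-case $\Det$ partition, using that truncation only shrinks the sets $X$ and $Y$ there.
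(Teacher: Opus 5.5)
\textbf{There is a genuine gap.} The part you prove is too weak, and the part that actually carries the statement is left as an ``expected obstacle''.

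\textbf{What your level-$k$ argument gives.} Your argument (private parts of the balls $N_k(x)$, $x\in R'$, via the second item of \Cref{lem-gnp bfs}) is fine, but it only yields $|Q(v)|=\Omega(n^{(k+1)\delta}/\log n)$. Since $k=\lfloor (1-\delta)/(2\delta)\rfloor$, this is short of $\sqrt{n^{1+\delta}}$ by a polynomial factor $n^{(1-\delta)/2-k\delta}$ whenever $(1-\delta)/(2\delta)$ is not an integer. For example, $\delta=0.3$ gives about $n^{0.6}$ against $n^{0.65}$, so $|Q(v)|\cdot r\approx n^{0.95}$ and \Cref{lem-end vertex} cannot be applied.

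\textbf{Why your level-$(k+1)$ sketch does not close this.}
\begin{itemize}
\item Taking $B_x$ to be the \emph{first} $\Theta(\sqrt{n^{1-\delta}})$ explored vertices ignores that level $k$ is expanded in increasing ID order. The shared vertices $C=N_k(x)\cap N_k(R'\setminus\{x\})$ may therefore be expanded first.
\item Their neighbourhoods, which other searches can also reach, contain up to $|C|\cdot\Theta(n^\delta)=O(n^{k\delta}\log n)$ vertices. When $n^{k\delta}$ is close to $\sqrt{n^{1-\delta}}$, this is comparable to or larger than your whole truncated ball.
\item This term comes from $C$, not from $X$ or $Y$, so truncation does not shrink it.
\item A per-ball size of $\sqrt{n^{1-\delta}}$ times $|R'|=\Theta(n^\delta/\log n)$ inherently loses a $\log n$. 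You cannot enlarge $R'$: $IP(v)$ is only guaranteed $\Omega(n^\delta/\log n)$ vertices, and the lemma requires $|R|=O(n^\delta/\log n)$.
\item That lost $\log n$ matters downstream, because the application needs $|Q(v)|\cdot r\ge 100n\log n$ with $r=\Theta(\sqrt{n^{1-\delta}}\log n)$.
\end{itemize}

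\textbf{The missing idea: use the full budget.} A bad vertex's BFS uses its entire budget $\theta=\Theta(\sqrt{n^{1-\delta}}\log^2 n)$, and the extra logarithmic factors in $\theta$ pay for both losses. The paper argues as follows.
\begin{itemize}
\item The search from a bad $x\in R'$ discovers $\theta$ vertices. At most $|N_k(x)|+|C|\cdot\Theta(n^\delta)=O(\sqrt{n^{1-\delta}}\log n)$ of them lie in levels $\le k$ or arise from expanding vertices of $C$.
\item Hence the search must fully expand $\Omega(\theta/n^\delta)$ vertices of $\partial_k(x)\setminus C$.
\item No search from any $r\in R'\setminus\{x\}$ expands any of these. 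Such a search stops before completing level $k+1$, so it expands only vertices of its own $N_k(r)$.
\item Summing over $R'$ gives $\Omega(\sqrt{n^{1-\delta}}\log n)$ distinct privately expanded vertices, each with $\Theta(n^\delta)$ neighbours.
\item The bound follows once these neighbourhoods are shown to overlap by at most a logarithmic factor. The paper only asserts this step.
\end{itemize}

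\textbf{Where the $\Det$-style analysis is actually needed.} It is needed for that final overlap estimate, not for level $k$. The relevant quantity there is the number of common neighbours of the vertex set $X_c$ expanded from $x$ and the set $Y_c$ expanded from $R'\setminus\{x\}$. This is of order $n\,|X_c|p\,|Y_c|p$, not the edge count $|X||Y|p$ between discovered sets that you wrote down.
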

\begin{proof}
To bound the size of $ Q(v) $, let $ k $ be the integer defined in \Cref{lem-gnp bfs}. Then, for any vertex $ w $ on $ IP(v) $, a $ k $-level BFS visits only $ |N_k(w)| = \Theta(n^{k\delta}) = O(\sqrt{n^{1-\delta}}) \ll \theta $ distinct vertices, where $ \theta = \Theta(\sqrt{n^{1-\delta}} \log^2 n) $ is the threshold defined in \Cref{alg-gnp prop1}. This implies that the BFS in \textsc{FindPath}$(G, S, \theta, w)$ completes the $ k $-th level and terminates before finishing the $ (k+1) $-th level. Most of the distinct vertices visited by $ w $ appear in this $ (k+1) $-th level.

To show that $ Q(v) $ is large, consider an arbitrary set $ R \subseteq IP(v) $ of size $ \Theta(n^\delta / \log n) $. This is feasible because $ IP(v) $ has length $ \Omega(n^\delta / \log n) $.

Fix a vertex $ x \in R $, and let $ R \gets R \setminus \{x\} $. Define the set 
\[
C := N_k(x) \cap N_k(R),
\]
which represents the intersection of the $ k $-level BFS from $ x $ and from vertices in $ R $. By \Cref{lem-gnp bfs}, we have $ |C| = O(n^{(k-1)\delta} \log n) $.

Let $ \partial_t(x) := N_t(x) \setminus N_{t-1}(x) $ denote the set of vertices at distance exactly $ t $ from $ x $. Again by \Cref{lem-gnp bfs}, we have
\[
|\partial_t(x)| = |N_t(x)| - |N_{t-1}(x)| = \Theta(n^{t\delta}).
\]
Therefore,
\[
|\partial_t(x) \setminus C| \geq |\partial_t(x)| - |C| = \Theta(n^{t\delta}),
\]
which means that a large portion of $ \partial_t(x) $ is only reached by the BFS from $ x $ and not by any BFS from vertices in $ R $.

In the $ (k+1) $-th level, vertices in $ \partial_t(x) $ are sorted, and their neighborhoods are probed in order. A worst-case scenario would be if the vertices in $ C $ appear early in the order, leading to many redundant probes. However, even if all vertices in $ C $ are fully explored, the number of distinct vertices discovered would be at most
\[
|C| \cdot \Theta(n^\delta) = O(n^{k\delta} \log n) = O(\sqrt{n^{1-\delta}} \log n).
\]
Since \Cref{alg-gnp prop1} requires each BFS to discover at least $ \theta = \Theta(\sqrt{n^{1-\delta}} \log^2 n) $ vertices, exploring only the neighborhoods of vertices in $ C $ is insufficient.

As a result, during the $ (k+1) $-th level of BFS from $ x $, there must be at least 
\[
\frac{\theta}{\Theta(n^\delta)} = \Theta\left( \frac{\sqrt{n^{1-\delta}} \log^2 n}{n^\delta} \right)
\]
vertices whose neighborhoods are fully explored only during the invocation \textsc{FindPath}$(G, S, \theta, x)$, and not by any invocation from vertex in $ R $. This argument can be applied inductively to every $ r \in R $.

Therefore, $ Q(v) $ must include the neighborhoods fully explored from 
\[
\Theta\left( \frac{\sqrt{n^{1-\delta}} \log^2 n}{n^\delta} \right) \cdot |R| = \Theta(\sqrt{n^{1-\delta}} \log n)
\]
distinct vertices. Since these vertices share few neighbors, we conclude that
\[
|Q(v)| = \Omega(n^\delta \cdot \sqrt{n^{1-\delta}}) = \Omega(\sqrt{n^{1+\delta}}),
\]
up to a logarithmic loss due to overlapping neighborhoods.
\end{proof}

    Now that $Q(v)$ is a large set of vertices, by \Cref{lem-end vertex} and choosing proper constant, we have
    \[
    |Q(v)| \cdot r = \Omega(\sqrt{n^{1+\delta}}) \cdot \Theta(\sqrt{n^{1-\delta}}\log n)  \geq 100 n\log n,
    \]
    hence $|Q(v) \cap S| \geq 1$ with probability at least $(1 - \frac{1}{n^{10}})$ (note that $r$ is the number of random walks and $S$ is the set of vertices visited during random walks in \Cref{alg-gnp prop1}). Therefore, there is at least one vertex $u$ on $IP(v)$ s.t. \textsc{FindPath}$(G,S,\theta,u)$ succeeds in finding a vertex in $S$, and such $u$ is a good vertex. Together with \Cref{obs-bad to good}, we conclude that $v$ is connected to $n$ in $H$ with high probability by a union bound over all vertices $v \leq K$.
\end{proof}

\subsection{Proof of \Cref{thm:avg lca 2}}
Note that for $\delta = 1$ and $np = n^\delta$, $G \sim G(n,p)$ is necessarily a complete graph. In this case, a trivial LCA suffices: simply maintaining all edges incident to $n$ already yields a spanning tree LCA. Therefore, in the following, we focus on the regime $1 > \delta > \frac{1}{3}$.

In this range, the situation changes because an algorithm cannot even fully explore the neighborhood of a given vertex using only $\tilde O(\sqrt{n^{1-\delta}})$ probes. Consequently, for vertices not directly connected to $n$, the method used in \cref{thm:avg lca 1} to maintain increasing paths is no longer fully feasible.

A key difference from \cref{alg-gnp prop1} is that, instead of exploring the entire neighborhood to build an increasing path, we only sample $\tilde\Theta(\sqrt{n^{1-\delta}}) = o(n^\delta)$ neighbors\footnote{Sometimes we refer to the probability over neighbor sampling.} of a vertex $v$ and select the proposal edge for $v$ among these sampled neighbors. This yields a ``compromised'' version of the increasing path, which is relatively short. On the other hand, as $\delta$ increases, a vertex is more likely to be directly connected to $n$. By leveraging both observations, even a small vertex retains a significant probability of being connected to $n$ via the compromised increasing path.

\subsubsection{Algorithm Overview}

\begin{algorithm}[htb]
\caption{\textsc{Prop}$(G,u)$: compute the proposal edge of vertex $u$}\label{alg-gnp prop2}

\KwIn{graph $G = (V, E) \sim \Gnp$ with $np=n^\delta$ for $\delta \in (\frac{1}{3},1)$; $u\in V\setminus\{n\}$.}
\KwOut{the proposal edge of $u$.}

\tcp{\texttt{Phase 1:} check if $u$ is incident to $n$}
\If{\texttt{Exists}$(u, n)$}{
\Return $(u, n)$.
}

Let $q := \frac{\sqrt{n^{1-\delta}}\log^2 n}{n^\delta}$;

Randomly probe each neighbor of vertex $n$ with probability $q$. Denote these vertices by set $S$.

\tcp{\texttt{Phase 2:} check if $u$ is incident to $S$}

\For{$v \in S$ by increasing order}{
\If{\texttt{Exists}$(u, v)$}{
\Return $(u, v)$.
}
}

\tcp{\texttt{Phase 3:} maintain restricted increasing path}

Randomly probe each neighbor of vertex $u$ with probability $q$. Denote these vertices by set $\Gamma_{q}(u)$.

\If {$\Gamma^+_q(u) \neq \emptyset$}{\Return $(u,v)$, where $v = \smallest{\Gamma^+_q(u)}$.}
\Else {\Return $(u,v)$, where $v = \smallest{\Gamma_q(u)}$.}
\end{algorithm}

\noindent
\Cref{alg-gnp prop2} outputs the proposal edge of the vertex $u$, and we denote by $H$ the subgraph induced by all proposed edges. The algorithm follows the scheme of \Cref{alg-gnp prop1} but replaces full neighborhood exploration with vertex-pair probes and neighbor sampling to overcome obstacles caused by large degrees.

Using a shared random tape $R$, neighbors of $u$ are sampled as follows: first, the degree $d_u$ is probed, then for each index $i \in [d_u]$, $\Nbr(u,i)$ is probed independently with probability $q$. The same random tape $R$ is always used to ensure consistency of the neighbor sampling.

We use the following definitions to facilitate understanding of our algorithm.

\begin{definition}
Given a random tape $R$ and a probability $0<q<1$, for any vertex $u \in V$, let $\Gamma_q(u)$ denote the set of neighbors of $u$ sampled according to $q$ and $R$. Furthermore, define
\[
\Gamma_q^+(u) := \{v \in \Gamma_q(u) \mid v > u\}.
\]
Equivalently, $\Gamma_q(u)$ is obtained by tossing a $q$-biased coin for each neighbor of $u$, using the shared random tape $R$ to determine whether the neighbor is probed.
\end{definition}

\begin{definition}[Restricted Increasing Path]
Let $q$ and $R$ be as above. A path $(v_1, v_2, \dots, v_\ell)$ is called a \emph{$q$-restricted increasing path} if for every $i \in [\ell-1]$, 
\[
v_{i+1} = \smallest{\Gamma_q^+(v_i)}.
\]
We denote by $RIP_q(v)$ the longest $q$-restricted increasing path starting from vertex $v$.
\end{definition}

In \texttt{Phase 1} of \Cref{alg-gnp prop2}, for any vertex $u$, the algorithm first checks whether $u$ is incident to $n$. If so, the proposal edge of $u$ is $(u,n)$. Otherwise, the algorithm randomly probes each neighbor of $n$ with probability $q$, denoting the set of sampled neighbors by $S$. Note that for every vertex $v \in S$, its proposal edge must be the edge $(v,n)$.

In \texttt{Phase 2}, we check whether $u$ is incident to $S$. If so, $u$ proposes its first incident edge to $S$. We call vertices incident to $n$ or to $S$ \textbf{good vertices}, denoted by $V_{good}$. The following observation formalizes their structure.

\begin{observation}\label{obs-good vertex2}
All proposal edges of vertices in $V_{good}$, as determined by \Cref{alg-gnp prop2}, form a spanning tree of $V_{good}$.
\end{observation}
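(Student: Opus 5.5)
We need to show that the proposal edges of good vertices in \Cref{alg-gnp prop2} form a spanning tree of $V_{good}$. The plan is a counting argument: every good vertex except $n$ proposes one edge that leads one step closer to $n$. Concretely, we exhibit a rooted structure of depth at most $2$ with root $n$, and conclude that the proposal edges within $V_{good}$ form a tree.

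Set up the layers first. By definition, $V_{good}=\{n\}\cup A\cup B$. Here $A:=\Gamma(n)$ is the set of vertices $u$ with \texttt{Exists}$(u,n)$ true. $B$ is the set of vertices $u\notin A\cup\{n\}$ that are adjacent to some vertex of $S$. Because $S$ is sampled using the shared random tape, it is the same set for every query. By construction $S\subseteq A$.

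Next, trace the proposal edge of each good vertex $u\neq n$.
\begin{itemize}
\item If $u\in A$, \texttt{Phase 1} returns $(u,n)$.
\item If $u\in B$, \texttt{Phase 1} fails, and \texttt{Phase 2} scans $S$ in increasing order. It returns $(u,v)$, where $v$ is the smallest vertex of $S$ adjacent to $u$, and such a $v$ exists because $u\in B$. Since $v\in S\subseteq A$, the endpoint $v$ is good and proposes $(v,n)$.
\end{itemize}
So every good vertex reaches $n$ within at most two proposal edges, and all of these edges stay inside $V_{good}$. The subgraph $H_{good}$ formed by the proposal edges of $V_{good}\setminus\{n\}$ is therefore connected and spans $V_{good}$.

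To finish, count edges. Each vertex of $V_{good}\setminus\{n\}$ contributes exactly one edge, so there are at most $|V_{good}|-1$ distinct edges. A connected graph on $|V_{good}|$ vertices has at least $|V_{good}|-1$ edges, so the count is exactly $|V_{good}|-1$ and $H_{good}$ is a spanning tree of $V_{good}$.

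One subtle point is that two vertices might propose the same edge, which would lower the count. Here this cannot happen: orient each proposal from the proposer toward $n$. The vertex $n$ proposes nothing. A vertex of $A$ proposes toward $n$, and a vertex of $B$ proposes toward a vertex of $S\subseteq A$. So no edge is proposed by both of its endpoints. Acyclicity can also be seen directly: each vertex has a unique outgoing edge that strictly decreases its depth, with $n$ at depth $0$, $A$ at depth $1$ and $B$ at depth $2$, so the edges form a tree.

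The only step that needs care is checking that the proposals of vertices in $B$ land inside $A$ rather than at arbitrary vertices. This is exactly where consistency of the shared tape matters: $S$ must be identical across queries, so that the vertex $v$ chosen for $u$ really is a vertex sampled as a neighbor of $n$, and hence itself proposes $(v,n)$.
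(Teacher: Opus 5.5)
Your proof is correct and takes the same route as the paper. Every good vertex reaches $n$ within two proposal hops, either directly or through a vertex of $S\subseteq\Gamma(n)$, and each non-root good vertex contributes one proposal edge, giving $|V_{good}|-1$ edges and hence a spanning tree. Your extra checks (the ``at most, hence exactly'' counting, the observation that no edge is proposed by both endpoints, and the depth-decreasing orientation) fill in details that the paper's one-line counting step leaves implicit.
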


\begin{proof}
After \texttt{Phase 1} and \texttt{Phase 2}, each good vertex is either directly incident to $n$ or to a neighbor of $n$, so it is connected to $n$ in $H$. Moreover, each vertex in $V_{good}$ proposes exactly one edge, resulting in $|V_{good}|-1$ edges. Hence, the proposal edges collectively form a spanning tree rooted at $n$.
\end{proof}

In \texttt{Phase 3}, for a bad vertex $w$, the algorithm randomly explores neighbors of $w$ and proposes the edge along $RIP_q(w)$. If $RIP_q(w)$ contains any good vertex, then $w$ is also connected to $n$ in $H$. Formally, we have the following observation. Its proof is similar to that of \Cref{obs-bad to good} and is omitted.

\begin{observation}\label{obs-bad to good2}
If any good vertex appears in $RIP_q(u)$, then the bad vertex $u$ is connected to $n$ in $H$.
\end{observation}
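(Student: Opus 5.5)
We argue exactly as in the proof of \Cref{obs-bad to good}, now with $RIP_q(u)$ in place of $IP(u)$. Write $RIP_q(u)=(u=v_1,v_2,\dots,v_\ell)$. Let $v_j$ be the first good vertex on this path, which exists by hypothesis. Note that $j\ge 2$, since $u$ itself is bad.

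\textbf{Step 1: the prefix consists of bad vertices proposing forward.} Each vertex $v_i$ with $i<j$ is bad, so it is not adjacent to $n$ and not adjacent to any vertex of $S$. Hence \Cref{alg-gnp prop2} passes through \texttt{Phase 1} and \texttt{Phase 2} for $v_i$ without returning, and $v_i$ fixes its proposal edge in \texttt{Phase 3}.

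\textbf{Step 2: each such proposal is the next path edge.} Because $v_{i+1}$ exists on the path, $\Gamma_q^+(v_i)\neq\emptyset$. So $v_i$ proposes $(v_i,\smallest{\Gamma_q^+(v_i)})$. By the definition of a $q$-restricted increasing path, this edge is exactly $(v_i,v_{i+1})$. The key point is consistency: $\Gamma_q(v_i)$ is determined by the shared random tape $R$. Therefore the sampled neighborhood used when $v_i$ computes its own proposal is the same one that defines $RIP_q(u)$. It follows that every edge of the subpath $(v_1,\dots,v_j)$ lies in $H$.

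\textbf{Step 3: connect the good endpoint to $n$.} Since $v_j$ is good, \Cref{obs-good vertex2} shows it is connected to $n$ in $H$. Concatenating this connection with the subpath from Step 2 connects $u$ to $n$ in $H$.

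The only delicate point is the consistency of the sampling in Step 2, and it is handled by the fixed random tape. The same tape also fixes $S$ across all calls, so the set of good vertices is well defined. Nothing else is an obstacle.
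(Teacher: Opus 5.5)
Your proof is correct and is exactly the argument the paper intends. The paper omits this proof as analogous to that of \Cref{obs-bad to good}: take the first good vertex on the path, note that the bad prefix proposes its path edges in \texttt{Phase 3}, then invoke \Cref{obs-good vertex2}. Your explicit remark that $\Gamma_q(v_i)$ and $S$ are fixed by the shared tape $R$ is the one detail the paper leaves implicit: it guarantees that the sampled neighborhoods defining $RIP_q(u)$ are the same ones used when each vertex computes its proposal.
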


The \emph{worst-case} scenario for a vertex $w$ occurs when $RIP_q(w)$ does not contain any good vertex. In this case, \texttt{Phase 3} sets the proposal edge of the last vertex on $RIP_q(w)$ to be incident to a small vertex in $\Gamma_q^+(w)$.

Another difference from \Cref{alg-gnp prop1} is that we cannot find the smallest neighbor of $w$ using only $\tilde\Theta(\sqrt{n^{1-\delta}}) = o(n^\delta)$ probes. Instead, the algorithm samples neighbors to find, with high probability, one of the smallest neighbors; as we will show later, this suffices. Through this edge, the ``hope'' of connecting $w$ to $n$ is passed to a small neighbor. Thus, the problem reduces to proving that the restricted increasing path of a small vertex is still likely to encounter a good vertex.

It is easy to see that the query complexity of \Cref{alg-gnp prop2} is dominated by the random probing of neighbors of vertex $n$, giving a complexity of $O(\sqrt{n^{1-\delta}} \log^2 n)$ with high probability. We next show that, with high probability, the subgraph $H$ forms a spanning tree.

\subsubsection{Correctness of \Cref{alg-gnp prop2}}
In this section, let $t = \Theta(\frac{1}{q}\log n)$ be an integer. Following definition characterize neighbors of $n$ and the concerned $t$ neighbors of every other vertex. This provides us a view of the random \Cref{alg-gnp prop2}'s behavior through a a deterministic way.

\begin{definition}\label{def-bc2}
    A \textbf{baseline condition} is an event parameterized as $B(C,f^t,b^t)$ where
    \begin{itemize} 
        \item $C \in 2^V$, is the set of neighbors of vertex $n$;
        \item (Forward function) $f^t: V\setminus (C \cup \{n\}) \to V^t\cup\{\bot\}$, is a function mapping vertex $v$ to the set of $t$ smallest neighbors of $v$, such that for each $u\in f^t(v)$, $u > v$; or to $\bot$ if the number of such neighbors is less than $t$;
        \item (Backward function) $b^t: \{v \mid f^t(v) = \bot\} \to V^t \cup \{\bot\}$, is a function mapping vertex $v$ (if $f(v) = \bot$) to the set of $t$ smallest neighbors of $v$; or to $\bot$ if the number of such neighbors is less than $t$;
    \end{itemize}

    We say a graph $G$ satisfies $B(C,f^t,b^t)$ if it is compatible simultaneously with $C$, $f^t$ and $b^t$. 
\end{definition}

\begin{lemma}\label{lem-unq cond2}
    For every pair of distinct baseline conditions $B_1 = B(C_1,f^t_1,b^t_1)$ and $B_2 = B(C_2,f^t_2,b^t_2)$, there is no graph $G$ that satisfies both.
\end{lemma}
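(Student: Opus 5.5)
The plan is to mirror the proof of \Cref{lem-unq cond1}: every component of a baseline condition $B(C,f^t,b^t)$ is a deterministic function of the graph. Concretely, $C$ is read off as $\Gamma(n)$; $f^t(v)$ is read off from $\Gamma^+(v)$ as its $t$ smallest elements, or $\bot$; and $b^t(v)$ is read off from $\Gamma(v)$ on the set where $f^t(v)=\bot$. Hence a graph $G$ determines exactly one triple $(C_G,f^t_G,b^t_G)$. If $G$ satisfied two distinct conditions, both would coincide with this triple and therefore with each other, which is a contradiction.

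To make this rigorous, I will suppose $B_1\neq B_2$ and that some $G$ satisfies both. Then I split into cases according to the first component in which the two conditions differ.

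First, suppose $C_1\neq C_2$. Compatibility with $C_i$ means $\Gamma_G(n)=C_i$, so $C_1=\Gamma_G(n)=C_2$, a contradiction.

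Otherwise $C_1=C_2=:C$. In that case $f^t_1$ and $f^t_2$ share the same domain $V\setminus(C\cup\{n\})$. Suppose $f^t_1(v)\neq f^t_2(v)$ for some $v$ in this domain. Compatibility forces each $f^t_i(v)$ to equal the set of the $t$ smallest elements of $\Gamma^+_G(v)$ when $|\Gamma^+_G(v)|\ge t$, and to equal $\bot$ otherwise. Since this value is uniquely determined by $\Gamma_G(v)$, we get $f^t_1(v)=f^t_2(v)$, a contradiction.

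Finally, suppose $f^t_1=f^t_2$. Then the domains $\{v\mid f^t(v)=\bot\}$ of the backward functions coincide. The same argument applied to the $t$ smallest elements of $\Gamma_G(v)$ gives $b^t_1=b^t_2$. So all three components agree, and $B_1=B_2$, contradicting $B_1\neq B_2$.

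I do not expect any real obstacle here. The only point needing care is the domain issue: the domains of $f^t$ and $b^t$ depend on $C$ and $f^t$ respectively. This is why the cases are ordered $C$, then $f^t$, then $b^t$, so that when two functions are compared they are already known to be defined on the same set.
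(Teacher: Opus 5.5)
Your proof is correct and follows the paper's argument. The paper likewise first rules out $C_1\neq C_2$ via uniqueness of $\Gamma(n)$, and then, assuming $C_1=C_2$, takes a vertex $v$ where $f^t$ or $b^t$ differ and observes that $\Gamma(v)$ cannot be compatible with both; your ordering of the cases ($C$, then $f^t$, then $b^t$) simply makes explicit the domain dependence that the paper leaves implicit.
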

\begin{proof}
    Consider that $B_1$ and $B_2$ are distinct, if $C_1 \neq C_2$, then $G$ does not satisfies both because the neighborhood of vertex $n$ is unique.
    
    Assume $C_1 = C_2$, then there must exist at least one vertex $v$ s.t. either $f^t_1(v) \neq f^t_2(v)$ or $b^t_1(v) \neq b^t_2(v)$. Recall that function $f^t$ maps $v$ to the set of $t$ smallest larger neighbors of $v$, and function $b^t$ maps $v$ to the set of $t$ smallest neighbors of $v$. Thus if $\Gamma(v)$ is compatible with $f_1$ and $b_1$, then it must be not compatible with $f_2(v)$ and $b_2(v)$.
\end{proof}

Following observation characterizes the pairs of vertices determined by condition $B(C,f^t,b^t)$.
\begin{observation}\label{obs-pair random2}
    For $G \sim \Gnp$, if $G$ satisfies a baseline condition $B(C,f^t,b^t)$, then we have:
    \begin{enumerate}
        \item for every $v \in C$, pair $(v,n) \in E(G)$. Besides, for every $v \notin C$, pair $(v,n) \notin E(G)$;
        \item if $f^t(v) \neq \bot$, then pair $(v,u) \in E(G)$ for $u \in f^t(v)$. Besides, pair $(v,u) \notin E(G)$ for vertex $v < u <\largest{f^t(v)}$, except for $u \in f^t(v)$;
        \item if $f^t(v) = \bot$ and $b^t(v) \neq \bot$, then pair $(v,u) \in E(G)$ for $u \in b^t(v)$ and there are at most $t-1$ vertices incident to $v$ with ID larger than $v$. Besides, pair $(v,u) \notin E(G)$ for vertex $1 \leq u < \largest{b^t(v)}$, except for $u \in b^t(v)$;
    \end{enumerate}
    and each remaining pair of vertices is connected by an edge independently with probability $p$.
\end{observation}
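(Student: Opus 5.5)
The plan is to argue exactly as in \Cref{obs-pair random1}: translate each component of $B(C,f^t,b^t)$ into a statement about the indicator variables $X_{u,v}$, then check that every pair not mentioned in those statements is left untouched by the conditioning. Since $\Gnp$ is a product measure over the $X_{u,v}$, it suffices to show two things. First, the event ``$G$ satisfies $B$'' is an intersection of events, each depending only on a specified family of pairs. Second, those families, together with the constraints placed on them, are exactly the ones in items 1--3.

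\textbf{Item 1.} The statement $C=\Gamma(n)$ is equivalent to fixing $X_{v,n}=1$ for all $v\in C$ and $X_{v,n}=0$ for all $v\notin C\cup\{n\}$.

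\textbf{Item 2.} Fix $v\notin C\cup\{n\}$ with $f^t(v)\neq\bot$. The requirement that $f^t(v)$ be the set of the $t$ smallest neighbors of $v$ larger than $v$ is equivalent to two conditions:
\begin{itemize}
\item $X_{v,u}=1$ for every $u\in f^t(v)$;
\item $X_{v,u}=0$ for every $u$ with $v<u<\largest{f^t(v)}$ and $u\notin f^t(v)$.
\end{itemize}
No pair $(v,u)$ with $u<v$ or $u>\largest{f^t(v)}$ is involved.

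\textbf{Item 3.} Now take $f^t(v)=\bot$ and $b^t(v)\neq\bot$. The forward part says only that $v$ has fewer than $t$ neighbors above $v$. This is a (non-product) constraint on the variables $\{X_{v,u}\}_{u>v}$, which I will simply record as stated. The backward part fixes $X_{v,u}=1$ for $u\in b^t(v)$ and $X_{v,u}=0$ for $u<\largest{b^t(v)}$ with $u\notin b^t(v)$.

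\textbf{Remaining pairs.} Every pair $(w,z)$ not covered by these constraints, viewed from either endpoint, appears in none of the defining events. The conditioning event is therefore measurable with respect to a sub-family of variables disjoint from $X_{w,z}$. By independence under $\Gnp$, each such $X_{w,z}$ stays $\mathrm{Bernoulli}(p)$, and these variables stay mutually independent, after conditioning.

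The main obstacle is bookkeeping rather than depth, and it comes in two parts.

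\emph{Pairs constrained from both endpoints.} A pair $(v,u)$ may receive a constraint from $v$'s side and another from $u$'s side. For instance $u\in f^t(v)$ while $v$ lies below $\largest{b^t(u)}$. I need to note that the constraints from different endpoints are simultaneously satisfiable for any $G$ satisfying $B$. The intersection then only shrinks the set of determined pairs' admissible values and never correlates pairs outside that set.

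\emph{The ``at most $t-1$'' constraint in item 3.} This couples the variables $X_{v,u}$ for $u>v$, so the pairs it touches are not independent $\mathrm{Bernoulli}(p)$ after conditioning. I will classify all pairs $(v,u)$ with $u>v$ and $f^t(v)=\bot$ as determined, or more precisely as influenced. The phrase ``each remaining pair'' must be read to exclude them, consistent with how $\Det(\cdot)$ was used in \Cref{def-det1}. With that reading, the claim follows directly from the product structure.
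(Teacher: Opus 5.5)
Your argument is the one the paper relies on. The paper states this observation without proof, as an immediate consequence of \Cref{def-bc2} and the product structure of $\Gnp$. Your choice to treat every pair $(v,u)$ with $u>v$ and $f^t(v)=\bot$ as influenced also matches how the paper later uses the observation: in the proof of \Cref{clm-small vertex connected2} it places all $w\in[n-tK,n]$ into $\Det(u)$.

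However, your ``Remaining pairs'' step misses one case. A baseline condition may have $f^t(v)=\bot$ and $b^t(v)=\bot$, which by \Cref{def-bc2} means $v$ has fewer than $t$ neighbors in total. In that case items 1--3 say nothing about the pairs $(v,u)$ with $u<v$, unless such a pair happens to be constrained from $u$'s side. Yet every one of these pairs enters the defining event $\{|\Gamma(v)|<t\}$. So your assertion that uncovered pairs appear in none of the defining events fails here. Conditioning on this event generally makes those $X_{v,u}$ negatively correlated, with marginals below $p$. The statement as written is silent on this case, and your proof inherits the omission.

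The repair is easy. You can extend your influenced-pairs convention to all pairs at $v$ whenever $b^t(v)=\bot$. Alternatively, restrict the observation to favorable conditions: there, item (3) of \Cref{def-fav cond2} guarantees $b^t(v)\neq\bot$ whenever $f^t(v)=\bot$, and this is the only setting in which the paper invokes it.

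Your remark about pairs constrained from both endpoints is harmless but unnecessary. The conditioning event is an intersection of events measurable with respect to the listed pairs. A conflict between the two endpoints can only make that event have probability zero; it never affects the untouched variables.
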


We still use $V_I$ to denote the set of vertices induced by an interval $I$ (see \Cref{def-int induced set}). Following definition characterize desirable $B(C,f^t,b^t)$ under which the algorithm behaves well.
\begin{definition}\label{def-fav cond2}
     Let integer $K = c n^{1-\delta} \log n$, where $c$ is a large enough constant. A baseline condition $B(C,f^t,b^t)$ is called a \textbf{favorable condition} if all of the following hold:
     \begin{enumerate}
         \item[(1)] for every integer $0 \leq i \leq \frac{n}{K}-1$, it holds that $\abs{C \cap V_{[iK+1,iK+K]}} \leq 2c\log n$;
         \item[(2)] for every $v$ such that $1 \leq v \leq n - tK$, it holds that $f^t(v) \neq \bot$ and $\largest{f^t(v)} \leq v + tK$;
         \item[(3)] for every $v$ such that $f^t(v) = \bot$, it holds that $b^t(v) \neq \bot$ and $\largest{b^t(v)} \leq tK$.
     \end{enumerate}
\end{definition}

In the following, we try to argue that if a graph satisfies a favorable condition, then \Cref{alg-gnp prop2} works with high probability. We first use following claim to show that w.h.p. $G \sim \Gnp$ satisfies a favorable condition.

\begin{claim}\label{clm-fav graph2}
    If $G \sim G(n,p)$ satisfies following conditions simultaneously:
        \begin{enumerate}
            \item[(a)] for every integer $0 \leq i \leq \frac{n}{K}-1$, vertex $n$ is incident to at most $2c\log n$ vertices in $V_{[iK+1,iK+K]}$;
            \item[(b)] every vertex $v$ is incident to at least $2t$ vertex in $V_{(v,v+tK]}$.
            
        \end{enumerate}
    then the baseline condition $B(C,f^t,b^t)$ that $G$ satisfies is a favorable condition. We call such $G$ a favorable graph.
\end{claim}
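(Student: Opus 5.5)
We check the three items of \Cref{def-fav cond2} directly from conditions (a) and (b), mirroring the proof of \Cref{clm-fav graph1}.

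\emph{Item (1).} This is just condition (a) restated. The baseline condition $B(C,f^t,b^t)$ that $G$ satisfies has $C=\Gamma(n)$. So $\abs{C\cap V_{[iK+1,iK+K]}}$ is exactly the number of neighbors of $n$ in that block, which (a) bounds by $2c\log n$.

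\emph{Item (2).} Fix $v$ with $1\le v\le n-tK$. Then no wraparound modulo $n$ occurs, and
\[
V_{(v,v+tK]}=\{v+1,\dots,v+tK\}
\]
consists entirely of vertices larger than $v$. By (b), $v$ has at least $2t\ge t$ neighbors in this set. Hence $v$ has at least $t$ larger neighbors, so $f^t(v)\neq\bot$.

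Moreover, the $t$ smallest larger neighbors of $v$ are at most the $t$-th smallest element of $\Gamma(v)\cap\{v+1,\dots,v+tK\}$. This gives $\largest{f^t(v)}\le v+tK$.

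One technicality: $f^t$ is only defined on $V\setminus(C\cup\{n\})$. For $v\in C$ the item is vacuous, or else it is read as ranging over the domain of $f^t$, and I would note this explicitly.

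\emph{Item (3).} Take $v$ with $f^t(v)=\bot$ (in the domain of $f^t$). By the previous paragraph, $v>n-tK$. Now
\[
V_{(v,v+tK]}=\{v+1,\dots,n\}\cup\{1,\dots,v+tK-n\}.
\]
Since $f^t(v)=\bot$, at most $t-1$ of $v$'s neighbors lie in $\{v+1,\dots,n\}$. By (b), $v$ has at least $2t$ neighbors in $V_{(v,v+tK]}$, so at least $2t-(t-1)\ge t$ of them lie in $\{1,\dots,v+tK-n\}\subseteq\{1,\dots,tK\}$.

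Therefore $v$ has at least $t$ neighbors in total, so $b^t(v)\neq\bot$. Its $t$ smallest neighbors are all at most $v+tK-n\le tK$, giving $\largest{b^t(v)}\le tK$.

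The only point needing care is this counting in item (3). It is exactly why (b) requires $2t$ rather than $t$ neighbors: the surplus absorbs the up to $t-1$ larger neighbors that do not suffice to define $f^t(v)$. The remaining issue is bookkeeping, namely the domain of $f^t$ excluding $C\cup\{n\}$ and the block indexing in (1) when $K\nmid n$. I expect to handle both by remarking that the conditions are only imposed where the functions are defined.
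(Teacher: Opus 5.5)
Your proposal is correct and follows the paper's proof essentially step by step. Item (1) is a restatement of (a). Item (2) comes from the $2t$ larger neighbors in the non-wrapping window $\{v+1,\dots,v+tK\}$. Item (3) subtracts the at most $t-1$ larger neighbors from the $2t$ neighbors in the wrapped window, leaving $t+1$ neighbors (you write $\ge t$, which suffices) in $\{1,\dots,v+tK-n\}\subseteq\{1,\dots,tK\}$. Your remark that $f^t$ is only defined off $C\cup\{n\}$ addresses a point the paper glosses over, and it does not change the argument.
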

\begin{proof}
    Recall that $C$ is the set of $n$'s neighbors, hence $(a)$ is exactly Item $(1)$ in \Cref{def-fav cond2}.

    As $(b)$ is satisfied, for $1 \leq v \leq n-tK$, $v$ is incident to at least $2t$ vertices in set $V_{(v,v+tK]}$ where
    \[
     V_{(v,v+tK]} = \{v+1, v+2, \cdots, v+tK\}.
    \]
    This means, $v$ has at least $2t$ neighbors larger then $v$; and the largest ID among these $2t$ neighbors is at most $v+tK$. Hence $(b)$ implies Item $(2)$ in \Cref{def-fav cond2}. Furthermore, it means that if $f^t(v) = \bot$, then $v > n-tK$.

    In a graph satisfying these conditions, for $v$ such that $f^t(v) = \bot$, $v$ must be a large vertex with ID larger than $n-tK$ due to above discussions. In this case, the set $V_{(v,v+tK]}$ satisfies
    \[
     V_{(v,v+tK]} = \{v+1, v+2, \cdots, n, 1,\cdots, v+tK-n\}.
    \]
    Thus $v$ is incident to at least $2t$ vertices in above set $V_{(v,v+tK]}$ by $(b)$. On the other hand, as $f^t(v) = \bot$, the number of $v$'s neighbors with ID larger than $v$ is at most $t-1$. Together, we see that $v$ is incident to at least $t+1$ vertices in set $\{1,\cdots, v+tK-n\}$. As we have $v+tK-n \leq tK$, the baseline condition $B(C,f^t,b^t)$ that $G$ satisfies must satisfy Item $(3)$ in \Cref{def-fav cond2}.
\end{proof}

Now we prove that for graph $G \sim \Gnp$, it satisfies some favorable condition $B(C,f^t,b^t)$ with high probability.

\begin{lemma}\label{lem-fav graph 2}
    For $G \sim G(n,p)$:
    \begin{equation*}
        \sum_{\text{fav. } B(C,f^t,b^t)} \Pr[G \text{ satisfies $B$}] \geq 1- \frac{1}{n^2}
    \end{equation*}
\end{lemma}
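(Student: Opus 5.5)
This mirrors the proof of \Cref{lem-fav graph 1}. First, I show that $G\sim\Gnp$ is a favorable graph in the sense of \Cref{clm-fav graph2} with probability at least $1-\frac{1}{n^2}$. Since every graph satisfies exactly one baseline condition $B(C,f^t,b^t)$ (by \Cref{lem-unq cond2}), \Cref{clm-fav graph2} then implies that the baseline condition it satisfies is favorable. Hence the events $\{G \text{ satisfies } B\}$ over favorable $B$ are disjoint and their union contains the event ``$G$ is favorable''. Summing their probabilities therefore gives the claimed bound.

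\textbf{Condition (a).} Fix an index $0\le i\le n/K-1$ and set
\[
A_i:=\sum_{u\in V_{[iK+1,iK+K]}\setminus\{n\}} X_{u,n}.
\]
This is a sum of at most $K$ independent Bernoulli$(p)$ variables, so $\E[A_i]\le Kp=c\log n$. A multiplicative Chernoff bound with deviation $\mu$ gives $\Pr[A_i> 2c\log n]\le e^{-c\log n/3}$. For $c$ large enough, this is at most $n^{-10}$. A union bound over the at most $n/K\le n$ indices completes this case.

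\textbf{Condition (b).} Fix a vertex $v$ and set
\[
Y_v:=\sum_{u\in V_{(v,v+tK]}}X_{u,v}.
\]
The set $V_{(v,v+tK]}$ has exactly $tK$ distinct elements, none equal to $v$. This needs $tK<n$; since $t=\Theta(\frac1q\log n)$ and $q=\sqrt{n^{1-\delta}}\log^2n/n^\delta$, we get $tK=\Theta(n^{(1+\delta)/2}\log n/\log^2 n)\cdot\log n$, which is $o(n)$ because $\delta<1$. Hence $\mu=\E[Y_v]=tKp=ct\log n$. The lower-tail Chernoff bound gives
\[
\Pr[Y_v<2t]\le \Pr[Y_v\le \mu/2]\le e^{-\mu/8}=e^{-ct\log n/8}\le n^{-10}
\]
for $c$ large, as soon as $\log n\ge 4$ so that $2t\le\mu/2$. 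A union bound over all $n$ vertices finishes this case.

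Combining the two union bounds gives failure probability at most $2n^{-9}\le n^{-2}$, and the reduction above then yields the lemma. The only step that needs care is checking that $tK<n$ in our parameter regime, so that the window $V_{(v,v+tK]}$ does not wrap around onto $v$ itself and the summands in $Y_v$ are genuinely distinct independent indicators. The concentration bounds themselves are routine.
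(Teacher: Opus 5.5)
Your proposal is correct and follows the paper's proof: show that $G$ satisfies conditions (a) and (b) of \Cref{clm-fav graph2} via Chernoff and union bounds, then pass to the favorable baseline condition using disjointness from \Cref{lem-unq cond2}. Your check that $tK=\Theta(n^{(1+\delta)/2})=o(n)$ is a detail the paper leaves implicit when it asserts $\abs{V_{(v,v+tK]}}=tK$. It is what guarantees that the window does not wrap around onto $v$ itself.
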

\begin{proof}
    We prove by showing that $G$ is a favorable graph in \Cref{clm-fav graph2} with high probability. Thus by applying \Cref{clm-fav graph1}, we see that with high probability a favorable condition is satisfied.

    We first specify the neighbors of $n$ to prove that w.h.p. $(a)$ in \Cref{clm-fav graph2}, is satisfied. This is because, for any given integer $0 \leq i \leq \frac{n}{K}-1$, a vertex $w \in V_{[iK+1,iK+K]}$ is incident to $n$ with independent probability $p$. Thus in expectation, set $V_{[iK+1,iK+K]}$ contains 
    \[
    K \cdot p = c n^{1-\delta} \log n \cdot \frac{n^\delta}{n} = c \log n
    \]
    vertices incident to $n$ for a large constant $c$. This means with probability at least $1 - \Omega(\frac{1}{n^c})$, $n$ is incident to at most $2c \log n$ vertices in $n$. A union bound over $0 \leq i \leq \frac{n}{K}-1 = O(n^\delta)$ proves that $(a)$ is satisfied.
    
    For condition $(b)$ in \Cref{clm-fav graph2}, given vertex $v$, note that $\abs{V_{(v,v+tK]}} = tK$. Thus similarly there are 
    \[
    tK \cdot p = t \cdot c n^{1-\delta} \log n \cdot \frac{n^\delta}{n} = c t\log n
    \]
    vertices incident to $v$ for large constant $c$ in expectation. By a simple argument with Chernoff bound and union bound, we see that every vertex $v$ is incident to at least $2t$ vertices in $V_{(v,v+tK]}$.

    As both conditions hold with at least $1- \Omega(\frac{1}{n^{100}})$ by choosing proper constant $c$, we finish the proof by a union bound.

\end{proof}

Following lemma tells that if $G \sim \Gnp$ satisfies a favorable condition $B(C,f^t,b^t)$, then \Cref{alg-gnp prop2} works with high probability. The proof is deferred in \Cref{sec-proof2}.
\begin{restatable}{lemma}{lemfavtotreeB}\label{lem-fav2tree 2}
    For every favorable condition $B(C,f^t,b^t)$,
    \begin{equation*}
        \Pr_{G \sim G(n,p)}[H \text{ is a spanning tree} \mid G \text{ satisfies } B] \geq 1- \frac{1}{n^2},
    \end{equation*}
    where $H$ is the subgraph consists of all proposal edges determined by \Cref{alg-gnp prop2}.
\end{restatable}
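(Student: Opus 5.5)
We argue as in \Cref{lem-fav2tree 1}. We show that, conditioned on $G$ satisfying a favorable $B(C,f^t,b^t)$, every vertex is connected to $n$ in $H$ with probability $1-1/n^3$, and then take a union bound. The edge count is then automatic: vertex $n$ proposes nothing and every other vertex proposes exactly one edge, so $H$ has $n-1$ edges. A connected graph with $n-1$ edges is a spanning tree.

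\textbf{Step 1: restricted increasing paths stay inside the conditioned region.} Let $t=\Theta(\frac1q\log n)$. Sampling each of the $t$ smallest larger neighbors $f^t(v)$ independently with probability $q$ fails to hit any of them with probability $(1-q)^t\le n^{-10}$. So w.h.p., for every $v$, $\smallest{\Gamma_q^+(v)}$ lies in $f^t(v)$. By Item (2) of \Cref{def-fav cond2}, each step of $RIP_q(v)$ then increases the ID by at most $tK$. The same argument with $b^t$ and Item (3) shows that the last vertex $z$ of any $RIP_q$ has $z>n-tK$, and it proposes an edge to some $b\le tK$. Hence, exactly as in the proof of \Cref{lem-fav2tree 1}, if a bad vertex's path contains no good vertex, it is connected in $H$ to a small vertex $b\le tK$ (\Cref{obs-bad to good2}). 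It therefore suffices to prove the analogue of \Cref{clm-small vertex connected1}: every $v\le tK$ has a good vertex on $RIP_q(v)$.

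\textbf{Step 2: small vertices meet a good vertex.} For $v\le tK$, the path $RIP_q(v)$ has length at least $(n-2tK)/(tK)=\tilde\Omega(n^\delta q)=\tilde\Omega(\sqrt{n^{1-\delta}})$. Fix a prefix $R$ of such length, chosen so that $R$ avoids $C\cup\{n\}$; otherwise some vertex of $R$ is already good via \texttt{Phase 1}. We must show that some $w\in R$ is adjacent to a vertex of $S$, where $S$ is a $q$-sample of $C$ with $\E|S|=q n^\delta=\sqrt{n^{1-\delta}}\log^2 n$. Pairs $(w,s)$ with $w\in R$, $s\in S$ are undetermined by the baseline condition except when $s\in f^t(w)\cup b^t(w)$, or when $s$ lies below $\largest{f^t(w)}$, which forces a non-edge. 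The number of forced non-edges per $w$ is $O(tK)$ vertices of $C$ in an interval of length $tK$, and Item (1) bounds this by $O(t\log n)$. This is negligible compared to $|S|$. So for each $w$, at least $|S|/2$ of the pairs $(w,s)$ remain independent $p$-coins. The expected number of edges between $R$ and $S$ is then $\gtrsim |R|\cdot|S|\cdot p=\tilde\Omega(\sqrt{n^{1-\delta}}\cdot\sqrt{n^{1-\delta}}\cdot n^{\delta-1})=\Omega(\log^c n)$. A Chernoff bound gives at least one such edge with probability $1-n^{-10}$. That $w$ is good, so $v$ is connected to $n$.

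\textbf{Main obstacle: dependencies among the random sources.} The path $R$ is itself determined by the sampling coins and by $f^t$, while $S$ depends on $C$ and on the coins at $n$. The hardest step is certifying that the edges $R\times S$ still behave as independent $p$-coins after conditioning on $B$ and on the revealed path. My plan is to expose the randomness in a fixed order: first $B$, then the neighbor-sampling coins, which fix $R$ and $S$ inside the determined sets, and only then the undetermined pairs. I would then check via \Cref{obs-pair random2} that the pairs between $R$ and $C$ that $B$ determines are exactly the few listed above. I also expect some care with $R$ intersecting $C$; this only helps, since such vertices are good. Finally, take a union bound over $v\le tK$ and over the sampling failures in Step 1.
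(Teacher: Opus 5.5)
Your proposal follows the paper's argument: bad vertices $w>tK$ are reduced to vertices $\le tK$ via $f^t$, $b^t$ and Items (2)--(3) of \Cref{def-fav cond2}. Then $RIP_q(v)$ has length $\ell\gtrsim n/(tK)=\Theta(\sqrt{n^{1-\delta}})$, and there are $\Theta(\log^2 n)$ expected edges between the path and $S$ after discarding pairs fixed by $B$. Your explicit exposure order (first $B$, then the sampling coins, then the undetermined pairs) justifies the independence more carefully than the paper, which only asserts it. For this order to be literally valid, index the sampling coins by vertex pairs rather than by adjacency-list positions. Position-indexed coins make $\Gamma_q(w)$ depend on the undetermined part of $w$'s neighborhood. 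Switching to pair-indexed coins does not change the joint law of $G$, the sets $\Gamma_q(\cdot)$ and $S$.

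One step fails as written, however. You bound the forced non-edges of $w\in R$ by the $O(t\log n)$ vertices of $C$ in the interval $(w,\largest{f^t(w)})$, and call this negligible compared to $|S|$. But $t\log n=\Theta(\log^2 n/q)=\Theta(n^\delta/\sqrt{n^{1-\delta}})$, while $|S|\approx q n^\delta=\sqrt{n^{1-\delta}}\log^2 n$. The ratio is $\Theta(n^{2\delta-1}/\log^2 n)$, which diverges for $\delta>1/2$, so in that range the bound $O(t\log n)$ says nothing about how many pairs $(w,s)$ stay free.

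The missing step, which the paper's proof supplies, uses that $S$ is a $q$-sample of $C$ drawn independently of $B$. The number of vertices of $S$ in that interval then has mean $O(q\cdot t\log n)=O(\log^2 n)$. By Chernoff and a union bound over $w$, it is $O(\log^2 n)\ll|S|$ for all $w$ simultaneously. This gives your ``at least $|S|/2$ free pairs''.

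Separately, you (like the paper) use $|C|=\Theta(n^\delta)$ to size $S$, but \Cref{def-fav cond2} only bounds $C$ from above. For example, $C=\emptyset$ is favorable, and then $n$ is isolated in $H$, so the lemma fails for that condition. A lower bound such as $|C|\ge n^\delta/2$ should be added to the favorable condition. This costs nothing, since it holds with high probability.
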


Now we prove the main theorem of this section.
\begin{theorem}
    For $np = n^\delta$ and $1 >\delta > \frac{1}{3}$, let $H$ be the subgraph consists of all proposal edges determined by \Cref{alg-gnp prop2} on $G \sim \Gnp$. Then $H$ is a spanning tree with high probability.
\end{theorem}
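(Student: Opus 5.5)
The proof will repeat the argument used for \Cref{thm-lca1 is tree} verbatim, now with the baseline conditions $B(C,f^t,b^t)$ of \Cref{def-bc2}. The plan is to split the probability that $H$ is a spanning tree according to which baseline condition $G$ satisfies. Then bound the conditional probability on favorable conditions using \Cref{lem-fav2tree 2}. Finally, bound the total mass of favorable conditions using \Cref{lem-fav graph 2}.

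\textbf{Step 1 (disjoint partition).} Every graph $G$ satisfies exactly one baseline condition. Existence holds because $C$, $f^t$ and $b^t$ are read off deterministically from $G$: $C$ is the neighborhood of $n$, and $f^t(v)$, $b^t(v)$ are the $t$ smallest larger neighbors, respectively the $t$ smallest neighbors, or $\bot$. Uniqueness is \Cref{lem-unq cond2}. Hence, by the law of total probability,
\[
\Pr_{G\sim\Gnp}[H \text{ is a spanning tree}] = \sum_{B(C,f^t,b^t)} \Pr[H \text{ is a spanning tree}\mid G \text{ satisfies } B]\cdot \Pr[G \text{ satisfies } B].
\]
Here the inner probability is also over the shared random tape used for neighbor sampling. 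This causes no issue, since the tape is independent of $G$ and \Cref{lem-fav2tree 2} is already stated over that joint randomness.

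\textbf{Step 2 (restrict to favorable conditions).} All terms are nonnegative, so dropping the non-favorable ones gives a lower bound. For each favorable $B$, \Cref{lem-fav2tree 2} bounds the conditional factor below by $1-\frac{1}{n^2}$. Factoring this out leaves $(1-\frac1{n^2})\sum_{\text{fav. }B}\Pr[G \text{ satisfies } B]$. By \Cref{lem-fav graph 2}, this is at least $(1-\frac1{n^2})^2 \ge 1-\frac{2}{n^2}\ge 1-\frac1n$.

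\textbf{Main obstacle.} There is essentially none at this level: all the probabilistic content sits in \Cref{lem-fav2tree 2}, which is assumed. The only point needing care is that the events ``$G$ satisfies $B$'' must be measurable with respect to $G$ alone and must exhaust the space. Exhaustion holds because a graph in which some vertex has fewer than $t$ neighbors of the relevant type still satisfies a baseline condition, one with $\bot$ values. Such a condition is simply non-favorable and is discarded in Step 2. Combined with the probe bound $O(\sqrt{n^{1-\delta}}\log^2 n)$ noted after \Cref{alg-gnp prop2}, and with consistency from the shared random tape, this yields \Cref{thm:avg lca 2}.
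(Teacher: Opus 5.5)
Your proposal is correct and matches the paper's proof exactly. You decompose $\Pr[H \text{ is a spanning tree}]$ over the baseline conditions $B(C,f^t,b^t)$, which are disjoint by \Cref{lem-unq cond2}, drop the non-favorable terms, and apply \Cref{lem-fav2tree 2} and \Cref{lem-fav graph 2} to obtain $(1-\frac{1}{n^2})^2 \ge 1-\frac{1}{n}$; your added remarks on exhaustiveness of the baseline conditions and on the independent neighbor-sampling tape only make explicit what the paper leaves implicit.
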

\begin{proof}
    We have
    \begin{align*}
        \quad &\Pr_{G \sim G(n,p)}[\text{$H$ is a spanning tree}] \\
        &= \sum_{B(f,g)} \Pr[\text{$H$ is a spanning tree} \mid \text{$G$ satisfies $B$}] \cdot \Pr[\text{$G$ satisfies $B$}] \tag{\Cref{lem-unq cond2}}\\
        &\geq \sum_{\text{fav. $B(f,g)$}} \Pr[\text{$H$ is a spanning tree} \mid \text{$G$ satisfies $B$}] \cdot \Pr[\text{$G$ satisfies $B$}] \\
        &\geq (1-\frac{1}{n^2}) \sum_{\text{fav. $B(f,g)$}} \Pr[\text{$G$ satisfies $B$}] \tag{\Cref{lem-fav2tree 2}}\\ 
        &\geq (1-\frac{1}{n^2})(1-\frac{1}{n^2}) \tag{\Cref{lem-fav graph 2}}\\
        &\geq 1-\frac{1}{n}.
    \end{align*}
\end{proof}

Using consistent random bits over all possible queries, this implies that \Cref{alg-gnp prop2} is an average-case LCA for spanning tree, and this finishes the proof of \Cref{thm:avg lca 2}.

\subsubsection{Proof of \Cref{lem-fav2tree 2}}\label{sec-proof2}
In this section, we prove \Cref{lem-fav2tree 2} using tools we developed in last section.

\lemfavtotreeB*
\begin{proof}
We prove by showing that every vertex is connected to vertex $ n $ in the subgraph $ H $. Note that every vertex $ v \in V \setminus \{n\} $ falls into one of the following three cases:
\begin{enumerate}
    \item $ v $ is incident to $ n $;
    \item $ v $ is not incident to $ n $, but is incident to a vertex in $ S $;
    \item $ v $ is not incident to either $ n $ or any vertex in $ S $.
\end{enumerate}

If $ v $ belongs to the first or second case, then by \Cref{alg-gnp prop2}, it is easy to see that $ v $ is connected to $ n $ in $ H $. We refer to such vertices as \emph{good}. For vertices of the third case (i.e., \emph{bad} vertices), we rely on the following claim, whose proof is deferred to later in this section.

\begin{claim}\label{clm-small vertex connected2}
With high probability, every vertex $ v \leq tK $ is connected to $ n $ in the subgraph $ H $.
\end{claim}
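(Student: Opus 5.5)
We need to prove Claim \ref{clm-small vertex connected2}: w.h.p., every vertex $v \le tK$ is connected to $n$ in $H$. The plan mirrors the proof of Claim \ref{clm-small vertex connected1}. Follow the restricted increasing path $RIP_q(v)$ and show that it hits a good vertex w.h.p. Then \Cref{obs-bad to good2} connects $v$ to $n$.

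\textbf{Step 1: the path is long.} We first argue that $RIP_q(v)$ is long. Work conditioned on a favorable $B(C,f^t,b^t)$. With $t=\Theta(\frac1q\log n)$, among the $t$ smallest larger neighbors $f^t(u)$ of any $u\le n-tK$, at least one is sampled into $\Gamma_q(u)$ with probability $1-(1-q)^t\ge 1-n^{-c}$. Consequently $\smallest{\Gamma_q^+(u)}\in f^t(u)$, so $\smallest{\Gamma_q^+(u)}\le u+tK$. A union bound over all $u$ gives that each step of $RIP_q(v)$ advances the ID by at most $tK$, until the path passes $n-tK$. Hence $RIP_q(v)$ has $\Omega(\frac{n}{tK})=\Omega(n^{\delta}q/\log^2 n)$ distinct vertices, where we use $K=cn^{1-\delta}\log n$. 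Plugging in $q=\sqrt{n^{1-\delta}}\log^2n/n^\delta$, the length is $L=\tilde\Omega(\sqrt{n^{1-\delta}})$.

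\textbf{Step 2: each path vertex is good with a fixed probability.} Fix a vertex $w$ on the path. It is good if it is adjacent to $n$ or to some vertex of $S$, where $|S|\approx q\,n^{\delta}=\sqrt{n^{1-\delta}}\log^2 n$. The path is determined only by the pairs fixed by $f^t$ and by the sampling coins. The pairs $(w,n)$ and $(w,s)$ for $s\in S$ are mostly undetermined; the exceptions are $(w,n)$ itself and the few pairs falling inside $f^t$ windows. To handle these, I would discard the $O(t)$ vertices of $S$ lying in windows $[w,w+tK]$, using Item (1) of \Cref{def-fav cond2} to bound how many neighbors of $n$ fall in any interval. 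After discarding, $w$ is adjacent to $S$ with probability $\approx 1-(1-p)^{|S|}\approx p|S|$.

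\textbf{Step 3: combine over the path.} The probability that no vertex on $RIP_q(v)$ is good is then at most $(1-p|S|/2)^{L}\le \exp(-pL|S|/2)$. Here
\[
pL|S|=\tilde\Omega\!\left(\frac{n^{\delta}}{n}\cdot n^{1-\delta}\right)=\tilde\Omega(1),
\]
with enough logarithmic factors to reach $\Omega(\log n)$ by choosing the constants in $q$ and $t$ suitably. Since the anchoring edges for distinct path vertices involve disjoint pairs $(w,s)$, their events are independent given $B$ and the coins. A union bound over all $v\le tK$ finishes the proof.

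\textbf{Main obstacle.} The hardest step is the independence bookkeeping. The path itself depends on edges revealed by the sampling, and $S$ depends on the neighbors of $n$. I would handle this by revealing in order: first $C$ and $f^t$ (fixed by $B$), then the coins for $S$ and for the path vertices, and only then the still-random pairs $(w,s)$, arguing via \Cref{obs-pair random2} that these pairs are untouched. I also need to verify that the logarithmic factors actually give $pL|S|\ge c\log n$. If they fall short, I would enlarge $q$ by a $\log n$ factor.
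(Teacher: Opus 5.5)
Your proposal follows the paper's proof essentially step for step: $RIP_q(v)$ has length $\ell\ge n/(tK)-1=\Theta(\sqrt{n^{1-\delta}})$ because of the $f^t$-windows, $B$ leaves most pairs between a path vertex and $S$ random, and the resulting $\Theta(n^{1-\delta}\log^2 n)$ free pairs each appear with probability $p$, so $pL|S|=\Theta(\log^2 n)$ and no extra $\log$ factor in $q$ is needed. The one quantity to fix is the size of the discarded set: Item (1) of \Cref{def-fav cond2} only bounds $|C\cap[w,w+tK]|$ by $O(t\log n)$, and neither this nor your ``$O(t)$'' is $o(|S|)$ once $\delta>1/2$ (since $t/|S|=\Theta(n^{2\delta-1}/\log^3 n)$), so you must also use that $S$ is an independent $q$-sample of the fixed set $C$ to get $|S\cap[w,w+tK]|=O(\log^2 n)$ w.h.p.---which is exactly the paper's bound $|\Det(u)\cap S|\le |S|/2$.
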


For bad vertices $ w > tK $, \Cref{alg-gnp prop2} samples neighbors of $ w $ to obtain the set $ \Gamma_q(w) $, and retains the incident edge to $ \smallest{\Gamma^+_q(w)} $, which lies on the restricted increasing path $ RIP(w) $ of $ w $. If at least one good vertex appears on $ RIP(w) $, then $ w $ is already connected to $ n $. Otherwise, we show that $w$ is connected to a small vertex with ID at most $tK$.

By construction, for such vertex $w$, every edge on $ RIP(w) $ is retained in $ H $. This yields the following observation.

\begin{observation}
For every bad vertex $ w $, either $ w $ is connected to $ n $ in $ H $ through a good vertex on $ RIP(w) $, or every edge on $ RIP(w) $ is retained in $ H $.
\end{observation}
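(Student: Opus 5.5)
The observation follows directly from the case structure of \Cref{alg-gnp prop2}, so I will prove it by walking along $RIP(w)$ and splitting on whether a good vertex occurs on it. Write $RIP(w) = (w = v_1, v_2, \dots, v_\ell)$, where $v_{i+1} = \smallest{\Gamma_q^+(v_i)}$ and $\Gamma_q^+(v_\ell) = \emptyset$. The sampled sets $\Gamma_q(\cdot)$ are fixed by the shared random tape $R$, so $RIP(w)$ is a well-defined path that does not depend on which query triggered the computation.

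\textbf{Case 1: some $v_j$ is good.} Let $j$ be the smallest index with $v_j$ good. For every $i<j$, the vertex $v_i$ is bad, meaning it is incident neither to $n$ nor to $S$. Hence the checks in \texttt{Phase 1} and \texttt{Phase 2} of \Cref{alg-gnp prop2} both fail for $v_i$, and the algorithm reaches \texttt{Phase 3}. There, since $\Gamma_q^+(v_i) \ni v_{i+1}$ is nonempty, $v_i$ proposes exactly $(v_i, v_{i+1})$. So the whole subpath $(v_1,\dots,v_j)$ lies in $H$. By \Cref{obs-good vertex2}, $v_j$ is connected to $n$ in $H$ (it proposes $(v_j,n)$ or an edge to some $s\in S$, which in turn proposes $(s,n)$). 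Therefore $w$ is connected to $n$. This is exactly the argument of \Cref{obs-bad to good}, now with $RIP_q$ in place of $IP$.

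\textbf{Case 2: no $v_i$ is good.} Then every $v_i$ with $i<\ell$ again reaches \texttt{Phase 3} with $\Gamma_q^+(v_i)\neq\emptyset$ and proposes $(v_i,v_{i+1})$. Consequently every edge of $RIP(w)$ is retained in $H$.

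I do not expect a real obstacle here. The only point needing care is consistency: the sample $\Gamma_q(v_i)$ used when $v_i$'s proposal is computed must be the same sample that defines $RIP(w)$. This holds because neighbor sampling reads the fixed tape $R$ indexed by $(v_i, i)$, as stated after the algorithm. One should also note that a vertex in $S$ is adjacent to $n$, so it is good. Hence \texttt{Phase 2} never acts on a vertex that the case analysis treats as bad.
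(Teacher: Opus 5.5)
Your proof is correct and takes the same route as the paper. You split on whether $RIP(w)$ contains a good vertex, observe that every bad vertex on the path falls through \texttt{Phase 1} and \texttt{Phase 2} and proposes the next edge of $RIP(w)$ in \texttt{Phase 3}, and in the first case reuse the argument that a good vertex on the path connects $w$ to $n$; this simply spells out the ``by construction'' step the paper leaves implicit.
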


Recall that the input graph $ G $ satisfies the favorable condition $ B(C, f^t, b^t) $, which ensures that the last vertex $ z $ on $ RIP(w) $ satisfies $ z > n - tK $ with probability at least $ 1 - \frac{1}{n^{100}} $ over randomness of neighbor sampling. Indeed, by \Cref{def-fav cond2}, if $ z \leq n - tK $, then $ \Gamma^+(z) $ must contain at least $ t $ vertices. Since each of these appears in $ \Gamma^+_q(z) $ independently with probability $ q $, and $ t = \Theta\left( \frac{1}{q} \log n \right) $, we have $ \Gamma^+_q(z) = \emptyset $ only with very small probability. Therefore, with high probability, $ z > n - tK $.

Moreover, since $ z $ is the last vertex on $ RIP(w) $, we must have $ \Gamma^+_q(z) = \emptyset $, and thus its proposal edge is $ \smallest{\Gamma_q(z)} $. With high probability, we have $ \smallest{\Gamma_q(z)} \in b^t(z) $, since $ b^t(z) $ contains the $ t = \Theta\left( \frac{1}{q} \log n \right) $ smallest neighbors of $ z $, each sampled independently with probability $ q $.

Because $ \largest{b^t(z)} \leq tK $ under the favorable condition, vertex $ z $ is connected to a vertex $ x := \largest{b^t(z)} $ with ID at most $ tK $. Since $ w $ is connected to $ z $, it is also connected to $ x $. By \Cref{clm-small vertex connected2}, $ x $ is connected to $ n $ in $ H $ with high probability. Therefore, $ w $ is also connected to $ n $ in $ H $.

Applying a union bound over \Cref{clm-small vertex connected2} and all the high-probability events described above, we conclude that every vertex is connected to $ n $ in $ H $ with probability at least $ 1 - \frac{1}{n^2} $. Additionally, since there are exactly $ n - 1 $ proposal edges, the subgraph $ H $ contains $ n - 1 $ edges and is therefore a spanning tree.

\end{proof}

Now we prove \Cref{clm-small vertex connected2} to finish this section.
\begin{proof}[Proof of \Cref{clm-small vertex connected2}]
    Conditioned on a favorable condition $B(C,f^t,b^t)$, $G$ must be compatible with $C$, $f^t$ and $b^t$ satisfying \Cref{def-fav cond2}. Our proof follows two steps: $\texttt{(1)}$ show that the restricted increasing path starting at $v$ is long enough; and $\texttt{(2)}$ show that w.h.p. there is an edge between vertex on the path and core set $S$.
    
    \paragraph{On the Length of Restricted Increasing Paths from $v$.}
    For bad vertices $v$, \Cref{alg-gnp prop2} would sample neighbors of $v$ to obtain the set $\Gamma_q(v)$ and maintain incident edge to $\smallest{\Gamma^+_q(v)}$, which is the edge on restricted increasing path for $v$. We first show that the restricted increasing path $P := RIP(v) = (v_1=v,v_2,\cdots,v_\ell)$ of $v$ satisfies: 
    \begin{proposition}\label{prop-rip}
        For every $1 \leq i \leq l-1$, $v_{i+1} \in f^t(v_i)$; and specifically, $f^t(v_\ell) = \bot$.
    \end{proposition}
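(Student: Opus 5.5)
The plan is to prove \Cref{prop-rip} with high probability over the neighbor-sampling tape $R$, conditioned on the fixed favorable condition $B(C,f^t,b^t)$. Throughout I read ``$v_{i+1}\in f^t(v_i)$'' as ``$v_{i+1}$ is among the $t$ smallest neighbors of $v_i$ with larger ID''. When $f^t(v_i)\neq\bot$ this set is exactly $f^t(v_i)$. When $f^t(v_i)=\bot$, vertex $v_i$ has fewer than $t$ larger neighbors, so every larger neighbor qualifies.

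The key device is to avoid reasoning about the random path directly and instead fix a vertex-indexed family of events first. For every vertex $w$ with $f^t(w)\neq\bot$, let $\mathcal{E}_w$ be the event that at least one vertex of $f^t(w)$ is sampled into $\Gamma_q(w)$. By \Cref{obs-pair random2}, the set $f^t(w)$ is deterministic given $B$: these are $t$ genuine neighbors of $w$. The sampling coins are independent of the graph. Hence
\[
\Pr[\neg\mathcal{E}_w]=(1-q)^t\le e^{-qt}\le n^{-c'},
\]
where $t=\Theta(\frac1q\log n)$ and $c'$ can be made large by the choice of the constant in $t$. A union bound over all $n$ vertices then shows that with probability $1-n^{-c'+1}$ every $\mathcal{E}_w$ holds simultaneously. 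Because this is a statement about all vertices at once, it applies to whatever vertices the random path $RIP(v)$ happens to visit. This sidesteps the dependence between the path and the coins.

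On this event both parts follow deterministically.

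\emph{Transitions.} Take $1\le i\le \ell-1$, so that $v_{i+1}=\smallest{\Gamma_q^+(v_i)}$. If $f^t(v_i)\neq\bot$, then $\mathcal{E}_{v_i}$ gives some $u\in f^t(v_i)\cap\Gamma_q^+(v_i)$. Every larger neighbor of $v_i$ outside $f^t(v_i)$ has ID exceeding $\largest{f^t(v_i)}\ge u$. Therefore the smallest sampled larger neighbor, namely $v_{i+1}$, lies in $f^t(v_i)$. If instead $f^t(v_i)=\bot$, then $v_{i+1}\in\Gamma^+(v_i)$ is trivially one of the fewer than $t$ larger neighbors.

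\emph{Endpoint.} For the last vertex we have $\Gamma_q^+(v_\ell)=\emptyset$, since otherwise the path would extend. If $f^t(v_\ell)\neq\bot$, then $\mathcal{E}_{v_\ell}$ would place a vertex of $f^t(v_\ell)$, all of which exceed $v_\ell$, into $\Gamma_q^+(v_\ell)$, which is a contradiction. Hence $f^t(v_\ell)=\bot$.

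Two side issues remain, and I expect the main obstacle to be bookkeeping rather than depth.

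First, the domain of $f^t$ excludes $C\cup\{n\}$. A path vertex in $C$ is good, so the path is only followed while its vertices are bad, and I will note that the relevant vertices lie in the domain. Vertex $n$ has no larger neighbor, so it can only appear as an endpoint.

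Second, I need to justify that the coins are independent of the unrevealed edges, so that conditioning on $B$ does not bias $\Gamma_q(w)\cap f^t(w)$. This holds because $R$ is fixed independently of $G$, and $f^t(w)$ is determined by $B$.

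Finally, by \Cref{def-fav cond2}(2), the endpoint statement $f^t(v_\ell)=\bot$ gives $v_\ell>n-tK$. This is the form used afterwards.
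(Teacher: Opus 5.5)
Your proof is correct and follows essentially the same route as the paper. Both arguments use a per-vertex high-probability event that some vertex of $f^t(w)$ is sampled into $\Gamma_q(w)$, take a union bound over all vertices, and then walk deterministically along $RIP(v)$.

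Your version is a bit more careful than the paper's, in three ways:
\begin{enumerate}
\item You define the events for every $w$ with $f^t(w)\neq\bot$, not only for $w\le n-tK$. This lets you actually derive $f^t(v_\ell)=\bot$, which the paper merely asserts.
\item You use the direct bound $(1-q)^t\le e^{-qt}$ in place of a Chernoff bound.
\item You explicitly handle intermediate path vertices with $f^t(v_i)=\bot$ and the domain of $f^t$. The paper passes over both silently.
\end{enumerate}
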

    \begin{proof}
        This is directly implied by following observation.
        \begin{observation}
            With high probability, for every vertex $v \leq n-tK$, $\smallest{\Gamma^+_q(v)} \in f^t(v)$. Here $\Gamma_q(v)$ is the neighbors of $v$ sampled in \Cref{alg-gnp prop2}.
        \end{observation}
        \begin{proof}
            Note that we set $t = \Theta(\frac{1}{q} \log n) = c' \frac{1}{q} \log n$, where $c'$ is a large constant. Furthermore, we have $f^t(v) \neq \bot$, because $v \leq n-tK$ and $f^t$ satisfies \Cref{def-fav cond2}. Hence, $|f^t(v)| = t$. And $q$ is the probability that any given neighbor of $v$ is sampled. 
            
            Thus in expectation, there are $|f^t(v)| \cdot q = c'\log n$ vertices in $f^t(v)$ are sampled, and these vertices belong to $\Gamma_q(v)$. By choosing a large constant $c'$ and using Chernoff bound, it is easy to see that w.h.p. for any vertex $v$, we have $\smallest{\Gamma^+_q(v)} \in f^t(v)$. A union bound over all vertices $\leq n-tK$ finishes the proof.
        \end{proof}
        Using the observation, we have
        \begin{align*}
            & v_2 = \smallest{\Gamma^+_q(v_1)} \in f^t(v_1); \\
            & v_3 = \smallest{\Gamma^+_q(v_2)} \in f^t(v_2); \\
            & \vdots \\
            & v_\ell = \smallest{\Gamma^+_q(v_{l-1})} \in f^t(v_{l-1});
        \end{align*}
        until the last vertex $v_\ell$ such that $f^t(v_\ell) = \bot$. Hence this proves our \Cref{prop-rip}.
    \end{proof}

    As $B(C,f^t,b^t)$ is a favorable condition, we have $\largest{f^t(v)} \leq v + tK$ hold for $v \leq n-tK$. Thus by \Cref{prop-rip}, we have
    \[
    \text{$v_{i+1} \leq \largest{f^t(v_i)} \leq v_i + tK$, for every $1 \leq i \leq \ell-1$.}
    \]
    Furthermore, note that $v_\ell > n - K$ because $f^t(w) \neq \bot$ for $w \leq n-tK$ under favorable condition. Putting together, we have $n - tK < v_\ell \leq ltK$, which means
    \[
    \ell \geq \frac{n}{tK}-1 = \Theta(\sqrt{n^{1-\delta}}).
    \]

    \paragraph{Probability of a Vertex Incident to $S$.}
    To analyze the probability that a vertex $ u $ is incident to at least one vertex in $ S $, we need to determine how many pairs between $ \{u\} $ and $ S $ remain random under the favorable condition $ B(C, f^t, b^t) $. Each such remaining random pair forms an edge independently with probability $ p $.

    Here, similar to \Cref{def-det1}, we define $\Det(u)$ by
    \[
    \Det(u):=\{w \mid \text{$X_{u,w}$ is determined by $B(C,f^t,b^t)$}\}.
    \]
    Here, $X_{u,w}$ is determined means that $B(C,f^t,b^t)$ forces the edge $(u,w)$ either to exist or not to exist. To see why we define this notion, see \Cref{para-det} for an intuitive explanation.
    
    Note that random variable $X_{u,w}$ is the indicator of $(u,w) \in E(G)$. Intuitively, $\Det(u)$ is a set of vertices $w$ s.t. whether $(u,w)$ is an edge is determined by baseline condition $B(C,f^t,b^t)$.
    
    We have following proposition.
    \begin{proposition}
        Conditioned on a favorable condition $B(C,f^t,b^t)$, with high probability over randomness of neighbor sampling, for every vertex $u \notin C\cup\{n\}$, $\abs{\Det(u) \cap S} \leq \frac{1}{2} |S|$.
    \end{proposition}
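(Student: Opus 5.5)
The plan is to show that, once we condition on a favorable $B(C,f^t,b^t)$, the set $\Det(u)$ meets $C$ in only a small number of vertices. We then compare this with $|S|$, which is a $q$-sample of $C$. The only randomness used is the neighbor sampling of vertex $n$, i.e.\ the $q$-biased coins that produce $S \subseteq C$. The input graph, and hence $C$, is fixed by the condition.

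\textbf{Step 1 (locating $\Det(u)$).} Fix $u \notin C \cup \{n\}$. I would list, using \Cref{obs-pair random2}, every vertex $w \neq n$ for which $X_{u,w}$ is determined. The pair $(u,n)$ is irrelevant because $n \notin S$. The remaining determined pairs come from four sources.
\begin{enumerate}
\item Item 2 with $v=u$: then $u < w \le \largest{f^t(u)} \le u+tK$.
\item Item 2 with $v=w$: then $w < u \le w+tK$.
\item Item 3 with $v=u$: then $f^t(u)=\bot$, so $u>n-tK$. Here $w$ is either above $u$, hence in $(n-tK,n]$, or $w < \largest{b^t(u)} \le tK$.
\item Item 3 with $v=w$: then $w>n-tK$, and either $u>w$ or $u \le tK$.
\end{enumerate}
In every case $w \in V_{[u-tK,\,u+tK]} \cup V_{[1,tK]} \cup V_{(n-tK,n]}$. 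This union is covered by $O(t)$ blocks of the form $V_{[iK+1,iK+K]}$. By Item (1) of \Cref{def-fav cond2}, each block contains at most $2c\log n$ vertices of $C$. Hence
\[
|\Det(u)\cap C| = O(t\log n).
\]

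\textbf{Step 2 (comparing sample sizes).} Since $S$ keeps each vertex of $C$ independently with probability $q$, we get
\[
\E[|\Det(u)\cap S|] \le q\cdot O(t\log n) = O(\log^2 n),
\]
using $t=\Theta(\frac1q\log n)$. A Chernoff bound then gives $|\Det(u)\cap S| = O(\log^2 n)$ with probability $1-n^{-100}$.

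On the other side, $\E|S| = q|C|$. I would use $|C| \ge n^\delta/2$. This holds with probability $1-e^{-\Omega(n^\delta)}$ over $G$, and it can be added as a fourth item of the favorable condition at negligible cost in \Cref{lem-fav graph 2}. With it, $\E|S| \ge \frac12\sqrt{n^{1-\delta}}\log^2 n$, and Chernoff gives $|S| \ge \frac14\sqrt{n^{1-\delta}}\log^2 n$ with high probability. Since $\delta<1$, this is $\omega(\log^2 n)$, so $|\Det(u)\cap S| \le |S|/2$. A union bound over all $u$ finishes the proof.

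\textbf{Main obstacle.} The delicate part is Step 1. The case analysis must be exhaustive: the wrap-around cases coming from the backward function $b^t$, and the implicit constraint ``at most $t-1$ larger neighbors'' in Item 3, must also land inside $O(t)$ blocks of length $K$. If some case escapes this window, the bound $O(t\log n)$ on $|\Det(u)\cap C|$ fails. A smaller point is that Item (1) only bounds $|C|$ from above, so the lower bound $|C| \ge n^\delta/2$ has to be supplied separately, as described in Step 2.
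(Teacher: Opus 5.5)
Your proposal is correct and follows essentially the same route as the paper. Both confine $\Det(u)$ to $O(tK)$ vertices (the window $[u-tK,u+tK]$ together with $[1,tK]$ and $(n-tK,n]$), use Item~(1) of the favorable condition to get $|\Det(u)\cap C| = O(t\log n)$ and hence $\E|\Det(u)\cap S| = O(\log^2 n)$ against $|S|\approx q|C| = \omega(\log^2 n)$, and finish with Chernoff and a union bound; your explicit lower bound $|C|\ge n^\delta/2$, added as an extra favorable item, is a worthwhile repair, since the paper simply writes $\E[|S|]=\Theta(n^\delta)\cdot q$ although the favorable condition only bounds $C$ from above.
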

    \begin{proof}
        As $u \notin C\cup\{n\}$ and $B$ is a favorable condition, we have:
        \begin{itemize}
            \item Since $\largest{f^t(u)} \leq u+tK$, if $(u,w)$ is determined by $f^t$ in \Cref{obs-pair random2} (Item $2$), then there must be $u-tK \leq w \leq u + tK$.
            \item Since $b^t(u) \neq \bot$ only when $u > n-tK$ and $\largest{b^t(u)} \leq tK$, if $(u,w)$ is determined by $b^t$ in \Cref{obs-pair random2} (Item $3$), then there must be $w \in [n-tK,n] \cup [1,tK]$.
        \end{itemize}
        Together, we see that $\Det(u) = \Theta(tK) = \Theta(\frac{n}{\sqrt{n^{1-\delta}}})$.

        To compute $\Det(u) \cap S$, we first compute $\Det(u) \cap C$. Note that $C$ is the neighbors of $n$, fully determined by condition $B$; and $S \subseteq C$ is a set of uniform samples of $C$.

        Due to Item $(1)$ in \cref{def-fav cond2}, for every integer $0 \leq i \leq \frac{n}{K}-1$, the interval $[iK+1,iK+K]$ contains at most $2c\log n$ vertices in $C$. That is to say, every continuous length-$K$ interval contains at most $2c\log n$ vertices in $C$. Thus there are at most 
        \[
        \frac{|\Det(u)|}{K} \cdot 2c\log n = \Theta(\frac{n^\delta}{\sqrt{n^{1-\delta}}})
        \]
        vertices appeared in $\Det(u) \cap C$. Consider that every vertex in $\Det(u) \cap C$ is retained in $\Det(u) \cap S$ with independent probability $q = \frac{\sqrt{n^{1-\delta}}}{n^\delta} \log^2 n$, we have
        \[
        \E[\abs{\Det(u) \cap S}] = \Theta(\frac{n^\delta}{\sqrt{n^{1-\delta}}}) \cdot q = \Theta(\log^2 n).
        \]
        Note that $\E[|S|] = \Theta(n^\delta) \cdot q = \sqrt{n^{1-\delta}} \log^2 n \gg \log^2 n$, as $\delta < 1$. Thus a Chernoff bound for $\abs{\Det(u) \cap S}$ together with a union bound over all vertices finish the proof.
    \end{proof}
    This proposition tells that, for every vertex bad $u$, at least half of pairs between $u$ and $S$ still remain independently random.

    \paragraph{Existence of an Edge Between $P$ and $S$.}
    There are at least $l = \Theta(\sqrt{n^{1-\delta}})$ vertices on $P$ as we discussed before. Hence together, the total number of pairs between $P$ and $S$ that still remain random is at least
    \[
    \frac{1}{2} \cdot \Theta(\sqrt{n^{1-\delta}}) \cdot \sqrt{n^{1-\delta}} \log^2 n = \Theta(n^{1-\delta}\log^2 n).
    \]
    Further more, each of these pairs form an edge between $P$ and $S$ with probability $p = \frac{n^\delta}{n}$. Thus in expectation, there are $\Theta(n^{1-\delta}\log^2 n) \cdot p = \Theta(\log^2 n)$ such edges. Thus it is easy to see that w.p. $\geq 1 - \frac{1}{n^{100}}$ there is at least one vertex $v_j$ on $P$ incident to set $S$. Thus a path from $v$ to $v_j$ through edges on $P$, together with the edge from $v_j$ to $S$, show that $v$ is connected to $n$ in $H$. Above argument holds for any $v \leq tK$, thus a union bound over $tK$ vertices finishes the proof.
\end{proof}

\section{LCA for Minimum Spanning Tree}\label{sec:mst}

We restate the main result of this section for convenience.
\lcaMST*

Throughout this section, we assume that $G=(V,E,\w)$ is a weighted graph satisfying the assumptions of \cref{thm:lca mst}. 
We present an LCA for the minimum spanning tree of $G$ and prove \cref{thm:lca mst}.
We begin by introducing notation and auxiliary tools used in the analysis in \cref{sec:mstnotation}.
We then illustrate the main ideas in the simplified setting $W=2$ in \cref{subsec-W=2}, before completing the proof of \cref{thm:lca mst} for general $W$ in \cref{sec:mstgeneralcase}.

\subsection{Notation and Tools}\label{sec:mstnotation}

We will make use the following structural properties of  random subgraphs of expander graphs from percolation theory. 

\paragraph{Random subgraph of $d$-regular expander.}
For base graph $G$ being an $(n,d,\lambda)$-graph with $d \to \infty$ and $\lambda = o(d)$, let $G_p$ be a random subgraph of $G$ formed by retaining each edge of $G$ independently with probability $p$. The behavior of such random graphs has been studied (\cite{frieze2004emergence, diskin2024expansion}). In particular, there is a threshold of $p$ for the emergence of a giant component in $G_p$.

\begin{lemma}[\cite{frieze2004emergence, diskin2024expansion}]\label{lem-gcc}
Let $G$ be an $(n,d,\lambda)$-graph with $d \to \infty$ and $\lambda = o(d)$, and form $G_p$ by retaining each edge of $G$ independently with probability $p$. 
For any given constant $\varepsilon> 0$, we have:
\begin{itemize}
    \item (subcritical regime) If $p \leq \frac{1-\varepsilon}{d}$, then with probability\footnote{The original statements only claim that the probability tends to $1$ as $n \to \infty$, but the bound $1 - O(1/\log n)$ can be read from their proofs.} at least $1-O(\frac{1}{\log n})$, all connected components of $G_p$ have size $O(\log n)$.
    \item (supercritical regime) If $p \geq \frac{1+\varepsilon}{d}$, then with probability at least $1-O(\frac{1}{\log n})$, $G_p$ contains a unique giant component $L$ of size $\Theta(n)$, while all remaining components (if any) have size $O(\log n)$.
\end{itemize}
\end{lemma}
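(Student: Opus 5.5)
Since this lemma is imported from \cite{frieze2004emergence, diskin2024expansion}, the plan is to reconstruct the standard percolation argument. Both regimes rest on the \emph{exploration process} (BFS) of a component of $G_p$ from a fixed vertex $v$. At each step we take an active vertex and expose its at most $d$ unexposed $G$-edges, each kept independently with probability $p$. In the subcritical case, the number of newly discovered vertices per step is stochastically dominated by $\mathrm{Bin}(d,p)$, whose mean is at most $1-\varepsilon$. So $|C(v)|$ is dominated by the total progeny of a Galton--Watson process with offspring $\mathrm{Bin}(d,p)$. A Chernoff bound on the random walk $S_k=1+\sum_{i\le k}(\xi_i-1)$ gives $\Pr[|C(v)|>k]\le \Pr[S_k>0]\le e^{-c_\varepsilon k}$. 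Taking $k=C_\varepsilon\log n$ and a union bound over the $n$ vertices yields the first item, with failure probability even $o(1/\log n)$.

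For the supercritical regime, I would carry out three steps. First, \emph{local growth}. While fewer than $\eta n$ vertices have been explored, for a small $\eta=\eta(\varepsilon)$, each active vertex has at least $d-o(d)$ unexplored $G$-edges for all but a small fraction of steps. The expander mixing lemma $e(A,B)\le d|A||B|/n+\lambda\sqrt{|A||B|}$ with $\lambda=o(d)$ bounds the edges spent inside the explored set. Hence the offspring dominate $\mathrm{Bin}((1-\varepsilon/3)d,p)$, which has mean at least $1+\varepsilon/3$. By comparison with a supercritical branching process, each vertex independently has probability $\Omega_\varepsilon(1)$ of having its exploration reach $C\log n$ vertices. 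Moreover, with probability $1-n^{-\omega(1)}$ any exploration that reaches $C\log n$ vertices keeps going until it reaches $\eta n$ vertices. Second, \emph{no middle-sized components}. I claim w.h.p. there is no component with size in $[C\log n,\eta n]$. By the first step, such a component would be a BFS that stalled at a size where its walk $S_k$ is exponentially unlikely to hit $0$; summing over start vertices and sizes gives a $o(1)$ bound. Third, \emph{a giant, unique component}. Linearity of expectation and a second-moment (or Azuma) bound on the number of vertices in large components shows that $\Theta(n)$ vertices lie in components of size at least $\eta n$.

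The main obstacle is proving uniqueness of the large component and obtaining the stated $1-O(1/\log n)$ probability. For uniqueness I would first run \emph{sprinkling}. Write $p=p_1+p_2-p_1p_2$ with $p_1=(1+\varepsilon/2)/d$ and $p_2=\varepsilon/(3d)$, and apply the first two steps to $G_{p_1}$. Suppose two of the resulting large components, with vertex sets $A$ and $B$, stayed disconnected after the second round. By Menger's theorem in $G$ combined with expansion, the vertex sets $A,B$ of size at least $\eta n$ are joined by $\Omega(\eta n d)$ edge-disjoint short paths in $G$, or more simply by many disjoint edges through a common neighbourhood: the expander mixing lemma gives $e(A,V\setminus A)=\Omega(d|A|)$ for $|A|\le n/2$. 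Since the number of ways to partition the at most $1/\eta$ large components into two sides is constant, a union bound shows that some sprinkled path connects them except with probability $e^{-\Omega(n/\log n)}$. The remaining delicate point is the failure probability. The only step that is not exponentially small is the concentration of the giant's size and the treatment of vertices of degree loss near explored regions, where a second-moment estimate yields failure $O(1/\log n)$. That is precisely the quantitative bound the footnote says can be extracted from the proofs in \cite{frieze2004emergence, diskin2024expansion}, and I would follow their computation there rather than attempt to sharpen it.
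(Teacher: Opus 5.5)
The paper gives no proof of this lemma; it imports it from the cited works. So I am comparing your proposal against the standard argument. Your subcritical part and your ``no middle-sized components'' step are sound.

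\textbf{The gap: uniqueness.} Your sprinkling step only shows that the large components of $G_{p_1}$ all end up inside one component $L$ of $G_p$. It does not exclude a second component of $G_p$ of size $\ge \eta n$ built entirely from \emph{small} components of $G_{p_1}$, glued together by sprinkled edges. Your union bound ranges only over bipartitions of the large components of $G_{p_1}$, so it never sees such a component. With $p_2=\varepsilon/(3d)$ this cannot be dismissed: there are $\Theta(\varepsilon n)$ sprinkled edges, and each can attach up to $O(\log n)$ previously small-component vertices to a large component. So a priori a linear number of new vertices in large components can appear.

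\textbf{The standard repair: sprinkle very lightly.} Take $p_2=1/(d\sqrt n)$ and define $p_1$ by $(1-p_1)(1-p_2)=1-p$. This also fixes your choice of $p_1,p_2$, which does not reproduce a general $p$; now $p_1\ge(1+\varepsilon/2)/d$ for every $p\ge(1+\varepsilon)/d$.
\begin{enumerate}
\item[(i)] Let $A,B$ be disjoint unions of large $G_{p_1}$-components. The expander mixing lemma gives directly $e(A,B)\ge d|A||B|/n-\lambda\sqrt{|A||B|}\ge(\eta^2-o(1))dn$. Hence no sprinkled edge joins $A$ and $B$ with probability at most $e^{-\Omega(\eta^2\sqrt n)}$, and the union bound over at most $2^{1/\eta}$ bipartitions still works. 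No Menger argument or longer paths are needed. Note that $e(A,V\setminus A)$ is the wrong quantity here, since those edges may land in small components.
\item[(ii)] With high probability there are only $O(\sqrt n)$ sprinkled edges. Let $X(H)$ be the number of vertices lying in components of size $\ge C\log n$ in $H$. A vertex counted in $X(G_p)$ but not in $X(G_{p_1})$ lies in a small $G_{p_1}$-component that must touch a sprinkled edge. Hence $X(G_p)\le X(G_{p_1})+O(\sqrt n\log n)$.
\end{enumerate}
Now $|L|\ge X(G_{p_1})$. Applying the no-middle-sized property to $G_p$ itself, any other component of $G_p$ of size $\ge C\log n$ has size $\ge \eta n$. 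A second such component would therefore force $X(G_p)\ge X(G_{p_1})+\eta n$, a contradiction. Together with $X(G_{p_1})=\Omega(n)$ (a routine second-moment concentration), this gives the unique giant. You flag uniqueness as the main obstacle but then defer to the cited proofs; this is the concrete idea that was missing.

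\textbf{A smaller point in your local-growth step.} Per-step domination by $\mathrm{Bin}((1-\varepsilon/3)d,p)$ is false as stated: a single active vertex may send most of its edges back into the explored set. Also, survival events for different vertices are not independent. What the mixing lemma actually provides is an aggregate bound. If the exploration from $v$ dies after processing a set $P_t$ of $t\le\eta n$ vertices, then at least $e(P_t,V\setminus P_t)\ge dt-dt^2/n-\lambda t$ edges were exposed, with exactly $t-1$ successes. Couple all exposures with one i.i.d.\ Bernoulli$(p)$ sequence. Then dying at step $t$ forces at most $t-1$ successes among the first $\lceil(1-\eta-o(1))dt\rceil$ coins. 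A walk with drift $\Omega(\varepsilon)$ avoids this for all $t$ with probability $\Omega_\varepsilon(1)$. This gives $\Pr[|C(v)|\ge \eta n]=\Omega_\varepsilon(1)$ for every $v$, and hence $\mathbb{E}X=\Omega(n)$.
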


Note that in the original result of \cite{diskin2024expansion}, it is assumed that $\frac{\lambda}{d} \le \varepsilon^4$. Since we assume $\frac{\lambda}{d} = o(1)$ for simplicity, it suffices to take $\varepsilon>0$ to be an arbitrary constant, which merely excludes the regime where $p$ is extremely close to the critical threshold $1/d$. Beyond establishing the existence of a giant component $L$, Diskin et al.~\cite{diskin2024expansion} also study its internal expansion properties and the behavior of random walks on it.
In particular, they prove that random walks restricted to the giant component mix rapidly.

\begin{lemma}[\cite{diskin2024expansion}]\label{lem-gccmix}
Let $G$ be an $(n,d,\lambda)$-graph with $d \to \infty$ and $\lambda = o(d)$, and let $p \geq \frac{1+\varepsilon}{d}$ belong to supercritical regime in \cref{lem-gcc}.
Then on graph $G_p$, with probability at least $1-O(\frac{1}{\log n})$, the lazy random walk on the giant component $L$ has mixing time at most $O(\log^2 n)$.
\end{lemma}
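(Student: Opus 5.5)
We plan to derive the mixing bound from the Lov\'asz--Kannan \emph{average conductance} (or Fountoulakis--Reed) bound, rather than from plain Cheeger applied to $L$. The reason is that $L$ typically contains pendant trees and paths of length $\Theta(\log n)$ attached to its core. Its worst-case conductance is therefore only $\Theta(1/\log n)$, and Cheeger would give a mixing time of only $O(\log^3 n)$. Write $\Phi_L(x)$ for the minimum conductance of a connected set $S\subseteq V(L)$ with $\mu_L(S)\approx x$ and $\mu_L(S)\le \mu_L(L)/2$. The average conductance bound gives
\[
\tau_{\mathrm{mix}}(L)=O\!\left(\int_{\mu_{\min}}^{\mu_L(L)/2}\frac{dx}{x\,\Phi_L(x)^2}\right).
\]
It therefore suffices to prove two profile estimates that hold simultaneously with probability $1-O(1/\log n)$:
\[
\Phi_L(x)=\Omega(1/x)\ \text{ for } x\le C\log n, \qquad \Phi_L(x)=\Omega(1)\ \text{ for } x\ge C\log n .
\]
Plugging these in, the small scales contribute $\int_1^{C\log n} x\,dx=O(\log^2 n)$ and the large scales contribute $\int_{C\log n}^{n}\frac{dx}{x}=O(\log n)$, for a total of $O(\log^2 n)$.

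The small-scale estimate is immediate. Since $L$ is connected, every proper connected $S\subsetneq V(L)$ has at least one boundary edge, so $\phi_L(S)\ge 1/\mu_L(S)$.

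The large-scale estimate is the heart of the argument and the main obstacle. We will show that w.h.p.\ every connected set $S$ in $G_p$ with $C\log n\le |S|\le |L|/2$ has $\Omega(|S|)$ edges of $G_p$ leaving it into $L$. Since degrees in $G_p$ are $O(1)$ in expectation, volume and size are comparable up to constants; we first discard the $o(n)$ vertices of atypically large degree, or bound volume by $|S|$ plus the number of internal edges. The plan is a first-moment count.

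\begin{itemize}
\item The number of connected vertex sets of size $k$ in the $d$-regular base graph is at most $n(ed)^k$.
\item For a fixed such $S$ with $k\le n/2$, the expander mixing lemma with $\lambda=o(d)$ gives $e_G(S,V\setminus S)\ge (1-o(1))\,d k\,(1-k/n)$.
\item In $G_p$ with $p\ge(1+\varepsilon)/d$, the probability that $S$ spans a tree while sending fewer than $\eta k$ edges across is at most $(ed)^k$-compensable. Concretely, we need a bound of the form $(ed)^{-k}e^{-\Omega(k)}$. We will try to obtain it by a two-round exposure (sprinkling): first expose $G_{p_1}$ with $p_1=(1+\varepsilon/2)/d$ to get $L$ and the connectivity structure, then expose the remaining $p_2\approx \varepsilon/(2d)$. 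The $\Theta(dk)$ base edges leaving $S$ then each survive with probability $\Theta(1/d)$, so a Chernoff bound gives $\Omega(k)$ crossing edges except with probability $e^{-\Omega(k)}$.
\end{itemize}

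The delicate point is balancing the $(ed)^k$ entropy against this $e^{-\Omega(k)}$. We will handle it as Diskin et al.\ do. Instead of union-bounding over all connected $S$, we union-bound only over sets whose $G_p$-induced structure is a connected subgraph. The probability that a given $k$-vertex tree is present is $p^{k-1}\approx ((1+\varepsilon)/d)^{k-1}$, which cancels the $d^{k}$ factor and leaves a $(e(1+\varepsilon))^k$ term. This term is then beaten by requiring the crossing count to deviate by a constant factor from its mean $\Theta(\varepsilon k)$, which costs $e^{-c_\varepsilon k}$. Choosing $C=C(\varepsilon)$ large makes the sum over $k\ge C\log n$ equal to $o(1)$. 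We expect that tuning this exponent, i.e.\ showing that the ``bad crossing'' probability is small enough relative to the tree-count entropy, is where most of the effort lies.

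Finally, we combine the two regimes. We also use \cref{lem-gcc} to guarantee that $L$ is unique with $|L|=\Theta(n)$, so that sets with $\mu_L(S)\le\mu_L(L)/2$ have complements inside $L$ of linear size, and that every non-giant component has size $O(\log n)$, which does not interfere with the boundary count. Union-bounding the failure probabilities (each $O(1/\log n)$ or smaller) yields the stated mixing time $O(\log^2 n)$ for the lazy walk on $L$.
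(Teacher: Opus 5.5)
The paper gives no proof of its own here; the lemma is imported from Diskin et al. Your outer framework is the standard route and would give $O(\log^2 n)$: a Fountoulakis--Reed conductance profile over connected sets, the trivial bound $\phi_L(S)\ge 1/\mu_L(S)$ at small scales, and constant conductance above $C\log n$. The gap is in the large-scale estimate, and it has two parts.

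First, the balancing you describe cannot close. After restricting to $k$-vertex trees, your union bound still carries a factor $n(edp)^{k-1}=n(e(1+\varepsilon))^{k-1}$. So you need a per-tree failure probability below $e^{-(1+\log(1+\varepsilon))k}$. The sprinkled crossing count is a binomial of mean $\Theta(\varepsilon k)$, and a lower-tail event for it never has probability below $e^{-\Theta(\varepsilon k)}$. Already ``no sprinkled crossing edge'' has probability about $(1-p_2)^{dk}\approx e^{-\varepsilon k/2}$. Hence the sum over $k\ge C\log n$ diverges for every $C$. What does work for small sets is to drop the sprinkling and use all $G_p$-edges leaving $V(T)$. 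These are disjoint from $E(T)$, hence independent of it, and have mean $(1+\varepsilon)k(1-k/n-o(1))$. This bounds the expected number of bad trees (fewer than $\eta k$ crossing edges) by
\[
n\exp\Bigl(-k\Bigl[\varepsilon-\log(1+\varepsilon)-(1+\varepsilon)\frac{k}{n}-O\bigl(\eta\log(1/\eta)\bigr)-o(1)\Bigr]\Bigr),
\]
which is a gain of only about $\varepsilon^2/2$ per vertex. It also forces $\eta=\eta(\varepsilon)$ to be small.

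Second, even this corrected bound is useful only while $k/n<(\varepsilon-\log(1+\varepsilon))/(1+\varepsilon)\approx\varepsilon^2/2$. The factor $1-k/n$ from the mixing lemma, which you state but then absorb into a $\Theta(dk)$, eats the margin. The profile, however, must control every connected $S\subseteq L$ with $\mu_L(S)\le\mu_L(L)/2$. Since $|L|=(1+o(1))yn$ with $1-y=e^{-(1+\varepsilon)y}$, we have $|L|/2\approx\varepsilon n\gg\varepsilon^2 n$ for small $\varepsilon$. In the window $\Theta(\varepsilon^2 n)\le k\le |L|/2$, any union bound over trees or connected sets of size $k$ is hopeless, and your plan has nothing else there. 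This is exactly where the substance of the cited result lies.

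The missing ingredient is a separate argument for linear-size sets that avoids enumerating connected sets altogether. One option is a multi-round exposure that enumerates only coarse partitions into unions of large components of a sparser percolation, with entropy $2^{O(n/K)}$ rather than $(ed)^k$. The expander mixing lemma then shows that the sprinkled edges between the two sides cannot all be missing. Sprinkling is thus the right tool, but for this linear regime rather than for sets of size $C\log n$, where you placed it.

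A smaller point: the lemma allows $dp\to\infty$, as in the paper's use with $p_{\le i}\ge 1/W$. Then volume is no longer comparable to size, and you need $\Omega(dp\,|S|)$ crossing edges; a similar first-moment computation supplies this.
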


We remark that in \cite{frieze2004emergence, diskin2024expansion}, the authors primarily focus on the technically most delicate regime, namely when $p$ lies in a small neighborhood of the critical probability $p=1/d$. Nevertheless, the same conclusions extend to a broader range of values of $p$, and in particular the above lemmas continue to hold. As $p$ increases beyond the critical threshold, the giant component absorbs an increasing fraction of the smaller components, while the remaining vertices exhibit behavior increasingly similar to the subcritical regime, in which all connected components have logarithmic size. On the other hand, the mixing-time bound relies on a lower bound on the vertex expansion of the giant component, which becomes more robust as $p$ increases.

\subsubsection{The Layered Structure of $G$}
Recall that $G$ is obtained from an expander $\bar G$ by assigning each edge an independent weight drawn uniformly at random from $[W]$.

For any $i \in [W]$, let $p_i$ denote the probability that an edge is assigned weight $i$.
In the setting of \cref{thm:lca mst}, we have $p_i = \frac{1}{W}$ for all $i$.
Let $G_{\le i}$ denote the subgraph of $G$ consisting of all edges with weight at most $i$, and let $\bar G_{\le i}$ denote its underlying unweighted graph.

Fix any $i \in [W]$.
The graph $\bar G_{\le i}$ can be viewed as a random subgraph of $\bar G$, obtained by retaining each edge independently with probability
\[
p_{\le i} := \sum_{j=1}^{i} p_j,
\]
namely, the probability that a given edge is assigned a weight at most $i$.

Equivalently, the graphs $\bar G_{\le i}$ induced by the random edge weights correspond exactly to bond percolation on the expander $\bar G$ with parameter $p_{\le i}$.
Throughout the remainder of this section, we condition on the event that each $\bar G_{\le i}$ lies in the supercritical regime.
This holds with probability at least $1-o(1)$ by a union bound over all $i \in [W]$, since
$p_{\le i} = \frac{i}{W} > \frac{1+\varepsilon}{d}$ when $W \le d/2$ and $W = o(\log n)$.

\subsection{Warm-up: $W = 2$}\label{subsec-W=2}

In this section, we consider the case $W=2$ (i.e., $p_1=p_2=1/2$) to illustrate our ideas.

\paragraph{Overview.}
From a high-level view, in order to construct an exact MST of $G=(V,E,\w)$, we aim to exploit edges of small weight as much as possible. In the case of $W=2$, the idea is simple: first build a minimum spanning forest (MSF) (denoted by $F_{\leq 1}$) on $G_{\leq1}$, then add proper weight-$2$ edges (denote these edges by $E_2$) to connect all connected components in $F_{\leq 1}$ while avoiding cycles. Clearly, the edge set $F_{\leq 1} \cup E_2$ forms an MST of $G$, since adding any more weight-$1$ edge to $F_{\leq 1}$ will cause a cycle. 

This reduces the problem of providing local access to an MST of $G$ to the following two \emph{tasks}:
\begin{enumerate}
    \item providing local access to an MSF (i.e. $F_{\leq 1}$) on $G_{\leq1}$; and
    \item providing local access to a set of edges that connects all components in $F_{\leq 1}$ into a single tree while remaining acyclic (i.e. $E_2$).
\end{enumerate}
We first deal with the simple one.

\subsubsection{Task~1}\label{sec:task1}

Recall that by \cref{lem-gcc} and that $p_1 = \frac{1}{2} \gg \frac{1}{d}$, there is a unique giant connected component (denoted by $L_{\leq 1}$) and every other component is of size $O(\log n)$. 

\paragraph{A subroutine \textsc{FindCC}$(G,u)$} We use a subroutine \textsc{FindCC}$(G,u)$ to test whether $u$ lies in a small connected component. It performs a BFS from $u$ in $G_{\le 1}$ with a size threshold and satisfies: \textbf{(1)} If $u$ lies in a small component of $G_{\le 1}$, it returns the entire component $C(u)$. \textbf{(2)} Otherwise, it returns $\bot$, indicating that $u$ lies in the large component $L_{\le 1}$. The subroutine runs in $O(d \log n)$ probes.

\paragraph{Local access to $F_{\leq 1}$} We are now able to provide local access to the edge set $F_{\leq 1}$ by combining the subroutine \textsc{FindCC} with a slightly modified version of \cref{alg-local expander} (still denoted by \textsc{inTree}).

\medskip
\noindent\textit{Query: Is an edge $(u,v)$ contained in $F_{\leq 1}$?}
\begin{itemize}
    \item If \textsc{FindCC}$(G,u)$ and \textsc{FindCC}$(G,v)$ return the same connected component $H$, then:
    \begin{enumerate}
        \item Construct a spanning tree $T_H$ of $H$ in a consistent manner;
        \item Return \textsc{Yes} if $(u,v) \in E(T_H)$, and \textsc{No} otherwise.
    \end{enumerate}
    \item If \textsc{FindCC}$(G,u)$ and \textsc{FindCC}$(G,v)$ both return~$\bot$, then return \textsc{inTree}$(L_{\leq 1}, (u,v))$.
    
    \item Otherwise, return \textsc{No}.
\end{itemize}

\begin{remark}\label{rmk-intree}
    Here, the subroutine \textsc{inTree} is adapted from \cref{alg-local expander}, which still answers if the queried edge should be include in spanning tree. 

    In particular, we can invoke \textsc{inTree}$(L_{\leq 1}, (u,v))$ treating $L_{\leq 1}$ as the input graph (ignoring edge weights), because neighbor and degree probes on $L_{\leq 1}$ can be simulated using at most $d$ invocations of \textsc{FindCC}.
    
    Moreover, since the lazy random walk on $L_{\leq 1}$ has mixing time at most $O(\log^2 n)$ (\Cref{lem-gccmix}), we can modify the length of lazy random walk in \cref{alg-local expander} to adapt it.
\end{remark}

Easy to see that above procedure is consistent with an MSF on $G_{\leq 1}$, using public random bits. Since $L_{\leq 1}$ is a $d$-bounded graph with lazy random walk mixing in polylogarithmic time, according to \cref{lem-expander local probe}, this procedure terminates after at most
\[
\tilde O\!\left(\sqrt{n} \cdot  d\right)\cdot d \cdot O(d \log n)
= \tilde O_{\phi}\!\left(\sqrt{n}\, d^3\right)
\]
probes. Finally, we note that the exponent of $d$ can be reduced to $2$ by further modifying \textsc{FindCC} to support neighbor and degree queries more efficiently.

\subsubsection{Task~2}

We now describe how to select an appropriate set of weight-$2$ edges $E_2$ to extend $F_{\le 1}$ into a (minimum) spanning tree of $G$. To this end, we make use of a spanning tree of the original graph. 

Specifically, fix\footnote{Such a vertex $s$ can be found by sampling a constant number of vertices and checking whether \textsc{FindCC}$(G,s)$ returns $\bot$.} $s$ to be a vertex in $L_{\le 1}$. Let $T$ be a (not necessarily minimum) spanning tree of $G$, obtained by invoking \cref{alg-global expander} on $G$ with root $s$. We then define a notion of \emph{rank} for each vertex with respect to the tree $T$.

\begin{definition}[$rank$]\label{def-rank}
For each vertex $u \in V$, let $P(u,s)$ denote the unique path from $u$ to $s$ in $T$.

We define $\mathrm{rank}(u)$ 
with respect to $T$ as the hop distance (i.e., the number of edges) from $u$ to the first vertex on $P(u,s)$ that lies in the set $L_{\le 1}$.
\end{definition}

Note that $\mathrm{rank}(u)$ is not the distance from $u$ to vertex set $L_{\leq 1}$ on $T$, since it only considers vertices of $L_{\leq 1}$ lying on the path from $u$ to the root. Since $s$ always lies in $L_{\leq 1}$, $P(u,s)$ must intersect with $L_{\leq 1}$, hence this is well-defined. The following observation says that we can compute the ranks locally.

\begin{observation}\label{obs-rank}
    For any $u\in V$, $\mathrm{rank}(u)$ can be computed locally using $\tilde O(\sqrt{n}\, d)$ adjacency-list probes.
\end{observation}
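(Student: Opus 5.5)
To compute $\mathrm{rank}(u)$, I would walk up the tree path $P(u,s)$ in $T$ and stop at the first vertex lying in $L_{\le 1}$. Two things are needed. First, an upward (parent) step in $T$ must be computable locally. Second, the walk must end after a small number of steps.

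\textbf{Step 1: computing parents in $T$.} Here $T$ is produced by \cref{alg-global expander} on $G$ with root $s$. The LCA first simulates Phases 1 and 2 once, with the shared random bits, to obtain the core tree $T_0=(S,H')$. This costs $O(\sqrt{n}\log^2 n/\phi^2)=\tilde O(\sqrt n)$ probes, since $\phi_{\bar G}=\Omega(1)$ by Cheeger's inequality. After that, the parent of any vertex $w$ can be read off as follows.
\begin{itemize}
\item If $w\in S$, its parent is its BFS parent in $T_0$. This needs no probes.
\item If $w\notin S$, we call \textsc{FindPath}$(G,S,\sqrt n,w)$ and take the second vertex of $P_G(w,\anc_S(w))$. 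By \cref{lem-subpath}, the remainder of this path is exactly $P_G(w',\anc_S(w'))$ for the next vertex $w'$. So the full path from $w$ up to $S$ is obtained in a single \textsc{FindPath} call of $O(d\sqrt n)$ probes (\cref{clm-findpath probes}), and we do not need one call per step.
\end{itemize}
Consequently, $P(u,s)$ is the concatenation of $P_G(u,\anc_S(u))$ with the $T_0$-path from $\anc_S(u)$ to $s$. This whole path is known explicitly after $\tilde O(\sqrt n\, d)$ probes.

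\textbf{Step 2: testing membership in $L_{\le 1}$ along the path.} We scan the vertices of $P(u,s)$ in order starting from $u$. For each vertex $w$ we run \textsc{FindCC}$(G,w)$, which costs $O(d\log n)$ probes. We stop at the first $w$ for which it returns $\bot$, meaning $w\in L_{\le 1}$, and output the number of hops taken. This vertex exists because $s\in L_{\le 1}$, so $\mathrm{rank}(u)$ is well defined.

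\textbf{Bounding the cost.} The length of $P(u,s)$ is at most the depth of $T$. By \cref{thm:d-expander} this depth is $O(\log n/\phi^2)=O(\log n)$. So Step 2 costs at most $O(\log n)\cdot O(d\log n)=\tilde O(d)$ probes. Adding Step 1, the total is $\tilde O(\sqrt n)+O(d\sqrt n)+\tilde O(d)=\tilde O(\sqrt n\,d)$. A cheaper Step 2 is also possible: a vertex in a small component of $G_{\le 1}$ is detected after $O(d\log n)$ probes, so we could stop at the first $\bot$. The crude bound already suffices, however.

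\textbf{Main obstacle.} The step needing the most care is making sure the parent pointers we compute agree with the global tree $T$ on every query. This relies on always using the same random bits, so that $S$ and $T_0$ are identical across queries, and on \cref{clm-local path} together with \cref{lem-subpath}. These guarantee that the tree edges above a vertex outside $S$ are exactly the edges of its own anchor path. A second point is that \textsc{FindPath} could fail. It succeeds with probability $1-n^{-2}$ per vertex, so we condition on the high-probability event from \cref{thm:d-expander} that no invocation returns $\bot$. Under that event the computed rank is exact.
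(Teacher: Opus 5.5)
Your proposal is correct and takes the same approach as the paper: simulate the core tree once, use a single lexicographic BFS (\textsc{FindPath}) from $u$ to reveal the whole tree path $P(u,s)$, then run \textsc{FindCC} on each of its polylogarithmically many vertices. The only difference is cosmetic: you bound the length of $P(u,s)$ by the depth $O(\log n/\phi^2)=O(\log n)$ of $T$ from \cref{thm:d-expander}, whereas the paper uses $O(\log^2 n)$, and either bound gives $\tilde O(\sqrt{n}\,d)$.
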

\begin{proof}
    Since the underlying graph of $G$ is an expander, we have already established local access to $T$ in \cref{alg-local expander} via the procedure \textsc{inTree}$(G,\cdot)$, which finishes in $\tilde O(\sqrt{n}\, d)$ probes according to \cref{lem-expander local probe}. 
    
    In particular, a single invocation of \textsc{inTree}$(G,(u,v))$ reveals the entire path $P(u,s)$—by combining the subgraph explored during the random walk with the lexicographical BFS rooted at $u$.

    Moreover, we note that $P(u,s)$ contains at most $O(\log^2 n)$ vertices. 
    Indeed, $P(u,s)$ is composed of a random walk of length $O(\log^2 n)$ together with BFS edges in $L_{\le 1}$. 
    Since the diameter of $L_{\le 1}$ is at most its mixing time, it follows that identifying the first vertex in $L_{\le 1}$ along $P(u,s)$ requires only an additional $d \cdot O(\log^2 n)$ probes, by invoking \textsc{FindCC} on each vertex.

\end{proof}

\begin{remark}\label{rmk-rank}
    Throughout the following, we slightly abuse procedure \textsc{inTree} to compute $\mathrm{rank}$ for convenience. 
    
    This is because we can always compute the unique path $P(u,s)$ in $\tilde O(\sqrt{n}\, d)$ probes using \textsc{inTree}, whenever the underlying graph is $d$-bounded with lazy random walk mixing in polylogarithmic time. We will later use this observation again w.r.t. giant component $L_{\leq i}$ as underlying graph, since it satisfies these conditions.
\end{remark}

\paragraph{Local access to $E_2$} Now we explain how to identify $E_2$. Globally, for each small component $C$ in $F_{\leq 1}$, exactly one edge of weight $2$ will be \emph{proposed by $C$} and included in $E_2$.
Since each component (except $L_{\leq 1}$) will propose exactly one edge in $T$, by a simple counting on all maintained edges, it suffices to argue that these proposed edges together connect all components in $F_{\leq 1}$. 

Next we present the LCA to identify the proposed edge for $C$, and why this edge connects $C$ to $L_{\leq 1}$. Here, for a component $C$, we slightly abuse notation by also using $C$ to denote the corresponding set of vertices. 

The algorithmic idea is as follows. For any vertex $u$, we use $C(u)$ to denote the small component containing $u$ in $G_{\leq 1}$. Consider $C(u)$ in the unweighted spanning tree $T$. Among all edges incident to $C(u)$, we are only interested in edges incident to the vertex of minimum $\mathrm{rank}$ in $C(u)$, denoted by $x$.
Our rule proposes for $C(u)$ the first edge on the path $P(x,s)$ in $T$.
The following LCA checks whether the queried edge $(u,v)$ coincides with the edge proposed by either $C(u)$ or $C(v)$. 

\medskip
\noindent\textit{Query: Is an edge $(u,v)$ contained in $E_2$?}
\begin{itemize}
    \item For each $r \in \{u,v\}$:
    \begin{enumerate}
        \item Let $C(r) \gets$ \textsc{FindCC}$(G,r)$. If $C(r) = \bot$, continue.
        \item Invoke \textsc{InTree} to compute $\mathrm{rank}(x)$ for every vertex $x \in C(r)$, with root fixed at $s \in L_{\leq 1}$;
        \item Let $x \gets \arg\min_{x \in C(r)} \mathrm{rank}(x)$ (breaking ties by vertex IDs).
        \item If $(u,v)$ is the first edge on $P(x,s)$, then return \textsc{Yes}.
    \end{enumerate}
    \item Return \textsc{No}.
\end{itemize}

Note that the query can be answered in
\[
\bigl(|C(u)| + |C(v)|\bigr)\cdot \tilde O_\phi(\sqrt{n}\, d)
= \tilde O_\phi(\sqrt{n}\, d)
\]
probes, since the probe complexity is dominated by the cost of computing $\mathrm{rank}$.

Now we explain why these proposed edges connect $F_{\leq 1}$ into a tree. 
\begin{lemma}\label{lem-e2connect}
    Proposal edges $E_2$ connect all small components $C(u)$ to $L_{\leq 1}$.
\end{lemma}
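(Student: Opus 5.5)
We argue by induction on the rank of the proposing vertex. Fix a small component $C$ of $G_{\le 1}$, and let $x\in C$ be the vertex of minimum $\mathrm{rank}$, with ties broken by ID. The proposed edge is the first edge $(x,y)$ on $P(x,s)$. We would first establish two facts about this edge.

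\textbf{Fact 1: $y\notin C$.} We have $\mathrm{rank}(x)\ge 1$, since $x\notin L_{\le 1}$. The path $P(y,s)$ is exactly $P(x,s)$ with $x$ removed. Also, $x$ is not the first vertex of $P(x,s)$ lying in $L_{\le 1}$. Hence the first vertex of $P(x,s)$ in $L_{\le 1}$ also lies on $P(y,s)$, which gives $\mathrm{rank}(y)=\mathrm{rank}(x)-1$. If $y$ were in $C$, this would contradict the minimality of $\mathrm{rank}(x)$ within $C$.

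\textbf{Fact 2: the edge has weight $2$.} Since $(x,y)$ joins two different components of $G_{\le 1}$, it cannot have weight $1$. So it is indeed a weight-$2$ edge leaving $C$.

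The key structural statement is that $\mathrm{rank}(y)=\mathrm{rank}(x)-1$ holds whenever $\mathrm{rank}(x)\ge 1$. Equivalently, rank decreases by exactly one along the tree path toward the root until the path hits $L_{\le 1}$. This holds because $T$ is a spanning tree rooted at $s$ (\cref{thm-global tree}), so $P(y,s)$ is a suffix of $P(x,s)$.

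We now define the potential $m(C):=\min_{x\in C}\mathrm{rank}(x)\ge 1$ for each small component $C$, and prove by strong induction on $m(C)$ that $C$ is connected to $L_{\le 1}$ using edges of $F_{\le 1}\cup E_2$.

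\textbf{Base case $m(C)=1$.} Here $\mathrm{rank}(y)=0$. The vertex $y$ is the first vertex of $P(y,s)$, so it must itself lie in $L_{\le 1}$. Since $C$ is spanned by $F_{\le 1}$ (a spanning tree of $C$ from Task~1), $C$ is connected to $L_{\le 1}$ through the edge $(x,y)$.

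\textbf{Inductive step $m(C)\ge 2$.} Now $\mathrm{rank}(y)=m(C)-1\ge 1$, so $y\notin L_{\le 1}$, and $y$ lies in some small component $C'\neq C$. Then $m(C')\le \mathrm{rank}(y)<m(C)$. By induction $C'$ is connected to $L_{\le 1}$, and the edge $(x,y)$ links $C$ to $C'$, which completes the step.

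The main subtlety, which we expect to be the obstacle, is that $\mathrm{rank}$ is well defined and computed consistently by all queries. Every query must use the same tree $T$, the same root $s$, and the same tie-breaking rule, so that the proposal of each $C$ is a single well-defined edge. This follows from the shared random bits and from \cref{lem-local sim}. Given this, the potential argument above shows that there are no cycles among proposals along the rank order. Acyclicity of the final structure would then follow from edge counting: there is one proposal per small component, added to the forest $F_{\le 1}$, for a total of $n-1$ edges in a connected graph.
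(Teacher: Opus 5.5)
Your proof is correct and is essentially the paper's argument: take the minimum-rank vertex $x$ of $C$, note that the first edge $(x,y)$ on $P(x,s)$ satisfies $\mathrm{rank}(y)=\mathrm{rank}(x)-1$, and induct on the minimum rank of the component until you reach a component adjacent to $L_{\le 1}$. You are more explicit than the paper (you check $y\notin C$, that the edge has weight $2$, and that the induction on $m(C)$ is well-founded), and your closing edge count, which goes beyond this lemma, implicitly uses that distinct components propose distinct edges---this holds, since two components proposing the same edge in opposite directions would force each endpoint's rank to be smaller than the other's.
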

\begin{proof}
Let $C$ be any component (except for $L_{\leq 1}$) in $F_{\leq 1}$. Denote by $x$ the vertex in $C$ with minimum $\mathrm{rank}$. Let $(x,y)$ be the edge proposed by $C$, that is, the first edge on the path $P(x,s)$ in $T$.

If $\mathrm{rank}(x) = 1$, then without doubt $(x,y)$ connects $C$ to $L_{\leq 1}$, according to the definition of $\mathrm{rank}$. 
Otherwise, we have $\mathrm{rank}(x) \geq 2$ and $\mathrm{rank}(y) = \mathrm{rank}(x) - 1$. This means the proposal edge $(x,y)$ connects $C$ to another component (i.e. $C(y)$). Besides, among vertices in $C(y)$ the minimum $\mathrm{rank}$ is smaller than $\mathrm{rank}(x)$ by at least $1$. We can continue this inductive analysis by setting $C \gets C(y)$. Since $\mathrm{rank}$ is well-defined and bounded by diameter of $T$, this induction must end and it reaches some component incident to $L_{\leq 1}$.
\end{proof}

Since we use shared randomness to select $s$ and compute $\mathrm{rank}$, the output remains consistent with some set $E_2$. 

\paragraph{Final LCA for $W=2$.}
On query of any edge $(u,v)$, we check whether $(u,v) \in F_{\leq 1} \cup E_2$ by invoking above two LCAs respectively.

\subsection{General Case}\label{sec:mstgeneralcase}

In this section, we extend the ideas developed for the case $W=2$ to a general maximum edge weight $W$ via a recursive, layer-by-layer construction. Note that here $W$ satisfies $W\leq d/2$ and $W = o(\log n)$.

For any $i \in [W]$, let $G_{\leq i}$ denote the percolated subgraph of $G$ consisting of all edges of weight at most~$i$. For edge weight chosen uniformly from $[W]$, each percolated subgraph $G_{\leq i}$ lies in the supercritical regime (see \cref{lem-gcc}) since $\frac{1}{W} \geq \frac{2}{d} > \frac{1+\varepsilon}{d}$.

\subsubsection{A Global Construction}\label{sec:hierarchy}

\paragraph{Hierarchy across Giant Component.}
We begin by presenting a structural property of the input graph $G$. For the percolated graph $G_{\leq i}$, let $L_{\leq i}$ denote its unique linear-size (giant) component. We refer to all the remaining connected components of $G_{\leq i}$ as \emph{small} components.  Note that $L_{\leq W} = G$ and all the small components have size at most $O(\log n)$. The following observation, derived from \cref{lem-gcc}, formalizes a natural hierarchy among these components.

\begin{observation}\label{obs-hrk}
For $i = 1, \dots, W-1$, we have
\[
V(L_{\leq i}) \subseteq V(L_{\leq i+1}).
\]
\end{observation}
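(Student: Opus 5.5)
We prove $V(L_{\le i}) \subseteq V(L_{\le i+1})$ using monotonicity of percolation together with the uniqueness and size dichotomy of \cref{lem-gcc}. We work on the event, fixed at the end of \cref{sec:mstnotation}, that every $\bar G_{\le j}$ for $j\in[W]$ is in the supercritical regime. On this event each $\bar G_{\le j}$ has a unique component $L_{\le j}$ of size $\Theta(n)$, and every other component has size $O(\log n)$. By the union bound over $j\in[W]$ with $W=o(\log n)$, this event has probability $1-o(1)$.

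\textbf{Step 1 (monotonicity).} Every edge of weight at most $i$ also has weight at most $i+1$, so $E(G_{\le i})\subseteq E(G_{\le i+1})$. Hence every connected subgraph of $G_{\le i}$ is a connected subgraph of $G_{\le i+1}$. In particular, $L_{\le i}$ is connected in $G_{\le i+1}$, so all of $V(L_{\le i})$ lies in a single component $C$ of $G_{\le i+1}$.

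\textbf{Step 2 (identify $C$).} Since $V(L_{\le i})\subseteq V(C)$, we get $|C|\ge |L_{\le i}|=\Theta(n)$. For large $n$ this exceeds the $O(\log n)$ bound that \cref{lem-gcc} places on every non-giant component of $G_{\le i+1}$. So $C$ cannot be a small component, and by uniqueness $C=L_{\le i+1}$. This gives $V(L_{\le i})\subseteq V(L_{\le i+1})$, and applying it for each $i=1,\dots,W-1$ proves the observation.

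The only delicate point is making sure the constants are compatible: the lower bound $\Theta(n)$ on $|L_{\le i}|$ must eventually beat the upper bound $O(\log n)$ on small components of the next layer. This holds because each $p_{\le i}=i/W\ge 1/W\ge 2/d$ is bounded away from the critical value $1/d$ with a fixed $\varepsilon$, so the implied constants in \cref{lem-gcc} are uniform in $i$. Beyond that, no probabilistic coupling is needed: the monotone inclusion of edge sets is deterministic given the weights.
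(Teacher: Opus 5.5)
Your proof is correct and takes the same approach as the paper. The paper only says the observation is ``derived from'' \cref{lem-gcc}, and the intended reasoning is yours: $G_{\le i}$ is a subgraph of $G_{\le i+1}$, so the linear-size $L_{\le i}$ lies in a single component of $G_{\le i+1}$, which must be the unique giant because every other component has size $O(\log n)$. Your remark that the constants must be uniform in $i$ (since $W$ may grow with $n$) is a detail the paper leaves implicit, and it is consistent with the paper's union-bound conditioning on all layers being supercritical.
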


This observation exhibits a hierarchy of the input graph: from $L_{\leq 1}$ to $L_{\leq W}$, each giant component is fully contained in the one at the next level. We use this hierarchy to define layers of vertices.

\begin{definition}[layer]\label{def-layer}
The vertices at layer $i$ are denoted by $V_i$, defined as follows:
\begin{itemize}
    \item $V_1 := V(L_{\leq 1})$;
    \item $V_i := V(L_{\leq i}) \setminus V(L_{\leq i-1})$, for $i = 2, \dots, W$.
\end{itemize}
We use $layer(u)$ to denote the layer index of $u$, i.e. $layer(u):=i$ if $u \in V_i$.
\end{definition}

This definition immediately yields the following properties.

\begin{claim}\label{clm-vi}
For any vertex $u \in V$, the following hold:
\begin{enumerate}
    \item[(1)] For any $i \in [W]$, if $u \in V_i$, then $u \in L_{\leq i}$.
    \item[(2)] For any $i \in [W] \setminus \{1\}$, if $u \in V_i$ then $u$ lies in a small component $C(u)$ of $G_{\leq i-1}$. Moreover, $C(u)$ contains only vertices from $V_i$.
\end{enumerate}
\end{claim}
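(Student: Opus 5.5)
Both items follow directly from \cref{def-layer} together with the hierarchy in \cref{obs-hrk} and the structure of the percolated graphs guaranteed by \cref{lem-gcc}. Throughout, I work on the conditioned event that every $\bar G_{\le i}$ is supercritical, so each $G_{\le i}$ has a unique giant component $L_{\le i}$ and all other components have size $O(\log n)$.

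For item (1), I unpack the definition. If $i=1$, then $V_1 = V(L_{\le 1})$ by definition. If $i\ge 2$, then $V_i = V(L_{\le i})\setminus V(L_{\le i-1}) \subseteq V(L_{\le i})$. In both cases $u\in V_i$ gives $u\in L_{\le i}$.

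For item (2), fix $i\ge 2$ and $u\in V_i$. By definition $u\notin V(L_{\le i-1})$, and the components of $G_{\le i-1}$ partition $V$. Hence $u$ lies in some component of $G_{\le i-1}$ other than the giant one, i.e.\ in a small component $C(u)$.

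The substantive part is to show $C(u)\subseteq V_i$. Take any $w\in C(u)$. I will show two things.
\begin{itemize}
\item $w\notin V(L_{\le i-1})$: $w$ lies in the same component of $G_{\le i-1}$ as $u$, and that component is not $L_{\le i-1}$.
\item $w\in V(L_{\le i})$: $C(u)$ is connected using edges of weight at most $i-1$, which are also edges of $G_{\le i}$. So $C(u)$ lies inside a single component of $G_{\le i}$. That component contains $u\in L_{\le i}$ (by item (1)), so it is $L_{\le i}$, and thus $w\in L_{\le i}$.
\end{itemize}
Together these give $w\in V(L_{\le i})\setminus V(L_{\le i-1}) = V_i$.

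The only step needing care is the monotonicity fact that connectivity in $G_{\le i-1}$ implies connectivity in $G_{\le i}$, i.e.\ that components of $G_{\le i-1}$ are contained in components of $G_{\le i}$. This holds because $E(G_{\le i-1})\subseteq E(G_{\le i})$. The same fact underlies \cref{obs-hrk}: the giant component of $G_{\le i-1}$ has size $\Theta(n)$, so it cannot lie inside a small component of $G_{\le i}$, and it therefore sits inside $L_{\le i}$. I expect no real obstacle beyond stating this containment explicitly and making sure that ``small'' and ``giant'' are taken with respect to the conditioned supercritical event.
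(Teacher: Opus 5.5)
Your proof is correct and is essentially the paper's argument. The paper also gets item (1) and the first half of item (2) directly from \cref{def-layer}. For $C(u)\subseteq V_i$, it rules out $w\in V_j$ with $j\le i-1$ and with $j\ge i+1$ by contradiction, and these two cases are exactly your two bullets, $w\notin V(L_{\le i-1})$ and $w\in V(L_{\le i})$. Your direct phrasing is slightly cleaner: it uses only that each component of $G_{\le i-1}$ lies inside a component of $G_{\le i}$, and it needs neither \cref{obs-hrk} nor the fact that the layers partition $V$.
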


\begin{proof}
Statement (1) and the first part of (2) follow directly from the definition.
    
For the second part of (2), note that $C(u)$ only contains edges of weight at most $(i-1)$. As a result, any other vertex $w$ in $C(u)$ must be connected to $u$ through edges with weight at most $(i-1)$, we denote the path by $P(u,w)$. 

If $w \in V_j$ for some $j \leq i-1$, then by (1) and \Cref{obs-hrk} we have
\[
w \in V(L_{\leq j}) \subseteq V(L_{\leq i-1}).
\]
Since the path $P(u,w)$ survives in $G_{\leq i-1}$, this would imply $u \in V(L_{\leq i-1})$, contradicting the assumption that $u$ lies in a small component of $G_{\leq i-1}$.

Similarly, if $w \in V_j$ for some $j \geq i+1$, then $w \notin V(L_{\leq j-1}) \supseteq V(L_{\leq i})$, and hence $w \notin V(L_{\leq i})$. Therefore, every path from $w$ to $V(L_{\leq i})$ must contain an edge of weight at least $i+1$. Since $u \in V_i \subseteq V(L_{\leq i})$, this contradicts the existence of the path $P(u,w)$ consisting of edges of weight at most $i-1$.
\end{proof}

The claim implies that if we perform a BFS from each $u \in V_i$ in $G_{\leq i-1}$ to explore its component $C(u)$, then all these components together cover $V_i$ and contain no vertices outside $V_i$.

\paragraph{Recursive Construction of the MST.}

We start by constructing the MST $T_{\leq 1}$ of $L_{\leq 1}$. Then, for each $i \in [W] \setminus \{1\}$, we describe how to construct a minimum spanning tree $T_{\leq i}$ of $L_{\leq i}$ recursively from $T_{\leq i-1}$. Clearly, the final tree $T_{\leq W}$ is the desired MST of $G$.

To construct $T_{\leq i}$, we consider the percolated subgraph $G_{\leq i-1}$. By \Cref{obs-hrk} and \Cref{def-layer}, the vertices in $L_{\leq i}$ can be partitioned into two categories: vertices in $L_{\leq i-1}$ and vertices in $V_i$.

In $G_{\leq i-1}$, the subgraph induced by $L_{\leq i-1}$ is connected (with suitable expansion properties by \cref{lem-gccmix}), while each vertex in $V_i$ lies in a small component. When edges of weight $i$ are added back, these two parts merge into the connected component $L_{\leq i}$. By an argument analogous to \cref{subsec-W=2}, the MST $T_{\leq i}$ consists of:
\begin{itemize}
    \item a minimum spanning tree on $L_{\leq i-1}$, namely $T_{\leq i-1}$;
    \item a minimum spanning forest formed by the MSTs of the small components induced by $V_i$ in $G_{\leq i-1}$;
    \item edges proposed by each small component induced by $V_i$ in $G_{\leq i-1}$, denoted by $E_i$.
\end{itemize}
Consequently, $T_{\leq W}$ can be constructed recursively; and the edges fall into \textbf{3 categories} accordingly:
\begin{enumerate}
    \item Spanning tree of $L_{\leq 1}$;
    \item MST of every small component;
    \item Edges proposed by small components.
\end{enumerate}
Therefore, in order to design an LCA, it suffices to identify these cases for queried edge respectively.

\subsubsection{LCA}

\paragraph{Key Subroutine: \textsc{Layer-Comp}.} Before describing the final LCA, we first present a key subroutine \textsc{Layer-Comp}, in \cref{alg-layer comp}. For a given vertex $u$, procedure \textsc{Layer-Comp}$(G,u)$ returns the layer index of $u$ and corresponding small component $C(u)$.  This subroutine performs a role similar with \textsc{FindCC} in last section.

\begin{algorithm}[htb]  
\caption{\textsc{Layer-Comp}$(G,u)$: determine the layer of $u$ and return its component}\label{alg-layer comp}
\KwIn{A weighted graph $G=(V,E,\w)$ and a vertex $u\in V$.}
\KwOut{The layer index $i$ of $u$, and the small component $C(u)$.}

Let $\theta \gets c \log n$ for a sufficiently large constant $c$. 

\tcp{$\theta$ upper bounds the size of every small component.}

Initialize $C(u)$ to be graph containing isolated vertex $u$.

\For{$i = 1,2,\ldots,W$}{
    $H \gets C(u)$\;
    
    \While{there exists an \emph{unexplored} edge $e$ with $w_e \leq i$
           incident to $H$}{
        Update $H$ by exploring $e$\;
        \If{$|V(H)| > \theta$}{
            \Return{$(i, C(u))$}. \tcp{At this point $u$ enters the giant component of $G_{\le i}$; return the fully explored small component in $G_{\le i-1}$.}
        }
    }
    $C(u) \gets H$\;
}
\end{algorithm}

\begin{lemma}\label{lem-lc}
Suppose that $u \in V_i$. The procedure \textsc{Layer-Comp}$(G,u)$ returns the value of $i$ and a component $C(u)$ containing $u$, satisfying
    \begin{itemize}
        \item if $i \geq 2$, then $C(u)$ is the small component containing $u$ in $G_{\leq i-1}$.
        \item \textsc{Layer-Comp}$(G,u)$ finishes in $O(d \log n)$ probes.
    \end{itemize}
\end{lemma}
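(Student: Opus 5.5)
We argue by induction over the iterations $j=1,2,\ldots$ of the outer loop of \textsc{Layer-Comp}. Throughout we condition on the high-probability event from \cref{lem-gcc} (union-bounded over all $j\in[W]$): for every $j$, $G_{\le j}$ has a unique giant component $L_{\le j}$ of size $\Theta(n)$, and every other component has size at most $c'\log n$. We choose the constant $c$ in $\theta=c\log n$ so that $\theta>c'\log n$. Since $\Theta(n)>\theta$ for large $n$, exploring a component reaches size $\theta$ exactly when that component is the giant one.

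The invariant we carry is: \emph{at the start of iteration $j\le i$, $C(u)$ is the full component of $u$ in $G_{\le j-1}$}, with the convention that $G_{\le 0}$ is the edgeless graph, so $C(u)=\{u\}$ initially. In iteration $j$ the while loop grows $H$ from $C(u)$ by exploring unexplored edges of weight at most $j$ incident to $H$. Since $C(u)$ is connected in $G_{\le j-1}\subseteq G_{\le j}$, $H$ grows inside the component of $u$ in $G_{\le j}$, and if it never exceeds $\theta$ it becomes exactly that component.

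If $j<i$, then $u\notin V(L_{\le j})$. Otherwise $u\in V(L_{\le j})\subseteq V(L_{\le i-1})$ by \Cref{obs-hrk}, contradicting $u\in V_i$. So the component of $u$ in $G_{\le j}$ is small, has size at most $c'\log n<\theta$, and the loop terminates with $C(u)$ equal to it. This preserves the invariant. If $j=i$, then $u\in V(L_{\le i})$, whose size exceeds $\theta$, so the while loop must hit $|V(H)|>\theta$. It then returns $i$ together with the stored $C(u)$, which by the invariant is the component of $u$ in $G_{\le i-1}$. For $i\ge2$ this is a small component, exactly as in \Cref{clm-vi}(2). The procedure therefore never returns early at some $j<i$ and never runs past $i$, so the returned index equals $layer(u)$.

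For the probe bound, we charge each probe to a vertex whose adjacency list is scanned. To avoid rescanning, each vertex's full list (with weights) is read once via \texttt{Deg} and \Nbr, and the edges of weight greater than $j$ are remembered as pending for later iterations. In that implementation, every vertex ever placed in $H$ costs at most $d+1$ probes in total, and $H$ never holds more than $\theta+1$ vertices. This gives $O(d\theta)=O(d\log n)$ probes. The naive reading of the pseudocode instead rescans $C(u)$ in every iteration, giving $O(Wd\log n)$ probes. Since $W=o(\log n)$, that is still within the polylogarithmic slack of \cref{thm:lca mst}, but obtaining the clean $O(d\log n)$ needs the caching argument.

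The step I expect to need the most care is exactly this accounting, together with making precise that ``unexplored edge'' means an edge of weight at most $j$ that has not yet been added to $H$. The correctness part is routine once the giant/small dichotomy from \cref{lem-gcc} and the nesting from \Cref{obs-hrk} are in place.
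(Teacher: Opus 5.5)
Your proof is correct and takes the same route as the paper. The threshold $\theta$ separates the small components from the giant one, so the loop first overflows exactly at iteration $i = layer(u)$ and returns the stored component of $u$ in $G_{\le i-1}$, and the probe bound is $d$ times the at most $\theta+1$ explored vertices. Your invariant argument (using \Cref{obs-hrk} to rule out an early return) and your explicit caching of adjacency lists make precise what the paper's short proof leaves implicit in its count of $|V(H)|\cdot d$ visited edges.
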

\begin{proof}
    
For the first statement, whenever $|V(H)| > \theta$, the vertex $u$ must belong to the giant component of $G_{\le i}$. In this case, the algorithm returns the connected component $C(u)$ fully explored in last iteration through edges of weight at most $i-1$, which is exactly the small component in $G_{\le i-1}$ by definition.

For the second statement, note that the procedure finishes once it explores a subgraph $H$ with more than $\theta$ vertices. Since the input graph is $d$-bounded, the procedure visits at most $|V(H)| \cdot d = O(d\log n)$ edges. Hence it finishes in $O(d\log n)$ probes.
\end{proof}

\begin{corollary}\label{cor-adj to Li}
Any adjacency-list probe to $L_{\le i}$ can be answered using $O(d \log n)$ probes to $G$.
\end{corollary}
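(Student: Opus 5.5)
To answer a probe on $L_{\le i}$ I would check whether the relevant vertex belongs to $L_{\le i}$, and filter its edges by weight; \cref{lem-lc} supplies the membership test at $O(d\log n)$ probes. The one subtlety is that $L_{\le i}$ is viewed as a subgraph of $G_{\le i}$, so it keeps only the edges of weight at most $i$ among its vertices.

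The first step is a membership test. I claim that a vertex $v$ lies in $V(L_{\le i})$ if and only if \textsc{Layer-Comp}$(G,v)$ returns an index $j\le i$. Indeed, $layer(v)=j$ means $v\in V_j\subseteq V(L_{\le j})$. By \Cref{obs-hrk}, $V(L_{\le j})\subseteq V(L_{\le i})$ whenever $j\le i$. Conversely, if $j>i$, then $v\notin V(L_{\le j-1})\supseteq V(L_{\le i})$. By \cref{lem-lc} this test costs $O(d\log n)$ probes. It is correct under the conditioning already in force, namely that every $\bar G_{\le i}$ is supercritical, so that all small components have size at most $\theta=c\log n$.

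The second step simulates the two probes. For a vertex $v\in L_{\le i}$, every neighbor $w$ of $v$ in $G_{\le i}$ lies in the same component, hence in $L_{\le i}$. So the neighbors of $v$ in $L_{\le i}$ are exactly the entries of the adjacency list of $v$ whose edge weight is at most $i$. We read the whole list of $v$ with $d$ \texttt{Nbr} probes, which also return the weights. We keep the entries of weight at most $i$ in their original order, and this filtered list defines the adjacency list of $L_{\le i}$. \texttt{Deg}$_{L_{\le i}}(v)$ is its length, and \texttt{Nbr}$_{L_{\le i}}(v,k)$ is its $k$-th entry, or $\bot$ if there is none. Before answering, we run the membership test once on $v$ itself, for $O(d\log n)$ probes, so that we can report correctly if the probed vertex is not in $L_{\le i}$. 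The total is $d+O(d\log n)=O(d\log n)$ probes. Because the list order is inherited deterministically from $G$, the answers are consistent across probes.

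The main obstacle is the correctness of the threshold in \textsc{Layer-Comp}, i.e. that exceeding $\theta$ at level $j$ certifies membership in the giant component of $G_{\le j}$. This is exactly the statement that all non-giant components have size $O(\log n)$, which comes from \cref{lem-gcc} and the union bound over layers. I would take it from \cref{lem-lc} rather than reprove it.
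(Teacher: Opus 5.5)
Your proposal is correct and follows essentially the same route as the paper: test whether the vertex lies in $V(L_{\le i})$ with one call to \textsc{Layer-Comp}$(G,u)$ at $O(d\log n)$ probes, then spend at most $d$ further probes reading its adjacency list and keeping only the edges of weight at most $i$. You also make explicit two details the paper leaves implicit: the returned layer index $j$ certifies membership exactly when $j\le i$ (via \cref{obs-hrk}), and inheriting the list order from $G$ keeps the simulated probes consistent.
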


\begin{proof}
For any vertex $u \in V(L_{\le i})$, the adjacency list of $u$ in $L_{\le i}$ consists exactly of those vertices incident to $u$ via edges of weight at most~$i$. 
Hence, we first determine whether $u \in V(L_{\le i})$ using $O(d \log n)$ probes by invoking \textsc{Layer-Comp}$(G,u)$. 
If so, we then use at most $d$ extra probes to get all neighbors of $u$ in $L_{\le i}$.
\end{proof}

\paragraph{Locally Compute Proposal Edge of Small Component.}

In this section, we use \emph{rank} to compute the proposal edge of a small component, presented in \cref{alg-probbc}. Here, the notion $\mathrm{rank}$ is generalized from \cref{def-rank}.
For vertices in any layer $V_i$, $\mathrm{rank}$ is computed w.r.t. an (unweighted) spanning tree of $L_{\leq i}$. Specifically, \cref{def-rank} is defined for layer $V_2$.

Note that all these spanning trees of $\{L_{\leq i}\}$ share a common root $s$, which is chosen arbitrarily from $L_{\leq 1}$ at first. That is to say, when we invoke LCA \textsc{InTree}$(L_{\leq i}, \cdot)$ for any $i$, we force \cref{alg-local expander} to choose $s$ as root. Since root can be chosen arbitrarily in \cref{alg-local expander}, this assumption does not affect correctness. Note that $s \in V(L_{\leq 1})$ implies $s \in V(L_{\leq i})$ for every $i \in [W]$, according to \cref{obs-hrk}.

Instead of defining $\mathrm{rank}$ separately for every layer like \cref{def-rank}, we show the details of computing $\mathrm{rank}$ in \cref{alg-probbc}, because they share the same ideas.

\begin{algorithm}[htb]  
\caption{\textsc{PropByComp}$(G,u)$: compute the proposal edge of $C(u)$}\label{alg-probbc}
\KwIn{A weighted graph $G=(V,E,\w)$ and a vertex $u \in V$.}

$i,C(u) \gets$ \textsc{Layer-Comp}$(G,u)$\;

\If {$i = 1$}{\Return $\bot$.}

Let $s$ be the predetermined root in $L_{\leq 1}$.

\For{each vertex $x$ in $C(u)$}{
\If{$\exists$ $e$ incident to $x$ with $w_e=i$}{
\tcp{Compute $\mathrm{rank}(x)$}
Invoke \textsc{InTree}$(L_{\leq i}, e)$ to get the unique path $P(x,s)$ from $x$ to $s$ on the tree\;

\For{each vertex $y$ along path $P(x,s)$}{
$i_y,C(y) \gets$ \textsc{Layer-Comp}$(G,y)$\;
\If{$i_y \leq i-1$}{\tcp{$y$ lies in $L_{\leq i-1}$}
$\mathrm{rank}(x) \gets$ hop-distance between $x$ and $y$ on $P(x,s)$\;
\textbf{break}.}
}

}
}

$x \gets \arg\min_{x \in C(u)} \mathrm{rank}(x)$. \tcp{Vertex with minimum rank, breaking ties by IDs.}

\Return first edge on $P(x,s)$.

\end{algorithm}

\begin{lemma}\label{lem-prop by comp}
    \Cref{alg-probbc} satisfies the following:
    \begin{enumerate}
        \item\label{lemmapropbycomp_item1}(Consistency) If $u \in V_i$, then for any vertex $x \in C(u)$, \textsc{PropByComp}$(G,x)$ outputs a same edge $e$ with $w_e=i$;
        \item\label{lemmapropbycomp_item2} (Connectivity) Denote by $E_i$ set of edges output by \textsc{PropByComp}$(G,u)$ for all $u \in V_i$. Then $E_i$ connects $C(u)$ to $L_{\leq i-1}$ for every $u \in V_i$.
        \item\label{lemmapropbycomp_item3} (Complexity) It finishes in $\tilde O(\sqrt{n}d^2)$ adjacency-list probes to $G$.
    \end{enumerate}
\end{lemma}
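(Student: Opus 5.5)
We prove the three items separately, in the order consistency, connectivity, complexity. Throughout we condition on the events of \cref{lem-gcc} and \cref{lem-gccmix} holding for every $G_{\le i}$, and on every invocation of \textsc{InTree}$(L_{\le i},\cdot)$ succeeding. By \cref{thm:d-expander} with shared random bits and the common root $s$, for each $i$ these invocations are then consistent with one fixed unweighted spanning tree $T_i$ of $L_{\le i}$.

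\textbf{Consistency.} By \cref{lem-lc} and \cref{clm-vi}, for every $x\in C(u)$ the call \textsc{Layer-Comp}$(G,x)$ returns the same layer $i$ and the same vertex set $C(x)=C(u)$; the component is determined by the graph, not by the starting vertex. Each $\mathrm{rank}(x)$ is computed from the fixed tree $T_i$ and fixed layer labels, so the $\arg\min$ (ties broken by ID) and the first edge of $P(x,s)$ are identical for every starting vertex.

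The remaining point is that this edge has weight exactly $i$. Let $x^\ast$ be the minimizer and $(x^\ast,y)$ the first edge of $P(x^\ast,s)$ in $T_i$. Since $T_i\subseteq L_{\le i}$, we have $w\le i$. If $w\le i-1$, then $y\in C(u)$ (the component in $G_{\le i-1}$), and $\mathrm{rank}(y)=\mathrm{rank}(x^\ast)-1$. This contradicts minimality, because $\mathrm{rank}(x^\ast)\ge1$: we have $x^\ast\notin L_{\le i-1}$ by \cref{clm-vi}.

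Minimality also handles the guard ``$\exists e$ incident to $x$ with $w_e=i$''. That test is only an entry point into $L_{\le i}$ for \textsc{InTree}. Any $x\in C(u)$ is in $L_{\le i}$, so a suitable weight-$\le i$ edge always exists, and I would read the guard as ``some edge of weight $\le i$''.

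\textbf{Connectivity.} This is the $W=2$ argument of \cref{lem-e2connect}, lifted to layer $i$. Take $C=C(u)$ with minimizer $x$ and proposed edge $(x,y)$. If $\mathrm{rank}(x)=1$, then $y\in L_{\le i-1}$ and we are done. Otherwise $y\notin L_{\le i-1}$. Since $y\in L_{\le i}$, we get $y\in V_i$, so $C(y)$ is another small component of $G_{\le i-1}$ inside $V_i$, and $\min_{C(y)}\mathrm{rank}\le \mathrm{rank}(y)=\mathrm{rank}(x)-1$.

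Induction on the minimum rank, which is bounded by the depth of $T_i$, shows that following proposals from $C$ reaches $L_{\le i-1}$. Together with the MSTs inside each small component, which are connected, this connects $C(u)$ to $L_{\le i-1}$ through $E_i$.

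\textbf{Complexity.} \textsc{Layer-Comp}$(G,u)$ costs $O(d\log n)$ probes, and $|C(u)|=O(\log n)$. For each $x\in C(u)$ we make one call to \textsc{InTree}$(L_{\le i},e)$. By \cref{cor-adj to Li}, each adjacency probe to $L_{\le i}$ is simulated with $O(d\log n)$ probes to $G$, and \cref{lem-expander local probe} with mixing time $O(\log^2 n)$ from \cref{lem-gccmix} gives $\tilde O(\sqrt n\,d)$ probes to $L_{\le i}$. Naively this yields $\tilde O(\sqrt n\, d^2)$ per $x$.

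The path $P(x,s)$ has $O(\log^2 n)$ vertices, as in \cref{obs-rank}. Each vertex on it costs one \textsc{Layer-Comp}, adding $\tilde O(d)$. Summing over $O(\log n)$ values of $x$ gives $\tilde O(\sqrt n\,d^2)$.

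\textbf{Main obstacle.} The step needing care is the simulation overhead. The random-walk phase of \textsc{InTree} only needs a neighbor of the current vertex in $L_{\le i}$, and each step samples one \texttt{Nbr} of $G$ and checks its weight, which costs $O(1)$. The factor $d\log n$ from \cref{cor-adj to Li} is paid only in the BFS phase. There, membership in $L_{\le i}$ is decided by a weight check for vertices already known to be in $L_{\le i}$, so charging $O(d)$ per BFS vertex keeps the total at $\tilde O(\sqrt n\,d^2)$ rather than $d^3$.
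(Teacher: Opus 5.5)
Your proposal is correct and follows essentially the same route as the paper. Consistency comes from \textsc{Layer-Comp} returning the same component, with the minimal-rank contradiction ruling out $w_e\le i-1$; connectivity lifts the rank-descent argument of \cref{lem-e2connect} to layer $i$; and complexity is $\tilde O(\sqrt{n}\,d)\cdot O(d\log n)\cdot |C(u)|$ via \cref{cor-adj to Li}. Your closing ``main obstacle'' paragraph is not needed, because \cref{cor-adj to Li} already charges $O(d\log n)$ per probe, so the naive count gives $d^2$ (the $d^3$ appears only in the Task~1 sketch); it is harmless, though, and so is your remark on the weight-$i$ guard, which every rank minimizer satisfies automatically.
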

\begin{proof}
For \Cref{lemmapropbycomp_item1}: Since for any $x \in C(u)$, \textsc{Layer-Comp}$(G,x)$ returns $C(u)$ by \cref{lem-lc}. Hence \textsc{PropByComp}$(G,x)$ always outputs a same edge $e=(x,y)$ incident to $C(u)$, where $x$ is the vertex with minimum $\mathrm{rank}$. We further have $w_e=i$; otherwise if $w_e \leq i-1$, both $x$ and $y$ will be contained in $C(u)$. Since $(x,y)$ is the first edge on path $P(x,s)$, this means $P(y,s)$ is a sub-path of $P(x,s)$ by \cref{clm-subpath}, hence we have $\mathrm{rank}(y) < \mathrm{rank}(x)$. This contradicts with $x$ being vertex in $C(u)$ with minimum $\mathrm{rank}$. 

For \Cref{lemmapropbycomp_item2}: According to the same reason in \cref{rmk-intree} and \cref{rmk-rank}, \textsc{InTree}$(L_{\leq i}, e)$ returns the unique path $P(x,s)$ and $\mathrm{rank}(x)$ is the hop-distance from $x$ to the first vertex in $L_{\leq i-1}$ on $P(x,s)$. Thus \Cref{lemmapropbycomp_item2} comes from simply adapting the proof of \cref{lem-e2connect}, by substituting $E_2$ with $E_i$, and $L_{\leq 1}$ with $L_{\leq i-1}$.

For \Cref{lemmapropbycomp_item3}, it is easy to see that most of the probe complexity in \Cref{alg-probbc} arises from computing $\mathrm{rank}$ for vertices in $C(u)$, where $|C(u)| = O(\log n)$. By \cref{lem-expander local probe}, an invocation of \textsc{InTree}$(L_{\leq i}, e)$ use $O(\sqrt{n}d)$ adjacency-lsit probes to $L_{\leq i}$. Hence the total costs is bounded by
\[
\tilde O(\sqrt{n}d) \cdot O(d\log n) \cdot  |C(u)|= \tilde O(\sqrt{n}d^2)
\]
using \cref{cor-adj to Li}.

\end{proof}

\paragraph{Final LCA: Local Access to MST of $G$.}
We describe the final LCA for MST in \Cref{alg-in mst}.

\begin{algorithm}[htb]  
\caption{\textsc{InMST}$(G,(u,v))$: edge-membership of $(u,v)$ w.r.t. an MST of $G$}\label{alg-in mst}
\KwIn{A weighted graph $G=(V,E,\w)$.}

\tcp{On query of edge $(u,v)$:}

$i_u,C(u) \gets$ \textsc{Layer-Comp}$(G,u)$\;

$i_v,C(v) \gets$ \textsc{Layer-Comp}$(G,v)$.

\If{$i_u=i_v=1$}{ \tcp{Both Lie in $L_{\leq 1}$}
\Return \textsc{InTree}$(L_{\leq 1}, (u,v))$.
}

\If{$C(u) = C(v)$}{ \tcp{Lie in same small component}
        Construct a minimum spanning tree $T$ of $C(u)$ in a consistent manner\;
        \If{$(u,v) \in E(T)$}{\Return \textsc{Yes}.}
        \Else{\Return \textsc{No}.}
}

\If{$(u,v) =$\textsc{PropByComp}$(G,u)$ or $(u,v) =$\textsc{PropByComp}$(G,v)$}{
\Return \textsc{Yes}.
}
\Return \textsc{No}.
\end{algorithm}

\begin{lemma}
    \cref{alg-in mst} is consistent with a minimum spanning tree of $G$. Moreover, it finishes in $\tilde O(\sqrt{n}d^2)$ probes.
\end{lemma}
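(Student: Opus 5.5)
Our plan is to identify the edge set returned by \Cref{alg-in mst} with the recursively built tree $T_{\le W}$ of \cref{sec:hierarchy}, and to show that $T_{\le W}$ is an MST of $G$. Concretely, let $F^\ast$ be the union of three edge sets: the spanning tree of $L_{\le 1}$ given by \textsc{InTree}$(L_{\le 1},\cdot)$ with root $s$; the consistent MSTs of all small components $C(u)$ for $u\in V_i$, $i\ge 2$; and the sets $E_i$ of proposal edges from \Cref{alg-probbc}.

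We first check that \textsc{InMST} answers \textsc{Yes} exactly on $F^\ast$. We split queries by the output of \textsc{Layer-Comp}, which is correct by \cref{lem-lc}.
\begin{itemize}
\item If $i_u=i_v=1$, the answer is membership in the spanning tree of $L_{\le 1}$. A weight-$1$ spanning tree of $L_{\le 1}$ is automatically an MST of $L_{\le 1}$ in $G_{\le 1}$.
\item If $C(u)=C(v)$, the edge can only be an intra-component MST edge. By \cref{clm-vi}, such a component lies inside a single layer $V_i$. By \cref{lem-prop by comp}, proposal edges have weight $i$ and leave $C(u)$, so they never lie inside a component.
\item Otherwise, the edge can only be a proposal edge of $C(u)$ or $C(v)$, and consistency (\Cref{lemmapropbycomp_item1}) ensures all members of a component agree on that edge.
\end{itemize}
For the case $i_u=1,i_v\ge 2$ we land in the last branch, which is correct. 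An edge with both endpoints in $L_{\le 1}$ is never proposed by a component of layer $\ge 2$, since the proposer's own endpoint lies in some $C(u)\subseteq V_i$.

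Next we show by induction on $i$ that $T_{\le i}:=F^\ast\cap E(L_{\le i})$ is a minimum spanning tree of $L_{\le i}$ in $G_{\le i}$.

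\emph{Spanning tree.} Assume $T_{\le i-1}$ spans $L_{\le i-1}$ as a tree. The small components of $G_{\le i-1}$ inside $V_i$ are each spanned by their own MSTs. \Cref{lemmapropbycomp_item2} says $E_i$ connects each such component to $L_{\le i-1}$, so the result is connected. For acyclicity we count edges. Each component contributes $|C|-1$ internal edges plus exactly one proposal edge, so
\[
|T_{\le i}| = |L_{\le i-1}|-1+\sum_C |C| = |L_{\le i}|-1,
\]
and a connected graph with this many edges is a tree. We must ensure no weight-$i$ edge is proposed twice; if it were, the count would only drop, so connectivity still forces a tree.

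\emph{Minimality.} We use the cut/Kruskal characterization. $T_{\le i}$ restricted to edges of weight $\le i-1$ is an MSF of $G_{\le i-1}[V(L_{\le i})]$, because its components are exactly $L_{\le i-1}$ and the small components, each spanned minimally. Any spanning tree of $L_{\le i}$ whose weight-$\le i-1$ part is such an MSF, completed with weight-$i$ edges, is minimum. This is the standard Kruskal argument used in the overview, since every spanning tree of $L_{\le i}$ needs at least (number of components of $G_{\le i-1}$ in $L_{\le i}$) $-1$ edges of weight $\ge i$. Taking $i=W$ and $L_{\le W}=G$ gives that $T_{\le W}$ is an MST of $G$.

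All of this is conditioned on the high-probability events of \cref{lem-gcc} and \cref{lem-gccmix} for every $i$ (union bound over $W=o(\log n)$ layers) and on the success of \textsc{InTree} on each $L_{\le i}$, which holds with probability $1-1/n$ per layer.

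For complexity, the cost is two calls to \textsc{Layer-Comp} ($O(d\log n)$), at most one call to \textsc{InTree}$(L_{\le 1},\cdot)$, an MST computation on a component of size $O(\log n)$ using $O(d\log n)$ probes, and two calls to \textsc{PropByComp}. Simulating \textsc{InTree}$(L_{\le 1},\cdot)$ through \cref{cor-adj to Li} costs $\tilde O(\sqrt n d)\cdot O(d\log n)=\tilde O(\sqrt n d^2)$; here we use \cref{lem-expander local probe} with walk length adjusted to the $O(\log^2 n)$ mixing time. Each \textsc{PropByComp} call costs $\tilde O(\sqrt n d^2)$ by \Cref{lemmapropbycomp_item3}, so the total is $\tilde O(\sqrt n d^2)$.

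The main obstacle is making \textsc{InTree}$(L_{\le i},\cdot)$ well-defined and shared across all calls: same root $s$, same random tape, and walks that succeed on every layer. Otherwise the ranks, and hence the proposal edges in \Cref{lemmapropbycomp_item2}, would not refer to a single tree.
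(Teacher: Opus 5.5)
Your proposal is correct and follows essentially the same route as the paper. Both argue by induction over the giant components $L_{\le i}$: the base case is the \textsc{InTree} spanning tree of $L_{\le 1}$, and the inductive step combines the small-component MSTs with one weight-$i$ proposal edge per component (via \cref{clm-vi} and \cref{lem-prop by comp}); the complexity accounting through \cref{lem-expander local probe}, \cref{cor-adj to Li}, and \cref{lem-prop by comp} is also the same. Your version adds details the paper leaves implicit: that each branch of \textsc{InMST} answers exactly on the intended edge set, acyclicity via connectivity plus the edge count $|L_{\le i}|-1$, and the explicit Kruskal exchange argument for minimality.
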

\begin{proof}
We prove by induction that: the edges maintained by \textsc{InMST}$(G,\cdot)$ induce a minimum spanning tree on every $L_{\leq i}$.
\begin{itemize}
    \item \textbf{Base case:} For $L_{\leq 1}$, \textsc{InMST}$(G,\cdot)$ maintains a spanning tree by Line 3-4. Since $L_{\leq 1}$ only contains weight-$1$ edges, this is also a minimum spanning tree.
    \item \textbf{Inductive Step:} Assume it holds for $1 \leq i \leq W-1$. This means on graph $L_{\leq i}$, the maintained edges form a minimum spanning tree. We show this holds for $i+1$.

    This is because, for every small component $C$ induced by $V_{i+1}$ on $G_{\leq i}$ (see \cref{def-layer} and \cref{clm-vi}), the edges (maintained by Line 5-10) induce an MST on $C$. Moreover, \cref{clm-vi} tells that these small components exactly cover $V_{i+1}$. Hence we can only use weight-$(i+1)$ edges to connect these small components to $L_{\leq i}$. According to \cref{lem-prop by comp}, Line 11-12 maintains a set edges $E_{i+1}$ to achieve this. Since every small component only proposes $1$ edge in $E_{i+1}$, these edges induce a minimum spanning tree on $L_{\leq i+1}$.
\end{itemize}
As a result, we conclude that \textsc{InMST}$(G,\cdot)$ maintain a minimum spanning tree on $L_{\leq W}=G$.

Now we bound the probe complexity. Note that \textsc{InTree}$(L_{\leq 1}, (u,v))$ finishes in $\tilde O(\sqrt{n} d^2)$ probes, combining \cref{lem-expander local probe} and \cref{cor-adj to Li}. As a result, the total number of probes is at most $\tilde O(\sqrt{n}d^2)$, by \cref{lem-prop by comp}.
\end{proof}

This completes the proof of \cref{thm:lca mst}.

\bibliographystyle{alpha}
\bibliography{citations}

\newpage
\appendix

\section{More discussions on LCAs for MST}
\subsection{Hardness of MST LCAs with Arbitrary Edge Weights}\label{sec:lowerboundforMST}

We establish the following lower bound, showing that the random edge-weight assumption in \Cref{thm:lca mst} is necessary.

\begin{restatable}{theorem}{weightedLB}\label{thm:weightedLB}
Given adjacency-list access to a graph $G=(V,E,\w)$ with arbitrary edge weights, there does not exist an LCA for the minimum spanning tree problem with $o(n)$ probe complexity, even when the underlying graph $(V,E)$ is a $d$-bounded expander and all edge weights are restricted to $\{1,2\}$.
\end{restatable}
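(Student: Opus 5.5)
We will reduce from the path-versus-cycle indistinguishability phenomenon mentioned in the introduction, embedding it into a bounded-degree expander by using weights to make the expander edges \emph{irrelevant} to the MST except for a long weight-$1$ cycle/path structure. Concretely, take a $d$-bounded expander $\bar G$ on vertex set $[n]$ that contains a Hamiltonian cycle $C=(v_1,\dots,v_n,v_1)$ (for instance, the union of a Hamiltonian cycle with a random $3$-regular graph, which is an expander w.h.p.; any fixed such graph suffices). Assign weight $1$ to every edge of $C$ and weight $2$ to all other edges. In this instance $G_0$, the weight-$1$ edges form a cycle, so any MST consists of exactly $n-1$ edges of $C$: all cycle edges but one, and no weight-$2$ edge. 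Now define instances $G_j$ for $j\in[n]$ by changing the weight of the single cycle edge $e_j=(v_j,v_{j+1})$ from $1$ to $2$. In $G_j$ the weight-$1$ edges form a Hamiltonian path, which is the unique MST; in particular every cycle edge other than $e_j$ must be in the MST and $e_j$ must not.

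The key step is an indistinguishability argument. Fix an LCA $\mathcal A$ with success probability $2/3$ and random tape $R$. On $G_0$, the tree $T$ defined by $\mathcal A$ omits some cycle edge $e^\ast=e^\ast(R)$. Consider the query $e^\ast$ on $G_0$: the answer is \textsc{No}. Choose $j$ uniformly at random and consider querying some edge $e_k\neq e_j$ on $G_j$; correctness requires \textsc{Yes} for every $k\ne j$. By averaging, for a uniformly random $j$ and the fixed tape $R$, the answers of $\mathcal A$ on $G_j$ and $G_0$ to a query $e_k$ coincide unless the probe sequence on $G_0$ for query $e_k$ reveals the weight of $e_j$ (all other adjacency lists, degrees and weights are identical, and we fix identical adjacency-list orderings). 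If the expected probe count per query is $o(n)$, then for the specific query $e^\ast(R)$, the set of edges whose weight is revealed has size $o(n)$, so for all but an $o(1)$ fraction of $j$ the run on $G_j$ is identical and answers \textsc{No} for an edge that $G_j$'s unique MST must contain. Since also for most $R$ the run on $G_0$ is correct (outputs a valid tree omitting some cycle edge), we get that $\mathcal A$ errs on $G_j$ for most $(R,j)$, contradicting success probability $2/3$ on every input.

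The main obstacle is handling the dependence between $R$ and the queried edge $e^\ast(R)$ together with the definition of success (a valid tree with probability $2/3$ over $R$, not per query) and expected rather than worst-case probe counts. I will handle this by Markov: with probability at least $0.9$ over $R$ the probe count on query $e^\ast(R)$ in $G_0$ is at most $10\,T(n)=o(n)$, and with probability $\ge 2/3$ the output on $G_0$ is an MST; on the intersection event (probability $\ge 1/2$) the revealed-edge set $Q(R)$ has size $o(n)$, so $\Pr_{R,j}[e_j\notin Q(R),\ \text{good }R]\ge 1/2-o(1)$, and on this event $\mathcal A$ on $G_j$ incorrectly excludes $e^\ast\ne e_j$. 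Hence some fixed $j$ has failure probability at least $1/2-o(1)>1/3$, the desired contradiction. A minor point to check is that each $G_j$ still has underlying graph $\bar G$, which is a $d$-bounded expander with weights in $\{1,2\}$, as the statement requires.
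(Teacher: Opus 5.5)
Your proof has a genuine gap, and it sits exactly at the step you flag and then wave through. Markov's inequality bounds $\Pr_R[N_q(R)>10\,T(n)]$ only for a \emph{fixed} query $q$. Your query $e^\ast(R)$ is a function of $R$, and nothing prevents an algorithm from spending its probes precisely on the edge it chooses to omit.

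Under the paper's complexity measure (Definition~\ref{def:LCA}: expected probes over $R$, maximized over queries), this is fatal for your instance family, not a technicality. Consider an algorithm that knows the fixed $\bar G$, $C$ and adjacency orderings and uses $R$ to pick $j^\ast\in[n]$ uniformly:
\begin{itemize}
\item on every non-cycle edge it answers \textsc{No};
\item on a cycle edge $e_k$ with $k\neq j^\ast$ it probes the weight and answers \textsc{Yes} iff the weight is $1$;
\item on $e_{j^\ast}$ it reads all $n$ cycle weights and answers \textsc{Yes} iff $e_{j^\ast}$ has weight $1$ and some other cycle edge has weight $2$.
\end{itemize}
With probability $1$ this is consistent with the MST $C\setminus e_{j^\ast}$ of $G_0$ and with the unique MST of every $G_j$. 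Yet every fixed query costs at most $1+\frac{1}{n}\cdot n=2$ probes in expectation. So no argument confined to $\{G_0,G_1,\dots,G_n\}$ can give even a super-constant bound under that measure. Randomly relabeling vertices does not rescue it either: the algorithm can designate the weight-$1$ edge from a random vertex ID $j^\ast$ to its smaller-ID weight-$1$ neighbor, and traverse the weight-$1$ component only when queried on that edge.

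What your indistinguishability argument actually needs is a probe bound that holds for \emph{all} queries simultaneously on a constant fraction of tapes. Examples are worst-case complexity, or: with probability at least $0.9$ over $R$, every query is answered within $t$ probes. That is the setting in which the path-versus-cycle argument works. Under such a notion your proof goes through essentially as written. For good $R$ the run on $e^\ast(R)$ reveals at most $t=o(n)$ edges, and averaging over $j$ gives failure probability at least $2/3-0.1-o(1)>1/3$ on some $G_j$.

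It is then a concrete instantiation of the paper's reduction. The paper overlays an arbitrary weight-$1$ graph $H$ on a fixed weight-$2$ expander and imports the known spanning-tree lower bound as a black box. You take $H$ to be a Hamiltonian cycle or path inside the expander and redo the indistinguishability by hand. A side benefit is that the underlying graph is literally the same expander in every instance, so no merged adjacency lists need to be simulated. As stated, though, your argument rests on the Markov step that fails. You must either adopt the stronger complexity notion explicitly, or defer to the known lower bound as the paper does.
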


\begin{proof}
We prove by contradiction. Suppose there exists an LCA $\mathcal{A}$ satisfying the stated conditions. We show that this would imply the existence of a sublinear-probe LCA for constructing a spanning tree in \emph{any} $d$-bounded connected graph, contradicting known lower bounds for spanning trees~\cite{reut2014local,levi2020local}.

Let $H=(V,E_H)$ be an arbitrary $d$-bounded connected graph. Let $G_0=(V,E_0)$ be a fixed $d$-bounded expander graph, independent of $H$ and known explicitly to the algorithm. We construct a weighted graph $G=(V,E,\w)$, where $E = E_0 \cup E_H$, and the edge weights are defined as
\[
w_e =
\begin{cases}
1, & \text{if } e \in E_H, \\
2, & \text{if } e \in E_0 \setminus E_H.
\end{cases}
\]

Observe that:
\begin{enumerate}
    \item Since $H=(V,E_H)$ is connected and all edges in $E_H$ have weight $1$, the graph $G$ contains a spanning subgraph consisting entirely of weight-$1$ edges. Consequently, any minimum spanning tree $T$ of $G$ must consist solely of weight-$1$ edges.
    \item By construction, an edge has weight $1$ if and only if it belongs to $E_H$. Therefore, the edges of $T$ form a spanning tree of $H$.
\end{enumerate}

Moreover, any probe to the adjacency list of $G$ can be simulated using at most one probe to $H$. Hence, if $\mathcal{A}$ provides local access to an exact MST of $G$ using $o(n)$ probes, then we obtain an $o(n)$-probe LCA for determining edge membership in a spanning tree of $H$. This contradicts the known lower bounds for LCAs for spanning trees.

It remains to verify that $\mathcal{A}$ is applicable to the constructed graph $G$. Let $\bar G=(V,E)$ denote the underlying unweighted graph. By the definition of conductance, we have
\[
\phi_{\bar G} \ge \tfrac{1}{2}\,\phi_{G_0},
\]
and hence $\bar G$ is an expander. Furthermore, all edge weights of $G$ lie in $\{1,2\}$. Therefore, by assumption, $\mathcal{A}$ correctly answers edge-membership queries with respect to an exact MST of $G$ using $o(n)$ probes, completing the contradiction.

\end{proof}

\subsection{Beyond Uniform Distribution}\label{sec:beyonduniform}

In this section, we explain why the algorithm naturally extends to more general distributions on edge weights. The key observation is that the algorithm does not depend on the uniformity of the weight distribution, but only on whether the induced percolated subgraphs lie in the subcritical or supercritical regime.

Formally, let $\mathcal{D}_W$ be a discrete distribution supported on $[W]$, and define
\[
p_{\le i} := \Pr_{w \sim \mathcal{D}_W}[\, w \le i \,].
\]
We say that $\mathcal{D}_W$ is \emph{non-critical} for $G$ if for every $i \in [W]$, 
either $p_{\le i} \le \frac{1-\varepsilon}{d}$ or $p_{\le i} \ge \frac{1+\varepsilon}{d}$, where $\varepsilon$ can be any small constant. 

Assume that the edge weights of $G$ are drawn independently from a non-critical distribution.
If the subgraph $G_{\le i}$ lies in the supercritical regime for any weight $i$, then all arguments remain identical to the case of the uniform distribution. 

Otherwise, let $t$ be the largest weight such that $G_{\le t}$ lies in the subcritical regime. 
By Lemma~\ref{lem-gcc}, all connected components of $G_{\le t}$ have logarithmic size, and hence the minimum spanning forest of $G_{\le t}$ can be computed efficiently. 
As a result, we can still locally access a minimum spanning tree of $L_{\le t+1}$ built upon an unweighted spanning tree of $L_{\le t+1}$ and the notion of \emph{rank}. 
The only difference is that here \emph{rank} is defined with respect to the small component containing the root $s$, rather than a giant component. The minimum spanning tree of $G$ then follows accordingly.

\end{document}